  \DeclareFontShape{T1}{cmr}{m}{scit}{<->ssub*cmr/m/sc}{}%
\crefname{ineq}{inequality}{inequalities}
\crefname{fact}{fact}{facts}
\crefname{equation}{equation}{equations}
\crefname{remark}{remark}{remarks}
\crefname{conjecture}{conjecture}{conjectures}
\crefname{problem}{problem}{problems}
\crefname{algorithm}{protocol}{protocols} 
\declaretheorem[style=plain,numberwithin=section]{theorem}
\declaretheorem[style=plain,numberlike=theorem]{lemma,corollary}
\declaretheorem[style=remark,numberlike=theorem]{remark}
\declaretheorem[style=plain,numberlike=theorem]{definition}
\declaretheorem[style=plain,numberwithin=theorem]{proposition}
\declaretheorem[style=plain,numberlike=theorem]{problem,conjecture}
\numberwithin{equation}{section}
\numberwithin{figure}{section}
\newcommand{\Ptime}{\textnormal{\textsf{P}}\xspace}
\newcommand{\EXP}{\textnormal{\textsf{EXP}}\xspace}
\newcommand{\BQP}{\textnormal{\textsf{BQP}}\xspace}
\newcommand{\NP}{\textnormal{\textsf{NP}}\xspace}
\newcommand{\NEXP}{\textnormal{\textsf{NEXP}}\xspace}
\newcommand{\QMA}{\textnormal{\textsf{QMA}}\xspace}
\newcommand{\AM}{\textnormal{\textsf{AM}}\xspace}
\newcommand{\IP}{\textnormal{\textsf{IP}}\xspace}
\newcommand{\QIP}{\textnormal{\textsf{QIP}}\xspace}
\newcommand{\QMAM}{\textnormal{\textsf{QMAM}}\xspace}
\newcommand{\QIPL}{\textnormal{\textsf{QIPL}}\xspace}
\newcommand{\QMAL}{\textnormal{\textsf{QMAL}}\xspace}
\newcommand{\QMAML}{\textnormal{\textsf{QMAML}}\xspace}
\newcommand{\QIPLconst}{\texorpdfstring{\textnormal{\textsf{QIPL}\textsubscript{O(1)}}}\xspace}
\newcommand{\QIPUL}{\texorpdfstring{\textnormal{\textsf{QIP}\textsubscript{\textnormal{U}}\textsf{L}}}\xspace}
\newcommand{\QIPLHC}{\texorpdfstring{\textnormal{\textsf{QIPL}\textsuperscript{\textnormal{HC}}}}\xspace}
\newcommand{\HC}{\mathrm{HC}}
\newcommand{\QSZK}{\textnormal{\textsf{QSZK}}\xspace}
\newcommand{\HVQSZK}{\texorpdfstring{\textnormal{\textsf{QSZK}\textsubscript{HV}}}\xspace}
\newcommand{\QSZKL}{\textnormal{\textsf{QSZKL}}\xspace}
\newcommand{\QSZKLHV}{\texorpdfstring{\textnormal{\textsf{QSZKL}\textsubscript{\textnormal{HV}}}}\xspace}
\newcommand{\QSZKLstar}{\mathsf{QSZKL}^{\star}_{\mathrm{HV}}}
\newcommand{\QSZKUL}{\texorpdfstring{\textnormal{\textsf{QSZK}\textsubscript{\textnormal{U}}\textsf{L}}}\xspace}
\newcommand{\QSZKULHV}{\texorpdfstring{\textnormal{\textsf{QSZK}\textsubscript{\textnormal{U}}\textsf{L}}\textsubscript{\textnormal{HV}}}\xspace}
\newcommand{\QSZKULstar}{\mathsf{QSZK_{\textnormal{U}}L}^{\star}_{\mathrm{HV}}}
\newcommand{\view}[2]{\mathtt{view}_{{#1}\rightleftharpoons{#2}}}
\newcommand{\protocol}[2]{{#1}\!\rightleftharpoons\!{#2}}
\newcommand{\PSPACE}{\textnormal{\textsf{PSPACE}}\xspace}
\newcommand{\Lspace}{\textnormal{\textsf{L}}\xspace}
\newcommand{\BPL}{\textnormal{\textsf{BPL}}\xspace}
\newcommand{\BQL}{\textnormal{\textsf{BQL}}\xspace}
\newcommand{\BQUL}{\texorpdfstring{\textnormal{\textsf{BQ}\textsubscript{U}\textsf{L}}}\xspace}
\newcommand{\SAC}{\textnormal{\textsf{SAC}}\xspace}
\newcommand{\AC}{\textnormal{\textsf{AC}}\xspace}
\newcommand{\NC}{\textnormal{\textsf{NC}}\xspace}
\newcommand{\NL}{\textnormal{\textsf{NL}}\xspace}
\newcommand{\LOGCFL}{\textnormal{\textsf{LOGCFL}}\xspace}
\newcommand{\textoverline}[1]{$\overline{\mbox{#1}}$}
\newcommand{\QSD}{\textnormal{\textsc{QSD}}\xspace}
\newcommand{\coQSD}{\texorpdfstring{\textnormal{\textoverline{\textsc{QSD}}}}\xspace}
\newcommand{\GapQSD}{\textnormal{\textsc{GapQSD}}\xspace}
\newcommand{\GapQSDlog}{\texorpdfstring{\textnormal{\textsc{GapQSD}\textsubscript{log}}}\xspace}
\newcommand{\coGapQSDlog}{\texorpdfstring{\textnormal{\textoverline{\textsc{GapQSD}}\textsubscript{log}}}\xspace}
\newcommand{\IndivProdQSD}{\textnormal{\textsc{IndivProdQSD}}\xspace}
\newcommand{\coIndivProdQSD}{\texorpdfstring{\textnormal{\textoverline{\textsc{IndivProdQSD}}}}\xspace}
\newcommand{\threeSAT}{\textnormal{\textsc{3-SAT}}\xspace}
\DeclarePairedDelimiter\rbra{\lparen}{\rparen}
\DeclarePairedDelimiter\sbra{\lbrack}{\rbrack}
\DeclarePairedDelimiter\cbra{\{}{\}}
\DeclarePairedDelimiter\abs{\lvert}{\rvert}
\DeclarePairedDelimiter\norm{\lVert}{\rVert}
\DeclarePairedDelimiter\ceil{\lceil}{\rceil}
\DeclarePairedDelimiter\floor{\lfloor}{\rfloor}
\let\ket\relax\DeclarePairedDelimiter\ket{\lvert}{\rangle}
\let\bra\relax\DeclarePairedDelimiter\bra{\langle}{\rvert}
\newcommand{\ketbra}[2]{\ensuremath{\ket{#1}\!\bra{#2}}}
\newcommand{\braket}[3]{\langle #1 | #2 | #3 \rangle}
\newcommand{\Tr}{\mathrm{Tr}}
\newcommand{\td}{\mathrm{T}}
\newcommand{\F}{\mathrm{F}}
\newcommand{\yes}{{\rm yes}}
\newcommand{\no}{{\rm no}}
\newcommand{\Cnt}{\mathrm{Cnt}}
\newcommand{\clean}{\mathrm{clean}}
\newcommand{\acc}{\mathrm{acc}}
\newcommand{\pacc}{p_{\mathrm{acc}}}
\newcommand{\var}{\mathrm{var}}
\newcommand{\cl}{\mathrm{cl}}
\newcommand{\Enc}{\mathrm{Enc}}
\newcommand{\Had}{\textnormal{\textsc{H}}\xspace}
\newcommand{\CNOT}{\textnormal{\textsc{CNOT}}\xspace}
\newcommand{\M}{\textnormal{\textsc{M}}\xspace}
\newcommand{\T}{\textnormal{\textsc{T}}\xspace}
\newcommand{\SWAP}{\textnormal{\textsc{SWAP}}\xspace}
\renewcommand{\Pr}[1]{\mathrm{Pr}\left[#1\right]}
\newcommand{\binset}{\{0,1\}}
\newcommand{\Out}{\mathrm{out}}
\DeclareMathOperator\poly{poly}
\DeclareMathOperator\polylog{polylog}
\newcommand{\bbR}{\mathbb{R}}
\newcommand{\bbN}{\mathbb{N}}
\newcommand{\calA}{\mathcal{A}}
\newcommand{\calB}{\mathcal{B}}
\newcommand{\calE}{\mathcal{E}}
\newcommand{\calI}{\mathcal{I}}
\newcommand{\calJ}{\mathcal{J}}
\newcommand{\calM}{\mathcal{M}}
\newcommand{\sfA}{\mathsf{A}}
\newcommand{\sfB}{\mathsf{B}}
\newcommand{\sfE}{\mathsf{E}}
\newcommand{\sfM}{\mathsf{M}}
\newcommand{\sfQ}{\mathsf{Q}}
\newcommand{\sfS}{\mathsf{S}}
\newcommand{\sfT}{\mathsf{T}}
\newcommand{\sfV}{\mathsf{V}}
\newcommand{\sfW}{\mathsf{W}}
\newcommand{\sfZ}{\mathsf{Z}}
\newcommand{\ttM}{\mathtt{M}}
\newcommand{\ttQ}{\mathtt{Q}}
\newcommand{\ttR}{\mathtt{R}}
\newcommand{\ttW}{\mathtt{W}}
\newcommand{\ttZ}{\mathtt{Z}}
\newcommand{\hatP}{\widehat{P}}
\newcommand{\hatV}{\widehat{V}}
\begin{document}
\setlength{\abovedisplayskip}{6pt}
\setlength{\belowdisplayskip}{6pt}

\title{Space-bounded quantum interactive proof systems}

\author[1]{Fran\c{c}ois Le Gall\thanks{Email: \url{legall@math.nagoya-u.ac.jp}}}
\author[1]{Yupan Liu\thanks{Email: \url{yupan.liu.e6@math.nagoya-u.ac.jp}}}
\author[2]{Harumichi Nishimura\thanks{Email: \url{hnishimura@i.nagoya-u.ac.jp}}}
\author[3,1]{Qisheng Wang\thanks{Email: \url{QishengWang1994@gmail.com}}}

\affil[1]{Graduate School of Mathematics, Nagoya University}
\affil[2]{Graduate School of Informatics, Nagoya University}
\affil[3]{School of Informatics, University of Edinburgh}

\date{}
\maketitle
\pagenumbering{roman}
\thispagestyle{empty}

\begin{abstract}
     We introduce two models of space-bounded quantum interactive proof systems, ${\sf QIPL}$ and ${\sf QIP_{\rm U}L}$. 
     The ${\sf QIP_{\rm U}L}$ model, a space-bounded variant of quantum interactive proofs (${\sf QIP}$) introduced by \hyperlink{cite.Wat99QIP}{Watrous (CC 2003)} and \hyperlink{cite.KW00}{Kitaev and Watrous (STOC 2000)}, restricts verifier actions to unitary circuits. In contrast, ${\sf QIPL}$ allows logarithmically many pinching intermediate measurements per verifier action, making it the weakest model that encompasses the classical model of \hyperlink{cite.CL95}{Condon and Ladner (JCSS 1995)}. 

     We characterize the computational power of ${\sf QIPL}$ and ${\sf QIP_{\rm U}L}$. When the message number $m$ is polynomially bounded, ${\sf QIP_{\rm U}L} \subsetneq {\sf QIPL}$ unless ${\sf P} = {\sf NP}$: 
     \begin{itemize} 
        \item ${\sf QIPL}^{\rm HC}$, a subclass of ${\sf QIPL}$ defined by a high-concentration condition on \textit{yes} instances, exactly characterizes ${\sf NP}$.
        \item ${\sf QIP_{\rm U}L}$ is contained in ${\sf P}$ and contains ${\sf SAC}^1 \cup {\sf BQL}$, where ${\sf SAC}^1$ denotes problems solvable by classical logarithmic-depth, semi-unbounded fan-in circuits. 
    \end{itemize} 
    However, this distinction vanishes when $m$ is constant. Our results further indicate that (pinching) intermediate measurements uniquely impact space-bounded quantum interactive proofs, unlike in space-bounded quantum computation, where ${\sf BQL}={\sf BQ_{\rm U}L}$. 
    
    We also introduce space-bounded unitary quantum statistical zero-knowledge (${\sf QSZK_{\rm U}L}$), a specific form of ${\sf QIP_{\rm U}L}$ proof systems with statistical zero-knowledge against any verifier. This class is a space-bounded variant of quantum statistical zero-knowledge (${\sf QSZK}$) defined by \hyperlink{cite.Wat09QSZK}{Watrous (SICOMP 2009)}. We prove that ${\sf QSZK_{\rm U}L} = {\sf BQL}$, implying that the statistical zero-knowledge property negates the computational advantage typically gained from the interaction.
\end{abstract}

\newpage
\tableofcontents
\thispagestyle{empty}

\newpage
\pagenumbering{arabic}
\section{Introduction}

Recent advancements in quantum computation with a limited number of qubits have been achieved from both theoretical and experimental perspectives. Theoretical work began in the late 1990s, focusing on feasible models of quantum computation operating under space restrictions, where the circuit acts on $O(\log n)$ qubits and consists of $\poly(n)$ elementary gates~\cite{Wat99,Wat03}. These models, referred to as quantum logspace, were later shown during the 2010s to offer a quadratic space advantage for certain problems over the best known classical algorithms~\cite{TS13,FL18}, which saturates the classical simulation bound. In recent years, this area has gained increased attention, particularly in eliminating (pinching) intermediate measurements in these models~\cite{FR21,GRZ21}, and through further developments~\cite{GR22,Zhandry24}. Motivated by these achievements in quantum logspace, we are interested in exploring the power of the quantum interactive proof systems where the verifier is restricted to quantum logspace. 

To put it simply, in a single-prover (quantum) interactive proof system for a promise problem $(\calI_{\yes},\calI_{\no})$, a computationally weak (possibly quantum) \textit{verifier} interacts with a computationally all-powerful but untrusted \textit{prover}. In quantum scenarios, the prover and verifier may share entanglement during their interactions. Given an input $x\in \calI_\yes \cup \calI_\no$, the prover claims that $x \in \calI_{\yes}$, but the verifier does not simply accept this claim. Instead, an interactive protocol is initiated, after which the verifier either ``accepts'' or ``rejects'' the claim. The protocol has completeness parameter $c$, meaning that if $x$ is in $\calI_{\yes}$ and the prover honestly follows the protocol, the verifier accepts with probability at least $c$. The protocol has soundness parameter $s$, meaning that if $x$ is in $\calI_{\no}$ then the verifier accepts with probability at most $s$, regardless of whether the prover follows the protocol. Typically, an interactive protocol for $(\calI_{\yes},\calI_{\no})$ has completeness $c=2/3$ and soundness $s=1/3$. 

\paragraph{Interactive proof systems with time-bounded verifier.}
The exploration of classical interactive proof systems (\IP{}) was initiated in the 1980s~\cite{Babai85,GMR85}. In these proof systems, the verifier is typically bounded by polynomial time, and $\IP[m]$ represents interactive protocols involving $m$ messages during interactions. Particularly, when the verifier's messages are merely random bits, these \textit{public-coin} proof systems are known as \textit{Arthur-Merlin proof systems}~\cite{Babai85}. Shortly thereafter, it was established that any constant-message \IP{} protocol can be parallelized to a two-message public-coin protocol, captured by the class \AM{}, and thus $\IP[O(1)]$ is contained in the second level of the polynomial-time hierarchy~\cite{Babai85,GS86}. However, \IP{} protocols with a polynomial number of messages have been shown to be exceptionally powerful, as demonstrated by the seminal result $\IP = \PSPACE$~\cite{LFKN92,Shamir92}. Consequently, \IP{} protocols with a polynomial number of messages generally cannot be parallelized to a constant number of messages unless the polynomial-time hierarchy collapses.\footnote{The assumption that the polynomial-time hierarchy does not collapse generalizes the conjecture that $\Ptime \subsetneq \NP$.} 

About fifteen years after the introduction of interactive proof systems (and a model of quantum computation), the study of quantum interactive proof systems (\QIP{}) began~\cite{Wat99QIP}. Remarkably, any \QIP{} protocol with a polynomial number of messages can be parallelized to three messages~\cite{KW00}. 
A quantum Arthur-Merlin proof system was subsequently introduced in~\cite{MW05}, and any three-message \QIP{} protocol can be transformed into this form (\QMAM{}).
By the late 2000s, the computational power of \QIP{} was fully characterized: The celebrated result $\QIP{}=\PSPACE{}$~\cite{JJUW11} established that \QIP{} is not more powerful than \IP{} as long as the gap $c-s$ is at least polynomially small. 
However, when the gap $c-s$ is double-exponentially small, this variant of \QIP{} is precisely characterized by \EXP{}~\cite{IKW12}. 
In the late 2010s, another quantum counterpart of the Arthur-Merlin proof system was considered in~\cite{KLGN19}, where the verifier's message is either random bits or halves of EPR pairs, leading to a quadrichotomy theorem that classifies the corresponding \QIP{} protocols. 

\paragraph{Interactive proof systems with space-bounded verifier.}
The investigation of (classical) interactive proof systems with space-bounded verifiers started in the late 1980s~\cite{DS89,Condon91}, alongside research on time-bounded verifiers. 
Notably, by using the fingerprinting lemma~\cite{Lipton90}, Condon and Ladner \cite{CL95} showed that the class of (private-coin) classical interactive proof systems with logarithmic-space verifiers using $O(\log{n})$ random bits exactly characterizes $\NP$. 
In parallel, public-coin space-bounded classical interactive proofs were explored in the early 1990s~\cite{Fortnow89,FL93,Condon92survey}. By around 2010, it was established that such space-bounded protocols with $\poly(n)$ public coins precisely characterize $\Ptime$~\cite{GKR15}. 

Space-bounded Merlin-Arthur-type proof systems were also studied in the early 1990s. In particular, when the verifier operates in classical logspace with $O(\log n)$ random bits and has \textit{online access} to a $\poly(n)$-bit message, the proof system exactly characterizes \NP{}~\cite{Lipton90}. 
More recently, restricting the computational power of the honest prover to quantum logspace (\BQL{}) has led to a counterpart \textit{classical} proof system that exactly characterizes \BQL{}~\cite{GRZ23}.

Although research has been conducted on quantum interactive proofs where the verifier uses quantum finite automata~\cite{NY09,NY15,Yaka13}, analogous to classical work~\cite{DS89}, to our knowledge no prior work has addressed space-bounded counterparts of quantum interactive proofs that align with the circuit-based model defined in~\cite{KW00, Wat99QIP}. In the case \textit{without interaction}, space-bounded quantum Merlin-Arthur proof systems have been studied recently. When the verifier has \textit{direct access} to an $O(\log n)$-qubit message, meaning it can process the message directly in its workspace qubits, this variant (\QMAL{}) is as weak as \BQL{}~\cite{FKLMN16,FR21}. However, when the (unitary) verifier has online access to a $\poly(n)$-qubit message, where each qubit in the message state is read-once, this variant is as strong as \QMA{}~\cite{GR23online}.\footnote{An exponentially up-scaled quantum counterpart of the space-bounded Merlin-Arthur-type proof system from~\cite{Lipton90}, with \textit{classical} messages, was also considered in~\cite{GR23online}. The variant with \textit{unitary} quantum \textit{polynomial}-space verifier, implicitly allowing $\poly(n)$ random bits, precisely corresponds to \NEXP{}.\label{footref:streaming-proof}}

It is important to note that online and direct access to messages during interactions makes no difference for time-bounded interactive or Merlin-Arthur-type proof systems, whether classical or quantum. This distinction arises from the nature of space-bounded computation.

\subsection{Main results}

\paragraph{Definitions of \QIPL{} and \QIPUL{}.} We introduce \textit{space-bounded quantum interactive proof systems} and their unitary variant, denoted as \QIPL{} and \QIPUL{}, respectively. In these proof systems, the verifier $V$ operates in quantum logspace and has direct access to messages during interaction with the prover $P$. Specifically, in a $2l$-turn (message) space-bounded quantum interactive proof system for a promise problem $(\calI_{\yes},\calI_{\no})$, this proof system $\protocol{P}{V}$ consists of the prover's private register $\sfQ$, the message register $\sfM$, and the verifier's private register $\sfW$. Both $\sfM$ and $\sfW$ are of size $O(\log n)$, with $\sfM$ being accessible to both the prover and the verifier.\footnote{Our definitions of \QIPL{} and \QIPUL{} can be straightforwardly extended to the corresponding proof systems with an odd number of messages, as shown in \Cref{fig:QIPL-odd}.} 
\begin{figure}[ht!]
    \centering
    \includegraphics[width=\textwidth]{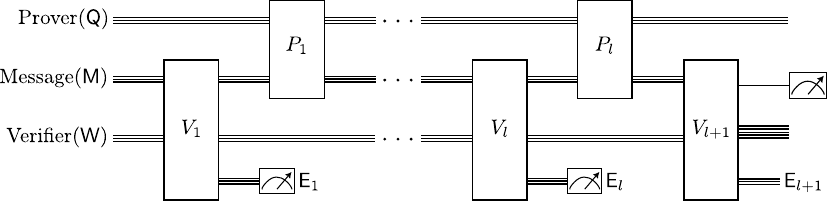}
    \caption{A $2l$-turn single-prover space-bounded quantum interactive proof system (\QIPL{}), where each environment register $\sfE_j$ is introduced by applying the principle of deferred measurements to an almost-unitary quantum circuit $\widetilde{V}_j$, resulting in the isometric quantum circuit $V_j$.}
    \label{fig:QIPL}
\end{figure}

The verifier $V$ maps an input $x\in \calI_{\yes} \cup \calI_{\no}$ to a sequence $(V_1,\cdots,V_{l+1})$, with $V_j$ for $j \in [l]$ representing the verifier's actions at the $(2j-1)$-th turn, and $V_{l+1}$ representing the verifier's action just before the final measurement. The primary difference between \QIPL{} and \QIPUL{} proof systems lies in the verifier's action $V_j$ for $j\in[l]$: 
\begin{itemize}[itemsep=0.33em, topsep=0.33em, parsep=0.33em]
    \item In \QIPL{} proof systems, each $V_j$ corresponds to an \textit{almost-unitary} quantum circuit $\widetilde{V}_j$ that includes $O(\log n)$ \textit{pinching} intermediate measurements in the computational basis.\footnote{We note that restricting the number of pinching measurements in each verifier turn $V_j$ from polynomial in $n$ to $O(\log{n})$ does not cause any loss of generality, provided that the \QIPL{} proof system has a polynomial number of turns. See \Cref{remark:restrict-measurement-times} for further details.}
    Each such measurement maps a single-qubit state $\rho$ to $\sum_{b\in\binset} \Tr(\ketbra{b}{b} \rho)\ketbra{b}{b}$.\footnote{Pinching intermediate measurements naturally arise in space-bounded quantum computation, particularly in recent developments on eliminating intermediate measurements in quantum logspace~\cite{GRZ21,GR22}. In this context, the quantum channels that capture the space-bounded computation are \textit{unital} in the case of qubits. }
    The $O(\log{n})$ bound reflects the maximum number of measurement outcomes that can be stored in logspace, aligning with the verifier's direct access to the message. 
    For convenience, we apply the principle of deferred measurements (e.g.,~\cite[Section 4.4]{NC10}), transforming the circuit $\widetilde{V}_j$ into an \textit{isometric} quantum circuit $V_j$ with a newly introduced environment register $\sfE_j$,\footnote{An $O(\log{n})$-qubit isometric quantum circuit utilizes $O(\log n)$ ancillary gates, with each ancillary gate introducing an ancillary qubit $\ket{0}$. For further details, please refer to \Cref{def:space-bounded-quantum-circuits}. } which is measured at the end of that turn, with the measurement outcome denoted by $u_j$, as illustrated in \Cref{fig:QIPL}.
    Furthermore, each environment register $\sfE_j$ remains private to the verifier and becomes inaccessible after the round that starts with the verifier's $j$-th action. 
    \item In \QIPUL{} proof systems, each $V_j$ is a unitary quantum circuit. 
\end{itemize}

The prover's actions can be similarly described by unitary quantum circuits. A proof system $\protocol{P}{V}$ is said to \textit{accept} if, after the verifier performs $V_{l+1}$ and measures the designated output qubit in the computational basis, the outcome is $1$. 
Additionally, we require \textit{a strong notion of uniformity} for the verifier's mapping: the description of the sequence $(V_1,\cdots,V_{l+1})$ must be computable by a single deterministic logspace Turing machine.\footnote{A weaker notion of uniformity only requires that the description of each $V_j$ can be individually computed by a deterministic logspace Turing machine. It is important to note that these distinctions do not arise in the time-bounded setting, as the composition of a polynomial number of deterministic polynomial-time Turing machines can be treated as a single deterministic polynomial-time Turing machine.} 
Lastly, for \QIPLHC{} proof systems, we impose an additional restriction on \textit{yes} instances: the distribution of intermediate measurement outcomes $u = (u_1,\cdots,u_l)$, conditioned on acceptance, must be \textit{highly concentrated}. More precisely, let $\omega(V)|^u$ be the contribution of $u$ to $\omega(V)$, where $\omega(V)$ is the maximum acceptance probability of $\protocol{P}{V}$. Then, there must exist a $u^*$ such that $\omega(V)|^{u^*} \geq c(n)$.

We denote $m$-turn space-bounded quantum interactive proof systems with completeness $c$ and soundness $s$ as $\QIPL_m[c,s]$, and their unitary variant as $\QIPUL_m[c,s]$. 
In particular, we adopt the following notations, which naturally extend to \QIPUL{}: 
\[ \QIPL_m \coloneqq \QIPL_m[2/3,1/3] \text{ and } \QIPL \coloneqq \cup_{1 \leq m \leq \poly(n)} \QIPL_m.\]

In \textit{constant}-turn scenarios, it is crucial to emphasize that the proof systems $\QIPL_{O(1)}[c,s]$ and $\QIPUL_{O(1)}[c,s]$ can directly simulate each other, as the environment registers $\sfE_1,\cdots,\sfE_{O(1)}$ collectively holds $O(\log{n})$ qubits.\footnote{This equivalence follows directly from the principle of deferred measurements. However, for constant-turn space-bounded quantum interactive proofs, allowing each verifier action to involve $\poly(n)$ pinching intermediate measurements might increase the proof system's power beyond the unitary case. This is because current techniques for proving results such as $\BQL = \BQUL$~\cite{FR21,GRZ21,GR22} do not directly apply in this context.\label{footref:constant-turn-QIPL}} Therefore, for simplicity, we define $\QIPL_{O(1)}[c,s]$ proof systems in which the verifier's actions are implemented by \textit{unitary} quantum circuits. 

\paragraph{Space-bounded (unitary) quantum interactive proofs.} Our first theorem serves as a quantum analog of the classical work by Condon and Ladner~\cite{CL95}:
\begin{theorem}[Informal of \Cref{thm:QIPL-eq-NP}]
    \label{thm:QIPL-eq-NP-informal}
    $\NP = \QIPLHC \subseteq \QIPL$.
\end{theorem}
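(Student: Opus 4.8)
The containment $\QIPLHC{} \subseteq \QIPL{}$ is immediate: a $\QIPLHC{}$ protocol is, by definition, a $\QIPL{}$ protocol that additionally satisfies the high-concentration condition on \emph{yes} instances. So the plan is to prove the equality $\NP = \QIPLHC{}$, which I split into the two inclusions below. For the lower bound $\NP \subseteq \QIPLHC{}$ I would start from the characterization (Lipton~\cite{Lipton90}; see also Condon--Ladner~\cite{CL95}) of $\NP$ as the promise problems admitting a classical logarithmic-space verifier that uses $O(\log n)$ random coins and has online, read-once access to a $\poly(n)$-bit proof. For concreteness, taking the $\NP$-complete problem $\threeSAT$, the underlying protocol is the fingerprinting one: the prover streams a purported satisfying assignment $a \in \binset^n$, once for each clause, interleaved with the three variable values of that clause; the verifier checks online that every clause is satisfied and, using a random prime and evaluation point extracted from its $O(\log n)$ coins, recomputes a polynomial fingerprint of each streamed copy of $a$ and rejects unless all copies agree. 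Completeness is perfect, and soundness follows because an unsatisfiable formula is falsified by every single assignment, while two distinct streamed copies collide under the fingerprint with probability below $1/3$ over the coins.

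The real work is to realize this in the space-bounded quantum model so that the high-concentration condition survives. I would chop the prover's stream into $O(\log n)$-qubit blocks sent one per turn over $\poly(n)$ turns, the verifier returning a trivial message each turn. In turn $j$ the verifier (i) \emph{pinches} the $O(\log n)$ incoming $\sfM$-qubits in the computational basis --- these are the \emph{only} intermediate measurements, so $\sfE_j$ holds precisely (a copy of) the block read at that turn; (ii) updates its $O(\log n)$-qubit workspace $\sfW$ (turn counter, running fingerprints, a one-bit reject flag) by reversible classical logic controlled on those outcomes; and (iii) emits a trivial message. Crucially, the $O(\log n)$ hashing coins are prepared \emph{once} as a uniform superposition inside $\sfW$ and are \emph{never} measured: fingerprint updates, fingerprint comparisons, and reject decisions are all done coherently, and only the output qubit is measured, at the end. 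On a \emph{yes} instance the honest prover streams a fixed satisfying assignment, so $\omega(V) = 1$ and every $\sfE_j$ is a deterministic classical state; hence the distribution of $u = (u_1,\dots,u_l)$ conditioned on acceptance is a point mass $u^*$ with $\omega(V)|^{u^*} = 1 \ge c(n)$, which is exactly the high-concentration condition. (Obtaining the coins by measurement instead would put them into $u$ and make them uniform over $\poly(n)$ values conditioned on acceptance, destroying concentration --- this is why they must stay coherent.) For soundness, since the verifier measures every incoming message register immediately, one may assume without loss of generality that the prover sends classical, possibly randomized, messages; then the verifier's acceptance probability equals that of the underlying classical logspace verifier with uniform $O(\log n)$ coins against a classical prover, hence at most $1/3$. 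Finally, the sequence $(V_1,\dots,V_{l+1})$ has a simple repetitive form determined by the input formula, so it is produced by one deterministic logspace machine (strong uniformity), and $\sfM$, $\sfW$, and the coin register are all $O(\log n)$ qubits.

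For the upper bound $\QIPLHC{} \subseteq \NP$, on input $x$ the nondeterministic machine guesses the $\poly(n)$-bit tuple $u^* = (u_1,\dots,u_l)$ of intermediate outcomes and then deterministically computes an approximation of $\omega(V)|^{u^*}$, accepting iff it exceeds $(c(n)+s(n))/2$. Conditioning on $u^*$ replaces each almost-unitary circuit $\widetilde V_j$ by a sub-unitary operation on $O(\log n)$ qubits, so $\omega(V)|^{u^*}$ is the optimal acceptance probability of a space-bounded quantum interactive proof whose verifier's per-turn actions are contractions; the polynomial-size semidefinite program (equivalently, the recursive evaluation) underlying the inclusion $\QIPUL{} \subseteq \Ptime$ applies unchanged to contractions, so this value is approximable to inverse-polynomial precision in deterministic polynomial time. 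Correctness is then forced by the promise: on \emph{yes} instances high concentration supplies a $u^*$ with $\omega(V)|^{u^*} \ge c(n)$, and on \emph{no} instances $\omega(V)|^{u} \le \omega(V) \le s(n)$ for every $u$.

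The hardest part will be the $\NP \subseteq \QIPLHC{}$ direction, where soundness and high concentration pull in opposite directions: a space-bounded verifier seems to need intermediate measurements both to force an entangled quantum prover to commit to classical data and to run a randomized consistency test, yet any measured randomness wrecks concentration on \emph{yes} instances. The idea that resolves this --- pinch only the message registers, which are deterministic on honest runs, while routing all of the hashing randomness coherently through the workspace --- is the crux, and after setting it up one still has to verify carefully that this hybrid verifier remains sound against arbitrary entangled provers. A secondary point to check is that the polynomial-time algorithm witnessing $\QIPUL{} \subseteq \Ptime$ is genuinely insensitive to replacing the verifier's per-turn isometries by contractions.
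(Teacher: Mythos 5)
Your overall strategy coincides with the paper's on both directions: for $\NP \subseteq \QIPLHC$, a fingerprinting protocol in the spirit of Lipton and Condon--Ladner in which the verifier pinches every incoming message while routing the $O(\log n)$ hashing coins coherently through its private register (so that on \emph{yes} instances the outcome branch is a point mass $u^*$ with $\omega(V)|^{u^*}=1$), and for $\QIPLHC \subseteq \NP$, a branch-indexed polynomial-size semidefinite program obtained by conditioning on the tuple $u$ of pinching outcomes. The paper's \threeSAT{} protocol (\Cref{thm:NP-in-QIPL}) differs only cosmetically from yours (it streams the multiset of triples $(l,i,v)$ twice, ordered by variable and then by clause, and compares the two multiset fingerprints, rather than fingerprinting $k$ copies of the assignment), and its \NP{} witness includes a feasible solution to the branch-$u$ program rather than having the deterministic verifier solve it; both variations are immaterial.

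The one step you gloss over is the soundness of the upper bound, and it is precisely what the paper must argue separately (\Cref{lemma:QIPL-second-SDP-formulation}, \Cref{eq:pacc-approx-bounds}). Conditioning on $u$ and optimizing over branch-$u$ prover strategies does \emph{not} compute $\omega(V)|^{u}$: it computes an over-approximation $\widehat{\omega}(V)|^{u} \geq \omega(V)|^{u}$, since the globally optimal prover must be consistent across all branches and need not be optimal within any single one. Your \emph{no}-instance argument ``$\omega(V)|^{u} \leq \omega(V) \leq s(n)$ for every $u$'' therefore bounds the wrong quantity; what the algorithm actually thresholds is $\widehat{\omega}(V)|^{u}$, and you must show $\widehat{\omega}(V)|^{u} \leq \omega(V)$, i.e.\ that every feasible point of the branch-$u$ program is realized by an actual global prover whose total acceptance probability dominates its contribution from branch $u$. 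Establishing this requires writing the branch-$u$ constraints carefully (in particular the normalization of the unnormalized snapshot states, which equals the branch probability $p(u_1,\dots,u_j)$ determined solely by the verifier, rather than being trivially $1$ as in the unitary SDP you propose to reuse ``unchanged''). Once \Cref{eq:pacc-approx-bounds} is in place, the rest of your argument goes through.
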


Interestingly, \Cref{thm:QIPL-eq-NP-informal} suggests that the \QIPLHC{} model can be viewed as the \textit{weakest} model that encompasses space-bounded (private-coin) classical interactive proofs, as considered in~\cite{CL95}. Our definitions of \QIPL{} and its subclass \QIPLHC{} aim to introduce quantum counterparts that include these classical proof systems, ensuring that soundness against classical messages also holds for quantum messages. 
Similar soundness issues challenged multi-prover scenarios (e.g., proving $\mathsf{MIP} \subseteq \mathsf{MIP^*}$) for nearly a decade~\cite{CHTW04,IV12}, while in the single-prover settings (e.g., proving $\IP \subseteq \QIP$), it is typically resolved by measuring the prover's quantum messages and treating the outcomes as classical messages (e.g.,~\cite[Claim 1]{AN02}). 

However, space-bounded \textit{unitary} quantum interactive proofs (\QIPUL{}), which denote the most natural space-bounded counterpart to quantum interactive proofs as defined in~\cite{KW00,Wat03}, do not directly achieve the stated soundness guarantee. Hence, \QIPUL{} may be computationally weaker than \QIPL{}. 
Our second theorem characterizes the computational power of \QIPUL{}: 
\begin{theorem}[Informal of \Cref{thm:LOGCFL-in-QIPUL,thm:QIPUL-in-P}]
    \label{thm:QIPUL-bounds-informal} The following holds\emph{:}
    \[\SAC^1 \cup \BQL \subseteq \QIPUL \subseteq \cup_{c(n)-s(n) \geq 1/\poly(n)} \QIPL_{O(1)}[c,s] \subseteq \Ptime.\] 
\end{theorem}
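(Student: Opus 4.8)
I would prove the chain as the two cited theorems (\Cref{thm:LOGCFL-in-QIPUL} and \Cref{thm:QIPUL-in-P}), treating the four inclusions essentially independently. The two endpoints are the easy ones. For $\BQL \subseteq \QIPUL$, the verifier simply ignores the prover and runs its own unitary quantum logspace algorithm, inheriting completeness and soundness from $\BQL$; one (ignored) message suffices. For $\QIPL_{O(1)}[c,s] \subseteq \Ptime$ with $c(n)-s(n)\ge 1/\poly(n)$, I would write the optimal acceptance probability $\omega$ as a semidefinite program over the prover's strategy: since with a \emph{constant} number of turns only $O(\log n)$ qubits of messages are ever exchanged, the prover's private register can be taken to have $O(\log n)$ qubits by Stinespring dilation, so every operator in the SDP is $\poly(n)\times\poly(n)$ and the program has size $\poly(n)$; solving it to additive error $o(c-s)$ takes deterministic polynomial time, and comparing $\omega$ to $(c+s)/2$ decides membership. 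It is essential to reduce to constantly many turns \emph{before} this step: for a $\poly(n)$-turn protocol the optimal prover may need to remember $\poly(n)$ message qubits, blowing the SDP up to exponential size and yielding only $\QIPUL\subseteq\EXP$.

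For $\SAC^1 \subseteq \QIPUL$ I would use $\SAC^1=\LOGCFL$ and have the prover guide the verifier down a \emph{proof tree} of the logspace-uniform $\SAC^1$ circuit for $x$. Starting from the output gate, claimed by the honest prover to evaluate to $1$, the verifier repeatedly asks for the type and relevant children of the current gate: at an $\mathsf{OR}$ gate claimed to be $1$ the prover names one child it claims is $1$ and the verifier descends there; at a fan-in-$2$ $\mathsf{AND}$ gate claimed to be $1$ both children must be $1$ and the verifier descends into one of them, chosen by a quantum coin so that the verifier stays unitary. Since the circuit has depth $O(\log n)$, the descent reaches a leaf after $O(\log n)$ turns, where the verifier checks the literal directly against $x$, and the tree has at most $2^{O(\log n)}=\poly(n)$ leaves. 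If $x\notin L$ then no valid proof tree exists, so along the descent some claimed-$1$ gate is actually $0$, and this defect propagates down to a leaf: it is forced through $\mathsf{OR}$ gates and survives each $\mathsf{AND}$ gate with probability at least $1/2$, so with probability at least $2^{-O(\log n)}=1/\poly(n)$ the verifier reaches a false leaf and rejects. This gives a $\poly(n)$-turn $\QIPUL$ protocol with perfect completeness and soundness $1-1/\poly(n)$; sequential repetition then amplifies the soundness below $1/3$, with the verifier keeping only a running conjunction bit and reusing its $O(\log n)$ registers across repetitions (the reuse handled either by honest-prover-assisted uncomputation or by the techniques behind $\BQL=\BQUL$). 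An alternative uses the auxiliary-pushdown characterization of $\LOGCFL$, with the prover playing the role of the stack and the verifier checking stack integrity by a $\poly(n)$-size-field fingerprint of the push/pop multisets, again with a $1/\poly$ gap that is then amplified.

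For $\QIPUL \subseteq \cup_{c-s\ge 1/\poly}\QIPL_{O(1)}[c,s]$ I would adapt the turn-reduction of Kitaev and Watrous to the space-bounded regime. The crucial structural fact is that throughout a $\QIPUL$ run the joint state of the message register $\sfM$ and the verifier register $\sfW$ lives on only $O(\log n)$ qubits, so a purification of the protocol state at any round fits inside a single $O(\log n)$-qubit message. The constant-turn protocol then has the prover send a claimed snapshot of the protocol state at a verifier-chosen round, the verifier applies one honest verifier step and returns the register, the prover applies one honest prover step, and the verifier runs a \textsc{swap}-test-type consistency check between the received and the expected state, implemented with the $O(\log n)$ pinching measurements the model permits (equivalent to a unitary verifier in constantly many turns). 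Since only one of the $\poly(n)$ rounds is checked, the soundness error degrades to $1-1/\poly(n)$, which is exactly why the target class carries an explicit $1/\poly$ gap rather than the normalized $[2/3,1/3]$; combining this with the SDP bound above completes the chain $\QIPUL \subseteq \cup_{c-s\ge 1/\poly}\QIPL_{O(1)}[c,s] \subseteq \Ptime$.

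I expect the turn-reduction $\QIPUL \subseteq \QIPL_{O(1)}[c,s]$ to be the main obstacle: carrying out the folding argument while keeping every register $O(\log n)$ qubits, respecting the strong logspace-uniformity of the verifier circuit family, and staying within the per-turn $O(\log n)$ pinching-measurement budget all require care, and the single-round consistency analysis must be done so that the degraded error is only $1-1/\poly(n)$ rather than $1-2^{-\poly(n)}$, which would be useless. The register-reuse issue in the sequential amplification used for $\SAC^1\subseteq\QIPUL$ is a secondary technical point of the same flavour.
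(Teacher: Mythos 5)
Three of your four inclusions match the paper's route: $\BQL\subseteq\QIPUL$ by a prover-ignoring verifier; $\SAC^1\subseteq\QIPUL$ by the Fortnow-style descent through the circuit with prover choices at OR gates and verifier coins at AND gates, followed by sequential repetition with prover-assisted cleaning of the workspace (\Cref{protocol:LOGCFL-in-QIPUL,protocol:QIPL-error-reduction}); and $\QIPL_{O(1)}[c,s]\subseteq\Ptime$ by a polynomial-size SDP (the paper phrases the SDP over verifier-side snapshot states on $(\sfM,\sfW,\sfE_j)$ rather than over a Stinespring-dilated prover, but both give $\poly(n)$-dimensional variables). One small omission: your $\SAC^1$ soundness analysis is against classical messages; the paper additionally argues it extends to quantum messages because the computational paths of the circuit do not interfere and $(\sfM,\sfW)$ is measured in the computational basis at the end.

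The genuine gap is in $\QIPUL\subseteq\cup\,\QIPL_{O(1)}[c,s]$. Your protocol has the verifier choose a round $j$, receive a single claimed snapshot, simulate one honest step, and run a ``swap-test-type consistency check against the expected state.'' This fails for two reasons. First, the verifier has no independent reference for the expected state at round $j+1$: everything it could compare against comes from the prover, and since the prover learns $j$ before committing, it can always fabricate a locally consistent pair. The soundness of the Kitaev--Watrous single-round spot check rests on the prover committing to all $m$ snapshots \emph{before} the challenge, which costs an $O(m\log n)$-qubit message that does not fit in this model --- exactly the obstruction the paper identifies. Second, there is no constant-turn way to ``continue to the end'' from an intermediate round. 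The paper's actual mechanism is the KKMV09-style turn-halving lemma (\Cref{lemma:QIPL-halving-parallelization}): the prover commits only to the \emph{middle} snapshot, and the verifier's challenge is a single coin $b$ selecting either the forward half (check acceptance) or the backward half run in reverse via $V_j^{\dagger}$ (check that $\sfW$ returns to all zeros). Each challenge is itself an interactive protocol with half as many turns, so one level only halves $m$; the lemma is applied recursively $O(\log m)$ times, with soundness degrading from $s$ to $(1+\sqrt{s})/2$ per level via $\F(\rho_0,\xi)^2+\F(\xi,\rho_1)^2\le 1+\F(\rho_0,\rho_1)$, landing at three messages with gap $(c-s)^2/(2(m+1)^2)$. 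Reversibility of the unitary verifier is what makes the backward challenge possible, and the strong uniformity condition you flagged is indeed what keeps the composed verifier logspace-computable.
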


\Cref{thm:QIPL-eq-NP-informal,thm:QIPUL-bounds-informal} suggest that \QIPUL{} is indeed \textit{weaker} than \QIPL{} unless $\Ptime = \NP$. Interestingly, this distinction from the unitary case arises even when each verifier action is slightly more powerful than a unitary quantum circuit.
It is also noteworthy that the class $\SAC^1$ is equivalent to \LOGCFL{}~\cite{Venkateswaran87}, which contains \NL{} and is contained in $\AC^1$.\footnote{For more details on the computational power of $\SAC^1$ and related complexity classes, see \Cref{subsec:classical-concepts-and-tools}. }
Our third theorem, meanwhile, focuses on space-bounded quantum interactive proof systems with a constant number of messages: 
\begin{theorem}[Informal of \Cref{thm:QIPLconst-in-NC}] 
    \label{thm:QIPLconst-informal}
    For any $c(n)-s(n) \geq \Omega(1)$, $\QIPL_{O(1)}[c,s] \subseteq \NC$. 
\end{theorem}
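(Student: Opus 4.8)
The plan is to express $\omega(V)$ --- the maximum acceptance probability of $\protocol{P}{V}$ over all prover strategies --- as the value of a semidefinite program of polynomial size, and then to approximate that value in $\NC$ by a parallelized matrix-multiplicative-weights solver. Since $c(n)-s(n)$ is a positive constant, it suffices to estimate $\omega(V)$ to additive error $(c-s)/3$ and compare with $(c+s)/2$. Recall that in the constant-turn regime the verifier's actions $V_1,\dots,V_{l+1}$ may be taken to be unitary circuits on the $O(\log n)$-qubit register $\sfM\sfW$, and that an optimal prover strategy is a \emph{pure} one, i.e.\ a sequence of unitaries $P_1,\dots,P_l$ acting on $\sfQ\sfM$ with $\sfQ$ of dimension $\poly(n)$, large enough to purify any state on $\sfM\sfW$.

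First I would build the SDP by a turn-by-turn analysis. Let $\mathcal R_j$ be the convex set of density operators on $\sfW$ that can occur as the verifier's private marginal just before the prover's $j$-th move. Because $P_1,\dots,P_{j-1}$ never touch $\sfW$ while $\sfQ$ is large, one obtains $\mathcal R_1 = \{\Tr_{\sfM}[V_1\ketbra{0}{0}_{\sfM\sfW}V_1^{\dagger}]\}$ and, for $j\ge 2$, $\mathcal R_j = \{\,\Tr_{\sfM}[V_j\tau V_j^{\dagger}]:\tau\succeq 0,\;\Tr\tau=1,\;\Tr_{\sfM}\tau\in\mathcal R_{j-1}\,\}$, and finally $\omega(V)=\max\{\,\Tr[\Pi\sigma]:\sigma\succeq 0,\;\Tr\sigma=1,\;\Tr_{\sfM}\sigma\in\mathcal R_l\,\}$ with $\Pi=V_{l+1}^{\dagger}(\ketbra{1}{1}_{\mathrm{out}}\otimes I)V_{l+1}$. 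Unrolling the recursion writes $\omega(V)$ as the value of one SDP with $l=O(1)$ positive-semidefinite matrix variables of dimension $\poly(n)$, subject to $O(1)$ partial-trace equality constraints and normalizations; the feasible region is bounded and the constraint operators have operator norm $O(1)$, i.e.\ the SDP has constant width. I would then verify that its data --- the $\poly(n)$-dimensional matrices of $V_1,\dots,V_{l+1}$ and of $\Pi$, to $\poly(n)$ bits of precision --- are produced from the strongly uniform description of $V$ by iterated products of $\poly(n)$-dimensional matrices, hence by uniform $\NC$ circuits.

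Next I would approximate the optimum of this SDP in $\NC$, adapting the parallel semidefinite-programming solver behind the proof of $\QIP=\PSPACE$~\cite{JJUW11}. The matrix-multiplicative-weights method yields an additive $\epsilon$-approximation of the optimum in $O(\mathrm{width}^2\cdot\log D/\epsilon^2)$ iterations, where $D$ is the dimension of the (block-diagonal) variable; here $D=\poly(n)$, the width is $O(1)$, and $\epsilon=\Omega(1)$, so $O(\log n)$ iterations suffice, each consisting of a constant number of matrix exponentials and products of $\poly(n)$-size matrices, computable in $\NC^2$. Composing these $O(\log n)$ iterations, together with a non-adaptive search over the objective value, keeps the whole procedure in $\NC$; comparing the resulting estimate of $\omega(V)$ with $(c+s)/2$ decides membership, giving $\QIPL_{O(1)}[c,s]\subseteq\NC$.

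One conceptual point to watch is that, unlike for time-bounded quantum interactive proofs, no space-efficient round reduction is available to collapse a constant-turn protocol to a three-message one, so the multi-turn case must be handled directly --- but the recursion for $\mathcal R_j$ does this uniformly for every constant $l$. I expect the real obstacle to be the quantitative bookkeeping certifying that each step is genuinely in $\NC$: computing the SDP data to sufficient precision with $\NC$ arithmetic; establishing the constant-width bound, so that the iteration count is $O(\log n)$ rather than $\poly(n)$ --- this is exactly what upgrades the generic $\Ptime$ bound of \Cref{thm:QIPUL-bounds-informal} to $\NC$ when the turn count and the gap are both constant; and verifying that matrix exponentials and their $O(\log n)$-fold composition are realized by uniform $\NC$ circuits to the needed accuracy.
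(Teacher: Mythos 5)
Your SDP formulation of $\omega(V)$ is essentially the paper's first SDP formulation (\Cref{lemma:QIPL-first-SDP-formulation}) specialized to constant turns, and it is correct as far as it goes; it is exactly what the paper uses to obtain the $\Ptime$ bound via the ellipsoid method (\Cref{lemma:QIPLconst-in-P}). The gap is in the second half: you assert that this multi-round SDP, with its nested partial-trace \emph{equality} constraints linking $l$ matrix variables, is a ``constant-width'' SDP that matrix multiplicative weights approximates in $O(\log n)$ parallel iterations. That is precisely the step that is not known to be true. The parallel solver of~\cite{JJUW11} is not a general-purpose SDP solver: it applies to the particular equilibrium/feasibility problem arising from a \emph{single-coin} quantum Arthur--Merlin protocol, and the width-independent parallel solvers~\cite{JY11,AZLO16} handle positive (packing/covering) SDPs, not SDPs with partial-trace equality constraints. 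Recasting such equality constraints into a form MMW can handle generically inflates the width or the iteration count, and you give no argument that it does not here; this is exactly why~\cite{JJUW11} first collapse $\QIP$ to three messages and then to single-coin $\QMAM$ before invoking MMW, and why the paper follows the same route in miniature. Treating this as ``quantitative bookkeeping'' understates it: it is the mathematical content of the theorem.

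Relatedly, your ``conceptual point to watch'' --- that no round reduction is available, so the multi-turn case must be handled directly --- is the opposite of what the paper does. The turn-halving lemma (\Cref{lemma:QIPL-halving-parallelization}) \emph{can} be applied in the constant-turn regime: each application only multiplies the number of verifier actions per turn by a constant, so $O(1)$ applications keep the verifier space-bounded (\Cref{remark:turn-havling-limitations}). The paper's proof of \Cref{thm:QIPLconst-in-NC} proceeds by achieving perfect completeness (\Cref{lemma:QIPL-perfect-completeness}), applying turn-halving $O(1)$ times to reach three messages, restoring a constant gap by constant-fold parallel repetition (\Cref{lemma:QIPL-error-reduction-parallel}), converting $\QIPL_3$ to a single-coin $\QMAML^{\odot}$ protocol in the style of $\QIP(3) \subseteq \QMAM$~\cite{MW05}, and only then running the down-scaled, now $\poly(n)$-dimensional, solver of~\cite{JJUW11}, whose $\polylog(n)\cdot\poly(1/\epsilon)$ parallel time lies in $\NC$ precisely because the gap is constant. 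To repair your argument you would either need to follow this reduction, or supply an actual $\NC$ algorithm for the multi-variable SDP with partial-trace equality constraints --- which is an open problem, not a detail.
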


To compare with time-bounded classical or quantum interactive proofs, we summarize our three theorems in \Cref{table:QIPL-complexity}. Notably, our two models of space-bounded quantum interactive proofs, \QIPL{} and \QIPUL{}, demonstrate behavior that is distinct from both: 
\begin{itemize}[itemsep=0.33em, topsep=0.33em, parsep=0.33em]
    \item For (time-bounded) classical interactive proofs, all proof systems with $m \leq O(1)$ (the regime of the last row in \Cref{table:QIPL-complexity}) are contained in the second level of the polynomial-time hierarchy~\cite{Babai85,GS86}, whereas the class of proof systems with $m=\poly(n)$ (the regime of the second and third rows in \Cref{table:QIPL-complexity}) exactly characterizes \PSPACE{}~\cite{LFKN92,Shamir92}. 
    \item For (time-bounded) quantum interactive proofs, all proof systems with parameters listed in \Cref{table:QIPL-complexity} precisely capture \PSPACE{}~\cite{Wat99QIP,KW00,JJUW11}.
\end{itemize}

\begin{table}[ht!]
\centering
\adjustbox{max width=\textwidth}{
\begin{tabular}{cccc}
    \specialrule{0.1em}{2pt}{2pt}
    & Models
    & \makecell{Constant gap\\ \footnotesize{$c(n)-s(n) \geq \Omega(1)$}} 
    & \makecell{Polynomial small gap\\ \footnotesize{$c(n)-s(n) \geq 1/\poly(n)$}} \\
    \specialrule{0.1em}{2pt}{2pt}
    \makecell{\footnotesize{The number of messages:}\\$m(n) = \poly(n)$} & 
    $\QIPLHC (\subseteq \QIPL)$ &
    \makecell{ \NP{} \\ \footnotesize{\Cref{thm:QIPL-eq-NP-informal}}} &
    \makecell{ \NP{} \\ \footnotesize{\Cref{thm:QIPL-eq-NP-informal}}}\\
    \specialrule{0.05em}{2pt}{2pt}
    \makecell{\footnotesize{The number of messages:}\\$m(n) = \poly(n)$} & 
    \QIPUL{} &
    \makecell{ contains $\SAC^1 \!\cup\! \BQL$ \& in \Ptime{}\\ \footnotesize{\Cref{thm:QIPUL-bounds-informal}}} &
    \makecell{ contains $\SAC^1 \!\cup\! \BQL$ \& in \Ptime{}\\ \footnotesize{\Cref{thm:QIPUL-bounds-informal}}}\\
    \specialrule{0.05em}{2pt}{2pt}
    \makecell{\footnotesize{The number of messages:}\\$3 \leq m(n) \leq O(1)$}
    & \QIPL{} \& \QIPUL{}
    & \makecell{in \NC{}\\ \footnotesize{\Cref{thm:QIPLconst-informal}}} 
    & \makecell{ contains $\SAC^1 \!\cup\! \BQL$ \& in \Ptime{}\\ \footnotesize{\Cref{thm:QIPUL-bounds-informal}}}\\
    \specialrule{0.1em}{2pt}{2pt}
\end{tabular}
}
\caption{The computational power of $\QIPL$ and $\QIPUL$ with different parameters.}
\label{table:QIPL-complexity}
\end{table}

The central intuition underlying \Cref{table:QIPL-complexity} is that \textit{parallelization}~\cite{KW00,KKMV09}, perhaps the most striking complexity-theoretic property of \QIP{} proof systems, distinguishes \QIPUL{} from \QIPL{}. Quantum logspace operates within a polynomial-dimensional Hilbert space, remaining computationally weak even with a constant number of interactions, and is (at least) contained in \Ptime{}. In \QIPUL{}, the verifier's actions are \textit{reversible} and \textit{dimension-preserving}, allowing direct application of parallelization techniques from~\cite{KKMV09}. In contrast, \QIPL{} and its reversible generalization lack dimension preservation, requiring significantly more than $O(\log{n})$ space to parallelize the verifier's actions, which prevents parallelization. 

\paragraph{Space-bounded unitary quantum statistical zero-knowledge.} We also introduce \textit{(honest-verifier) space-bounded unitary quantum statistical zero-knowledge}, denoted as \QSZKULHV{}. This term refers to a specific form of space-bounded quantum proofs that possess statistical zero-knowledge against an honest verifier. 
Specifically, a space-bounded unitary quantum interactive proof system possesses this zero-knowledge property if there exists a quantum logspace simulator that approximates the snapshot states (``the verifier's view'') on the registers $\sfM$ and $\sfW$ after each turn of this proof system, where each state approximation must be very close (``indistinguishable'') to the corresponding snapshot state with respect to the trace distance. 

Our definition \QSZKULHV{} serves as a space-bounded variant of honest-verifier (unitary) quantum statistical zero-knowledge, denoted by \HVQSZK{}, as introduced in~\cite{Wat02QSZK}. 
Our fourth theorem establishes that the statistical zero-knowledge property completely negates the computational advantage typically gained through the interaction:
\begin{theorem}[Informal of \Cref{thm:QSZKL-eq-BQL}]
    \label{thm:QSZKL-eq-BQL-informal}
    $\QSZKUL = \QSZKULHV = \BQL$.
\end{theorem}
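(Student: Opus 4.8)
The plan is to prove the cycle of inclusions $\BQL \subseteq \QSZKUL \subseteq \QSZKULHV \subseteq \BQL$, with essentially all the work in the last one. The inclusion $\QSZKUL \subseteq \QSZKULHV$ is immediate, since zero knowledge against every verifier is a stronger requirement than zero knowledge against the honest verifier. For $\BQL \subseteq \QSZKUL$ I would take a bounded-error quantum logspace algorithm $A$ and build a two-turn \QIPUL{} protocol in which the honest prover's message is the fixed state $\ket{0}$ and the verifier ignores whatever it receives and runs $A$; completeness and soundness are immediate. Since the honest prover's message does not depend on the (logspace, unitary) verifier's message, the view of \emph{any} such verifier $V^\ast$ after each turn is exactly $V^\ast$'s own logspace computation on the input together with the fixed prover message, which a logspace simulator reproduces exactly. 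Hence this protocol is statistical zero knowledge against every verifier, and $\BQL \subseteq \QSZKUL$. It remains to prove $\QSZKULHV \subseteq \BQL$.

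\textbf{Reducing \QSZKULHV{} to consistency of simulator snapshots.}
Fix an $m$-turn \QIPUL{} protocol $\protocol{P}{V}$ with completeness $c$, soundness $s$, and $c-s \ge 1/\poly(n)$, together with an honest-verifier simulator whose output is $\delta$-close in trace distance to each snapshot state of the interaction on $\sfM \otimes \sfW$; as the zero-knowledge requirement holds for every inverse-polynomial accuracy, we take $\delta$ to be a high-degree inverse polynomial in $n$ relative to $m$. Let $\sigma_0,\dots,\sigma_m$ be the simulator's snapshots. Adapting Watrous's analysis of \HVQSZK{} to this setting, I would establish a characterization of the form: $x$ is a \emph{yes} instance iff, up to error $O(\delta)$, the $\sigma_j$'s satisfy all the \emph{consistency conditions}: $\sigma_0$ is close to the fixed initial all-zero state; ``$\sigma_j$ is close to $V_j \sigma_{j-1} V_j^\dagger$'' at the verifier's turns (these are known unitaries --- this is where unitarity of the verifier is used); ``$\Tr_\sfM \sigma_j$ is close to $\Tr_\sfM \sigma_{j-1}$'' at the prover's turns (across a prover turn the reduced state on $\sfW$ is untouched); and ``the acceptance probability read off $\sigma_m$ is at least $c-\delta$''. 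The \emph{yes} direction follows from running the honest prover and invoking zero knowledge. For the \emph{no} direction, if all these quantities were below a common threshold $\theta = O(\delta)$ and the acceptance value of $\sigma_m$ exceeded $s + \Omega(1/\poly(n))$, then Uhlmann's theorem would let an unbounded prover steer the global state turn by turn so that the verifier's view stays $O(m\sqrt{\theta})$-close to the $\sigma_j$'s, making $V$ accept with probability greater than $s$ --- contradicting soundness. Choosing $\delta$ a sufficiently small inverse polynomial makes $m\sqrt{\theta}$ negligible and leaves an inverse-polynomial separation between the \emph{yes}-case and \emph{no}-case values of every consistency quantity.

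\textbf{Evaluating the consistency conditions in \BQL{}.}
The decision problem is now: given the strongly uniform logspace descriptions of the simulator and of the verifier's unitaries, decide whether every one of the $O(m) = \poly(n)$ consistency quantities lies on the ``small'' side (\emph{yes}) or at least one exceeds $\theta$ (\emph{no}). Each quantity is the trace distance between two $O(\log n)$-qubit states that are each prepared by an $O(\log n)$-space quantum circuit --- a simulator snapshot, or $V_j$ applied to one --- and the acceptance value of $\sigma_m$ is of the same kind. A \BQL{} machine iterates $j = 1,\dots,m$ with an $O(\log n)$-bit counter, reuses its workspace to estimate the $j$-th quantity to additive error much smaller than $\delta$ (amplifying each estimate so a union bound over the $m$ rounds is affordable), and accepts iff all estimates land on the ``small'' side; being a $\poly(n)$-fold conjunction of promise predicates with an inverse-polynomial gap, this stays within \BQL{}. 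The engine that makes each estimate feasible is the fact --- which I would establish as a separate, self-contained ingredient --- that the gap version of quantum state distinguishability for $O(\log n)$-qubit states generated by logspace circuits (the \GapQSDlog{}-type problem), and more generally inverse-polynomially precise additive estimation of the trace distance (and fidelity) of such states, lies in \BQL{}; this follows from running quantum singular-value and amplitude-estimation subroutines within logarithmic space on the circuits at hand. Combining the three parts yields $\QSZKUL = \QSZKULHV = \BQL$.

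\textbf{Main obstacle.}
The crux is the quantitative interplay of the three parts: the Watrous-style reconstruction accumulates error over $m = \poly(n)$ rounds, which drives both the required simulator accuracy and the required estimation accuracy down to a higher-degree inverse polynomial, and one must check carefully that an inverse-polynomial promise gap genuinely survives the $\poly(n)$-fold conjunction --- while also verifying the strong uniformity of the entire simulator-circuit sequence and that every auxiliary state preparation and estimation subroutine really runs in $O(\log n)$ space. Establishing the \BQL{} upper bound for inverse-polynomially precise trace-distance and fidelity estimation of logspace-generated $O(\log n)$-qubit states is the other substantial ingredient, and I expect it to be the most technically involved single step.
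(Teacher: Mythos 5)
Your proposal is correct in outline, but it reaches $\QSZKULHV \subseteq \BQL$ by a genuinely different route from the paper. Both arguments rest on the same two engines: the Watrous-style simulator characterization (for \emph{yes} instances the simulator's snapshots satisfy all consistency constraints of the interaction, while for \emph{no} instances soundness forces the reduced state on $\sfW$ to move by $\Omega((\sqrt{c}-\sqrt{s})^2/m^2)$ across at least one prover turn --- the paper's Proposition 4.1.1, adapted from Watrous's Lemma 15), and the fact that trace-distance testing of $O(\log n)$-qubit logspace-preparable states (\GapQSDlog{}) is in \BQL{}, which the paper imports from prior work rather than re-deriving. Where you diverge is in how the $\poly(n)$ per-round conditions are decided. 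The paper packages them into a single promise problem, \IndivProdQSD{} (two $k$-tuples of states that are either far as a tensor product or individually close), proves \coIndivProdQSD{} is \QSZKULHV{}-hard, and then places \IndivProdQSD{} in \BQL{} via an averaging argument plus the collapse $\QMAL = \BQL$: the prover simply names the index $j$ of a far pair and the verifier runs the \GapQSDlog{} algorithm once on that pair. You instead have the \BQL{} machine sweep over all $m$ rounds itself, amplifying each \GapQSDlog{} check so a union bound over the conjunction survives. Your conjunction is sound (a \emph{no} instance guarantees one round in the ``far'' regime, so that amplified check rejects regardless of how the algorithm behaves on rounds falling in the promise gap), and it avoids invoking $\QMAL = \BQL$; the cost is exactly the quantitative bookkeeping you flag as the main obstacle, whereas the paper's existential formulation sidesteps the amplification-and-union-bound step and yields a natural \QSZKULHV{}-complete problem as a byproduct. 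One further simplification the paper makes that you could adopt: since the verifier is honest and unitary, the simulator's verifier-turn snapshots can be \emph{defined} as $V_j \xi'_{j-1} V_j^{\dagger}$, so only the prover-turn conditions $\Tr_{\sfM}(\xi_j) \approx \Tr_{\sfM}(\xi'_j)$ and the final acceptance value ever need to be tested.
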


In addition to \QSZKULHV{}, we can define \QSZKUL{} in line with~\cite{Wat09QSZK}, particularly considering space-bounded unitary quantum statistical zero-knowledge against \textit{any verifier} (rather than an honest verifier). Following this definition, $\BQL \subseteq \QSZKUL \subseteq \QSZKULHV$. Interestingly, \Cref{thm:QSZKL-eq-BQL-informal} serves as a direct space-bounded counterpart to $\QSZK = \HVQSZK$~\cite{Wat09QSZK}.

The intuition behind \Cref{thm:QSZKL-eq-BQL-informal} is that the snapshot states after each turn capture all the essential information in the proof system, such as allowing optimal prover strategies to be ``recovered'' from these states~\cite[Section 7]{MY23}. In space-bounded scenarios, space-efficient quantum singular value transformation~\cite{LGLW23} enables fully utilizing this information.

Finally, we emphasize that our consideration of this zero-knowledge property is purely complexity-theoretic. A full comparison with other notions of (statistical) zero-knowledge is beyond this scope. For more on classical and quantum statistical zero-knowledge, see~\cite{Vad99} and~\cite[Chapter 5]{VW16}. 

\subsection{Proof techniques}

Standard techniques for quantum interactive proofs are typically developed under the restriction that the verifier is \textit{unitary}. While this restriction does not limit generality in time-bounded settings (e.g., see~\cite[Section 4.1.4]{VW16}), it presents difficulties in the context of space-bounded quantum interactive proofs, where verifiers may not be unitary. In what follows, we highlight the challenges that arise and briefly explain how we address them.  

\subsubsection{Upper bounds for \QIPLHC{} and \QIPUL{}}

\paragraph{$\QIPLconst \subseteq \Ptime$.} We prove this inclusion using a semi-definite program (SDP) for a given \QIPLconst{} proof system, adapted from the SDP formulation for \QIP{} in~\cite{VW16,Watrous16tutorial}. Together with the turn-halving lemma, specifically \Cref{thm:QIPL-properties-informal}\ref{thmitem:QIPL-parallelization-informal}, this inclusion implies that $\QIPUL \subseteq \Ptime$. 

Consider a $(2l)$-turn \QIPLconst{} proof system $\protocol{P}{V}$, where $l \leq O(1)$.
Let $\rho_{\ttM_j\ttW_j}$ and $\rho_{\ttM'_j\ttW_j}$, for $j \in [l]$, denote snapshot states in the register $\sfM$ and $\sfW$ after the $(2j-1)$-st turn and the $(2j)$-th turn in $\protocol{P}{V}$, respectively, as illustrated in \Cref{fig:QIPL-even}. The variables in this SDP correspond to these snapshot states after each prover's action, particularly $\rho_{\ttM'_j\ttW_j}$ for $j\in[l]$, while the objective function is the maximum acceptance probability $\omega(V)$ of $\protocol{P}{V}$. 
Since the verifier's actions are \textit{unitary} circuits, these variables can be treated independently. Hence, the SDP program mainly consists of two types of constraints, assuming that all variables are valid quantum states: 
\begin{enumerate}[label={\upshape(\roman*)}, itemsep=0.33em, topsep=0.33em, parsep=0.33em]
    \item Verifier's actions only operate on the registers $\sfM$ and $\sfW$: 
    \[\rho_{\ttM_{j}\ttW_{j}} = V_{j} \rho_{\ttM'_{j-1}\ttW_{j-1}} V_{j}^{\dagger} \text{ for } j\in \{2,\cdots,l\}, \text{ and } \rho_{\ttM_1\ttW_1} = V_1 \ketbra{\bar{0}}{\bar{0}}_{\sfM\sfW}V_1^{\dagger}.\] 
    \item Prover's actions do not change the verifier's private register: 
    \begin{equation}
        \label{eq:QSZKL-constraint}
        \Tr_{\ttM_j}(\rho_{\ttM_j\ttW_j}) = \Tr_{\ttM'_j}(\rho_{\ttM'_j\ttW_j}) \text{ for } j \in [l].
    \end{equation} 
\end{enumerate}
Since the variables in this SDP collectively hold $O(\log{n})$ qubits, a standard SDP solver (e.g.,~\cite{GM12}) provides a deterministic polynomial-time algorithm for approximately solving it. 

\paragraph{$\QIPLHC \subseteq \NP$.} 
We now extend the above SDP formulation to $l$-round \QIPL{} proof systems, in which the verifier's $j$-th action $\widetilde{V_j}$ is an \textit{almost-unitary} quantum circuit that allows $O(\log{n})$ pinching intermediate measurements. For simplicity, we instead consider the corresponding isometric quantum circuit $V_j$, which introduces a new environment register $\sfE_j$ measured at the end of the turn, with the outcome denoted by $u_j$. 

Recall that $\omega(V)|^u$ represents the contribution of the measurement outcome branch $u=(u_1,\cdots,u_l)$ to the maximum acceptance probability $\omega(V)$. 
Owing to the high-concentration condition, it suffices to consider an \textit{approximation} $\widehat{\omega}(V)|^u$ of $\omega(V)|^u$ for some specific branch $u$, satisfying $\omega(V)|^u \leq \widehat{\omega}(V)|^u \leq \omega(V)$.
These bounds follow from two facts: (1) pinching measurements eliminate coherence between subspaces corresponding to different branches, which enables $\omega(V)|^u$ to be approximately optimized in isolation; and (2) the acceptance probability of any associated global prover strategy across all branches cannot exceed $\omega(V)$. 

By extending the SDP formulation of $\QIPLconst{}$ proof systems, we construct a family of SDP programs depending on the measurement outcome branches $\{u\}$. 
Let $\rho_{\ttM_j\ttW_j}\!\otimes\! \ketbra{u_j}{u_j}_{\sfE_j}$ denote the \textit{unnormalized} snapshot states after measuring $\sfE_j$. 
For a fixed branch $u$, the associated SDP program includes the following three types of constraints: 
\begin{enumerate}[label={\upshape(\roman*')}, itemsep=0.33em, topsep=0.33em, parsep=0.33em]
    \item $\rho_{\ttM_{j}\ttW_{j}} \!\otimes\! \ketbra{u_{j}}{u_{j}}_{\sfE_{j}} = \big( I_{\ttM_{j}\ttW_{j}} \!\otimes\! \ketbra{u_{j}}{u_{j}}_{\sfE_{j}}\big) V_{j}  \rho_{\ttM'_{j-1}\ttW_{j-1}} V_{j}^{\dagger}$ for $j\in \{2,\cdots,l\}$, and 
    
    $\rho_{\ttM_1\ttW_1} \!\otimes\! \ketbra{u_1}{u_1}_{\sfE_1} = \big( I_{\ttM_1\ttW_1} \!\otimes\! \ketbra{u_1}{u_1}_{\sfE_1}\big) V_1 \ketbra{\bar{0}}{\bar{0}}_{\sfM\sfW}V_1^{\dagger}.$
    \item $\Tr_{\ttM_j}(\rho_{\ttM_j\ttW_j} \!\otimes\! \ketbra{u_j}{u_j}_{\sfE_j}) = \Tr_{\ttM'_j}(\rho_{\ttM'_j\ttW_j} \!\otimes\! \ketbra{u_j}{u_j}_{\sfE_j})$ for $j \in [l]$.
    \item $\Tr(\rho_{\ttM_j\ttW_j} \!\otimes\! \ketbra{u_j}{u_j}_{\sfE_j}) = \Tr(\rho_{\ttM'_j\ttW_j} \!\otimes\! \ketbra{u_j}{u_j}_{\sfE_j})$ for $j \in [l]$.
\end{enumerate}
Notably, for the third type of constraints, both sides evaluate to exactly $1$ when the verifier is unitary, as in the cases of \QIP{} and \QIPUL{}. In contrast, for \QIPL{} proof systems, the value varies across different measurement outcome branches and remains bounded above by $1$. Crucially, this value is entirely determined by the verifier's actions and cannot be altered by the prover. 

Next, we explain the \NP{} containment. The classical witness $w$ consists of an $l$-tuple $u$, indicating a specific SDP program, and a feasible solution $(\rho_{\ttM'_1\ttW_1}, \cdots, \rho_{\ttM'_l\ttW_l})$ to this SDP program. This solution can be represented by $l$ square matrices of dimension $\poly(n)$, thus having polynomial size. The verification procedure involves checking (1) whether the solution encoded in $w$ satisfies these SDP constraints based on $u$; and (2) whether $\widehat{\omega}(V)|^{u} \geq c(n)$. All these checks can be verified using basic matrix operations in deterministic polynomial time. 

\subsubsection{Basic properties for \QIPL{} and \QIPUL{}}
We begin by outlining three basic properties of space-bounded (unitary) quantum interactive proof systems, which are dependent on the parameters $c(n)$, $s(n)$, and $m(n)$:

\begin{theorem}[Properties for \QIPL{} and \QIPUL{}, informal of \Cref{thm:QIPL-basic-properties,lemma:QIPL-halving-parallelization}] 
    \label{thm:QIPL-properties-informal}
    Let $c(n)$, $s(n)$, and $m(n)$ be functions such that $0 \leq s(n) < c(n) \leq 1$, $c(n)-s(n) \geq 1/\poly(n)$, and $1 \leq m(n) \leq \poly(n)$.  Then, it holds that\emph{:}
    \begin{enumerate}[label={\upshape(\arabic*)}, itemsep=0.33em, topsep=0.33em, parsep=0.33em] 
        \item \textbf{\emph{Closure under perfect completeness}}.  \label{thmitem:QIPL-perfect-completeness-informal}
        \[ \QIPL_m[c,s] \subseteq \QIPL_{m+2}[1,1-(c-s)^2/2] \text{ and } \QIPUL_m[c,s] \subseteq \QIPUL_{m+2}[1,1-(c-s)^2/2]. \]
        \item \textbf{\emph{Error reduction}}. For any polynomial $k(n)$, \label{thmitem:QIPL-error-reduction-informal}
        \[ \QIPL_m[c,s] \subseteq \QIPL_{m'}\big[1,2^{-k}\big] \text{ and } \QIPUL_m[c,s] \subseteq \QIPUL_{m'}\big[1,2^{-k}\big].\] 
        Here, $m' \coloneqq O\big(km/\log\frac{1}{1-(c-s)^2/2}\big)$.
        \item \textbf{\emph{Parallelization}}. $\QIPUL_{4m+1}[1,s] \subseteq \QIPUL_{2m+1}[1,(1+\sqrt{s})/2]$. \label{thmitem:QIPL-parallelization-informal}
    \end{enumerate}
\end{theorem}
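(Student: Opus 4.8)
These follow the standard template for time-bounded quantum interactive proofs, so the real work is to check that the transformations stay inside quantum logspace and preserve the strong (single deterministic-logspace Turing machine) uniformity of the verifier's mapping. For perfect completeness I would apply the Kitaev--Watrous transformation (see~\cite{KW00,VW16}): it appends two messages and $O(1)$ verifier qubits, replaces the final accept/reject measurement by one more prover round followed by a measurement, and gives completeness $1$ with soundness $1-(c-s)^2/2$. The new verifier actions touch only $\sfM$, $\sfW$, and $O(1)$ fresh control qubits, so the verifier remains a quantum logspace machine, and the new circuit sequence is a fixed logspace post-processing of the old one, so strong uniformity survives; nothing here distinguishes unitary $V_j$ (\QIPUL{}) from almost-unitary $V_j$ with $O(\log n)$ pinching measurements (\QIPL{}), giving both inclusions.

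For error reduction, the plan is to first use part (1) to obtain a perfect-completeness protocol with soundness $1-\delta$, $\delta=(c-s)^2/2$, and then repeat it \emph{sequentially} $t=O\!\big(k/\log\tfrac{1}{1-\delta}\big)$ times, accepting iff every repetition accepts. Parallel repetition, the usual device for \QIP{}, is unavailable because it multiplies the verifier's space by $t$. The verifier keeps an $O(\log t)=O(\log n)$-bit round counter and one qubit recording the conjunction of the repetition outcomes; completeness stays $1$ since the honest prover is accepted every round, and conditioning on repetitions $1,\dots,i-1$ accepting, repetition $i$ is — from the prover's viewpoint — a fresh run of the base protocol from a reinitialized workspace, hence accepts with probability at most $1-\delta$, so all $t$ accept with probability at most $(1-\delta)^t\le 2^{-k}$. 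The lone space-bounded wrinkle is reinitializing $\sfW$ between repetitions: for \QIPL{} this is immediate (measure $\sfW$ with a pinching measurement, then apply bit-flip corrections), whereas for \QIPUL{}, whose verifier cannot overwrite qubits, I would strengthen part (1) so that in \emph{yes} instances the honest prover drives $\sfW$ to a fixed state on the accepting branch (in the spirit of the pure-witness amplification for space-bounded \QMA{}~\cite{FR21}), which the verifier then uncomputes before the next round; this costs only a constant factor in messages, so the same $m'$ works for both classes.

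\textbf{Part (3): the turn-halving lemma for \QIPUL{}.} This is the key lemma behind \QIPUL{}$\subseteq\Ptime$: iterating it $O(\log\poly(n))=O(\log n)$ times shrinks a $\poly(n)$-turn protocol to a constant-turn one (with a $1/\poly(n)$ gap, since each halving multiplies the gap by roughly $1/4$), which is then placed in \Ptime{} by the SDP argument. To prove the lemma I would adapt the ``snapshot-state'' turn-reduction of Kitaev--Watrous and Kempe et al.~\cite{KW00,KKMV09}. Take a $(4m+1)$-message \QIPUL{} protocol $\protocol{P}{V}$ with completeness $1$, and, via part (1), assume it has a clean accepting configuration. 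Its midpoint is after the first $2m$ messages; let $\rho_{\sfM\sfW}$ be the reduced state the honest verifier holds there. The new $(2m+1)$-message protocol is: (i) the prover sends the verifier a state $\xi$ on a copy of $\sfM\sfW$, claimed to equal $\rho_{\sfM\sfW}$; (ii) the verifier tosses a coin and announces it; (iii) on heads, the verifier runs the \emph{first} half of $\protocol{P}{V}$ forward from the true start ($\sim 2m$ messages), obtains its genuine midpoint state, and performs a swap-test-style fidelity check between that state and $\xi$; on tails, the verifier runs the \emph{second} half \emph{backward} from the clean accepting configuration ($\sim 2m$ messages) — possible precisely because each $V_j$ is unitary, hence invertible — obtains a second claimed midpoint state, and checks it against $\xi$. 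Both branches anchor to a fixed configuration and compare to $\xi$, so every verifier action is still a unitary logspace circuit, and the message count is $2m+1$ after fixing the exact indexing.

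\textbf{Analysis, and where the difficulty lies.} Completeness is $1$: in \emph{yes} instances the honest prover supplies a faithful midpoint snapshot (with a purifying register so the fidelity check passes with certainty) and plays optimally in whichever half is chosen; the clean normal form from part (1) makes the tails branch's fixed starting configuration legitimate. For soundness on a \emph{no} instance: if the heads check passes with probability near its maximum, then $\xi$ has high fidelity with a first-half-reachable midpoint, and if the tails check passes with probability near $1$, then $\xi$ has high fidelity with a second-half-backward-reachable configuration; were both true, the composite strategy ``reach that nearby first-half midpoint, then play the second half'' would make $\protocol{P}{V}$ accept with probability exceeding $s$, a contradiction. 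Quantifying this with Uhlmann's theorem and a fidelity-telescoping bound, then averaging over the coin, yields soundness $(1+\sqrt{s})/2$. I expect the main obstacle to be the space accounting that makes all of this work in quantum logspace, and it is exactly what forces the lemma to be about \QIPUL{} rather than \QIPL{}: in the heads branch the verifier must hold \emph{both} its own $O(\log n)$-qubit midpoint state and the prover's $\xi$ in order to compare them, which is fine since $2\cdot O(\log n)=O(\log n)$; and the tails branch must run half a protocol \emph{backward} inside $O(\log n)$ qubits — both steps need the verifier's circuits to be dimension-preserving and reversible. If the $V_j$ carried pinching measurements, reversing half a protocol would require storing all $\poly(n)$ intermediate snapshots and outcomes, busting the space bound, so the construction genuinely does not transfer — consistent with the paper's strict separation of \QIPUL{} from \QIPL{} (unless $\Ptime=\NP$). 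A secondary technical point is re-establishing strong uniformity of the new verifier, which composes the forward circuits, their inverses, the coin toss, and the comparison test; this follows from the strong uniformity of the original verifier and from the fact that inverting a logspace-uniform circuit family keeps it logspace-uniform.
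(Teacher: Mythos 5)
Your part (1) follows the paper's approach (the two-extra-message Kitaev--Watrous transformation from~\cite{KW00,VW16}), but part (2) breaks down exactly at the step you flag as the ``lone space-bounded wrinkle,'' namely reinitializing $\sfW$ between sequential repetitions. For \QIPL{}, your fix --- ``measure $\sfW$ with a pinching measurement, then apply bit-flip corrections'' --- is not available in the paper's model: an almost-unitary circuit is \emph{oblivious} to its pinching measurement outcomes (its gate sequence cannot depend on them), and it has neither classically controlled corrections nor ancillary gates, so a pinching measurement merely decoheres $\sfW$ and leaves the outcome encoded in a state; the paper stresses in \Cref{remark:subtleties-space-bounded-circuits} that this is precisely why such circuits cannot reset qubits to zero. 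For \QIPUL{}, your fix (the honest prover drives $\sfW$ to a fixed state, which the verifier uncomputes) has no soundness mechanism: a cheating prover need not leave $\sfW$ in that state, and an undetected dirty workspace can correlate repetitions and inflate the acceptance probability, so the claim that ``repetition $i$ is a fresh run from a reinitialized workspace'' is unjustified. The paper's \Cref{protocol:QIPL-error-reduction} handles both classes uniformly with a verified-reset gadget: the prover sends a fresh register claimed to be $\ket{\bar{0}}$, the verifier accumulates into a counter (via a multiply-controlled adder) the indicator that this register really is all-zero, swaps it into $\sfW$, and accepts at the end only if all $r-1$ cleanliness checks and all $r$ acceptance indicators succeeded. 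Some such verifiable cleaning step is what your argument is missing.

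Your part (3) also deviates from the paper in a way that costs perfect completeness. The paper's \Cref{protocol:parallelization} never compares two copies of the midpoint state: the verifier receives the prover's claimed midpoint and, on a coin flip, either continues the \emph{second} half of $\protocol{P}{V}$ forward and checks the original accepting condition, or runs the \emph{first} half backward and checks that $\sfW$ is all-zero; both endpoint checks are projections onto fixed basis states, and soundness follows by applying \Cref{lemma:sum-of-squared-fidelity} to the two branches. Your version instead anchors each branch at an endpoint and performs a ``swap-test-style fidelity check'' against $\xi$ at the midpoint. The genuine midpoint state on $(\sfM,\sfW)$ is in general a mixed state entangled with the prover's private register, so a swap test between it and $\xi$ accepts two \emph{identical} mixed states only with probability $\tfrac{1}{2}+\tfrac{1}{2}\Tr(\rho^2)<1$, which destroys completeness $1$; having the prover purify $\xi$ does not help, since the purification of the genuine midpoint lives in the prover's register, out of the verifier's reach. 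Your high-level soundness intuition (a $\xi$ simultaneously near a forward-reachable accepting state and a backward-reachable-from-zero state would contradict the soundness of $\protocol{P}{V}$) is the right one and is exactly what the paper formalizes, but it is realized there through the two endpoint checks rather than a midpoint comparison; your space-accounting and uniformity remarks would be correct for that endpoint-check version.
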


Achieving perfect completeness for \QIPL{} and \QIPUL{} proof systems, particularly \Cref{thm:QIPL-properties-informal}\ref{thmitem:QIPL-perfect-completeness-informal}, can be adapted from the techniques used in \QIP{} proof systems~\cite[Section 4.2.1]{VW16} (or~\cite[Section 3]{KW00}) by adding two additional turns. However, there are important subtleties to consider when establishing the other properties in \Cref{thm:QIPL-properties-informal}. 

\paragraph{Error reduction via sequential repetition.}
Since each message is of size $O(\log n)$, error reduction via \textit{parallel repetition} does not apply to \QIPL{} and \QIPUL{} when the gap $c-s$ is polynomially small, regardless of the number of messages.\footnote{Still, error reduction via parallel repetition works for \QIPL{} when the gap $c-s \geq \Omega(1)$; see \Cref{lemma:QIPL-error-reduction-parallel}.} Alternatively, error reduction via \textit{sequential repetition} requires that the registers $\sfM$ and $\sfW$ (the ``workspace'') must be in the all-zero state (``cleaned'') before each execution of the original proof systems. While this is trivial for \QIP{} proof systems, it poses a challenge for \QIPL{} and \QIPUL{} proof systems because the (almost-)unitary quantum logspace verifier cannot achieve this on its own. 

To establish \Cref{thm:QIPL-properties-informal}\ref{thmitem:QIPL-error-reduction-informal}, our solution is to have \textit{the prover ``clean'' the workspace} while ensuring that the prover behaves honestly. This is achieved through the following proof system: The verifier applies a multiple-controlled adder before each proof system execution, with the adder being activated only when the control qubits are all zero. The verifier then measures the register that the adder acts on and accepts if (1) the workspace is ``cleaned'' for each execution and (2) \textit{all} outcomes of the original proof system executions are acceptance. 

\paragraph{Parallelization and strict uniformity condition for the verifier's mapping.}
The original parallelization technique proposed in~\cite[Section 4]{KW00} applies only to \QIPUL{} (also \QIPL{}) proof systems with a constant number of messages. This limitation stems from the requirement that the prover sends the snapshot states for all $m$ turns in a single message. As $m$ increases, the size of this message grows to $O(m \log n)$, which becomes $\omega(\log n)$ when $m = \omega(1)$. 

To overcome this issue, we adapt the technique from~\cite[Section 4]{KKMV09}, a ``dequantized'' version of the original approach that fully utilizes the \textit{reversibility} of the verifier's actions. Instead of sending all snapshot states in one message, the new verifier performs the original verifier's action or its reverse at any turn in a single action. Specifically, when applying this method to a $(4m+1)$-turn \QIPUL{} proof system $\protocol{P}{V}$, the prover starts by sending only the snapshot state after the $(2m+1)$-st turn. The verifier then chooses $b\in\binset$ uniformly at random: if $b=0$, the verifier continues to interact with the prover according to $\protocol{P}{V}$, keeping the acceptance condition unchanged; while if $b=1$, the verifier executes $\protocol{P}{V}$ in reverse, and finally accepts if its private qubits are all zero.
This proof system, which halves the number of turns, is referred to as the \textit{turn-halving lemma}, as detailed in \Cref{thm:QIPL-properties-informal}\ref{thmitem:QIPL-parallelization-informal}.

Next, we establish \Cref{thm:QIPUL-bounds-informal} by applying the turn-halving lemma $O(\log n)$ times.\footnote{An operation based on $r$ random bits can be simulated by a corresponding unitary controlled by the state $\ket{+}^{\otimes r}$, where $\ket{+} \coloneqq \frac{1}{\sqrt{2}}(\ket{0}+\ket{1})$. Thus, simulating $O(\log n)$ random bits across all turns of the proof system requires $O(\log n)$ ancillary qubits in total, which is feasible for the unitary quantum logspace verifier in \QIPUL{}.\label{footref:simulating-random-bits}} Specifically, any \QIPUL{} proof system with a polynomial number of messages can be parallelized to three messages,\footnote{Although the turn-halving lemma does not directly apply to \QIPL{} proof systems, a similar reasoning works for its reversible generalization $\QIPL^{\diamond}$, reducing a constant number of messages to three.} while the gap $c-s$ of the resulting proof system becomes polynomially small. 
However, this reasoning poses a challenge: the resulting verifier must know all original verifier actions, necessitating a strong notion of uniformity for the verifier's mapping in our definition of \QIPUL{}. 
In addition, to prove \Cref{thm:QIPLconst-informal}, we adopt a similar approach to that used for \QIP{}, particularly $\QIP[3] \subseteq \QMAM{}$~\cite{MW05}, which inspired the turn-halving lemma~\cite[Section 4]{KKMV09}, and an exponentially down-scaling version of the work~\cite{JJUW11}.

\subsubsection{Lower bounds for \QIPL{} and \QIPUL{}}

\paragraph{$\NP{} \subseteq \QIPL{}$.}
This inclusion draws inspiration from the interactive proof system in~\cite[Lemma 2]{CL95} and presents a challenge in adapting this proof system to the \QIPL{} setting. 
Notably, our construction essentially provides a \QIPLHC{} proof system, since the pinching measurement outcomes are \textit{unique} (even stronger than the high-concentration condition) for \textit{yes} instances. 

We start by outlining this \QIPL{} proof system for \threeSAT{}. 
Consider a \threeSAT{} formula 
\[\phi = C_1 \vee C_2 \vee C_3 = (x_1 \vee x_2 \vee x_3) \wedge (\neg x_4 \vee \neg x_2 \vee x_3) \wedge (x_4 \vee \neg x_1 \vee \neg x_3)\]
with $k=3$ clauses and $n=4$ variables. An assignment $\alpha$ of $\phi$ assigns each variable $x_j$ for $j \in [n]$ a value $\alpha_j$ of either $\top$ (true) or $\perp$ (false). 
To verify whether $\phi$ is satisfied by the assignment $\alpha$, we encode $\phi(\alpha)$ as $\Enc(\phi(\alpha))$, consisting of $3k$ triples $(l,i,v)$, where $l$ denotes the literal (either $x_j$ or $\neg x_j$), $i$ represents the $i$-th clause, and $v$ denotes the value assigned to $l$. The prover's actions are divided into two phases: 
\begin{enumerate}[label={\upshape(\roman*)}, itemsep=0.33em, topsep=0.33em, parsep=0.33em]
    \item \textsc{Consistency Check} (for variables). The prover sends one by one all the triples $(l,i,v)$ in $\Enc(\phi(\alpha))$, ordered by the variable $\var(l)$ corresponding to the literal $l$; 
    \item \textsc{Satisfiability Check} (for clauses). For each $i\in\{1,\ldots,k\}$, the prover sends the three triples $(l_1,i,v_1)$, $(l_2,i,v_2)$, and $(l_3,i,v_3)$ in $\Enc(\phi(\alpha))$.
\end{enumerate}

The verifier's actions are as follows. 
To prevent the prover from entangling with the verifier and revealing the private coins, the verifier measures the received messages in the computational basis at the beginning of each action, interpreting the measurement outcomes as the prover's messages. Therefore, it suffices to establish soundness \textit{against classical messages}. 

We now focus on this specific proof system. In Phase (i), the verifier checks whether the assigned values to the same variable are consistent. Since the verifier's actions are almost-unitary circuits and \textit{cannot discard information}, this seems challenging. Our solution is that the verifier keeps only the current and the previous triples, returning the previous triple to the prover in the next turn. In Phase (ii), the verifier checks whether each batch of three triples is satisfied and returns them immediately. 
Lastly, to ensure that the multisets of triples from Phase (i) and (ii) are identical, the verifier computes the ``fingerprint'' of these multisets,\footnote{See \Cref{subsec:classical-concepts-and-tools} for the definition of the fingerprint of a multiset. The computation of each fingerprint requires $O(\log n)$ random bits, which can be simulated in a \QIPL{} proof system; see \Cref{footref:simulating-random-bits} for details.} triple by triple, and compares the fingerprints from both phases at the end. 
The verifier accepts if all checks succeed. 

Using the fingerprinting lemma~\cite{Lipton90}, we prove the correctness of this proof system, showing that $\threeSAT{} \in \QIPL_{8k}[1,1/3]$. 
Interestingly, when combined with the inclusion $\QIPLHC{} \subseteq \NP{}$, this proof system implies (indirect) error reduction for \QIPLHC{} (see \Cref{remark:QIPLHC-properties}). 

\paragraph{$\SAC^1 \subseteq \QIPUL$.} This inclusion is inspired by the interactive proof system in~\cite[Section 3.4]{Fortnow89}. By using error reduction for \QIPUL{}, specifically \Cref{thm:QIPL-properties-informal}\ref{thmitem:QIPL-error-reduction-informal}, it remains to demonstrate that $\SAC^1 \subseteq \QIPUL[1,1-1/\poly(n)]$. 
A Boolean circuit is defined as a (uniform) $\SAC^1$ circuit $C$ if it is an $O(\log{n})$-depth Boolean circuit that employs unbounded fan-in OR gates, bounded fan-in AND gates, and negation gates at the input level. 

The interactive proof system for evaluating the circuit $C$ starts at its top gate. If the gate is an OR, the prover selects a child gate; if it's an AND, the verifier flips a coin to select one. This process repeats until reaching an input $x_i$ or its negation, with the verifier accepting if $x_i=1$ or $x_i=0$, respectively.
Since the computational paths in $C$ do not interfere, extending soundness against classical messages, following directly from~\cite[Section 3.4]{Fortnow89}, to quantum messages can be done by measuring the registers $\sfM$ and $\sfW$ in the computational basis at the end of the verifier's last turn. Finally, given that $C$ has $O(\log{n})$ depth, implementing the verifier's actions requires only $O(\log n)$ ancillary qubits, which is indeed achievable by a unitary verifier. 

\subsubsection{The equivalence of \QSZKUL{} and \BQL{}}

We demonstrate \Cref{thm:QSZKL-eq-BQL-informal} by introducing a \QSZKULHV{}-complete problem: 

\begin{theorem}[Informal of \Cref{thm:IndivProdQSD-QSZKLcomplete}]
    \label{thm:IndivProdQSD-QSZKLcomplete-informal}
    \IndivProdQSD{} is \QSZKULHV{}-complete.
\end{theorem}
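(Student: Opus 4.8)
The plan is to prove the two directions separately: \emph{membership}, $\IndivProdQSD \in \QSZKULHV$, and \emph{hardness}, that every promise problem in $\QSZKULHV$ reduces to $\IndivProdQSD$ under deterministic logspace reductions.

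For membership I would observe that $\IndivProdQSD$ requires no interaction at all. The verifier, on its own, estimates the trace distance $\td\bigl(\rho_0^{(i)},\rho_1^{(i)}\bigr)$ between the two $O(\log n)$-qubit states prepared by the given logspace circuits for each factor $i$, using space-efficient quantum singular value transformation~\cite{LGLW23} (a space-bounded analogue of quantum state testing), and accepts iff every estimate falls on the \emph{yes} side of the promise. Since the promise of $\IndivProdQSD$ is stated individually on each of the $\poly(n)$ factors and the gap is inverse-polynomial, amplifying each estimate to additive error $1/\poly(n)$ and taking a union bound over the factors keeps the whole procedure within $\BQL$. Hence $\IndivProdQSD \in \BQL \subseteq \QSZKUL \subseteq \QSZKULHV$, recalling the inclusions $\BQL \subseteq \QSZKUL \subseteq \QSZKULHV$ (a $\BQL$ computation is a trivially zero-knowledge space-bounded unitary interactive proof in which the prover is ignored).

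For hardness I would adapt Watrous's reduction from honest-verifier quantum statistical zero-knowledge to quantum state distinguishability~\cite{Wat02QSZK} to the space-bounded setting, driving the reduction by the honest-verifier simulator. Fix $L \in \QSZKULHV$ with a $2l$-turn unitary protocol $\protocol{P}{V}$ of near-perfect completeness and soundness $s$ bounded away from $1$ (after error reduction, \Cref{thm:QIPL-properties-informal}\ref{thmitem:QIPL-error-reduction-informal}, which can additionally be arranged to preserve honest-verifier zero-knowledge), and a quantum logspace simulator which on \emph{yes} instances outputs states $\sigma_j \approx \rho_{\ttM_j\ttW_j}$ and $\tau_j \approx \rho_{\ttM'_j\ttW_j}$ (the true snapshots on $\sfM\sfW$ after the $(2j-1)$-st and $(2j)$-th turns) to within negligible trace distance. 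The reduction emits one factor of the product instance per consistency relation of the protocol: for each $j\in[l]$, the pair $\bigl(\sigma_j,\ V_j\tau_{j-1}V_j^\dagger\bigr)$ with $\tau_0\coloneqq\ketbra{\bar{0}}{\bar{0}}_{\sfM\sfW}$ (the verifier's action maps the previous snapshot to the next one); for each $j\in[l]$, the pair $\bigl(\Tr_{\sfM}(\tau_j),\ \Tr_{\sfM}(\sigma_j)\bigr)$ on $\sfW$ (the prover does not touch the verifier's private register); and a final pair comparing the designated output qubit of $V_{l+1}\tau_l V_{l+1}^\dagger$ with $\ketbra{1}{1}$ (the protocol accepts). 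Each of these states is an $O(\log n)$-qubit state preparable by composing the simulator's logspace circuits with the verifier's actions $V_j$, and the entire list of $2l+1 = \poly(n)$ factors is produced by a single deterministic logspace machine --- this is exactly where the strong uniformity condition on the verifier's mapping is needed. On \emph{yes} instances every factor is negligibly close, using $\td\bigl(V_j\tau_{j-1}V_j^\dagger, V_j\rho_{\ttM'_{j-1}\ttW_{j-1}}V_j^\dagger\bigr)=\td\bigl(\tau_{j-1},\rho_{\ttM'_{j-1}\ttW_{j-1}}\bigr)$, the simulator's accuracy, and near-perfect completeness for the last factor.

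The technical heart, and the step I expect to be the main obstacle, is the \emph{no}-instance analysis: a ``prover recovery'' argument in the spirit of~\cite[Section~7]{MY23}. Suppose $x\notin L$ yet every factor of the produced instance is $\alpha$-close. Starting from $\ketbra{\bar{0}}{\bar{0}}_{\sfM\sfW}$ and $\ket{0}$ on the prover's private register, applying $V_1$ gives a state close to a purification of $\sigma_1$; Uhlmann's theorem together with the closeness of $\Tr_{\sfM}(\sigma_1)$ and $\Tr_{\sfM}(\tau_1)$ yields a prover unitary on the prover's register and $\sfM$ carrying it close to a purification of $\tau_1$; then $V_2$ carries it close to a purification of $\sigma_2$ since $\sigma_2\approx V_2\tau_1 V_2^\dagger$; iterating and finally applying $V_{l+1}$ leaves the designated qubit $O(l\sqrt{\alpha})$-close to $\ketbra{1}{1}$. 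This exhibits a prover that makes $V$ accept with probability at least $1-O(l\sqrt{\alpha})$, contradicting soundness $s$ once $\alpha$ is a sufficiently small inverse polynomial; hence $x\notin L$ forces some factor to be $\Omega(1/\poly(n))$-far, and $L$ logspace-reduces to $\IndivProdQSD$ with a negligible-versus-inverse-polynomial gap. The points that need care are the additive accumulation of the Uhlmann/fidelity errors over the $\poly(n)$ turns, ensuring the recovered prover's private register stays polynomially bounded (harmless, as the prover is computationally unrestricted), and verifying that the output-qubit factor genuinely lower-bounds the rejection probability $1-\pacc$ even when that qubit's reduced state is not diagonal. Finally, if one prefers $\IndivProdQSD$ stated with a constant gap, a space-efficient quantum polarization lemma --- a factor-wise adaptation of the Sahai--Vadhan/Watrous polarization that stays within the product-state framework --- bridges the inverse-polynomial gap produced by the reduction; I would establish this as a separate lemma.
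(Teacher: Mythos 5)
Your hardness direction is essentially sound, but your membership argument contains a genuine error rooted in a misreading of the promise. In $\IndivProdQSD$ only the \emph{no} promise is stated factor-wise; the \emph{yes} promise is the single global condition $\td(\sigma_1\otimes\cdots\otimes\sigma_k,\sigma'_1\otimes\cdots\otimes\sigma'_k)\geq\alpha$. A \emph{yes} instance may therefore have all but one factor identical, so your rule ``accept iff \emph{every} estimate falls on the yes side'' would reject such instances. The missing step is the averaging argument: subadditivity of the trace distance over tensor products gives $\sum_j \td(\sigma_j,\sigma'_j)\geq \td(\sigma_1\otimes\cdots\otimes\sigma_k,\sigma'_1\otimes\cdots\otimes\sigma'_k)\geq\alpha$, hence a \emph{yes} instance guarantees only that \emph{some} $j$ satisfies $\td(\sigma_j,\sigma'_j)\geq\alpha/k$, and the correct rule is to accept iff at least one factor is far. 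Since $\alpha-k\delta\geq 1/\poly(n)$ implies $\alpha/k-\delta\geq 1/\poly(n)$, each factor is then an instance of $\GapQSDlog$ with inverse-polynomial gap, and the paper closes the argument by having a prover supply the witnessing index (a $\QMAL$ protocol) and invoking $\QMAL=\BQL$; your exhaustive sweep over all $k$ factors with a union bound also works once the decision rule is corrected.

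For hardness, you take the same simulator-driven route as the paper but decompose differently: you emit separate product factors for the verifier-action constraints, the prover/marginal constraints on $\sfW$, and the final output qubit, and prove soundness by an explicit Uhlmann prover-recovery argument with additive error accumulation. The paper instead normalizes the simulator so that the verifier-action and output constraints hold \emph{exactly} ($\xi_j=V_j\xi'_{j-1}V_j^\dagger$ and $\Tr(\ketbra{1}{1}_{\ttZ}\xi_{l+1})=c$), keeps only the $l$ marginal constraints $\Tr_{\sfM}(\xi_j)$ versus $\Tr_{\sfM}(\xi'_j)$ as factors, and imports the \emph{no}-instance lower bound wholesale as Watrous's Lemma~15; your version is more self-contained (you are effectively re-proving that lemma) at the cost of extra, largely redundant factors. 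Two smaller remarks: your reduction sends $L$-yes to close tuples and $L$-no to far tuples, i.e., it is a reduction to $\coIndivProdQSD$ exactly as in the paper (harmless, since $\BQL$ is closed under complement, but you should say so); and your closing suggestion of a space-efficient polarization lemma to reach a constant gap is precisely what the paper flags as unavailable in the space-bounded setting due to message-size constraints --- fortunately it is not needed, since the problem is defined with an inverse-polynomial gap.
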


We begin by informally defining the promise problem \textsc{Individual Product State Distinguishability}, denoted by $\IndivProdQSD[k(n),\alpha(n),\delta(n)]$, where the parameters satisfy $\alpha(n) - k(n) \cdot \delta(n) \geq 1/\poly(n)$ and $1 \leq k(n) \leq \poly(n)$. This problem considers two $k$-tuples of $O(\log n)$-qubit quantum states, denoted by $\sigma_1,\cdots,\sigma_k$ and $\sigma'_1,\cdots,\sigma'_k$, where the purifications of these states can be prepared by corresponding polynomial-size unitary quantum circuits acting on $O(\log n)$ qubits. For \textit{yes} instances, these two $k$-tuples are ``globally'' far, satisfying
\begin{equation}
    \label{eq:intro-far}
    \td\rbra*{\sigma_1 \otimes \cdots \otimes \sigma_k, \sigma'_1 \otimes \cdots \otimes \sigma'_k} \geq \alpha. 
\end{equation}
While for \textit{no} instances, each pair of corresponding states in these $k$-tuples are close, satisfying
\begin{equation}
    \label{eq:intro-indiv-close}
    \forall j \in [k], \quad \td\rbra*{\sigma_j,\sigma'_j} \leq \delta.
\end{equation}
Then we show that (1) the complement of \IndivProdQSD{}, 
\coIndivProdQSD{}, is \QSZKULHV{}-hard; and (2) 
\IndivProdQSD{} is in \BQL{}, which is contained in \QSZKULHV{} by definition.

\paragraph{\coIndivProdQSD{} is \QSZKULHV{}-hard.} The hardness proof draws inspiration from~\cite[Section 5]{Wat02QSZK}. Consider a $\QSZKULHV[2k,c,s]$ proof system, denoted by $\calB$. The logspace-bounded simulator $S_{\calB}$ produces good state approximations $\xi_j$ and $\xi'_j$ of the snapshot states $\rho_{\ttM_j\ttW_j}$ and $\rho_{\ttM'_j\ttW_j}$ after the $(2j-1)$-st turn and the $(2j)$-th turn in $\calB$, respectively, satisfying $\xi_j \approx_{\delta} \rho_{\ttM_j\ttW_j}$ and $\xi'_j \approx_{\delta} \rho_{\ttM'_j\ttW_j}$, where $\delta_{\calB}(n)$ is a negligible function. 

Since the verifier's actions are unitary and the verifier is honest, it suffices to check that the prover's actions do not change the verifier's private register, corresponding to the type (ii) constraints \Cref{eq:QSZKL-constraint} in the SDP formulation for \QIPL{} proof systems. For convenience, let $\sigma_j \coloneq \Tr_{\ttM_j}(\xi_j)$ and $\sigma'_j \coloneq \Tr_{\ttM'_j}(\xi'_j)$ for $j\in[k]$. 
We then establish \QSZKLHV{} hardness as follows: 
\begin{itemize}[itemsep=0.33em, topsep=0.33em, parsep=0.33em]
    \item For \textit{yes} instances, the message-wise closeness condition of the simulator $S_{\calB}$ implies \Cref{eq:intro-indiv-close} with $\delta(n) \coloneqq 2\delta_{\calB}(n)$. 
    \item For \textit{no} instances, the simulator $S_{\calB}$ produces the snapshot state before the final measurement, which accepts with probability $c(n)$ for all instances, while the proof system accepts with probability at most $s(n)$. The inconsistency between the simulator's state approximations and the snapshot states yields \Cref{eq:intro-far} with $\alpha(n) \coloneqq (\sqrt{c}-\sqrt{s})^2/4(l-1)$.  
\end{itemize}

\paragraph{$\IndivProdQSD{} \in \BQL$.} Since it holds that  $\BQL=\QMAL$~\cite{FKLMN16,FR21}, it suffices to establish that $\IndivProdQSD{} \in \QMAL$. By applying an averaging argument in combination with \Cref{eq:intro-far}, we derive the following: 
\begin{equation}
    \label{eq:intro-indiv-far}
    \sum_{j\in[k]} \td\rbra*{\sigma_j,\sigma'_j} \geq \td\rbra*{\sigma_1 \otimes \cdots \otimes \sigma_k, \sigma'_1 \otimes \cdots \otimes \sigma'_k} \geq \alpha 
    \quad \Rightarrow \quad 
    \exists j\in[k] \text{ s.t. } \td\rbra*{\sigma_j,\sigma'_j} \geq \frac{\alpha}{k}. 
\end{equation}

The \QMAL{} protocol works as follows: (1) The prover sends an index $i\in[k]$ to the verifier; and (2) The verifier accepts if $\Tr(\sigma_i,\sigma'_i) \geq \alpha/k$ and rejects if $\Tr(\sigma_i,\sigma'_i) \leq \delta$, in accordance with \Cref{eq:intro-indiv-far,eq:intro-indiv-close}. The resulting promise problem to be verified is precisely an instance of \GapQSDlog{}, which is known to be \BQL{}-complete~\cite{LGLW23}. 

\subsection{Discussion and open problems}

We introduce two models of space-bounded quantum interactive proof systems: \QIPL{} and \QIPUL{}. Unlike $\BQL = \BQUL$, we show that $\QIPUL \subsetneq \QIPL$ unless $\Ptime = \NP$. Our results highlight the distinctive role of (pinching) intermediate measurements in space-bounded quantum interactive proofs, setting them apart from space-bounded quantum computation. This prompts an intriguing question: 

\begin{enumerate}[label={\upshape(\alph*)},itemsep=0.3em,topsep=0.3em,parsep=0.3em]
    \item What is the computational power of space-bounded quantum interactive proofs beyond \QIPLHC{}, particularly when the high-concentration requirement for \textit{yes} instances (the completeness condition) is removed, as in the class \QIPL{}, or when the verifier is allowed to perform general quantum logspace computations?  \label{probitem:generalQIPL}
\end{enumerate}

A motivating example is a reversible generalization of \QIPL{}, particularly space-bounded \textit{isometric} quantum interactive proof systems ($\QIPL^{\diamond}$, see \Cref{remark:reversible-QIPL}), where all verifier actions are $O(\log{n})$-qubit \textit{isometric} quantum circuits. Remarkably, $\QIPL^{\diamond}$ at least contains \QMA{}:
Given a local Hamiltonian $H = \sum_{i=1}^m H_i$, we can construct a $\QIPL^{\diamond}$ proof system as follows:\footnote{A similar approach is used in a streaming version of \QMAL{} (with online access to the message) in~\cite{GR23online}.} 
\begin{enumerate}[label={\upshape(\roman*)},itemsep=0.3em,topsep=0.3em,parsep=0.3em]
    \item The verifier chooses a local term $H_i$ uniformly at random from the set $\{H_1,\cdots,H_m\}$. 
    \item The prover sends a ground state $\ket{\Omega}$ qubit by qubit, while the verifier sends a state $\ket{0}$ in each round and retains only the qubits associated with $H_i$ in its private registers. 
    \item The verifier performs the POVM corresponding to the decomposition $I = H_i + (I - H_i)$.\footnote{See the proof of~\cite[Proposition 14.2]{KSV02} for an explicit construction of such POVMs.} 
\end{enumerate}

Further analysis indicates that the verifier accepts with probability $1-m^{-1} \bra{\Omega} H \ket{\Omega}$, and direct sequential repetition yields a $\QIPL^{\diamond}$ proof system. 
Additionally, it is evident that all candidate models of Question \ref{probitem:generalQIPL} are contained in $\QIP$, and thus in $\PSPACE$.

\vspace{0.5em}
Furthermore, space-bounded \textit{unitary} quantum interactive proofs (\QIPUL{}) can simulate the classical counterparts with $O(\log{n})$ public coins~\cite{Fortnow89} (see \Cref{thm:QIPUL-bounds-informal}), raising the question: 
\begin{enumerate}[label={\upshape(\alph*)},itemsep=0.3em,topsep=0.3em,parsep=0.3em]
    \setcounter{enumi}{1}
    \item Can we achieve a tighter characterization of \QIPUL{}? For example, does \QIPUL{} contain space-bounded classical interactive proofs with $\omega(\log{n})$ public coins? \label{probitem:QIPUL}
\end{enumerate}

\vspace{0.5em}
Finally, for \textit{constant}-turn space-bounded quantum interactive proofs, the three models discussed here become equivalent due to the principle of deferred measurements, contrasting with the aforementioned polynomial-turn settings. However, this equivalence does not directly extend to more general verifiers (see \Cref{footref:constant-turn-QIPL}), leading to the following question: 

\begin{enumerate}[label={\upshape(\alph*)},itemsep=0.3em,topsep=0.3em,parsep=0.3em]
    \setcounter{enumi}{2}
    \item What is the computational power of constant-turn space-bounded quantum interactive proofs with a general quantum logspace verifier?  \label{probitem:constTurnQIPL}
\end{enumerate}

\subsection{Related works}
Variants of time-bounded quantum interactive proofs with short messages were explored in~\cite{BSW11,Pereszlenyi12}. Depending on the settings, these variants are as powerful as \QMA{} or \BQP{}. 

The concept of interactive proof systems has been extended to other computational models. Quantum interactive proofs for synthesizing quantum states, known as $\mathsf{stateQIP}$, were introduced in~\cite{RY22}. Follow-up research established the equivalence $\mathsf{stateQIP} = \mathsf{statePSPACE}$~\cite{MY23} and developed a parallelization technique for $\mathsf{stateQIP}$~\cite{INNSY22,Rosenthal23}. A Merlin-Arthur-type variant was also explored in~\cite{DLGLM23,DLG24}. More recently, quantum interactive proofs for unitary synthesis and related problems have been studied in~\cite{BEMPQY23,LMW24}.
Another interesting but less related variant is the exploration of interactive proof systems in distributed computing~\cite{KOS18,NPY20}, and more recently, quantum distributed interactive proof systems have been investigated~\cite{FPLGN21,LGMN23,HKN24}.

Finally, space-bounded (classical) statistical zero-knowledge, where the verifier has \textit{read-only (i.e., two-way) access} to (polynomial-length) messages during interactions, was studied in~\cite{DGRV11,AHT23,AGMTW22}. More recently, a variant where the verifier has \textit{online (i.e., one-way) access} to messages has also been explored~\cite{CDGH23}. 

\section{Preliminaries}

We assume that the reader is familiar with quantum computation and the theory of quantum information. For an introduction, the textbooks by~\cite{NC10} and~\cite{deWolf19} provide a good starting point, while for a more comprehensive survey on quantum complexity theory, refer to~\cite{Watrous08}. 

We introduce several conventions throughout the paper: (1) we denote $\sbra{n} \coloneqq \cbra{1, 2, \dots, n}$; (2) we use the logarithmic function $\log(x)$ with base $2$; and (3) we utilize the notation $\ket{\bar{0}}$ to represent $\ket{0}^{\otimes a}$ with $a>1$.
In addition to these conventions, we provide two useful definitions. 
We say that $\calI = (\calI_{\yes}, \calI_{\no})$ is a \textit{promise problem}, if it satisfies that $\calI_{\yes} \cap \calI_{\no} =\emptyset$ and $\calI_{\yes} \cup \calI_{\no} \subseteq \binset^*$. For simplicity, we use the abbreviation $x\in\calI$ to denote $x\in \calI_{\yes} \cup \calI_{\no}$.
A function $\mu(n)$ is said to be \textit{negligible}, if for every integer $c \geq 1$, there is an integer $n_c > 0$ such that for all $n \geq n_c$, $\mu\rbra{n} < n^{-c}$. 

\subsection{Distance-like measures for quantum states}

We will provide an overview of relevant quantum distances and divergences, along with useful inequalities among different quantum distance-like measures. We say that a square matrix $\rho$ is a \textit{quantum state} if $\rho$ is positive semi-definite and $\Tr(\rho)=1$.

\begin{definition}[Trace distance and fidelity]
    \label{def:quantum-distances}
    For any quantum states $\rho_0$ and $\rho_1$, we define two distance-like measures: 
    \begin{itemize}
        \item \textnormal{\textbf{Trace distance}.} $\td(\rho_0,\rho_1)\coloneqq\frac{1}{2}\Tr|\rho_0-\rho_1|=\frac{1}{2}\Tr(((\rho_0-\rho_1)^\dagger(\rho_0-\rho_1))^{1/2}).$
        \item \textnormal{\textbf{(Uhlmann) Fidelity}.} $\F(\rho_0,\rho_1)\coloneqq\Tr|\sqrt{\rho_0}\sqrt{\rho_1}|$.
    \end{itemize}
\end{definition}

We begin by listing two useful bounds on tensor-product quantum states with respect to the trace distance: 

\begin{lemma}[Trace distance on tensor-product states, adapted from Exercise 9.1.2 and Corollary 9.1.10 in~\cite{Wilde13}]
    \label{lemma:trace-distance-product-states}
    For any quantum states $\rho_1 \otimes \cdots \otimes \rho_k$ and $\rho'_1 \otimes \cdots \otimes \rho'_k$, where $\rho_i$ and $\rho'_i$ use the same number of qubits for all $i\in[k]$, it holds that
    \begin{enumerate}[label={\upshape(\arabic*)}, itemsep=0.33em, topsep=0.33em, parsep=0.33em]
        \item $\forall i \in [k]$, $\td\rbra*{ \rho_i, \rho'_i } \leq \td\rbra*{\rho_1 \otimes \cdots \otimes \rho_k, \rho'_1 \otimes \cdots \otimes \rho'_k}$.
        \item $\td\rbra*{\rho_1 \otimes \cdots \otimes \rho_k, \rho'_1 \otimes \cdots \otimes \rho'_k} \leq \sum_{i\in [k]} \td \rbra*{ \rho_i, \rho'_i }$. 
    \end{enumerate}
\end{lemma}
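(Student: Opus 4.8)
We need to prove two bounds: (1) each factor's trace distance is at most the global trace distance, and (2) the global trace distance is at most the sum of the factorwise distances.

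The plan is to prove both parts by reduction to the two-factor case and then induction, using the standard characterization of trace distance as maximum distinguishing probability under measurement.

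For part (1): The key fact is that trace distance is contractive under partial trace (more generally under any quantum channel). Given $\rho_1 \otimes \cdots \otimes \rho_k$ and $\rho_1' \otimes \cdots \otimes \rho_k'$, I would fix an index $i$ and apply the partial trace over all registers except the $i$-th; this maps the two global states to $\rho_i$ and $\rho_i'$ respectively. Since partial trace is a CPTP map and $\td(\calN(\sigma),\calN(\tau)) \le \td(\sigma,\tau)$ for any CPTP map $\calN$, the inequality $\td(\rho_i,\rho_i') \le \td(\rho_1 \otimes \cdots \otimes \rho_k, \rho_1' \otimes \cdots \otimes \rho_k')$ follows immediately.

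For part (2): I would first establish the two-factor case $\td(\rho_1 \otimes \rho_2, \rho_1' \otimes \rho_2') \le \td(\rho_1,\rho_1') + \td(\rho_2,\rho_2')$ via the triangle inequality through the intermediate state $\rho_1 \otimes \rho_2'$, using that $\td(\rho_1 \otimes \rho_2, \rho_1 \otimes \rho_2') = \td(\rho_2,\rho_2')$ and $\td(\rho_1 \otimes \rho_2', \rho_1' \otimes \rho_2') = \td(\rho_1,\rho_1')$ (trace distance is invariant under tensoring with a fixed state, which itself is a special case of part (1) together with monotonicity, or can be seen directly from $\Tr|\rho \otimes \sigma - \rho' \otimes \sigma| = \Tr|(\rho-\rho')\otimes\sigma| = \Tr|\rho-\rho'|\cdot\Tr|\sigma|$). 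Then the general case follows by a straightforward induction on $k$: writing $A = \rho_1 \otimes \cdots \otimes \rho_{k-1}$, $A' = \rho_1' \otimes \cdots \otimes \rho_{k-1}'$, apply the two-factor bound to get $\td(A \otimes \rho_k, A' \otimes \rho_k') \le \td(A,A') + \td(\rho_k,\rho_k')$, and invoke the inductive hypothesis on $\td(A,A')$.

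There is no real obstacle here; this is a standard fact (indeed the excerpt itself cites it as adapted from an exercise and a corollary in Wilde's book). The only mild point of care is making sure that the contractivity-under-CPTP-maps fact and the tensor-invariance fact are either cited from the preliminaries or proved in a line; everything else is the triangle inequality and induction. Since the paper states it assumes familiarity with quantum information, I would simply cite the monotonicity of trace distance under quantum channels and keep the argument to a few lines.
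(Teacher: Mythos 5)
Your proof is correct: part (1) follows from the data-processing inequality for the trace distance under the partial trace (the paper's Lemma~\ref{lemma:trace-distance-data-processing}), and part (2) follows from the two-factor hybrid/triangle-inequality argument plus induction. The paper itself gives no proof and simply cites Wilde's textbook, whose argument for the telescoping bound is exactly the one you describe, so your approach matches the intended one.
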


We then provide two fundamental properties of the trace distance.

\begin{lemma}[Data-processing inequality for the trace distance, adapted from~{\cite[Theorem 9.2]{NC10}}]
    \label{lemma:trace-distance-data-processing}
    Let $\rho_0$ and $\rho_1$ be quantum states. For any quantum channel $\calE$, it holds that
    \[ \td\rbra*{\calE(\rho_0), \calE(\rho_1)} \leq \td(\rho_0,\rho_1). \]
\end{lemma}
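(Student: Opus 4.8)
The plan is to reduce the statement to the standard variational (dual) characterization of the trace distance: for any Hermitian operator $\Delta$ with $\Tr(\Delta) = 0$, $\tfrac12\Tr|\Delta| = \max\{\Tr(P\Delta) : 0 \preceq P \preceq I\}$, with the maximum attained by a projector. First I would prove this from the Jordan--Hahn decomposition $\Delta = \Delta_+ - \Delta_-$ into orthogonally supported positive and negative parts: then $\Tr|\Delta| = \Tr(\Delta_+) + \Tr(\Delta_-)$, and $\Tr(\Delta) = 0$ forces $\Tr(\Delta_+) = \Tr(\Delta_-)$, so $\tfrac12\Tr|\Delta| = \Tr(\Delta_+)$; the projector onto the support of $\Delta_+$ attains this value, while for arbitrary $0 \preceq P \preceq I$ we have $\Tr(P\Delta) = \Tr(P\Delta_+) - \Tr(P\Delta_-) \le \Tr(P\Delta_+) \le \Tr(\Delta_+)$.

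With this in hand, the main step is short. Apply the characterization to $\Delta = \calE(\rho_0) - \calE(\rho_1)$, which is Hermitian and traceless since $\calE$ is trace-preserving, obtaining $P$ with $0 \preceq P \preceq I$ and $\td(\calE(\rho_0), \calE(\rho_1)) = \Tr(P(\calE(\rho_0) - \calE(\rho_1)))$. Introduce the adjoint channel $\calE^{\dagger}$ defined by $\Tr(X\,\calE(Y)) = \Tr(\calE^{\dagger}(X)\,Y)$ for all $X, Y$; since $\calE$ is completely positive and trace-preserving, $\calE^{\dagger}$ is positive and unital, hence $0 \preceq \calE^{\dagger}(P) \preceq I$. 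Therefore $\td(\calE(\rho_0), \calE(\rho_1)) = \Tr(\calE^{\dagger}(P)(\rho_0 - \rho_1)) \le \max\{\Tr(Q(\rho_0-\rho_1)) : 0 \preceq Q \preceq I\} = \td(\rho_0, \rho_1)$, as claimed.

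As an alternative I would note the Stinespring route: writing $\calE(\rho) = \Tr_{\sfE}(U(\rho \otimes \ketbra{0}{0}_{\sfE})U^{\dagger})$, the trace distance is exactly preserved under appending a fixed ancilla (because $\Tr|\Delta \otimes \omega| = \Tr|\Delta| \cdot \Tr(\omega)$ for $\omega \succeq 0$) and under unitary conjugation (by cyclicity of the trace together with $|U\Delta U^{\dagger}| = U|\Delta|U^{\dagger}$), so the only nontrivial ingredient is monotonicity under the partial trace, which again follows from the variational characterization by lifting a system projector $P$ to $P \otimes I_{\sfE}$.

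The work here is essentially bookkeeping rather than conceptual; the one point to get right is that the variational characterization must be invoked for \emph{traceless} Hermitian operators (so that the positive and negative parts have equal trace), and correspondingly that $\calE^{\dagger}$ is genuinely \emph{unital} — this is precisely where trace-preservation of $\calE$ is used, and it is the only place the hypothesis that $\calE$ is a channel, rather than a general positive map, enters. No approximation or limiting arguments are required.
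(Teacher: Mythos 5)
Your proof is correct. The paper does not actually prove this lemma --- it imports it verbatim from Nielsen and Chuang (Theorem 9.2), so there is no in-paper argument to compare against. Your argument is a sound rendering of the standard proof; the only (cosmetic) difference from the textbook version is the direction in which you move the channel. Nielsen--Chuang fix the optimal projector $P$ for $\calE(\rho_0)-\calE(\rho_1)$ and push $\calE$ forward onto the Jordan decomposition $\rho_0-\rho_1=\Delta_+-\Delta_-$, bounding $\Tr\rbra*{P\,\calE(\Delta_+)}-\Tr\rbra*{P\,\calE(\Delta_-)}\leq\Tr\rbra*{\calE(\Delta_+)}=\Tr(\Delta_+)$, which uses only positivity and trace preservation of $\calE$ and never mentions the adjoint map; you instead pull $P$ back to $\calE^{\dagger}(P)$ and use that the adjoint of a CPTP map is positive and unital. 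These are dual formulations of the same inequality and both are complete. Your identification of where trace preservation enters (unitality of $\calE^{\dagger}$, equivalently $\Tr\calE(\Delta_+)=\Tr\Delta_+$) and of why tracelessness of the difference operator is needed for the variational formula is accurate, and the Stinespring alternative you sketch is also a valid route.
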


\begin{lemma}[Unitary invariance for the trace distance, adapted from~{\cite[Equation (9.21)]{NC10}}]
    \label{lemma:trace-distance-unitary-invariance}
    Let $\rho_0$ and $\rho_1$ be quantum states. For any unitary transformation $U$, it holds that
    \[ \td\rbra*{U\rho_0 U^{\dagger}, U\rho_1 U^{\dagger}} = \td(\rho_0,\rho_1). \]
\end{lemma}

Next, we present two basic properties for the fidelity. 

\begin{lemma}[Data-processing inequality for the fidelity, adapted from Theorem 9.6 in~\cite{NC10}]
    \label{lemma:fidelity-data-processing}
    Let $\rho_0$ and $\rho_1$ be quantum states. For any quantum channel $\calE$, it holds that
    \[ \F\rbra*{\calE(\rho_0), \calE(\rho_1)} \geq \F(\rho_0,\rho_1). \]
\end{lemma}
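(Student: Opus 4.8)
The plan is to combine Uhlmann's theorem with the Stinespring dilation of a quantum channel, which is the standard route for this estimate (cf.\ the discussion around Theorem 9.6 in~\cite{NC10}). Recall the variational characterization supplied by Uhlmann's theorem: for quantum states $\rho_0,\rho_1$ on a Hilbert space $\calH$, one has $\F(\rho_0,\rho_1) = \max \abs{\innerprod{\psi_0}{\psi_1}}$, where the maximum ranges over all purifications $\ket{\psi_0},\ket{\psi_1}$ of $\rho_0,\rho_1$ on $\calH \otimes \mathcal{R}$ for any reference space $\mathcal{R}$ with $\dim \mathcal{R} \geq \dim \calH$; moreover, one may fix an arbitrary purification of $\rho_0$ and maximize over purifications of $\rho_1$ only (and symmetrically).

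First, I would fix optimal purifications $\ket{\psi_0},\ket{\psi_1} \in \calH \otimes \mathcal{R}$ so that $\F(\rho_0,\rho_1) = \abs{\innerprod{\psi_0}{\psi_1}}$. Second, I would invoke the Stinespring dilation: the channel $\calE$ on $\calH$ can be written as $\calE(\rho) = \Tr_{\sfE}\rbra*{U\rbra*{\rho \otimes \ketbra{0}{0}_{\sfE}}U^{\dagger}}$ for a suitable ancilla register $\sfE$ and a unitary $U$ on $\calH \otimes \calH_{\sfE}$ (acting as the identity on $\mathcal{R}$). Then $\ket{\phi_b} \coloneqq U\rbra*{\ket{\psi_b} \otimes \ket{0}_{\sfE}}$ is, by construction, a purification of $\calE(\rho_b)$ on the enlarged space $\calH \otimes \calH_{\sfE} \otimes \mathcal{R}$, for each $b \in \binset$.

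Third, I would apply the variational form of Uhlmann's theorem once more, this time to the pair $\calE(\rho_0),\calE(\rho_1)$: since $\ket{\phi_0},\ket{\phi_1}$ are admissible purifications, $\F\rbra*{\calE(\rho_0),\calE(\rho_1)} \geq \abs{\innerprod{\phi_0}{\phi_1}}$. Finally, because $U^{\dagger}U = I$ and the overlap on the untouched reference factor is preserved, $\innerprod{\phi_0}{\phi_1} = \innerprod{\psi_0}{\psi_1}$, so $\abs{\innerprod{\phi_0}{\phi_1}} = \F(\rho_0,\rho_1)$, and the lemma follows.

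The only point that needs a little care — the ``main obstacle'', though it is essentially bookkeeping — is ensuring the reference and ancilla spaces are taken large enough that the variational form of Uhlmann's theorem applies in both directions with the purifications constructed above; choosing $\dim \mathcal{R} = \dim \calH$ and $\dim \calH_{\sfE}$ equal to the Kraus rank of $\calE$ is enough, and no quantitative estimates are required beyond this.
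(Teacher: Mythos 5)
Your proof is correct and is exactly the standard argument behind the cited Theorem 9.6 of Nielsen--Chuang (Uhlmann's theorem plus a Stinespring dilation, with the unitary acting trivially on the reference system); the paper itself states this lemma without proof, importing it by citation. No gaps: the purification $\ket{\phi_b}=U(\ket{\psi_b}\otimes\ket{0}_{\sfE})$ is indeed a purification of $\calE(\rho_b)$ with purifying system $\calH_{\sfE}\otimes\mathcal{R}$, and the overlap preservation under $U$ gives the claimed inequality.
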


\begin{lemma}[{\cite[Lemma 2]{SR01} \&~\cite[Lemma 3.3]{NS03}}]
    \label{lemma:sum-of-squared-fidelity}
    Let $\rho_0$ and $\rho_1$ be $m$-qubit quantum states. Then, for any $m$-qubit quantum state $\xi$, it holds that
    \[ \F(\rho_0,\xi)^2 + \F(\xi,\rho_1)^2 \leq 1 + \F(\rho_0,\rho_1). \]
\end{lemma}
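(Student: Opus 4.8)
The plan is to reduce the claimed inequality to a statement about three unit vectors via Uhlmann's theorem, and then settle that statement by an elementary calculation (this is essentially the argument of~\cite{SR01,NS03}). First I would realize all three states on a common purifying space: since $\rho_0,\rho_1,\xi$ are $m$-qubit states, fix a space $\calH\otimes\calH'$ with $\calH\cong\calH'\cong(\bbC^2)^{\otimes m}$ and a purification $\ket{y}\in\calH\otimes\calH'$ of $\xi$. By the one-sided form of Uhlmann's theorem, there is a purification $\ket{x}$ of $\rho_0$ in $\calH\otimes\calH'$ with $|\langle x|y\rangle|=\F(\rho_0,\xi)$; after a global phase we may take $a\coloneqq\langle x|y\rangle=\F(\rho_0,\xi)\ge 0$, and likewise a purification $\ket{z}$ of $\rho_1$ with $b\coloneqq\langle y|z\rangle=\F(\xi,\rho_1)\ge 0$. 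Since $\ket{x},\ket{z}$ are purifications of $\rho_0,\rho_1$, Uhlmann's theorem also gives $\F(\rho_0,\rho_1)\ge|\langle x|z\rangle|$, so it suffices to prove $a^2+b^2\le 1+|\langle x|z\rangle|$.

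Next I would lower-bound the overlap $|\langle x|z\rangle|$. Writing $\ket{x}=a\ket{y}+\sqrt{1-a^2}\,\ket{x^\perp}$ and $\ket{z}=b\ket{y}+\sqrt{1-b^2}\,\ket{z^\perp}$ with $\ket{x^\perp},\ket{z^\perp}$ unit vectors orthogonal to $\ket{y}$ — here the crucial point is that $a,b$ are real and nonnegative because we fixed the \emph{same} purification $\ket{y}$ of $\xi$ on both sides — the cross terms vanish and $\langle x|z\rangle=ab+\sqrt{(1-a^2)(1-b^2)}\,\langle x^\perp|z^\perp\rangle$. By the reverse triangle inequality together with $|\langle x^\perp|z^\perp\rangle|\le 1$, this yields $|\langle x|z\rangle|\ge ab-\sqrt{(1-a^2)(1-b^2)}$.

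Finally I would close with an elementary inequality. If $a^2+b^2\le 1$ the claim is immediate since $\F(\rho_0,\rho_1)\ge 0$. Otherwise, set $t\coloneqq 1-a^2-b^2+ab$; viewing $t$ as a downward-opening parabola in $b$ on $[0,1]$ with $t\ge 0$ at $b=0$ (value $1-a^2$) and at $b=1$ (value $a(1-a)$) shows $t\ge 0$ for all $a,b\in[0,1]$. The polynomial identity $t^2-(1-a^2)(1-b^2)=(a^2+b^2-1)(a-b)^2\ge 0$ then gives $t\ge\sqrt{(1-a^2)(1-b^2)}$, i.e. $ab-\sqrt{(1-a^2)(1-b^2)}\ge a^2+b^2-1$. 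Combining with the previous step, $1+\F(\rho_0,\rho_1)\ge 1+|\langle x|z\rangle|\ge a^2+b^2=\F(\rho_0,\xi)^2+\F(\xi,\rho_1)^2$, which is the desired bound.

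There is no real obstacle here; the only ideas that need care are (i) fixing the purification of $\xi$ and invoking one-sided Uhlmann on both fidelities, which is exactly what forces the cross terms to drop out, and (ii) the small bookkeeping of the case split and the algebraic identity $t^2-(1-a^2)(1-b^2)=(a^2+b^2-1)(a-b)^2$ in the last step.
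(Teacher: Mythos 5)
Your proof is correct. The paper does not prove this lemma at all---it imports it directly from the cited references \cite{SR01,NS03}---so there is no in-paper argument to compare against; your purification-based derivation (fixing a single purification $\ket{y}$ of $\xi$, invoking the one-sided form of Uhlmann's theorem on both fidelities so that the overlaps $a,b$ are real and the cross terms cancel, and then closing with the identity $t^2-(1-a^2)(1-b^2)=(a^2+b^2-1)(a-b)^2$) is essentially the standard argument given in those sources, and every step, including the case split on $a^2+b^2\le 1$ and the concavity argument showing $t\ge 0$, checks out.
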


Lastly, we present a lemma concerning the freedom in purifications of quantum states: 
\begin{lemma}[Unitary equivalence of purifying the same state, adapted from~{\cite[Exercise 2.81]{NC10}}]
    \label{lemma:unitary-equivalent-in-purifications}
    Let $\ket{\psi}_{\sfA\sfB}$ and $\ket{\phi}_{\sfA\sfB}$ be pure states on the registers $\sfA$ and $\sfB$ such that 
    \[\Tr_{\sfB}\rbra*{\ketbra{\psi}{\psi}_{\sfA\sfB}} = \rho_{\sfA} = \Tr_{\sfB}\rbra*{\ketbra{\phi}{\phi}_{\sfA\sfB}},\] where $\rho_A$ is a mixed state on the register $\sfA$. Then, there exists a unitary transformation $U_{\sfB}$ acting on the register $\sfB$ such that $\ket{\psi}_{\sfA\sfB} = \rbra*{I_{\sfA} \otimes U_{\sfB}} \ket{\phi}_{\sfA\sfB}$. 
\end{lemma}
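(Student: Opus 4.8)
The plan is to prove this by reducing both purifications to Schmidt-type decompositions relative to one \emph{fixed} eigenbasis of $\rho_{\sfA}$, and then letting $U_{\sfB}$ be the unitary that matches up the corresponding vectors on the $\sfB$ side.

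First I would fix an eigendecomposition $\rho_{\sfA} = \sum_{k} p_k \ketbra{k}{k}_{\sfA}$ with all $p_k > 0$ and $\{\ket{k}_{\sfA}\}$ orthonormal, so that $\spanset\{\ket{k}_{\sfA}\} = \supp{\rho_{\sfA}}$ and the number of terms equals $\rank(\rho_{\sfA})$. I would then note that any pure state on $\sfA\sfB$ whose $\sfA$-marginal is $\rho_{\sfA}$ must lie in $\supp{\rho_{\sfA}} \otimes \calH_{\sfB}$: a component $\ket{a^{\perp}}_{\sfA}\ket{\chi}_{\sfB}$ with $\ket{a^{\perp}} \perp \supp{\rho_{\sfA}}$ would force $\bra{a^{\perp}}\rho_{\sfA}\ket{a^{\perp}} > 0$, a contradiction. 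Hence $\ket{\psi}_{\sfA\sfB} = \sum_{k} \ket{k}_{\sfA} \otimes \ket{\eta_k}_{\sfB}$ and $\ket{\phi}_{\sfA\sfB} = \sum_{k} \ket{k}_{\sfA} \otimes \ket{\zeta_k}_{\sfB}$ for suitable (unnormalized) vectors $\ket{\eta_k}_{\sfB}, \ket{\zeta_k}_{\sfB}$.

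Next I would take the partial trace: $\Tr_{\sfB}(\ketbra{\psi}{\psi}_{\sfA\sfB}) = \sum_{k,k'} \ketbra{k}{k'}_{\sfA}\, \innerprod{\eta_{k'}}{\eta_k}$, and matching this with $\sum_k p_k \ketbra{k}{k}_{\sfA}$ forces $\innerprod{\eta_{k'}}{\eta_k} = p_k \delta_{k k'}$, so that $\{\ket{e_k}_{\sfB} \coloneqq p_k^{-1/2}\ket{\eta_k}_{\sfB}\}$ is an orthonormal family with $\ket{\psi}_{\sfA\sfB} = \sum_k \sqrt{p_k}\, \ket{k}_{\sfA}\ket{e_k}_{\sfB}$; the same computation applied to $\ket{\phi}$ yields an orthonormal family $\{\ket{f_k}_{\sfB} \coloneqq p_k^{-1/2}\ket{\zeta_k}_{\sfB}\}$ with $\ket{\phi}_{\sfA\sfB} = \sum_k \sqrt{p_k}\, \ket{k}_{\sfA}\ket{f_k}_{\sfB}$. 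Finally, since $\{\ket{e_k}_{\sfB}\}$ and $\{\ket{f_k}_{\sfB}\}$ are two orthonormal families of the same cardinality $\rank(\rho_{\sfA}) \le \dim(\calH_{\sfB})$, I would extend each to an orthonormal basis of $\calH_{\sfB}$ and let $U_{\sfB}$ send $\ket{f_k}_{\sfB} \mapsto \ket{e_k}_{\sfB}$ for every $k$ and map the remaining basis vectors to one another by any fixed bijection. Then $U_{\sfB}$ is unitary and $\rbra*{I_{\sfA} \otimes U_{\sfB}}\ket{\phi}_{\sfA\sfB} = \sum_k \sqrt{p_k}\, \ket{k}_{\sfA}\otimes U_{\sfB}\ket{f_k}_{\sfB} = \ket{\psi}_{\sfA\sfB}$, as required.

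The one genuine subtlety — the point I expect to need the most care — is the possible degeneracy in the spectrum of $\rho_{\sfA}$: when eigenvalues coincide the Schmidt decomposition is not unique, so one cannot simply equate the $\sfA$-side Schmidt vectors of $\ket{\psi}$ and $\ket{\phi}$ out of the box; this is exactly why the argument above extracts the $\sfB$-side vectors block-by-block against a pre-chosen eigenbasis rather than invoking uniqueness of Schmidt decompositions. A minor point to state explicitly is that $\dim(\calH_{\sfB}) \ge \rank(\rho_{\sfA})$, which holds automatically because $\ket{\psi}_{\sfA\sfB}$ is given as a purification living on $\sfB$.
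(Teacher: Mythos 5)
Your proof is correct and complete: fixing one eigenbasis of $\rho_{\sfA}$, extracting the $\sfB$-side vectors block-by-block, showing they form orthonormal families scaled by $\sqrt{p_k}$, and matching them by a unitary is exactly the standard argument, and you correctly flag the two points that need care (degeneracy of the spectrum and $\dim(\calH_{\sfB}) \geq \rank(\rho_{\sfA})$). The paper itself gives no proof of this lemma --- it is cited as a known fact from Nielsen and Chuang, Exercise 2.81 --- so there is nothing to compare against beyond noting that your argument is the canonical one.
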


\subsection{Space-bounded quantum computation}

We say that a function $s(n)$ is \textit{space-constructible} if there exists a deterministic space $s(n)$ Turing machine that takes $1^n$ as an input and outputs $s(n)$ in the unary encoding. 
Moreover, we say that a function $f(n)$ is $s(n)$-\textit{space computable} if there exists a deterministic space $s(n)$ Turing machine that takes $1^n$ as an input and outputs $f(n)$. 
Our definitions of space-bounded quantum computation are formulated in terms of \textit{quantum circuits}. For a discussion on the equivalence between space-bounded quantum computation using \textit{quantum circuits} and \textit{quantum Turing machines}, we refer readers to \cite[Appendix A]{FL18} and \cite[Section 2.2]{FR21}. 

We begin by introducing three types of space-bounded quantum circuit families, as formalized in \Cref{def:space-bounded-quantum-circuits}. Our definitions align with~\cite[Section 2.3]{VW16}. Throughout this work, we adopt the shorthand notation $C_x$ to indicate that the circuit $C_{|x|}$ takes input $x$.

\begin{definition}[Space-bounded quantum circuit families: unitary, almost-unitary, and isometric]
    \label{def:space-bounded-quantum-circuits}
    Let us define three types of quantum circuits: 
    \begin{itemize}[itemsep=0.33em, topsep=0.33em, parsep=0.33em, leftmargin=2em]
        \item \emph{\textbf{Unitary quantum circuit}}. A unitary quantum circuit consists of a sequence of unitary quantum gates, each of which belongs to some fixed gate set that is universal for quantum computation, such as $\{\Had, \CNOT, \T\}$. 
        \item \emph{\textbf{Almost-unitary quantum circuit}}. An almost-unitary quantum circuit generalizes a unitary quantum circuit acting on $O(s(n))$ qubits by allowing $O(s(n))$ single-qubit measurement gates $\M$ in the computational basis, which are defined by a \emph{pinching} channel\emph{:}\footnote{See Section 4.1.1 (particularly Definition 4.4) in~\cite{Watrous18} for further details. This type of channel, also referred to as a completely phase-damping channel, is discussed in~\cite[Equation (2.24)]{VW16}.} 
        \[\Phi^{\M}(\rho) \coloneqq \ketbra{0}{0} \Tr\rbra*{M_0 \rho} + \ketbra{1}{1} \Tr\rbra*{M_1 \rho}, \text{ where } M_b \coloneqq \ketbra{b}{b} \text{ for } b\in\binset.\] 
 
        \item \emph{\textbf{Isometric quantum circuit}}. An isometric quantum circuit extends a unitary quantum circuit acting on $O(s(n))$ qubits by allowing $O(s(n))$ ancillary gates. An ancillary gate is a non-unitary gate that takes no input and produces a single qubit in the state $\ket{0}$ as output. 
    \end{itemize}
    For convenience, we treat almost-unitary quantum circuits as a special case of isometric quantum circuits.\footnote{More specifically, by applying the principle of deferred measurements (e.g.,~\cite[Section 4.4]{NC10}) to an almost-unitary quantum circuit, we obtain an isometric quantum circuit in which each measurement gate is simulated by an ancillary gate, and all ancillary qubits are measured at the end.\label{footref:almost-unitary-is-isometry}}
    For a promise problem $\calI = (\calI_{\yes},\calI_{\no})$, a family of unitary, almost-unitary, or isometric quantum circuits $\{C_x: x\in \calI\}$ is called \emph{$s(n)$-space-bounded} if there is a deterministic Turing machine that, given any input $x \in \calI$ with input length $n \coloneqq |x|$, runs in space $O(s(n))$ (and hence time $2^{O(s(n))}$) and outputs a description of $C_x$, where $C_x$ accepts if $x \in \calI_{\yes}$, rejects if $x \in \calI_{\no}$, acts on $O(s(n))$ qubits, and consists of $2^{O(s(n))}$ gates.
\end{definition}

\begin{remark}[Subtleties on space-bounded quantum circuit families]
    \label{remark:subtleties-space-bounded-circuits}
    In the context of space-bounded quantum circuits, as defined in \Cref{def:space-bounded-quantum-circuits}, there are important subtleties: 
    \begin{enumerate}[label={\upshape(\arabic*)}, topsep=0.33em, itemsep=0.33em, parsep=0.33em, leftmargin=2em]
        \item Unlike a full projective measurement that collapses a state into a specific outcome, a pinching measurement preserves a mixed-state structure while imposing \textit{decoherence} between subspaces. As a result, almost-unitary quantum circuits are \textit{oblivious} to intermediate measurement outcomes, meaning their description remains independent of these outcomes despite having access to them. This subtlety prevents qubits in such circuits from being directly reset to zero, differentiating them from isometric quantum circuits. 
        \label{remarkitem:pinching-measurement}
        \item Space-bounded unitary and almost-unitary quantum circuits are equivalent for promise problems via the principle of deferred measurements. However, such equivalences are unknown in more general settings, analogous to the scenario in \Cref{footref:BQL-eq-BQUL-strong-forms}. 
    \end{enumerate}
\end{remark}

In this work, we focus on (log)space-bounded quantum circuits with $s(n) = O(\log{n})$. 
The complexity classes corresponding to space-bounded unitary and general quantum circuits with $s(n) = \Theta(\log(n))$ are known as \BQUL{} and \BQL{}, respectively. 
As described in~\cite[Section 2.3]{VW16}, a \textit{general quantum circuit} extends a unitary quantum circuit by including ancillary gates and \textit{erasure gates}.\footnote{An erasure gate is a non-unitary gate that takes a single qubit as input and produces no output. Alternatively, a general quantum circuit can also be defined by extending a unitary quantum circuit with measurement gates and reset-to-zero gates, as in~\cite{FR21,GR22}. Notably, a reset-to-zero gate can be simulated by first applying an erasure gate to remove the original qubit and then using an ancillary gate to introduce a new qubit.}  
It has been established that $\BQUL{}=\BQL{}$ for promise problems~\cite{FR21} (see also~\cite{GRZ21,GR22}), whereas such equivalences remain unproven in more general forms.\footnote{Specifically, this refers to the transformation of a unitary quantum logspace circuit $C'$ from a general quantum logspace circuit $C$ (with all-zero states as input) such that the final state of $C$ and $C'$ are identical. This is a stronger requirement than merely ensuring that the output qubits of these circuits are the same. This general form only can be simulated in $\NC^2$~\cite{Wat99,Wat03}.\label{footref:BQL-eq-BQUL-strong-forms}}
For detailed definitions and known properties of these classes, we refer to~\cite[Section 2.3]{LGLW23} as a brief introduction. 

Lastly, we define \textit{logspace \emph{(}many-to-one\emph{)} reductions}. We begin by slightly abusing notation and considering parameterized promise problems of the form $\calI = \cbra*{\calI_{n,t_1,\cdots,t_r}}_{n \in \bbN}$ for functions $t_1,\cdots,t_r \colon \bbN \rightarrow \bbR$, where $\calI_{n,t_1,\cdots,t_r}$ consists of instances of size $n$ which satisfy conditions expressed in terms of $t_1(n),\cdots,t_r(n)$. 
We say that $\calI = \cbra*{\calI_{n,t_1,\cdots,t_r}}_{n \in \bbN}$ is (many-to-one) reducible to $\calI' = \big\{\calI'_{m,t'_1,\cdots,t'_{r'}}\big\}_{m \in \bbN}$ if there exist $(r+1)$-variable real polynomials $p_0, \cdots, p_{r'}$ such that for all $n\in \bbN$, there exists a function $g_n \colon \calI_{n,t_1,\cdots,t_r} \rightarrow \calI'_{m,t'_1,\cdots,t'_{r'}}$ satisfying the following conditions: (1) $m = p_0\rbra*{n, t_1(n), \cdots, t_r(n)}$; (2) $t'_j(m) = p_j(n,t_1(n),\cdots, t_r(n))$ for all $j \in [r']$; (3) $g_n(x) \in \calI'_{m,t'_1,\cdots,t'_{r'}}$ for all $x \in \calI_{n,t_1,\cdots,t_r}$. If the family of functions $\cbra*{g_n}_{n\in \bbN}$ is computable in deterministic logspace, we say that $\calI$ is logspace-reducible to $\calI'$, denoted by $\calI \leq^m_{\Lspace} \calI'$. 

\subsection{Space-bounded quantum state testing}
\label{subsec:space-bounded-state-testing}

We begin by defining the space-bounded quantum state testing problem with respect to the trace distance, denoted as \GapQSDlog{}:

\begin{definition}[Space-bounded Quantum State Distinguishability Problem, adapted from Definition 4.1 and 4.2~\cite{LGLW23}]
    \label{def:GapQSDlog}
    Let $\alpha(n)$, $\beta(n)$, $r(n)$ be logspace computable functions such that $0 \leq \beta(n) < \alpha(n) \leq 1$, $\alpha(n)-\beta(n) \geq 1/\!\poly(n)$ and $1 \leq r(n) \leq O(\log n)$. Let $Q_0$ and $Q_1$ be polynomial-size unitary quantum circuits acting on $O(\log n)$ qubits, with $r(n)$ specified output qubits. Here, $n$ represents the total number of gates in $Q_0$ and $Q_1$. For $b\in\binset$, let $\rho_b$ denote the quantum states obtained by running $Q_b$ on the all-zero state $\ket{\bar{0}}$ and tracing out the non-output qubits, then the promise is that one of the following holds:    
    \begin{itemize}[itemsep=0.33em,topsep=0.33em,parsep=0.33em]
        \item \emph{Yes} instances: A pair of quantum circuits $(Q_0,Q_1)$ such that $\td(\rho_0,\rho_1) \geq \alpha(n)$;
        \item \emph{No} instances: A pair of quantum circuits $(Q_0,Q_1)$ such that $\td(\rho_0,\rho_1) \leq \beta(n)$. 
    \end{itemize}
\end{definition}

Moreover, we use the notation \coGapQSDlog{} to denote the \textit{complement} of \GapQSDlog{} with respect to the chosen parameters $\alpha(n)$ and $\beta(n)$.
As established in~\cite{LGLW23}, \GapQSDlog{} is \BQL{}-complete, and we are particularly interested in the \BQL{} containment:

\begin{theorem}[\GapQSDlog{} is in \BQL{}, adapted from~{\cite[Theorem 4.10]{LGLW23}}]
    \label{thm:GapQSDlog-in-BQL}
    Let $\alpha(n)$ and $\beta(n)$ be logspace computable functions such that $\alpha(n)-\beta(n) \geq 1/\poly(n)$. It holds that 
    \[\GapQSDlog[\alpha(n),\beta(n)] \in \BQL{}.\] 
\end{theorem}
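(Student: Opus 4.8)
The plan is to reduce the decision to estimating, up to additive error $1/\poly(n)$, the normalized trace of a block-encoded operator obtained by applying a polynomial approximation of the absolute-value function to $\rho_0-\rho_1$ via \emph{space-efficient quantum singular value transformation} (QSVT, in the sense of~\cite{LGLW23}); this is exactly the regime that a bounded-error quantum logspace machine can handle.

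First I would convert the purification circuits into block encodings. Writing $\ket{\psi_b}\coloneqq Q_b\ket{\bar 0}$ over the output register $\sfO$ and the remaining purifying register $\sfP$, so that $\rho_b=\Tr_{\sfP}\ketbra{\psi_b}{\psi_b}$, one has, on a fresh copy $\sfO'$ of the output register, the identity $(\bra{\psi_b}_{\sfO\sfP}\otimes I_{\sfO'})(I_{\sfP}\otimes\SWAP_{\sfO\leftrightarrow\sfO'})(\ket{\psi_b}_{\sfO\sfP}\otimes I_{\sfO'})=\rho_b$, so that $U_b\coloneqq(Q_b^{\dagger}\otimes I_{\sfO'})(I_{\sfP}\otimes\SWAP_{\sfO\leftrightarrow\sfO'})(Q_b\otimes I_{\sfO'})$ is a subnormalization-$1$ block encoding of $\rho_b$ acting on $O(\log n)$ qubits with $\poly(n)$ gates. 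A one-ancilla linear-combination-of-unitaries step then produces a block encoding of $B\coloneqq\tfrac12(\rho_0-\rho_1)$, whose eigenvalues lie in $[-\tfrac12,\tfrac12]$.

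Next I would fix an even polynomial $p$ of degree $d=O(1/\eta)=\poly(n)$ with $\sup_{x\in[-1,1]}\abs{p(x)-\abs{x}}\le\eta$ and $\abs{p(x)}\le 1$ on $[-1,1]$, taking $\eta\coloneqq(\alpha(n)-\beta(n))\cdot 2^{-r(n)}/10$ (such $p$ is a classical fact after a harmless rescaling). Applying space-efficient QSVT with $p$ to the block encoding of $B$ yields an $O(\log n)$-qubit unitary $U_p$ that block-encodes $M\coloneqq p(B)$, and since $\tfrac12\Tr\abs{\rho_0-\rho_1}=\td(\rho_0,\rho_1)$ one gets $\abs{\Tr M-\td(\rho_0,\rho_1)}\le 2^{r(n)}\eta$. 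To read off $\Tr M$, I would prepare the maximally mixed state on $\sfO'$ as one half of a maximally entangled state (an extra $r(n)$ qubits), run a Hadamard test on $U_p$ with that register as input while post-selecting the block-encoding ancillas on $\ket{\bar0}$, and route the three measured bits (the control, the flag ``ancilla is $\bar0$'', and a fresh fair coin on the failure branch) so that the circuit accepts with probability exactly $\tfrac12+2^{-r(n)-1}\Tr M$; a short computation shows the unwanted $\Tr(M^2)$ cross-terms cancel. Hence the acceptance probability equals $\tfrac12+2^{-r(n)-1}\td(\rho_0,\rho_1)\pm\tfrac{\eta}{2}$, so yes- and no-instances are separated by at least $2^{-r(n)-1}(\alpha-\beta)-\eta=\Omega(1/\poly(n))$ around the midpoint threshold. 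The whole construction runs on $O(\log n)$ qubits and $\poly(n)$ gates with a logspace-uniform layout, and $O(\poly(n))$ sequential repetitions with a logspace counter amplify this inverse-polynomial gap to bounded error, placing $\GapQSDlog[\alpha,\beta]$ in $\BQL$ (using $\alpha(n)-\beta(n)\ge 1/\poly(n)$ and $2^{r(n)}\le\poly(n)$).

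I expect the main obstacle to be making the QSVT step genuinely space-efficient: one must argue not only that applying a degree-$\poly(n)$ polynomial to a block encoding costs only $O(\log n)$ qubits (it does, adding $O(1)$ rotation ancillas to the $O(\log n)$-qubit block encoding), but also---and this is the delicate part---that the quantum-signal-processing phase sequence realizing the chosen approximation of $\abs{x}$ is computable by a deterministic logspace machine, in place of the numerically unstable root-finding used in the usual presentation. A secondary nuisance is bookkeeping the $2^{-r(n)}$ normalization loss incurred by extracting a trace from a single measurement probability, so that the completeness--soundness gap stays inverse-polynomial.
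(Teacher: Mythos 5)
The paper does not prove this statement at all: \Cref{thm:GapQSDlog-in-BQL} is imported as a black box from \cite[Theorem~4.10]{LGLW23}, so the only fair comparison is with that reference. Your sketch follows essentially the same route taken there: turn the purification circuits into block encodings of $\rho_0$ and $\rho_1$ (your swap-based identity for $U_b$ is correct), combine them into a block encoding of $\tfrac12(\rho_0-\rho_1)$, apply a low-degree polynomial related to $\abs{x}$ via space-efficient QSVT, and estimate a trace to inverse-polynomial precision, which a \BQL{} machine can amplify since $r(n)\le O(\log n)$ keeps the gap at $1/\poly(n)$. Two remarks on the details. First, \cite{LGLW23} estimates $\Tr\rbra{P^{\mathrm{sgn}}(A)\rho_b}$ against the two \emph{preparable} states (using $\sign(x)\cdot x=\abs{x}$) rather than $\Tr p(A)$ against the maximally mixed state; this avoids the $2^{-r(n)}$ shrinkage you have to track, though your version still lands at an inverse-polynomial gap. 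Second, your Hadamard-test accounting is overcomplicated: with input $\ketbra{\bar 0}{\bar 0}\otimes I/2^{r}$ the plain test already accepts with probability $\tfrac12+2^{-r-1}\Tr M$, with no post-selection and no $\Tr(M^2)$ cross-terms to cancel.

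The genuine gap is the one you flag yourself and then leave open: making the QSVT step space-efficient, i.e., realizing a degree-$\poly(n)$ bounded polynomial transformation of a block encoding with all classical parameters computable in deterministic logspace. That is not a technicality to be deferred --- it is the entire content of the cited theorem. The resolution in \cite{LGLW23} is not to compute quantum-signal-processing phase angles at all (the angle-finding step is exactly the numerically unstable part you worry about), but to implement the approximating polynomial as a linear combination of Chebyshev polynomials $T_k$ of the block-encoded operator, each of which requires no angle-finding, with Chebyshev coefficients that are provably computable in bounded space. Without supplying that construction (or an equally explicit substitute), your argument reduces the theorem to its hardest step rather than proving it; everything surrounding that step is correct and matches the reference.
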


Lastly, it is worth noting that by removing the space constraints on the quantum circuits $Q_0$ and $Q_1$ and allowing $r(n) \leq n$, where $n$ denotes the input length of these state-preparation circuits, we obtain a variant of \Cref{def:GapQSDlog} that aligns with the definition of $\GapQSD[\alpha(n), \beta(n)]$. This promise problem was considered in~\cite{Wat02QSZK} with the condition $\alpha^2 > \beta$, referred to as $\QSD[\alpha(n), \beta(n)]$. 

\subsection{Classical concepts, tools, and complexity classes}
\label{subsec:classical-concepts-and-tools}

\paragraph{\threeSAT{}.}
The \threeSAT{} problem is one of the simplest examples of \NP{}-complete problems. We provide only a brief introduction to \threeSAT{} here. For further details, see~\cite[Section 2.3]{AB09}. 

A \threeSAT{} formula can be written as $\phi = C_1 \wedge \cdots \wedge C_k$, where each clause $C_i$ for $i \in [k]$ is of the form $\big(l_1^{(i)} \vee l_2^{(i)} \vee l_3^{(i)}\big)$, with each literal $l_j^{(i)}$ being either one of the variables $x_1,\cdots, x_n$ or its negation. For instance, $(x_1 \vee x_2 \vee x_3) \wedge (\neg x_4 \vee \neg x_2 \vee x_3) \wedge (x_4 \vee \neg x_1 \vee \neg x_3)$ illustrates the structure. An assignment of a \threeSAT formula assigns each variable $x_j$ for $j \in [n]$ a value of either $\top$ (true) or $\bot$ (false). The \threeSAT{} problem aims to decide whether a given formula $\phi$ is satisfiable. We say that $\phi$ is satisfiable if there exists an assignment $\alpha$ such that $\Phi(\alpha) = \top$.  

\begin{lemma}[{\cite[Exercise 4.6]{AB09}}]
    \label{lemma:3SAT-NP-hard}
    \threeSAT{} is \NP-complete{} under logspace reductions.
\end{lemma}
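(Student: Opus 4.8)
The plan is to prove the two directions separately: membership in \NP{}, which is immediate, and \NP{}-hardness under logspace reductions, which is the substantive part. For membership, note that a satisfying assignment $\alpha \in \binset^n$ of a \threeSAT{} formula $\phi$ on $n$ variables is a witness of polynomial size, and verifying $\phi(\alpha) = \top$ only requires evaluating each clause, which is computable in deterministic logspace and hence in polynomial time; so $\threeSAT{} \in \NP{}$. For hardness, I would fix an arbitrary $L \in \NP{}$ decided by a nondeterministic Turing machine $M$ running in time $p(n)$ for some polynomial $p$, and exhibit a logspace-computable map $x \mapsto \phi_{M,x}$ such that $\phi_{M,x}$ is a satisfiable \threeSAT{} formula if and only if $M$ accepts $x$.

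The construction is the Cook--Levin tableau reduction, with explicit attention to the space bound. The formula $\phi_{M,x}$ is built over variables $y_{i,j,s}$ asserting that cell $(i,j)$ of the $O(p(n)) \times O(p(n))$ computation tableau carries symbol-or-state $s$, and has clauses enforcing: (a) each cell carries exactly one symbol; (b) row $0$ spells out the initial configuration determined by $x$; (c) some cell carries an accepting state; and (d) every $2 \times 3$ window of cells is consistent with the transition relation of $M$. The key point for the resource bound is that a logspace transducer can emit $\phi_{M,x}$: it uses only a constant number of counters ranging over $\{0,\ldots,O(p(n))\}$, each of which fits in $O(\log p(n)) = O(\log n)$ bits, to iterate over all cells and all windows; the transition relation of the fixed machine $M$ is a constant-size table that can be hardwired into the transducer; and the clauses of type (b) are produced by reading the relevant bit of $x$ directly off the input tape. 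Thus $x \mapsto \phi_{M,x}$ is computable in logarithmic space, and it produces a CNF formula, though not yet in $3$-CNF.

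It remains to convert $\phi_{M,x}$ into an equisatisfiable \threeSAT{} formula by the standard local gadget: a clause $(\ell_1 \vee \cdots \vee \ell_t)$ with $t > 3$ is replaced by $(\ell_1 \vee \ell_2 \vee z_1) \wedge (\neg z_1 \vee \ell_3 \vee z_2) \wedge \cdots \wedge (\neg z_{t-3} \vee \ell_{t-1} \vee \ell_t)$ using fresh auxiliary variables, and a clause with fewer than three literals is padded with a fresh dummy variable and its negation. This transformation is clause-local with trivially computed variable indices, hence logspace. Since the composition of two logspace reductions is again a logspace reduction (the composite transducer recomputes bits of the intermediate string on demand rather than storing it), we conclude $L \leq^m_{\Lspace} \threeSAT{}$, and together with \Cref{lemma:3SAT-NP-hard}'s membership claim this gives \NP{}-completeness under logspace reductions. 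The main obstacle is the logspace bookkeeping in the Cook--Levin step: one must choose an encoding of the tableau in which the cell index, the window index, the index of the clause currently being emitted, and the position within that clause are all simultaneously trackable in $O(\log n)$ bits, and verify that the type-(d) window clauses can be enumerated using only a constant number of such counters; once this is done, the $3$-CNF conversion and the composition of reductions are routine.
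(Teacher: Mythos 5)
Your proof is correct, but note that the paper does not prove this lemma at all---it is cited as \cite[Exercise 4.6]{AB09}---and your argument (the Cook--Levin tableau reduction with explicit $O(\log n)$ counters for cell, window, and clause indices, followed by the clause-local $3$-CNF gadget and closure of logspace reductions under composition) is precisely the standard proof of that cited exercise. One cosmetic slip: in your final sentence you invoke the lemma's own membership claim, when you in fact established $\threeSAT{} \in \NP{}$ yourself in your first paragraph.
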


\paragraph{Fingerprinting of multisets.}
A fingerprint of a multiset $\{x_1,\cdots,x_k\}$, where all elements are non-negative integers and duplicates are allowed, is defined as $\Pi_{i=1}^k (x_i+r) \mod p$, with $p$ being a prime and $r \in [p-1]$. The fingerprinting lemma~\cite{Lipton90} aims to compare whether two multisets are equal by using short fingerprints: 

\begin{lemma}[Fingerprinting lemma, adapted from~{\cite[Theorem 3.1]{Lipton90}}]
    \label{lemma:fingerprint}
    Let $A \coloneqq \{x_1,\cdots,x_{\ell_1}\}$ and $B \coloneqq \{y_1,\cdots,y_{\ell_2}\}$ be two multisets in which all elements are $b$-bit non-negative integers, with $\ell \coloneqq \max\{\ell_1,\ell_2\}$. 
    If the prime $p$ is chosen uniformly at random from the interval $[(b\ell)^2,2(b\ell)^2]$ and the integer $r$ is chosen uniformly at random from the interval $[1,p-1]$, the probability that the \emph{distinct} multisets $A$ and $B$ produce the \emph{same} fingerprint is at most $O\big(\frac{\log{b}+\log{\ell}}{b\ell} + \frac{1}{b^2 \ell}\big)$.    
\end{lemma}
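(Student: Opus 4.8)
\textbf{Proof plan for \Cref{lemma:fingerprint}.}
The plan is to encode the collision event as the vanishing of a single integer polynomial evaluated at the random shift $r$, and then bound separately the contributions of the two sources of randomness: the prime $p$ and the shift $r$. Concretely, I would set $P(z) \coloneqq \prod_{i=1}^{\ell_1}(z+x_i) - \prod_{j=1}^{\ell_2}(z+y_j) \in \mathbb{Z}[z]$, so that the fingerprints of $A$ and $B$ coincide precisely when $P(r) \equiv 0 \pmod{p}$. The first ingredient is that $P$ is \emph{not} the zero polynomial over $\mathbb{Z}$: the polynomial $\prod_i(z+x_i)$ factors into monic linear terms whose roots (with multiplicity) are exactly $\{-x_i\}$, so by unique factorization in $\mathbb{Q}[z]$ the two products agree as polynomials if and only if $A=B$ as multisets; since $A\neq B$, we have $P\neq 0$ and $\deg P \leq \max\{\ell_1,\ell_2\}=\ell$.

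Next I would split on whether $P$ survives reduction modulo $p$. If $P \bmod p$ is the zero polynomial of $\mathbb{F}_p[z]$, then $p$ divides every coefficient of $P$; picking any nonzero coefficient $c$ of $P$, note that $c$ is (up to sign) a difference of elementary symmetric functions of $b$-bit numbers, hence $|c| \leq 2\cdot 2^{\ell}(2^{b})^{\ell} = 2^{(b+1)\ell+1}$, so $c$ has at most $(b+1)\ell+1 = O(b\ell)$ distinct prime divisors. Since a Chebyshev-type estimate (or the prime number theorem) guarantees $\Omega\big((b\ell)^2/\log(b\ell)\big)$ primes in $[(b\ell)^2, 2(b\ell)^2]$, the probability over $p$ of this first event is at most $O\big(\log(b\ell)/(b\ell)\big) = O\big((\log b + \log\ell)/(b\ell)\big)$. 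In the complementary case, $P \bmod p$ is a nonzero polynomial over the field $\mathbb{F}_p$ of degree at most $\ell$, hence has at most $\ell$ roots in $\mathbb{F}_p$; as $r$ is uniform over the $p-1 \geq (b\ell)^2-1$ nonzero residues, the conditional probability that $r$ hits such a root is at most $\ell/((b\ell)^2-1) = O(1/(b^2\ell))$. A union bound over the two cases then yields the claimed bound.

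The main thing to get right is the accounting in the second paragraph: one must bound the coefficient sizes of $P$ tightly enough (via elementary symmetric functions of $b$-bit integers) that the count $O(b\ell)$ of prime divisors is genuinely dominated by the number $\Omega\big((b\ell)^2/\log(b\ell)\big)$ of primes in the dyadic window $[(b\ell)^2, 2(b\ell)^2]$, and then verify that the resulting ratio matches the claimed $O\big((\log b+\log\ell)/(b\ell)\big)$ term while the root-counting step supplies the $O(1/(b^2\ell))$ term (the $O(\cdot)$ also absorbing the mild requirement that $(b\ell)^2$ be large enough for the prime-counting estimate to apply). This is precisely the classical polynomial-fingerprinting computation of \cite{Lipton90}, which I would carry out in detail; beyond this bookkeeping I do not anticipate a genuine obstruction.
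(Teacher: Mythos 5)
Your plan is correct and is exactly the classical polynomial-fingerprinting argument underlying the cited result of Lipton; the paper itself imports \Cref{lemma:fingerprint} as a black box with no proof, so there is nothing to diverge from. All the steps check out: $P\neq 0$ by unique factorization, the coefficient bound $2^{(b+1)\ell+1}$ gives $O(b\ell)$ prime divisors against $\Theta\big((b\ell)^2/\log(b\ell)\big)$ primes in the window (yielding the $O((\log b+\log\ell)/(b\ell))$ term), and the $\leq\ell$ roots over $\mathbb{F}_p$ against $p-1\geq(b\ell)^2-1$ choices of $r$ give the $O(1/(b^2\ell))$ term.
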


\paragraph{(Uniform) $\SAC^1$.} 
The complexity class (uniform) $\SAC^1$ is a restricted subclass of (uniform) $\AC^1$. Throughout this paper, $\SAC^1$ circuits will refer to (logspace-)uniform $\SAC^1$ circuits. We define $\SAC^1$ circuits and their corresponding circuit evaluation problem as follows: 
\begin{definition}[\textsc{Uniform $\SAC^1$ Circuit Evaluation}, adapted from~\cite{BCD+89}]
    \label{def:uniform-SAC1}
    A Boolean circuit $C \colon \binset^n \rightarrow \binset$ is defined as an $\SAC^1$ circuit if it has depth $O(\log{n})$, includes unbounded fan-in OR \emph{($\vee$)} gates, bounded fan-in AND \emph{($\wedge$)} gates \emph{(}e.g., with fan-in $2$\emph{)}, and has negation \emph{($\neg$)} gate restricted to the input level. The problem is to decide whether a given (logspace-)uniform $\SAC^1$ circuit $C$, whose description can be computed by a deterministic logspace Turing machine, evaluates to $1$.  
\end{definition}

Venkateswaran~\cite{Venkateswaran87} established that $\SAC^1$ is equivalent to \LOGCFL{}, the complexity class consists of languages that are logspace-reducible to context-free languages~\cite{Sudborough78}. 
To compare with other classes of logspace-uniform bounded-depth Boolean circuits, it is known that:
\[\NL \subseteq \SAC^1 = \LOGCFL \subseteq \AC^1 \subseteq \NC^2\] 
Here, \NL{} is a logspace version of \NP{} with polynomial-length witness, $\AC^1$ captures the power of $O(\log{n})$-depth Boolean circuits using \textit{unbounded} fan-in gates, and $\NC^2$ characterizes the power of $O(\log^2{n})$-depth Boolean circuits with \textit{bounded} fan-in gates. 

Similar to \NL{}, \LOGCFL{} (equivalently, $\SAC^1$) is closed under complementation~\cite{BCD+89}. However, whether $\SAC^1$ is contained in \BPL{} or \BQL{} remains an open problem. 

\section{Space-bounded (unitary) quantum interactive proofs}
\label{sec:poly-message-QIPL}

In this section, we introduce space-bounded quantum interactive proofs (\QIPL{}), where the verifier's actions are implemented using space-bounded \textit{almost-unitary} quantum circuits (see \Cref{def:space-bounded-quantum-circuits} and \Cref{remark:subtleties-space-bounded-circuits}\ref{remarkitem:pinching-measurement}); along with the variant \QIPUL{}, in which the verifier's actions are restricted to \textit{unitary} circuits. Both \QIPL{} and \QIPUL{} are variants of single-prover quantum interactive proofs (\QIP{})~\cite{Wat99QIP,KW00} that have a space constraint. We establish three theorems concerning the classes \QIPL{} and \QIPUL{}, focusing on space-bounded (unitary) quantum interactive proofs with a \textit{polynomial} number of messages. 

The first theorem shows that \QIPLHC{}, a subclass of \QIPL{} defined by a high-concentration condition on \textit{yes} instances, provides a new exact characterization of \NP{}, as stated in \Cref{thm:QIPL-eq-NP}. 
This result can be seen as a quantum analog of classical works~\cite{Lipton90,CL95}. 

\begin{theorem}[The equivalence of \QIPLHC{} and \NP]
    \label{thm:QIPL-eq-NP}
    The following holds: 
    \begin{enumerate}[label={\upshape(\arabic*)}, topsep=0.33em, itemsep=0.33em, parsep=0.33em]
        \item For any logspace-computable function $m(n)$ such that $1 \leq m(n) \leq \poly(n)$,  
        \[\cup_{c(n)-s(n) \geq 1/\poly(n)}\QIPL^\HC_m[c,s] \subseteq \NP.\]
        \label{thmitem:QIPLHC-in-NP}
        \item $\NP \subseteq \QIPL^\HC_m \subseteq \QIPL_m$, where $m(n)$ is some polynomial in $n$.
    \end{enumerate}
\end{theorem}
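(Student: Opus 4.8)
The plan is to prove the two directions separately: for part (1), an SDP-based upper bound $\QIPL^\HC_m[c,s]\subseteq\NP$; and for part (2), a fingerprinting-based simulation of \threeSAT{}. For the upper bound I would extend the SDP formulation for constant-round $\QIPL$ proof systems to the polynomial-round case. Fix an $l$-round $\QIPL^\HC_m[c,s]$ proof system $\protocol{P}{V}$ and, as in \Cref{fig:QIPL}, replace each almost-unitary verifier action $\widetilde{V}_j$ by its isometric dilation $V_j$ together with a private environment register $\sfE_j$ recording the $j$-th pinching outcome $u_j$. Since the pinching measurements impose decoherence between the $\sfE_j$-tagged sectors, the global maximum acceptance probability decomposes as $\omega(V)=\sum_{u}\omega(V)|^{u}$ over branches $u=(u_1,\dots,u_l)$, and the contribution $\omega(V)|^{u}$ of a single fixed branch can be optimized by a semidefinite program over the unnormalized post-prover snapshot states $\rho_{\ttM'_j\ttW_j}\otimes\ketbra{u_j}{u_j}$, subject to the three constraint families described in the proof-techniques section: the verifier-action equalities, the $\sfW$-marginal preservation constraints expressing that the prover cannot alter the verifier's private register, and the trace-preservation constraints whose value is fixed by the verifier alone. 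Writing $\widehat{\omega}(V)|^{u}$ for the optimum of this branch-$u$ SDP, the technical heart is the sandwich $\omega(V)|^{u}\le\widehat{\omega}(V)|^{u}\le\omega(V)$: the left inequality holds because the true per-branch snapshot states are SDP-feasible, and the right one because any feasible tuple can be assembled block-diagonally across the orthogonal $\sfE_j$-sectors — using \Cref{lemma:unitary-equivalent-in-purifications} to pick prover purifications consistent with the prescribed $\sfW$-marginals — into a genuine global prover strategy whose overall acceptance is at most $\omega(V)$.

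Given the sandwich, the $\NP$ algorithm is straightforward. The witness is $w=(u^\star,\rho_{\ttM'_1\ttW_1},\dots,\rho_{\ttM'_l\ttW_l})$: a branch $u^\star$ together with a feasible solution of the branch-$u^\star$ SDP, each $\rho_{\ttM'_j\ttW_j}$ a positive semidefinite matrix of dimension $\poly(n)$ and hence of polynomial bit-size when truncated to $\poly(n)$ bits of precision. The $\NP$ verifier checks, in deterministic polynomial time, that these matrices are positive semidefinite, satisfy the three constraint families for the branch $u^\star$, and yield objective value at least $c(n)$. Completeness uses the high-concentration hypothesis on yes-instances: there exists $u^\star$ with $\omega(V)|^{u^\star}\ge c(n)$, hence $\widehat{\omega}(V)|^{u^\star}\ge c(n)$. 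Soundness: for a no-instance, every branch satisfies $\widehat{\omega}(V)|^{u}\le\omega(V)\le s(n)<c(n)$, so no witness passes, the $1/\poly(n)$ gap being exactly what makes the finite-precision checks go through; here the strong uniformity of the verifier's mapping guarantees that the description of $(V_1,\dots,V_{l+1})$ is logspace- and a fortiori polynomial-time computable.

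For part (2) it suffices, by \Cref{lemma:3SAT-NP-hard}, to place \threeSAT{} in $\QIPL^\HC_m$ for some polynomial $m$; I would realize the two-phase protocol sketched in the introduction. An honest prover holding a satisfying assignment $\alpha$ streams, triple by triple, the encoding $\Enc(\phi(\alpha))$ of $3k$ triples $(l,i,v)$ — in the \textsc{Consistency Check} phase ordered by the variable underlying $l$, and in the \textsc{Satisfiability Check} phase grouped clause by clause. The verifier measures each incoming message in the computational basis (so that soundness need only be argued against classical messages), retains in the consistency phase only the current and previous triples (checking that variables appear in non-decreasing order and that consecutive triples sharing a variable agree in value, then echoing the previous triple back to free its workspace), checks in the satisfiability phase that each clause's three triples certify it and echoes them back, and throughout maintains a running fingerprint of the multiset of triples seen in each phase, the $O(\log n)$ random bits selecting the fingerprint parameters being simulable in \QIPL{}. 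It accepts iff all local checks pass and the two fingerprints agree. Completeness with probability $1$ is immediate. For soundness, an unsatisfiable $\phi$ forces any classical prover passing all local checks to exhibit, in the satisfiability phase, triples describing a genuine satisfying assignment — impossible — unless the consistency-phase and satisfiability-phase multisets differ, in which case \Cref{lemma:fingerprint} bounds the fingerprint-collision probability well below $1/3$. Since on yes-instances the honest messages are deterministic and every verifier measurement is pinching, the distribution of intermediate outcomes conditioned on acceptance is a point mass, which is strictly stronger than the high-concentration condition; hence the protocol is a $\QIPL^\HC_{O(k)}[1,1/3]$ protocol, giving $\NP\subseteq\QIPL^\HC_m$ with $m=O(k)$, and $\QIPL^\HC_m\subseteq\QIPL_m$ holds trivially by definition.

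I expect the main obstacle to be the right-hand inequality $\widehat{\omega}(V)|^{u}\le\omega(V)$ of the sandwich: one must argue that an arbitrary single-branch SDP solution — a priori just a tuple of matrices satisfying marginal and trace constraints — is dominated by some honest global prover's acceptance probability. This is precisely where the almost-unitary (rather than fully unitary) nature of the verifier has to be handled with care: the argument must combine the orthogonality of the $\sfE_j$-tagged sectors, so per-branch solutions glue block-diagonally without overlap, with the Uhlmann freedom in the prover's purifications along the lines of \Cref{lemma:unitary-equivalent-in-purifications}. A secondary but genuine technical point is the finite-precision bookkeeping for the $\NP$ verifier, and, on the lower-bound side, verifying that the verifier's local checks plus the fingerprint comparison can indeed be implemented by an $O(\log n)$-space almost-unitary circuit without discarding information.
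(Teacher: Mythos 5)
Your proposal follows essentially the same route as the paper: part (1) via the branch-specific SDP formulation with the sandwich $\omega(V)|^{u}\le\widehat{\omega}(V)|^{u}\le\omega(V)$ and a witness consisting of the branch plus a feasible solution, and part (2) via the two-phase Condon--Ladner-style fingerprinting protocol for \threeSAT{} with computational-basis measurements of incoming messages and triple echoing. Your treatment of the right-hand sandwich inequality (extending a single-branch feasible solution to a global prover strategy whose total acceptance cannot exceed $\omega(V)$) is the same argument the paper gives, so no genuine gap remains.
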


It is noteworthy that \Cref{thm:QIPL-eq-NP}\ref{thmitem:QIPLHC-in-NP} can be slightly strengthened to $\QIPL^{\HC(\poly)} \subseteq \NP$, where $\QIPL^{\HC(\poly)}$ denotes a variant of \QIPLHC{} that relaxes the high-concentration condition to \textit{polynomial}-size supports (see \Cref{remark:checking-consistency}).
The second theorem addresses two fundamental but crucial properties of \QIPL{} and \QIPUL{}. Specifically, closure under perfect completeness (\Cref{lemma:QIPL-perfect-completeness}) and error reduction through sequential repetition (\Cref{lemma:QIPL-error-reduction}): 

\begin{theorem}[Basic properties for \QIPL{} and \QIPUL{}]
    \label{thm:QIPL-basic-properties}
    Let $c(n)$, $s(n)$, and $m(n)$ be logspace-computable functions such that $0 \leq s(n) < c(n) \leq 1$, $c(n) - s(n) \geq 1/\poly(n)$, and $1 \leq m(n) \leq \poly(n)$. Then the following properties hold:
    \begin{enumerate}[label={\upshape(\arabic*)}, topsep=0.33em, itemsep=0.33em, parsep=0.33em]
        \item \textbf{\emph{Closure under perfect completeness}}.         \label{thmitem:QIPL-perfect-completeness}
        \[ \QIPL_m[c,s] \subseteq \QIPL_{m+2}[1,1-(c-s)^2/2] \text{ and } \QIPUL_m[c,s] \subseteq \QIPUL_{m+2}[1,1-(c-s)^2/2]. \]
        \item \textbf{\emph{Error reduction}}. For any polynomial $k(n)$, 
        \[ \QIPL_m[c,s] \subseteq \QIPL_{m'}[1,2^{-k}] \text{ and } \QIPUL_m[c,s] \subseteq \QIPUL_{m'}[1,2^{-k}]. \]
        Here, $m'$ is some polynomial in $n$. \label{thmitem:QIPL-error-reduction}
    \end{enumerate}
\end{theorem}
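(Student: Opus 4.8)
The plan is to establish part~(1) first and then use it as an ingredient for part~(2).

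\emph{Part (1): closure under perfect completeness.} I would follow the standard perfect-completeness transformation for \QIP{} (\cite[Section~4.2.1]{VW16}, or~\cite[Section~3]{KW00}), which appends two turns. Concretely: run the original $2l$-turn protocol $\protocol{P}{V}$, but have the verifier hold the designated output qubit unmeasured; then in two extra turns the verifier applies a fixed $O(1)$-qubit ``fix-up'' gadget together with one fresh ancilla, sends the relevant qubit to the prover, the prover applies a correcting unitary, and the verifier performs the final measurement. The completeness/soundness analysis is the usual one --- using that on a yes-instance the prover can realize any acceptance probability in $[0,\omega(V)]$, in particular exactly $c$ --- and yields completeness exactly $1$ and soundness at most $1-(c-s)^2/2$. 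For the space-bounded setting the only new points are bookkeeping: (i) the gadget acts on $O(1)$ extra qubits, so $\sfM$ and $\sfW$ stay of size $O(\log n)$ and the verifier remains a logspace almost-unitary (here, unitary) circuit; (ii) the new verifier's mapping $(V_1,\dots,V_{l+1},V_{l+2}')$ is still the output of a \emph{single} deterministic logspace Turing machine, since the appended circuits are fixed and depend only on $n$; and (iii) the pinching structure of the original $V_j$'s is untouched. The same construction applies verbatim to \QIPUL{}.

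\emph{Part (2): error reduction.} Because each message carries only $O(\log n)$ qubits, parallel repetition is unavailable once $c-s$ is only polynomially small, so I would use sequential repetition. First apply part~(1) to reduce to a protocol with completeness $1$ and soundness $s':=1-(c-s)^2/2<1$ on $m+2$ turns, then run it $t:=\ceil{k/\log\frac{1}{s'}}$ times in sequence, reusing $\sfM\sfW$. The obstacle is that consecutive runs need $\sfM\sfW$ reset to $\ket{\bar 0}$, which an (almost-)unitary logspace verifier cannot do by itself (no erasure gates, and an almost-unitary circuit cannot even condition on its own measurement outcomes). The fix is to let the prover perform the reset while the verifier \emph{checks} it: at the end of each run the verifier first (coherently) accumulates the run's output qubit into a protected $O(\log t)$-qubit counter $\sfO$; then, over $O(1)$ extra turns, it routes the scratch part of $\sfM\sfW$ through the message register to the prover, which --- using its unbounded private memory --- swaps the garbage away and returns $\ket{\bar 0}$, resetting the workspace; then the verifier applies a multiple-controlled adder that increments a second protected $O(\log t)$-qubit counter $\sfR$ exactly when the returned workspace is all-zero. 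After the $t$ runs the verifier measures $\sfO$ and $\sfR$ and accepts iff $\sfO=\sfR=t$. (For \QIPUL{} one may alternatively reset by running the protocol in reverse, exploiting reversibility of the unitary verifier's actions, as in the turn-halving lemma.) All auxiliary registers use $O(\log n)$ qubits and the new verifier's sequence is logspace-computable from the old one, so \QIPL{}/\QIPUL{} membership is preserved, and the turn count is $m'=O\bigl(t(m+2)\bigr)=O\bigl(km/\log\tfrac{1}{1-(c-s)^2/2}\bigr)$, a polynomial in $n$ since $\log\tfrac{1}{1-(c-s)^2/2}\ge(c-s)^2/2\ge 1/\poly(n)$.

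\emph{Correctness and the main obstacle.} Perfect completeness of the repeated protocol is immediate: on a yes-instance the honest (perfect-completeness) prover makes every run end deterministically in the accepting state and makes every reset return $\ket{\bar 0}$ exactly, so $\sfO=\sfR=t$ with certainty. Soundness is the hard part. Since $\sfO,\sfR$ are modified only by increments controlled on the projectors ``output qubit $=1$'' and ``$\sfM\sfW=\ket{\bar 0}$'' and are untouched otherwise until the final measurement, the principle of deferred measurements lets us instead measure ``output qubit'' and ``is $\sfM\sfW$ equal to $\ket{\bar 0}$'' at each checkpoint; conditioning on the latter collapses $\sfM\sfW$ back to $\ket{\bar 0}$, so, conditioned on acceptance, the $t$ runs act on freshly initialized workspaces and are genuinely independent (the prover's knowledge of earlier workspaces is useless, as each run uses its own freshly simulated coins). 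Hence a no-instance is accepted with probability at most $(s')^{t}\le 2^{-k}$. Making this conditioning heuristic rigorous in the fully coherent formulation --- i.e.\ bounding $\omega$ of the repeated protocol by $(s')^{t}$ without physically measuring between runs --- is the step I expect to require the most care; I would carry it out by an explicit $t$-step hybrid argument (equivalently, by noting that the SDP value of the repeated protocol factorizes across the clean-checks), exactly as the analogous statement is handled for \QIP{}.
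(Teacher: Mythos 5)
Your proposal is correct and follows essentially the same route as the paper: part (1) is the standard two-turn perfect-completeness gadget of~\cite[Section 4.2.1]{VW16} (the paper's \Cref{protocol:QIPL-perfect-completeness}, where the verifier keeps a pseudo-copy of the output qubit, hands its workspace to the prover for disentangling, and tests against $\ket{\gamma}=\sqrt{1-\alpha}\ket{00}+\sqrt{\alpha}\ket{11}$), and part (2) is exactly the paper's AND-type sequential repetition (\Cref{protocol:QIPL-error-reduction}) in which the prover supplies fresh all-zero workspace blocks and the verifier tallies both acceptances and successful cleanings in two $O(\log r)$-qubit counters via (multiply-)controlled adders, accepting only if both counters are full. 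Your explicit flagging of the coherent-soundness subtlety (that the runs are only independent after conditioning on clean resets) is, if anything, more careful than the paper's brief remark that the dependent indicators $X_i$ can be dominated by independent ones.
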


The third theorem provides a lower bound for \QIPUL{}, which serves as a quantum analog of the space-bounded public-coin classical interactive proof for $\SAC^1$ established in~\cite{Fortnow89}: 
\begin{theorem}
    \label{thm:LOGCFL-in-QIPUL}
    $\SAC^1 \cup \BQL \subseteq \QIPUL_m$, where $m(n)$ is some polynomial in $n$.
\end{theorem}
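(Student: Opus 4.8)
The plan is to prove the two inclusions $\BQL \subseteq \QIPUL_m$ and $\SAC^1 \subseteq \QIPUL_m$ separately and then unify the message counts by padding the shorter protocol with dummy rounds (in which the verifier applies the identity to $\sfM\sfW$ and ignores the prover's message, leaving the acceptance probability unchanged). The inclusion $\BQL \subseteq \QIPUL_m$ is immediate: since $\BQL = \BQUL$~\cite{FR21}, every language in $\BQL$ is decided by a unitary quantum logspace circuit, which the verifier can run entirely on its own workspace while disregarding the prover, so a single dummy message suffices. The substance of the theorem is therefore $\SAC^1 \subseteq \QIPUL_m$, and by the error-reduction property \Cref{thm:QIPL-basic-properties}\ref{thmitem:QIPL-error-reduction} it suffices to exhibit, for each $\SAC^1$ language, a $\QIPUL_{O(\log n)}[1,1-1/\poly(n)]$ proof system: a polynomially small completeness--soundness gap with $O(\log n)$ messages is then amplified to the standard constants with polynomially many messages.

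The protocol is the natural quantum analogue of Fortnow's circuit-evaluation game~\cite[Section 3.4]{Fortnow89}. Fix the input $x$ and the logspace-uniform $\SAC^1$ circuit $C$ of depth $d = O(\log n)$. The proof system walks a single root-to-leaf path of $C$: the verifier maintains a pointer to the current gate $g$, initialised to the output gate. In the round starting at $g$: if $g$ is an OR gate, the prover sends in $\sfM$ a proposed child $g'$ and the verifier checks that $g'$ is genuinely a child of $g$; if $g$ is a (bounded fan-in) AND gate, the verifier ignores the prover and picks a child $g'$ of $g$ uniformly at random. The pointer is updated to $g'$ and the round repeats; after at most $d$ rounds the path reaches a leaf literal $x_i$ or $\neg x_i$ (or a constant), and the verifier accepts iff that literal evaluates to $1$ under $x$. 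For \emph{yes} instances ($C$ evaluates to $1$) the prover always picks at each OR gate a child evaluating to $1$, and every child of a true AND gate is true, so completeness is perfect. For \emph{no} instances, a short induction over the game tree shows that the maximum acceptance probability starting from a gate $g$ that evaluates to $0$ is at most $1 - 2^{-k_g}$, where $k_g$ is the maximal number of AND gates on a path from $g$ to a leaf: an OR gate only takes a maximum over its (all-false) children, while at a false AND gate the verifier picks a false child with probability at least $1/2$. Since $k_g \le d = O(\log n)$, the soundness error is at most $1 - 2^{-d} = 1 - 1/\poly(n)$.

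It remains to run this protocol with a genuinely \emph{unitary} quantum logspace verifier and to argue soundness against arbitrary, possibly entangled, quantum provers. The current-gate pointer occupies $O(\log n)$ qubits since $C$ has $\poly(n)$ gates, and from the uniform description of $C$ and the input $x$ the verifier can compute in deterministic logspace the type and the children of any gate; hence each round's transition --- the child-validity check for OR gates, and a $\ket{+}^{\otimes O(1)}$-controlled random choice for AND gates (simulating the coin flips as in \cref{footref:simulating-random-bits}) --- is a logspace-computable unitary. To stay reversible without accumulating a history of visited gates, the verifier, after forming the new pointer $g'$ in its private register, writes the old pointer $g$ into $\sfM$ and returns it to the all-powerful prover, who simply discards it; this mirrors the device in the $\NP \subseteq \QIPL$ construction where the verifier returns the previous triple. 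The $O(\log n)$ $\ket{+}$ ancillas needed across all $O(\log n)$ AND-gate rounds fit in the workspace. Finally, soundness against quantum messages follows from the fact that the game tree of $C$ is non-interfering: the verifier's operations never superpose distinct pointer values and the accept/reject decision depends only on computational-basis data, so the acceptance probability of any quantum-prover strategy decomposes as a convex combination of the acceptance probabilities of the corresponding classical branch strategies with no cross terms, and is thus bounded by the classical soundness $1 - 2^{-d}$; equivalently, dephasing the message register at the start of each verifier action leaves the maximum acceptance probability unchanged, reducing the analysis to the classical-message case of~\cite[Section 3.4]{Fortnow89}.

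The main obstacle is exactly this last point together with the reversible bookkeeping: one must verify carefully that the verifier can be implemented as a unitary, space-$O(\log n)$ circuit that still (i) validates OR-children and (ii) performs fresh independent coin flips at AND gates, and that the message-returning implementation does not create a loophole for an entangled prover --- i.e., that the non-interference of the game tree genuinely survives when the verifier's steps are forced to be unitary rather than measurement-based.
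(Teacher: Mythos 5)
Your proposal is correct and follows essentially the same route as the paper: the trivial inclusion $\BQL = \BQUL \subseteq \QIPUL$, Fortnow's circuit-evaluation game for $\SAC^1$ (prover chooses at OR gates, verifier's random coin chooses at AND gates) giving perfect completeness and soundness $1-1/\poly(n)$ over $O(\log n)$ rounds, quantum soundness via non-interference of the computational paths, and amplification by the sequential-repetition error reduction of \Cref{thm:QIPL-basic-properties}\ref{thmitem:QIPL-error-reduction}. Your extra care about reversible bookkeeping (returning the old gate pointer to the prover, $\ket{+}$-controlled coins) fills in implementation details the paper leaves implicit but does not change the argument.
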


\vspace{1em}
In the remainder of this section, we provide the definitions of space-bounded (unitary) quantum interactive proofs, specifically the classes \QIPL{} and \QIPUL{}, in \Cref{subsec:QIPL-definition}. 
We then formulate \QIPL{} proof systems as semi-definite programs in \Cref{subsec:SDP-formulations-upper-bounds}, leading to the inclusion $\QIPL^\HC \subseteq \NP$ (\Cref{thm:QIPLHC-in-NP}). 
Next, the proof of the two basic properties in \Cref{thm:QIPL-basic-properties} is presented in \Cref{subsec:QIPL-basic-properties}. 
Lastly, the lower bounds for \QIPL{} and \QIPUL{}, particularly the inclusions $\NP \subseteq \QIPLHC \subseteq \QIPL$ (\Cref{thm:NP-in-QIPL}) and $\SAC^1 \subseteq \QIPUL{}$ (\Cref{thm:LOGCFL-in-QIPUL}), are established in \Cref{subsec:lower-bounds-QIPL}. 

\subsection{Definitions of space-bounded quantum interactive proof systems}
\label{subsec:QIPL-definition}

Our definitions of space-bounded quantum interactive proofs follow that of~\cite[Section 2.3]{KW00} and~\cite[Section 2.3]{Wat02QSZK}. In this framework, a (log)space-bounded quantum interactive proof system consists of two parties: an untrusted prover with unbounded computational power, and a verifier constrained to using only \textit{$O(\log n)$ qubits}, enabling at most polynomial-time quantum computation. 
The primary distinction between standard single-prover quantum interactive proofs and their space-bounded variants lies in this additional space constraint on the verifier, which prompts a subtle question: 
\begin{problem}
    \label{prob:subtlety-in-intermediate-measurements}
    Is it necessary to allow $O(\log n)$ \emph{pinching} intermediate measurements in the computational basis during \emph{each} verifier's action in space-bounded quantum interactive proofs?
\end{problem}
Interestingly, the answer to \Cref{prob:subtlety-in-intermediate-measurements} does not affect one-message proof systems, specifically (unitary) \QMAL{}~\cite{FKLMN16,FR21}, where the unitary verification circuit acts on $O(\log n)$ qubits and inherently allows not only $O(\log n)$ pinching intermediate measurements but also $O(\log n)$ ancillary gates (see also \Cref{remark:subtleties-space-bounded-circuits}). 

To address \Cref{prob:subtlety-in-intermediate-measurements}, we introduce two types of space-bounded quantum interactive proof systems, denoted by \QIPL{} and \QIPUL{}. 
In both proof systems, the verifier has \textit{direct access} to the messages exchanged during interactions, which limits each message size to $O(\log n)$. 
The key distinction between these proof systems lies in their differing responses to \Cref{prob:subtlety-in-intermediate-measurements}: In \QIPL{}, the verifier's actions correspond to space-bounded \textit{almost-unitary} quantum circuits (see \Cref{def:space-bounded-quantum-circuits}), allowing $O(\log{n})$ pinching intermediate measurements in each verifier's action. 
However, in \QIPUL{}, the verifier's actions are implemented using space-bounded \textit{unitary} quantum circuits. 
Consequently, the former model encompasses space-bounded (private-coin) classical interactive proof systems, particularly the model described in~\cite{CL95}.\footnote{For any proof system that ensures soundness against classical messages, we can construct a corresponding proof system that guarantees soundness against quantum messages by measuring the message in the computational basis at the beginning of each verifier's action.}

\begin{remark}[Restricting the number of pinching measurements preserves generality]
    \label{remark:restrict-measurement-times}
    We note that restricting the number of pinching measurements in each verifier turn does not result in any loss of generality.\footnote{This argument does not apply to \QIPLconst{}, a subclass of \QIPL{} with a constant number of messages (see \Cref{sec:cosnt-message-QIPL} for a formal definition). Moreover, the counterpart question is a special case of Question \ref{probitem:constTurnQIPL}.} 
    Let $\QIPL^\star$ be a variant of \QIPL{} in which the verifier is allowed to perform at most a \textit{polynomial} number of pinching measurements per turn. Any $\QIPL^\star$ proof system $\protocol{P}{V}$ can be simulated by a \QIPL{} proof system $\protocol{P'}{V'}$ that includes additional \textit{dummy} rounds. Specifically, each verifier turn in $\protocol{P}{V}$ that involves $t(n)$ pinching measurements, where $t(n)$ is polynomial in $n$, can be simulated in $\protocol{P'}{V'}$ by introducing $\ceil{t(n)/\log(n)}$ extra rounds. During each of these dummy rounds, the prover $P'$ is restricted to performing only dummy actions, such as doing nothing (a hypothetical action) or sending an all-zero state in the message register. 
\end{remark}

\paragraph{Formal definitions of \QIPL{} and \QIPUL{}.} Given a promise problem $\calI = (\calI_{\yes}, \calI_{\no})$, a quantum verifier is \textit{logspace-computable} mapping $V$, where for each input string $x \in \calI \subseteq \binset^*$, $V(x)$ is interpreted as an encoding of a $k(|x|)$-tuple $\rbra*{V(x)_1,\cdots,V(x)_k}$ of quantum circuits. These circuits represent the verifier's actions at each round of the proof system, with specific constraints depending on the proof system, \QIPL{} or \QIPUL{}: 
\begin{itemize}[leftmargin=2em, topsep=0.33em, itemsep=0.33em, parsep=0.33em]
    \item In a \QIPL{} proof system, each $V(x)_j$ corresponds to a space-bounded \textit{almost-unitary} quantum circuit $\widetilde{V}(x)_j$. By applying the principle of deferred measurements, $\widetilde{V}(x)_j$ can be transformed into an isometry $V(x)_j$ (see \Cref{def:space-bounded-quantum-circuits}) that takes qubits in registers $(\sfM,\sfW)$ as input and outputs qubits in registers $(\sfM,\sfW,\sfE_j)$, where $\sfE_j$ holds $q_{\sfE_j}(|x|)$ qubits.
    At the end of the verifier's $j$-th action $V(x)_j$, the (newly introduced) environment register $\sfE_j$ is measured in the computational basis, with the measurement outcome denoted as $u_j$. 
    The total number of qubits satisfies $q_{\sfM}(|x|) + q_{\sfW}(|x|) + q_{\sfE_j}(|x|) \leq O(\log n)$, with both $\sfW$ and $\sfE_j$ private to the verifier. 
    \item In a \QIPUL{} proof system, each $V(x)_j$ is a space-bounded \textit{unitary} quantum circuit acting on two registers $\sfM$ and $\sfW$, which hold $q_{\sfM}(|x|)$ and $q_{\sfW}(|x|)$ qubits, respectively. The total number of qubits satisfies $q_{\sfM}(|x|) + q_{\sfW}(|x|) \leq O(\log n)$, with $\sfW$ private to the verifier. 
\end{itemize}

Furthermore, the logspace-computability of $V(x)$ requires \textit{a strong notion of uniformity}: there must exist a logspace deterministic Turing machine $\calM$ that, for each input $x$, outputs the classical description of $\rbra{V(x)_1,\cdots,V(x)_k}$.\footnote{This uniformity requirement is slightly stronger and less general than merely requiring all quantum circuits $V(x)_1,\cdots,V(x)_k$ to be logspace-bounded (referred to as \textit{a weaker notion of uniformity}), as the classical descriptions of these quantum circuits may not be generated by a single logspace deterministic Turing machine (although a polynomial-time deterministic Turing machine would suffice).\label{footref:logspace-uniformity}} 
Lastly, the verifier $V$ is called $m(|x|)$-message if $k(|x|) = \floor{m(|x|)/2+1}$ for all integer $|x|$, depending on whether $m$ is even or odd. 

Similar to standard quantum interactive proofs, the prover and the verifier in the same space-bounded quantum interactive proof system must be \textit{compatible}. This means that they must agree on the maximum length $q_{\sfM}(|x|)$ of each message exchanged in the proof system and the total number $m(|x|)$ of these messages. Hence, a quantum prover $P$ is a function that maps each input $x\in \calI$ to an $l(|x|)$-tuple $\rbra*{P(x)_1,\cdots,P(x)_l}$ of quantum circuits, where $l(|x|) = \floor*{(m(|x|)+1)/2}$. Each circuit $P(x)_j$ acts on two registers $\sfQ$ and $\sfM$ with $q_{\sfQ}(|x|)$ and $q_{\sfM}(|x|)$ qubits, respectively, satisfying that $\sfQ$ is private to the prover. Since there are no restrictions on the prover $P$, each $P(x)_j$ can be viewed as an arbitrary unitary transformation in general. 

\begin{figure}[ht!]
    \centering
    \includegraphics[width=\textwidth]{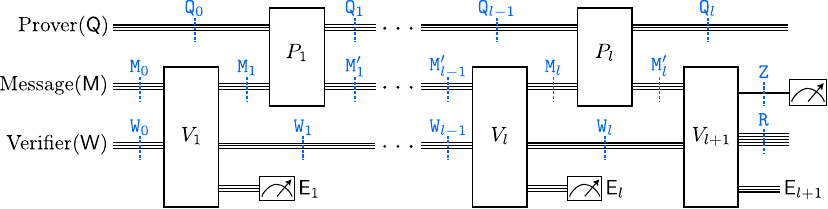}
    \caption{A $2l$-turn space-bounded quantum interactive proof system (with snapshots).}
    \label{fig:QIPL-even}
\end{figure}

Given an input $x \in \calI$, and a prover $P$ and a verifier $V$ that exchange $m(|x|)$ messages, we define an $m(|x|)$-turn space-bounded quantum interactive proof system $(\protocol{P}{V})(x)$, namely a \QIPL{} proof system, as a quantum circuit acting on the registers $\sfQ$, $\sfM$, $\sfW$, and additional environment registers $\{\sfE_j\}$ as follows: 
\begin{itemize}
    \item If $m(|x|) = 2 l(|x|)$ is even, circuits $V(x)_1, P(x)_1, \cdots, V(x)_l, P(x)_l, V(x)_{l+1}$ are applied in sequence to the registers $\sfM$, $\sfW$, and $\sfE_j$, or to the registers $\sfQ$ and $\sfM$ accordingly. It is important to note that the register $\sfE_j$ is inaccessible after the $j$-th round. 
    \item If $m(|x|) = 2 l(|x|) + 1$ is odd, the situation is similar, except that the prover starts the protocol, so the circuits $P(x)_1,V(x)_1,\cdots,P(x)_{l+1},V(x)_{l+1}$ are applied in sequence. 
\end{itemize}

Analogously, for an $m(|x|)$-turn space-bounded \textit{unitary} quantum interactive proof system, namely a \QIPUL{} proof system, the definition remains the same except that environment registers $\{\sfE_j\}$ are no longer involved. 
Without the loss of generality, we assume that the prover always sends the last message. 
See also \Cref{fig:QIPL-even} for an illumination of the case when $m(|x|)$ is even. 
For convenience, we sometimes omit the dependence on $x$ and $|x|$ when describing $P$ and $V$, e.g., using $P_j$ and $V_j$ to denote $P(x)_j$ and $V(x)_j$, respectively, and $m$ to denote $m(|x|)$. 

Assuming the mapping $V(x) = \rbra*{V(x)_1, \cdots, V(x)_k}$ in a \QIPL{} proof system is a collection of the isometries corresponding to the respective almost-unitary quantum circuits,\footnote{This assumption about the verifier's actions is crucial for adapting several techniques from standard quantum interactive proofs. For more on general verifiers in the standard scenario, see~\cite[Section 4.1.4]{VW16}.} the state of the qubits in the circuit $\protocol{P}{V}$ is an (unnormalized) \textit{pure state} on the registers $(\sfQ, \sfM, \sfV, \sfE_j)$ after the verifier's $j$-th action.\footnote{Specifically, a pure state occupies the registers $(\sfQ, \sfM, \sfV, \sfE_1)$ before the measurements at the end of the verifier's first action. To align with the proof of \Cref{lemma:QIPL-second-SDP-formulation} (the upper bound for \QIPLHC{}), which considers only a specific measurement outcome, this post-measurement pure state will be unnormalized.} 
A similar observation holds for \QIPUL{} proof systems. 
Thus, for a given input $x$, the probability that $\protocol{P}{V}$ accepts $x$ is defined as the probability that measuring the designated output qubit -- typically the first qubit of $(\sfM,\sfW)$ -- of $(\protocol{P}{V})(x) \ket{\bar{0}}_{\sfQ} \ket{\bar{0}}_{\sfM} \ket{\bar{0}}_{\sfW}$ in the computational basis yields the outcome $1$. 
We are now ready to formally define space-bounded quantum interactive proof systems: 

\begin{definition} [Space-bounded quantum interactive proofs, $\QIPL$]
    \label{def:QIPL}
    Let $c(n)$, $s(n)$, and $m(n)$ be logspace-computable functions of the input length $n \coloneqq |x|$ such that $0 \leq s(n) < c(n) \leq 1$ and $1 \leq m(n) \leq \poly(n)$.
    A promise problem $\calI = (\calI_{\yes}, \calI_{\no})$ is in $\QIPL_m[c,s]$, if there exists an $m(n)$-turn logspace-computable \emph{almost-unitary} quantum verifier $V$ such that\emph{:} 
    \begin{itemize}[topsep=0.33em, itemsep=0.33em, parsep=0.33em]
        \item \textbf{\emph{Completeness}}. For any $x \in \calI_{\yes}$, there exists an $m(n)$-message prover $P$ such that 
        \[\Pr{\rbra*{\protocol{P}{V}}(x)\text{ accepts}} \geq c(n).\]
        \item \textbf{\emph{Soundness}}. For any $x \in \calI_{\no}$ and any $m(n)$-message prover $P$,  
        \[\Pr{\rbra*{\protocol{P}{V}}(x)\text{ accepts}} \leq s(n).\]
    \end{itemize}

    \noindent Furthermore, we define $\QIPL_m \coloneqq \QIPL_m[2/3,1/3]$ and $\QIPL \coloneqq \cup_{m \leq \poly(n)} \QIPL_m$. 
\end{definition}

Next, we proceed with a formal definition of \QIPLHC{}. Let $\omega(V)$ denote the maximum acceptance probability of the verifier $V$ in the proof system $\protocol{P}{V}$. 
For \QIPLHC{} proof systems, we impose an additional restriction on the distribution of pinching intermediate measurement outcomes $u \coloneqq (u_1,\cdots,u_l)$, conditioned on acceptance, for \textit{yes} instances. To formalize this restriction, we define $\omega(V)|^u$ as the contribution of $u$ to $\omega(V)$, where all post-measurement states remain \textit{unnormalized}. A direct calculation then implies that
\begin{equation}
    \label{eq:pacc-decomposition}
    \omega(V) = \sum\nolimits_{u \in \binset^{q_{\sfE_1}+\cdots+q_{\sfE_l}}} \omega(V)|^u.
\end{equation} 

While the soundness condition of \QIPLHC{} coincides with that of \QIPL{}, the completeness condition is more stringent, which immediately implies that $\QIPLHC \subseteq \QIPL$: 

\begin{definition} [Space-bounded quantum interactive proofs with the high-concentration condition, \QIPLHC{}]
    \label{def:QIPLHC}
    Let $c(n)$, $s(n)$, and $m(n)$ be logspace-computable functions of the input length $n \coloneqq |x|$ such that $0 \leq s(n) < c(n) \leq 1$ and $1 \leq m(n) \leq \poly(n)$.
    A promise problem $\calI = (\calI_{\yes}, \calI_{\no})$ is in $\QIPLHC_m[c,s]$, if there exists an $m(n)$-turn logspace-computable \emph{almost-unitary} quantum verifier $V$ such that\emph{:} 
    \begin{itemize}[topsep=0.33em, itemsep=0.33em, parsep=0.33em]
        \item \textbf{\emph{Completeness}}. For any $x \in \calI_{\yes}$, there exists an $m(n)$-message prover $P$ such that there exists an intermediate measurement outcome $u^*=(u^*_1,\cdots,u^*_l)$ with $\omega(V)|^{u^*} \geq c(n)$. 
        \item \textbf{\emph{Soundness}}. For any $x \in \calI_{\no}$ and any $m(n)$-message prover $P$, $\omega(V) \leq s(n)$.
    \end{itemize}

    \noindent Furthermore, we define $\QIPL^\HC_m \coloneqq \QIPL^\HC_m[2/3,1/3]$ and $\QIPL^\HC \coloneqq \cup_{m \leq \poly(n)} \QIPL^\HC_m$. 
\end{definition}

In addition to \QIPL{} and \QIPLHC{}, we also require a \textit{reversible} generalization of \QIPL{} in which the verifier’s actions are given by \textit{isometries}:  

\begin{remark}[A reversible generalization of \QIPL{}]
    \label{remark:reversible-QIPL}
    We introduce the class $\QIPL^{\diamond}$ as a \textit{reversible} generalization of \QIPL{}, primarily for convenience. A $\QIPL^{\diamond}$ proof system is defined similarly to a \QIPL{} proof system, but with two crucial differences: 
    \begin{enumerate}[label={\upshape(\arabic*)}, topsep=0.33em, itemsep=0.33em, parsep=0.33em]
        \item All of the verifier's actions are \textit{isometric} quantum circuits, without restrictions. In particular, any unitary elementary gate can act on the ancillary qubits $\ket{\bar{0}}$,  which are introduced by $O(\log{n})$ ancillary gates, and qubits in the message register $\sfM$.
        \item The environment register $\sfE_k$, introduced during the verifier's $k$-th action, remains private to the verifier and is \textit{accessible only during that turn}. Importantly, the qubits in $\sfE_k$ are not measured at the end of the turn. Consequently, the qubits in $\sfE_k$ remain unchanged after that turn, although the entanglement shared among $\sfE_1$, $\cdots$, $\sfE_k$, and $\sfM$ may change. 
    \end{enumerate}
\end{remark}

\vspace{1em}
Finally, in analogy to \QIPL{}, we define the class of space-bounded \textit{unitary} quantum interactive proof systems (\QIPUL{}), which is naturally reversible and contained in both \QIPL{} and \QIPLHC{}:

\begin{definition} [Space-bounded unitary quantum interactive proofs, \QIPUL{}]
    \label{def:QIPUL}
    Let $c(n)$, $s(n)$, and $m(n)$ be logspace-computable functions of the input length $n \coloneqq |x|$ such that $0 \leq s(n) < c(n) \leq 1$ and $1 \leq m(n) \leq \poly(n)$.
    A promise problem $\calI = (\calI_{\yes}, \calI_{\no})$ is in $\QIPUL_m[c,s]$, if there exists an $m(n)$-turn logspace-computable \emph{unitary} quantum verifier $V$ such that\emph{:} 
    \begin{itemize}[topsep=0.33em, itemsep=0.33em, parsep=0.33em]
        \item \textbf{\emph{Completeness}}. For any $x \in \calI_{\yes}$, there exists an $m(n)$-message prover $P$ such that  
        \[\Pr{\rbra*{\protocol{P}{V}}(x) \text{ accepts}} \geq c(n).\]
        \item \textbf{\emph{Soundness}}. For any $x \in \calI_{\no}$ and any $m(n)$-message prover $P$, 
        \[\Pr{\rbra*{\protocol{P}{V}}(x) \text{ accepts}} \leq s(n).\] 
    \end{itemize}

    \noindent Furthermore, we define $\QIPUL_m \coloneqq \QIPUL_m[2/3,1/3]$ and $\QIPUL \coloneqq \cup_{m \leq \poly(n)} \QIPUL_m$. 
\end{definition}

\subsection{An upper bound for \QIPLHC{} via SDP formulations}
\label{subsec:SDP-formulations-upper-bounds}

We begin with the upper bound for the class \QIPLHC{}:\footnote{It is noteworthy that this upper bound also applies to a variant of \QIPLHC{} in which the verifier's mapping satisfies a weaker notion of uniformity (see \Cref{footref:logspace-uniformity} for details).}

\begin{theorem}[\QIPLHC{} is in \NP{}]
    \label{thm:QIPLHC-in-NP}
    Let $c(n)$, $s(n)$, and $m(n)$ be logspace-computable functions such that $0 \leq s(n) < c(n) \leq 1$, $c(n) - s(n) \geq 1/\poly(n)$, and $1 \leq m(n) \leq \poly(n)$, it holds that 
    \[\QIPL^\HC_m[c,s] \subseteq \NP.\]
\end{theorem}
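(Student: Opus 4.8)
The plan is to turn a given $\QIPL^\HC_m[c,s]$ proof system into a family of polynomial-size semi-definite programs indexed by the branches of pinching-measurement outcomes, and then to have an $\NP$ machine guess the right branch together with a feasible solution to the corresponding SDP. Fix a $\QIPL^\HC$ verifier $V$; say $m=2l$ (the odd-$m$ case is analogous), and let $V_1,\dots,V_{l+1}$ be the isometries obtained from the almost-unitary circuits $\widetilde V_1,\dots,\widetilde V_{l+1}$ by deferring their $O(\log n)$ pinching measurements into fresh environment registers $\sfE_1,\dots,\sfE_l$; since $l\le\poly(n)$, these registers together hold $N\coloneqq\sum_j q_{\sfE_j}(n)\le\poly(n)$ qubits. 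A branch is a string $u=(u_1,\dots,u_l)\in\binset^{N}$, and by \Cref{eq:pacc-decomposition} the maximum acceptance probability decomposes as $\omega(V)=\sum_u\omega(V)|^u$. Because there are $2^N$ branches we cannot enumerate them in $\Ptime$, but an $\NP$ machine can guess one, and -- using the (strong, though even weak suffices) uniformity of $V$ -- compute the classical descriptions of $V_1,\dots,V_{l+1}$.

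For a fixed branch $u$ the associated SDP has as variables the (unnormalized) snapshot states $\rho_{\ttM'_1\ttW_1},\dots,\rho_{\ttM'_l\ttW_l}$ on $(\sfM,\sfW)$ taken right after each prover action and conditioned on $u$; the states $\rho_{\ttM_j\ttW_j}$ right after the verifier actions are then determined by constraint (i$'$), i.e.\ by applying $V_j$ and projecting $\sfE_j$ onto $\ket{u_j}$. Besides positive semi-definiteness the constraints are (ii$'$) $\Tr_{\ttM_j}(\rho_{\ttM_j\ttW_j})=\Tr_{\ttM'_j}(\rho_{\ttM'_j\ttW_j})$, saying the prover leaves $\sfW$ alone, and (iii$'$) $\Tr(\rho_{\ttM_j\ttW_j})=\Tr(\rho_{\ttM'_j\ttW_j})$, saying the prover cannot change the probability of reaching branch $u$; the latter is automatic and equal to $1$ in the unitary case but is genuinely needed here. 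The objective is the branch-$u$ acceptance contribution, a linear function of $\rho_{\ttM'_l\ttW_l}$ and the $V_j$'s; call its optimum $\widehat\omega(V)|^u$. The crux is the sandwich $\omega(V)|^u\le\widehat\omega(V)|^u\le\omega(V)$. The left inequality holds because the conditional-on-$u$ snapshots of \emph{any} prover are feasible with objective $\omega(V)|^u$ -- here the decoherence produced by the pinching measurements is what lets the branch-$u$ block evolve on its own. The right inequality holds because \emph{any} feasible solution is realizable by a genuine prover: constraints (ii$'$) and (iii$'$) are exactly what \Cref{lemma:unitary-equivalent-in-purifications} requires to produce, turn by turn, a prover unitary on $(\sfQ,\sfM)$ carrying the purification of $\rho_{\ttM_j\ttW_j}$ to that of $\rho_{\ttM'_j\ttW_j}$; the resulting prover accepts with total probability at least its branch-$u$ contribution (which equals the feasible objective) but at most $\omega(V)$.

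The $\NP$ algorithm is then: the witness is a pair $\bigl(u,\,\rho_{\ttM'_1\ttW_1},\dots,\rho_{\ttM'_l\ttW_l}\bigr)$, i.e.\ $\poly(n)$ bits for the branch plus $l$ matrices of dimension $2^{O(\log n)}=\poly(n)$ specified to $\poly(n)$ bits of precision; the verifier reconstructs the $\rho_{\ttM_j\ttW_j}$ via $V_j$, checks positivity and constraints (i$'$)--(iii$'$) up to an additive slack $\eta=1/\poly(n)$, computes the branch-$u$ acceptance value, and accepts iff it is at least $c(n)-\eta$ -- all $\poly(n)$-time dense-matrix arithmetic. For completeness, on a \textit{yes} instance the high-concentration condition gives a branch $u^*$ with $\omega(V)|^{u^*}\ge c(n)$, hence $\widehat\omega(V)|^{u^*}\ge c(n)$, and a suitably rounded optimal SDP solution for $u^*$ is an accepting witness. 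For soundness, on a \textit{no} instance a robust version of the realizability argument shows that every $\eta$-feasible solution has objective at most $\omega(V)+\poly(n)\cdot\eta\le s(n)+\poly(n)\cdot\eta$; choosing $\eta$ small enough that $\poly(n)\cdot\eta<(c-s)/2$, which is possible since $c-s\ge1/\poly(n)$ and $l\le\poly(n)$, the threshold $c(n)-\eta$ separates the two cases. Finite precision enters only through the arithmetic and is handled by a routine perturbation argument, using that everything lives in $\poly(n)$-dimensional spaces so all constraints and the objective are $\poly(n)$-Lipschitz in the states.

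I expect the main obstacle to be getting the branch-wise SDP relaxation exactly right: identifying that constraint (iii$'$) is both necessary and prover-independent, and proving that feasibility implies realizability by an honest-format prover so that $\widehat\omega(V)|^u\le\omega(V)$ -- together with the observation that it is precisely the pinching measurements (decoherence between branches) that make the per-branch decomposition valid, so that without them this whole approach collapses. The remaining ingredients -- writing down the SDP, the $\NP$ verification, and the precision/robust-Uhlmann bookkeeping -- are then routine. (The same reduction, guessing polynomially many branches and summing their contributions, also yields the stronger statement $\QIPL^{\HC(\poly)}\subseteq\NP$ noted after the theorem.)
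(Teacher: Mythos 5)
Your proposal follows essentially the same route as the paper: the branch-indexed family of SDPs with constraints (i')--(iii'), the sandwich $\omega(V)|^u\le\widehat{\omega}(V)|^u\le\omega(V)$ justified by decoherence between branches on one side and realizability of feasible solutions via the unitary freedom of purifications on the other, and an \NP{} witness consisting of a guessed branch together with the unnormalized snapshot states, verified by polynomial-time matrix arithmetic. The only difference is that you make the finite-precision bookkeeping explicit with an additive slack $\eta$, which the paper leaves implicit; this is a routine refinement rather than a different argument.
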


\begin{remark}[Upper bounding \QIPL{} proof systems beyond \QIPLHC{}]
    \label{remark:checking-consistency}
    To establish upper bounds for \QIPL{} proof systems beyond \QIPLHC{}, it is necessary to ensure the \textit{consistency} of the prover strategies across different branches defined by measurement outcomes. In particular, the prover strategies associated with different branches -- those that satisfy the conditions of \Cref{lemma:QIPL-second-SDP-formulation} -- are expected to be consistent with a ``purified'' prover strategy satisfying \Cref{lemma:QIPL-first-SDP-formulation}. 
    
    A natural approach to checking this consistency is to verify that, for any two branches $u$ and $u'$ in an $m$-turn \QIPL{} proof system, if the verifier's measurement outcomes are identical in the first $i$ turns (for any $i \leq m$), then the corresponding prover strategies for these turns must also be identical. Consequently, two implications follow: 
    (a) The set lower bound protocol~\cite{GS86} for establishing \AM{} containment is not directly applicable to \QIPL{};\footnote{This explains why the inclusion $\QIPL \subseteq \AM$ in an earlier version of this work was withdrawn.
    More precisely, let $S \subseteq \binset^k$ be a set, where membership in $S$ can be efficiently verified using a witness string $w$. The set lower bound protocol~\cite{GS86} implies that certifying the size of $S$ up to a constant multiplicative error is in \AM{} (see also~\cite{Volkovich20}). However, when such a set $S$ is constructed from \Cref{lemma:QIPL-second-SDP-formulation}, the witnesses used to verify membership are correlated with the elements of $S$ in general, unlike the setting considered in~\cite{GS86,Volkovich20}.}
    and (b) It is not difficult to show that $\QIPL^{\HC(\poly)} \subseteq \NP$, where $\QIPL^{\HC(\poly)}$ denotes a variant of \QIPLHC{} defined with a high-concentration condition on a \textit{polynomial}-size support: $\sum_{u \in \calJ} \omega(V)|^u \geq c(n)$, where $\calJ$ is an index set of size \textit{polynomial} in $n$. 
\end{remark}

\vspace{1em}
Before presenting the proof, we introduce the term \textit{snapshot registers} to refer to the registers $\sfQ$, $\sfM$, and $\sfW$ after each turn in a \QIPL{} proof system. We also refer to the quantum state within these snapshot registers as \textit{snapshot states}. For example, the snapshot registers corresponding to $\sfW$ at distinct time points are $\ttW_0,\cdots,\ttW_l$, as illustrated in \Cref{fig:QIPL-even}. 
More precisely, for an $l$-round (i.e., $2l$-turn) \QIPL{} proof system, we define the following: 
\begin{enumerate}[label={\upshape(\arabic*)}, topsep=0.33em, itemsep=0.33em, parsep=0.33em]
    \item $\ttQ_0$, $\ttM_0$, and $\ttW_0$, which contain the all-zero state, are the snapshot registers of registers $\sfQ$, $\sfM$, and $\sfW$, respectively, before the protocol begins;
    \item $\ttM_j$ and $\ttW_j$ ($1 \leq j \leq l$) are the snapshot registers of the registers $\sfM$ and $\sfW$, respectively, after the verifier sends the message in the $j$-th round;
    \item $\ttM'_j$ and $\ttW_j$ ($1 \leq j \leq l$) are the snapshot registers of the registers $\sfM$ and $\sfW$, respectively, after the verifier receives the message from the prover in the $j$-th round; 
    \item $\ttQ_j$ ($1 \leq j \leq l$) is the snapshot registers of the register $\sfQ$ immediately after applying the prover's $j$-th action $P_j$, i.e., after the prover sends the message in the $j$-th round; 
    \item $(\ttZ,\ttR)$ represents the snapshot registers of registers $\sfM$ and $\sfW$, respectively, just before the verifier performs the final measurement, where $\ttZ$ corresponds to the designated output qubit of the verifier and $\ttR$ contains the remained qubits. 
\end{enumerate}

In the remainder of this subsection, we first present semi-definite programming (SDP) formulations for \QIPL{} proof systems in \Cref{subsubsec:SDP-formulation}, and then establish an upper bound for \QIPLHC{} (\Cref{thm:QIPLHC-in-NP}) in \Cref{subsubsec:QIPL-upper-bounds}. 

\subsubsection{Semi-definite program formulations for \QIPL{} proof systems}
\label{subsubsec:SDP-formulation}

To establish upper bounds for space-bounded (unitary) quantum interactive proofs, a commonplace approach involves solving the optimization problem of approximating the maximum acceptance probability of a \QIPL{} or \QIPUL{} proof system over all prover strategies. 
For clarity, we first present an SDP for characterizing \QIPL{} proof systems (\Cref{lemma:QIPL-first-SDP-formulation}), directly extended from~\cite[Section 4.3]{VW16} and~\cite[Section 4]{Watrous16tutorial}. Next, we introduce another SDP characterization (\Cref{lemma:QIPL-second-SDP-formulation}) that captures all the structural constraints of a \QIPL{} proof system within any given branch defined by pinching measurement outcomes. 

\paragraph{First SDP formulation for \QIPL{} proof systems.} For an $m$-turn \QIPL{} proof system with even $m$,\footnote{Adapting the proof to the case of odd $m$ is straightforward, so we omit the details.} we formulate this optimization problem as a semi-definite program (SDP) from the verifier's perspective, following the approach described in~\cite[Section 4.3]{VW16}: 

\begin{lemma}[First SDP formulation for \QIPL{} proof systems]
    \label{lemma:QIPL-first-SDP-formulation}
    For any $l(n)$-round space-bounded quantum interactive proof system $\protocol{P}{V}$ with completeness $c(n)$, soundness $s(n)$, which corresponds to a promise problem $\calI=(\calI_{\yes},\calI_{\no})$ in $\QIPL_{2l}[c,s]$, there is an SDP program to compute the maximum acceptance probability $\omega(V)$ of the proof system $\protocol{P}{V}$\emph{:}
    \begin{maxi}{}{\omega(V) = \Tr\rbra*{\widetilde{V}^{\dagger}_{l+1} \ketbra{1}{1}_{\Out} \widetilde{V}_{l+1} \rho_{\ttM'_l\ttW_l\sfE_1\cdots\sfE_l}}}{}{}{\label{eq:QIPL-first-SDP}}{}
        \addConstraint{\Tr_{\ttM'_1}\big(\rho_{\ttM'_1\ttW_1\sfE_1}\big)}{=\Tr_{\ttM_1}\rbra*{V_1 \ketbra{\bar{0}}{\bar{0}}_{\ttM_{0}\ttW_{0}}V_1^{\dagger}}}{}
        \addConstraint{\Tr_{\ttM'_j}\big(\rho_{\ttM'_j\ttW_j\sfE_1\cdots\sfE_j}\big)}{=\Tr_{\ttM_j}\rbra*{\widetilde{V}_j \rho_{\ttM'_{j-1}\ttW_{j-1}\sfE_1\cdots\sfE_{j-1}} \widetilde{V}_j^{\dagger} },\quad}{j \in [l] \setminus [1]}
        \addConstraint{ \Tr\big(\rho_{\ttM'_j\ttW_j\sfE_1\cdots\sfE_j}\big)}{= 1,\quad}{j \in [l]}
        \addConstraint{ \rho_{\ttM'_j\ttW_j\sfE_1\cdots\sfE_j}}{ \succeq 0,\quad}{j \in [l]}
    \end{maxi}
    Here, the verifier's actions $V_1, \cdots, V_{l+1}$ are considered space-bounded \emph{isometric} quantum circuits, with the notation $\widetilde{V}_j \coloneqq V_j \!\otimes\! I_{\sfE_1\cdots\sfE_{j-1}}$ for each $j \in [l+1] \setminus [1]$. 
    The variables in this SDP are $\rho_{\ttM'_1\ttW_1\sfE_1}, \cdots, \rho_{\ttM'_l\ttW_l\sfE_1\cdots\sfE_l}$, collectively holding $O(l^2(n) \cdot \log{n})$ qubits. 
\end{lemma}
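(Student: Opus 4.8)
The plan is to prove that the optimal value of the semi-definite program in the statement equals $\omega(V)$, by establishing the two inequalities separately. The argument is a space-bounded transcription of the SDP formulation of \QIP{} from \cite[Section 4.3]{VW16} (see also \cite[Section 4]{Watrous16tutorial}); the only new ingredient is that the deferred-measurement environment registers $\sfE_1,\dots,\sfE_l$ are carried along \emph{unmeasured} throughout the SDP. This changes nothing, because no party ever touches $\sfE_j$ after the $j$-th round, so whether one measures $\sfE_j$ at the end of round $j$ (as in the proof system) or simply keeps it around untouched does not affect the marginal on the designated output qubit; hence the objective $\Tr(\widetilde V^{\dagger}_{l+1}\ketbra{1}{1}_{\Out}\widetilde V_{l+1}\rho_{\ttM'_l\ttW_l\sfE_1\cdots\sfE_l})$ indeed equals the total acceptance probability $\omega(V)=\sum_u \omega(V)|^u$. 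I will treat the even-$m$ case; the odd case is identical after reindexing so the prover moves first. One also checks at the outset that this is a bona fide SDP: the objective is linear in $\rho_{\ttM'_l\ttW_l\sfE_1\cdots\sfE_l}$, the equality constraints are affine in the variables, and the remaining constraints are trace-normalization and positive-semidefiniteness.

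\emph{Direction $\omega(V)\le\mathrm{SDP}$ (an honest prover yields a feasible point).} Fix an $m$-message prover $P$ attaining $\omega(V)$ and, for each $j\in[l]$, let $\rho_{\ttM'_j\ttW_j\sfE_1\cdots\sfE_j}$ be the reduced state, tracing out $\sfQ_j$, of the global pure state of $\protocol{P}{V}$ immediately after the prover's $j$-th action $P_j$. Each such operator is a genuine density matrix on $O(l\log n)$ qubits, so the trace-one and PSD constraints hold. Because $P_j$ acts only on $\sfQ$ and $\sfM$, and $\widetilde V_j=V_j\otimes I_{\sfE_1\cdots\sfE_{j-1}}$ acts only on $\sfM,\sfW$ and introduces $\sfE_j$ — in particular neither touches $\sfW\sfE_1\cdots\sfE_{j-1}$ nor $\sfQ$ — tracing out $\sfM'_j$ (and $\sfQ_j$) from the post-$P_j$ state gives the same operator on $\sfW_j\sfE_1\cdots\sfE_j$ as tracing out $\sfM_j$ (and $\sfQ_{j-1}$) from the post-$V_j$ state; this yields the first two families of constraints (the $j=1$ case starting from $V_1\ketbra{\bar{0}}{\bar{0}}_{\ttM_0\ttW_0}V_1^{\dagger}$). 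Finally, since the verifier never touches $\sfQ_l$, the acceptance probability of $\protocol{P}{V}$ is $\Tr(\widetilde V^{\dagger}_{l+1}\ketbra{1}{1}_{\Out}\widetilde V_{l+1}\rho_{\ttM'_l\ttW_l\sfE_1\cdots\sfE_l})$, which is exactly the objective at this feasible point.

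\emph{Direction $\mathrm{SDP}\le\omega(V)$ (a feasible point yields a prover).} Given any feasible $\{\rho_{\ttM'_j\ttW_j\sfE_1\cdots\sfE_j}\}_{j\in[l]}$, I build a prover $P$ inductively so that $\protocol{P}{V}$ accepts with exactly the objective probability, taking $\sfQ$ to hold $O(l\log n)$ qubits — enough to purify every variable. The invariant is that after $P_j$ the global state $\ket{\phi_j}$ is a purification of $\rho_{\ttM'_j\ttW_j\sfE_1\cdots\sfE_j}$ with purifying register $\sfQ_j$. At round $j$ the verifier's isometry $\widetilde V_j$ turns the current state into a purification $\ket{\psi_j}$ of $\widetilde V_j\rho_{\ttM'_{j-1}\ttW_{j-1}\sfE_1\cdots\sfE_{j-1}}\widetilde V_j^{\dagger}$ on $\sfM_j\sfW_j\sfE_1\cdots\sfE_j$ (for $j=1$, of $V_1\ketbra{\bar{0}}{\bar{0}}V_1^{\dagger}$). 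The second constraint (the first for $j=1$) says that $\ket{\psi_j}$ and the desired $\ket{\phi_j}$ are two purifications of the \emph{same} state on $\sfW_j\sfE_1\cdots\sfE_j$, whose complementary registers $\sfQ_{j-1}\sfM_j$ and $\sfQ_j\sfM'_j$ — both $q_{\sfQ}+q_{\sfM}$-qubit systems — have equal dimension. Hence \Cref{lemma:unitary-equivalent-in-purifications} supplies a unitary $P_j$ acting only on $\sfQ\sfM$ with $\ket{\phi_j}=(I\otimes P_j)\ket{\psi_j}$, which we declare to be the prover's $j$-th action. After $P_l$ the verifier applies $\widetilde V_{l+1}$ and measures the output qubit, accepting with probability $\Tr(\widetilde V^{\dagger}_{l+1}\ketbra{1}{1}_{\Out}\widetilde V_{l+1}\rho_{\ttM'_l\ttW_l\sfE_1\cdots\sfE_l})$, the objective value. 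Combining the two directions, the SDP computes $\omega(V)$.

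\emph{Main obstacle.} There is no deep difficulty here; the one thing that requires care is the register bookkeeping — in particular, that a prover action is always a unitary on $\sfQ\sfM$ only and never sees $\sfW$ or any $\sfE_k$, which is exactly why the first two constraint families correctly encode ``the prover cannot disturb the verifier's private memory'', and that $q_{\sfQ}$ must be chosen large enough (yet still $O(l\log n)$) for each purification step and for the dimension-matching in \Cref{lemma:unitary-equivalent-in-purifications}. The presence of the $\sfE_j$'s is harmless precisely because, from round $j$ onward, $\sfE_j$ is a write-once, read-never-again register, so the first SDP is really the \QIP{} SDP with a few additional passive registers tagging along.
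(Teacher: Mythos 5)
Your proposal is correct and follows essentially the same route as the paper: both directions hinge on the unitary freedom in purifications (\Cref{lemma:unitary-equivalent-in-purifications}) to identify feasible points of the SDP with prover strategies, and the constraint derivations (honest verifier, prover untouched $\sfW$ and the $\sfE_j$'s) match the paper's. You merely make explicit the converse direction (feasible point $\Rightarrow$ prover) and the harmlessness of leaving the $\sfE_j$'s unmeasured, both of which the paper treats implicitly.
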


\begin{remark}[The applicability of the first SDP formulation for \QIPL{}]
    \label{remark:QIPL-first-SDP-applicability}
    The SDP program in \Cref{eq:QIPL-first-SDP} essentially characterizes $\QIPL^{\diamond}$ proof systems, which are informally defined in \Cref{remark:reversible-QIPL} as a \textit{reversible} generalization of \QIPL{}, where the verifier's actions are \textit{isometric} quantum circuits.
    Additionally, by disregarding all environment registers $\{E_j\}$ in \Cref{eq:QIPL-first-SDP}, we immediately obtain an SDP formulation for \QIPUL{} proof systems, where the variables are $\rho_{\ttM'_1\ttW_1}, \cdots, \rho_{\ttM'_l\ttW_l}$, collectively holding $O(l(n) \cdot \log{n})$ qubits. 
\end{remark}

Our SDP program in \Cref{eq:QIPL-first-SDP} consists of two types of constraints, both of which can be described by simple equations: (1) the verifier remains honest, and (2) the prover's actions do not interfere with the verifier's private qubits. Importantly, every feasible solution to \Cref{eq:QIPL-first-SDP} corresponds to a valid strategy for the prover. 
We now proceed with the detailed proof: 

\begin{proof}[Proof of \Cref{lemma:QIPL-first-SDP-formulation}]
    For any $m(n)$-turn proof system $\protocol{P}{V}$, with completeness $c$, soundness $s$, and $m$ being even, which corresponds to a promise problem $\calI$ in $\QIPL_m[c,s]$, we consider the verifier's maximum acceptance probability $\omega(V)$ as the objective function to be maximized. 
    In our SDP formulation for the \QIPL{} proof system $\protocol{P}{V}$, we focus on the verifier's actions (e.g., $V_j$), specifically isometric quantum circuits that do not measure the new environment register (e.g., $\sfE_j$) at the end. As defined in \Cref{subsec:QIPL-definition}, the verifier $V$ is described by an $(l+1)$-tuple $(V_1,\cdots,V_{l+1})$ of space-bounded unitary quantum circuits $\{V_j\}_{j \in [l+1]}$, where $l = m/2$.   

    To represent the variables in our SDP program, which are the states in the snapshot registers corresponding to the message register $\sfM$, the verifier's private register $\sfW$, and the environment registers $\sfE_1\cdots,\sfE_j$ after the verifier's $j$-th action in $\protocol{P}{V}$, we use the notations defined in \Cref{fig:QIPL-even}. Specifically, let $\rho_{\ttM_j\ttW_j\sfE_1\cdots\sfE_j}$ denote the state in the snapshot registers $(\ttM_j, \ttW_j, \sfE_1, \cdots, \sfE_j)$. Similarly, we can define snapshot states $\rho_{\ttM'_j\ttW_j\sfE_1\cdots\sfE_j}$ for $j\in[l]$ and $\rho_{\ttZ\ttR\sfE_1\cdots\sfE_{l+1}}$ accordingly. 

    \vspace{1em}
    Assuming that $\protocol{P}{V}$ begins with the verifier, it follows that the objective function
    \begin{equation}
        \label{eq:QIPUL-objective-function}
        \begin{aligned}
        \omega(V) &= \| \ketbra{1}{1}_{\Out} V_{l+1} P_l V_l \cdots P_1 V_1 \ket{\bar{0}}_{\ttQ_0,\ttM_0,\ttW_0}\|_2^2\\
        &= \Tr\rbra*{ \Tr_{\sfQ}\rbra*{ \rbra*{V_{l+1} P_l V_l \cdots P_1 V_1}^{\dagger} \ketbra{1}{1}_{\Out} \rbra*{V_{l+1} P_l V_l \cdots P_1 V_1} \ketbra{\bar{0}}{\bar{0}}_{\ttQ_0,\ttM_0,\ttW_0} } }\\
        &= \Tr\rbra*{ \big(V^{\dagger}_{l+1} \!\otimes\! I_{\sfE_1\cdots\sfE_{l}}\big) \ketbra{1}{1}_{\Out} \big(V^{\dagger}_{l+1} \!\otimes\! I_{\sfE_1\cdots\sfE_{l}}\big)  \rho_{\ttM'_l\ttW_l\sfE_1\cdots\sfE_l} }.
        \end{aligned}
    \end{equation}
    
    Noting that the verifier $V$ remains honest and the verifier's $j$-th action does not act on the environment registers $\sfE_1,\cdots,\sfE_{j-1}$, we obtain the first type of constraints:
    \begin{equation}
        \label{eq:QIPUL-honest-verifier-cond}
        \begin{aligned}
            \rho_{\ttM_1\ttW_1\sfE_1} &= V_1 \rho_{\ttM_{0}\ttW_{0}} V_1^{\dagger} = V_1 \ketbra{\bar{0}}{\bar{0}}_{\ttM_0\ttW_0} V_1^{\dagger};\\
        \forall j \in \{2,\cdots,l\},~\rho_{\ttM_j\ttW_j\sfE_1\cdots\sfE_j} &= \big(V_j \!\otimes\! I_{\sfE_1\cdots\sfE_{j-1}} \big) \rho_{\ttM'_{j-1}\ttW_{j-1}\sfE_1\cdots\sfE_{j-1}} \big( V_j^{\dagger} \!\otimes\! I_{\sfE_1\cdots\sfE_{j-1}}  \big);\\
        \rho_{\ttZ\ttR\sfE_1\cdots\sfE_{l+1}} &= \big( V_{l+1} \!\otimes\! I_{\sfE_1\cdots\sfE_l} \big) \rho_{\ttM'_l\ttW_l\sfE_1\cdots\sfE_l} \big(V_{l+1}^{\dagger} \!\otimes\! I_{\sfE_1\cdots\sfE_l} \big).
        \end{aligned}
    \end{equation}

    Since the prover's actions are described by unitary quantum circuits and the verifier's actions by isometric quantum circuits, all intermediate states in $(\sfQ, \sfM, \sfW, \sfE_1, \cdots, \sfE_j)$ after the verifier's $j$-th action in $\protocol{P}{V}$ are pure states. These states, denoted by $\ket{\psi}_{\ttQ_{j-1}\ttM_j\ttW_j\sfE_1\cdots\sfE_j}$ and $\ket{\phi}_{\ttQ_{j}\ttM'_j\ttW_j\sfE_1\cdots\sfE_j}$ for $j \in [l]$, satisfy the relation
    $\ket{\phi}_{\ttQ_{j}\ttM'_j\ttW_j\sfE_1\cdots\sfE_j} = P_j \ket{\psi}_{\ttQ_{j-1}\ttM_j\ttW_j\sfE_1\cdots\sfE_j}$.
    By the unitary freedom in purifications (\Cref{lemma:unitary-equivalent-in-purifications}), we have:
    \begin{equation*}
        \forall j \in [l],~\Tr_{\ttQ_{j-1}\ttM_j}\rbra*{\ketbra{\psi}{\psi}_{\ttQ_{j-1}\ttM_j\ttW_j\sfE_1\cdots\sfE_j}} = \rho_{\ttW_j\sfE_1\cdots\sfE_j} = \Tr_{\ttQ_{j}\ttM'_j}\rbra*{\ketbra{\phi}{\phi}_{\ttQ_{j}\ttM'_j\ttW_j\sfE_1\cdots\sfE_j}}.
    \end{equation*}
    Here, the underlying unitary transformation corresponds to the prover's action $P_j$ in the $j$-th round. 
    As a consequence, the prover's actions do not interfere with the verifier's private register $\sfW$ or the environment registers $\sfE_1, \cdots, \sfE_j$ (after the verifier's $j$-th action) during the execution of $\protocol{P}{V}$. 
    This property leads to the second type of constraints: 
    \begin{equation}
        \label{eq:QIPUL-prover-actions-cond}
        \forall j \in [l],~\Tr_{\ttM_j}\rbra*{\rho_{\ttM_j\ttW_j\sfE_1\cdots\sfE_j}} = \Tr_{\ttM'_j}\rbra*{\rho_{\ttM'_j\ttW_j\sfE_1\cdots\sfE_j}}.
    \end{equation}

    Putting \Cref{eq:QIPUL-objective-function,eq:QIPUL-honest-verifier-cond,eq:QIPUL-prover-actions-cond} all together, we conclude our SDP formulation for the given \QIPL{} proof system $\protocol{P}{V}$, as detailed in \Cref{eq:QIPL-first-SDP}.
\end{proof}

\paragraph{Second SDP formulation for \QIPL{} proof systems.} 

The main challenge in fully utilizing all restrictions of \QIPL{} proof systems in an SDP program is effectively using the measurement outcomes from the newly introduced environment register $\sfE_j$ after the verifier's $j$-th action. Specifically, when applying the principle of deferred measurements to the verifier's $j$-th action (almost-unitary quantum circuit), an environment register $\sfE_j$ is introduced. 
Thus, the variables in \Cref{eq:QIPL-first-SDP} correspond to $\rho_{\ttM'_1\ttW_1\sfE_1}$, $\rho_{\ttM'_1\ttW_1\sfE_1\sfE_2}, \cdots$, $\rho_{\ttM'_l\ttW_l\sfE_1\cdots\sfE_l}$. In general, when the verifier's actions are isometric quantum circuits, the environment registers $\sfE_1,\cdots,\sfE_l$ may be \textit{entangled}.\footnote{For instance, the prover may send a highly-entangled $n$-qubit state $\varrho$ in $n/\log{n}$ batches, each containing $\log{n}$ qubits. In this case, the verifier keeps only $O(\log n)$ qubits, which are not necessarily adjacent, and swaps the remaining qubits with fresh qubits in the environment registers.}

However, an almost-unitary quantum circuit (the verifier's $j$-th action) corresponds to a specific type of isometric quantum circuit, followed by measuring the environment register $\sfE_j$ in the computational basis at the end of the circuit. 
Since pinching measurements destroy coherence between subspaces associated with different measurement outcomes (see \Cref{remark:subtleties-space-bounded-circuits}\ref{remarkitem:pinching-measurement}), the environment registers $\sfE_1,\cdots,\sfE_l$ remain \textit{independent} in this restricted setting. 
Consequently, within any given branch $u$ defined by pinching measurement outcomes, we can characterize the prover strategy in the \QIPL{} proof system $\protocol{P}{V}$. This characterization is given by the following SDP program, which computes an \textit{approximation} $\widehat{\omega}(V)|^u$ of $\omega(V)|^u$: 

\begin{lemma}[Second SDP formulation for \QIPL{} proof systems]
    \label{lemma:QIPL-second-SDP-formulation}
    For any $l(n)$-round space-bounded quantum interactive proof system $\protocol{P}{V}$ with completeness $c(n)$ and soundness $s(n)$, corresponding to a promise problem $\calI = (\calI_{\yes},\calI_{\no})$ in $\QIPL_{2l}[c,s]$, there is a family of SDP programs that compute an approximation $\widehat{\omega}(V)|^u$ of the contribution $\omega(V)|^{u}$ associated with the measurement outcome branch $u = (u_1,\cdots,u_l)$, satisfying
    \begin{equation}
        \label{eq:pacc-approx-bounds}
        \omega(V)|^u \leq \widehat{\omega}(V)|^u \leq \omega(V).
    \end{equation}
    Here, each $u_j$ for $j \in [l]$ is a measurement outcome from the environment register $\sfE_j$ following the verifier's $j$-th action. 
    \noindent Specifically, an SDP program for computing $\widehat{\omega}(V)|^{u}$ is given as follows\emph{:}
    \begin{maxi}{}{\widehat{\omega}(V)|^u = \Tr\rbra*{V^{\dagger}_{l+1} \ketbra{1}{1}_{\Out} V_{l+1} \rho_{\ttM'_l\ttW_l}}}{}{}{\label{eq:QIPL-second-SDP}}{}
        \addConstraint{\Tr_{\ttM'_1}\!\big(\rho_{\ttM'_1\ttW_1} \!\!\otimes\! \ketbra{u_1}{u_1}_{\sfE_1}\big)}{=\Tr_{\ttM_1}\! \rbra*{ \big(I_{\ttM_1\ttW_1} \!\!\otimes\! \ketbra{u_1}{u_1}_{\sfE_1}\big) V_1 \ketbra{\bar{0}}{\bar{0}}_{\ttM_{0}\ttW_{0}} \!V_1^{\dagger} }}{}
        \addConstraint{\Tr_{\ttM'_j}\!\big(\rho_{\ttM'_j\ttW_j} \!\!\otimes\! \ketbra{u_j}{u_j}_{\sfE_j} \big)}{=\Tr_{\ttM_j}\! \rbra*{  \big(I_{\ttM_j\ttW_j}\!\!\otimes\!\ketbra{u_j}{u_j}_{\sfE_j}\big) V_j \rho_{\ttM'_{j-1}\ttW_{j-1}} \!V_j^{\dagger} },\quad}{j \in [l] \!\setminus\! [1]}
        \addConstraint{ \Tr\big(\rho_{\ttM'_1\ttW_1} \!\!\otimes\! \ketbra{u_1}{u_1}_{\sfE_1}\big) }{ =\Tr\rbra*{ \big(I_{\ttM_1\ttW_1} \!\!\otimes\! \ketbra{u_1}{u_1}_{\sfE_1}\big) V_1 \ketbra{\bar{0}}{\bar{0}}_{\ttM_{0}\ttW_{0}} \!V_1^{\dagger}} }{}
        \addConstraint{ \Tr\big(\rho_{\ttM'_j\ttW_j} \!\!\otimes\! \ketbra{u_j}{u_j}_{\sfE_j} \big) }{ =\Tr\rbra*{  \big(I_{\ttM_j\ttW_j}\!\!\otimes\!\ketbra{u_j}{u_j}_{\sfE_j}\big) V_j \rho_{\ttM'_{j-1}\ttW_{j-1}} \!V_j^{\dagger} }, \quad}{j \in [l] \!\setminus\! [1]}
        \addConstraint{ \rho_{\ttM'_j\ttW_j} \!\!\otimes\! \ketbra{u_j}{u_j}_{\sfE_j} }{ \succeq 0,\quad}{j \in [l]}
    \end{maxi}

    \noindent Here, the verifier's actions $V_1, \cdots, V_{l+1}$ correspond to space-bounded \emph{almost-unitary} quantum circuits, which we interpret as a special class of \emph{isometric} quantum circuits. 
    The variables in this SDP are \textit{unnormalized} states $\rho_{\ttM'_1\ttW_1}, \cdots, \rho_{\ttM'_l\ttW_l}$, collectively holding $O(l(n) \cdot \log{n})$ qubits. 
\end{lemma}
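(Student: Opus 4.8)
The plan is to follow the template of the proof of \Cref{lemma:QIPL-first-SDP-formulation}, but to carry the whole construction out inside a single measurement-outcome branch $u=(u_1,\dots,u_l)$. The structural fact that makes this possible is the decoherence property of pinching measurements recorded in \Cref{remark:subtleties-space-bounded-circuits}\ref{remarkitem:pinching-measurement}: since each verifier action $\widetilde V_j$ ends by measuring the environment register $\sfE_j$ in the computational basis, all coherence between the subspaces indexed by distinct outcomes on $\sfE_j$ is destroyed, so the (unnormalized) post-measurement states on the public registers $(\sfM,\sfW)$ are well defined branch by branch and the registers $\sfE_1,\dots,\sfE_l$ play independent, effectively classical roles. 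Consequently $\omega(V)$ decomposes as in \eqref{eq:pacc-decomposition}, and $\omega(V)|^u$ can be expressed solely in terms of the branch-$u$ snapshot states together with the verifier's isometries followed by the projections $\ketbra{u_j}{u_j}_{\sfE_j}$. I would then prove the two inequalities in \eqref{eq:pacc-approx-bounds} separately: the lower bound by exhibiting the genuine branch-$u$ snapshot states as a feasible point of the SDP, and the upper bound by reconstructing a (global) prover strategy from an arbitrary feasible point.

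For the lower bound $\omega(V)|^u\le\widehat\omega(V)|^u$, I would fix an $m$-message prover $P^\star$ attaining $\omega(V)$, run $\protocol{P^\star}{V}$, and for each $j\in[l]$ project $\sfE_j$ onto $\ket{u_j}$ at the end of the verifier's $j$-th action; let $\rho_{\ttM'_j\ttW_j}$ denote the resulting unnormalized state on $(\sfM,\sfW)$ just after $P^\star_j$. These states form a feasible point: the honest-verifier (partial-trace) constraints hold because the verifier is honest and the branch projection touches only $\sfE_j$; the branch-weight (trace) constraints follow by taking traces of the former, merely recording the unnormalized weight carried into branch $u$; positivity is automatic; and the requirement that $P^\star_j$ not disturb the reduced state on $\sfW$ within the fixed branch holds because $P^\star_j$ acts only on $(\sfQ,\sfM)$ and hence commutes with $\Tr_{\sfW}$ — the branchwise analogue of \eqref{eq:QIPUL-prover-actions-cond}, which can also be read off from \Cref{lemma:unitary-equivalent-in-purifications}. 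The objective at this point equals $\omega(V)|^u$ by construction and \eqref{eq:pacc-decomposition}, so $\widehat\omega(V)|^u\ge\omega(V)|^u$.

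For the upper bound $\widehat\omega(V)|^u\le\omega(V)$, I would take an optimal feasible point $(\rho_{\ttM'_1\ttW_1},\dots,\rho_{\ttM'_l\ttW_l})$ and reconstruct a global prover $P=(P_1,\dots,P_l)$ on a sufficiently large private register $\sfQ$ by induction on $j$, maintaining the invariant that, in branch $u$, the state on $(\sfQ,\sfM,\sfW)$ after $P_{j-1}$ is a purification of the unnormalized $\rho_{\ttM'_{j-1}\ttW_{j-1}}$: the verifier applies $V_j$ and one projects $\sfE_j$ onto $\ket{u_j}$, and the honest-verifier constraint together with the $\sfW$-preservation constraint shows that the reduced state on $\sfW$ of the resulting unnormalized pure state on $(\sfQ,\sfM,\sfW)$ coincides with $\Tr_{\ttM'_j}(\rho_{\ttM'_j\ttW_j})$, so by the unitary freedom in purifications (\Cref{lemma:unitary-equivalent-in-purifications}) there is a unitary $P_j$ on $(\sfQ,\sfM)$ carrying it to a purification of $\rho_{\ttM'_j\ttW_j}$; the branch-weight constraints make the normalizations match at each step. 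Letting $P$ act arbitrarily on the remaining branches, its branch-$u$ acceptance contribution is exactly $\Tr(V_{l+1}^\dagger\ketbra{1}{1}_{\Out}V_{l+1}\rho_{\ttM'_l\ttW_l})=\widehat\omega(V)|^u$; since by \eqref{eq:pacc-decomposition} the total acceptance probability of $P$ is a sum of nonnegative branch contributions that includes this one, and $\omega(V)$ is the maximum over all provers, we conclude $\omega(V)\ge\widehat\omega(V)|^u$.

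The main obstacle is the reconstruction step: I expect the delicate part to be the careful bookkeeping of unnormalized post-measurement states, choosing $\sfQ$ large enough to host every purification invoked along the way, and applying \Cref{lemma:unitary-equivalent-in-purifications} at each round so that $P_j$ acts only on $(\sfQ,\sfM)$ while leaving $\sfW$ and the already-frozen environment registers $\sfE_1,\dots,\sfE_{j-1}$ untouched. This is precisely where the decoherence property of pinching measurements is indispensable, as it is what makes the branch-restricted states — hence the per-branch constraints themselves — well defined. A secondary subtlety I would flag explicitly is that the argument gives only $\widehat\omega(V)|^u\le\omega(V)$ and not the sharper $\widehat\omega(V)|^u\le\omega(V)|^u$: the SDP optimizes the branch-$u$ contribution over prover strategies tailored to that branch, which may beat the globally optimal prover on branch $u$ alone, but the stated bound is all that is needed since the reconstructed prover's total acceptance probability dominates its branch-$u$ contribution.
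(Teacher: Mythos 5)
Your proposal is correct and follows essentially the same route as the paper: restrict the first SDP formulation to a fixed branch $u$ using the decoherence of pinching measurements, justify the partial-trace constraints via the unitary freedom in purifications (\Cref{lemma:unitary-equivalent-in-purifications}), obtain the lower bound of \Cref{eq:pacc-approx-bounds} by noting that the optimal global prover's branch-$u$ snapshot states are feasible, and the upper bound by extending any feasible point to a global prover whose total acceptance probability dominates its branch-$u$ contribution. Your version is in fact somewhat more explicit than the paper's on the upper-bound direction (the inductive prover reconstruction), and your observation that the trace constraints follow from the partial-trace constraints is accurate.
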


\begin{proof}
    Our proof strategy follows a similar approach to that of \Cref{lemma:QIPL-first-SDP-formulation}, so we will only highlight the key differences. 
    As defined in \Cref{subsec:QIPL-definition}, the verifier $V$ is described by an $(l+1)$-tuple $(V_1,\cdots,V_{l+1})$ of space-bounded almost-unitary quantum circuits $\{V_j\}_{j \in [l+1]}$, which are a special class of isometric quantum circuits, where $l = m/2$. 
    
    To represent the variables in this SDP program, which are the \textit{unnormalized} states in the snapshot registers corresponding to the message register $\sfM$ and the verifier's private register $\sfW$, we use the notations from \Cref{fig:QIPL-even}. 
    Since we focus on a fixed measurement outcome branch $u=(u_1,\cdots,u_l)$ and the coherence between subspaces corresponding to different branches is eliminated by pinching measurements, it suffices to consider each measurement outcome $u_j$ obtained from measuring $\sfE_j$ in the computational basis at the end of the verifier's $j$-th action. Consequently, the state $\ketbra{u_j}{u_j}$ in the environment register $\sfE_j$ is treated as part of the SDP constraints, not as the variables. 
    In particular, we slightly abuse the notation by letting $\rho_{\ttM_j\ttW_j}\!\otimes\! \ketbra{u_j}{u_j}_{\sfE_j}$ denote the unnormalized snapshot state in the registers $(\sfM,\sfW,\sfE_j)$ after the $j$-th verifier's action. Similarly, we define unnormalized snapshot states $\rho_{\ttM'_j\ttW_j} \!\otimes\! \ketbra{u_j}{u_j}_{\sfE_j}$ for $1 \leq j \leq l$ and $\rho_{\ttZ\ttR\sfE_{l+1}}$ for the corresponding registers. 

    \paragraph{Objective function and dependence on measurement outcomes.}
    Assuming that $\protocol{P}{V}$ begins with the verifier, it follows that the objective function is defined as
    \begin{equation}
        \label{QIPL-objective-function}
        \widehat{\omega}(V)|^u = \| \ketbra{1}{1}_{\Out} V_{l+1} P_l V_l \cdots P_1 V_1 \ket{\bar{0}}_{\ttQ_0\ttM_0\ttW_0}\|_2^2
        = \Tr\rbra*{ V_{l+1}^{\dagger} \ketbra{1}{1}_{\Out} V_{l+1} \rho_{\ttM'_l\ttW_l} }. 
    \end{equation}

    Assuming that the constraints in \Cref{eq:QIPL-second-SDP} fully capture the structure of $\protocol{P}{V}$ within a fixed branch $u$, we now establish \Cref{eq:pacc-approx-bounds}. 
    Since the SDP program in \Cref{eq:QIPL-second-SDP} optimizes only over this specific branch, the left-hand side of \Cref{eq:pacc-approx-bounds} follows: the contribution $\omega(V)|^u$ of an optimal global prover strategy, defined across all branches, may not attain the maximum in this branch-specific SDP. To conclude the right-hand side of \Cref{eq:pacc-approx-bounds}, we note that the acceptance probability of any global prover strategy consistent with this branch-$u$-specific prover strategy (a solution to this SDP program) cannot exceed $\omega(V)$. 

    \paragraph{Constraints.}
    Note that the verifier $V$ remains honest, with the initial state given by $\rho_{\ttM_0\ttW_0} = \ketbra{\bar{0}}{\bar{0}}_{\ttM_0\ttW_0}$.
    The first type of constraints arises from applying the verifier's action $V_j$:
    \begin{equation}
        \label{eq:QIPL-honest-verifier-cond}
        \begin{aligned}
            \rho_{\ttM_1\ttW_1} \!\otimes\! \ketbra{u_1}{u_1}_{\sfE_1} 
            &= \big( I_{\ttM_1\ttW_1}\!\otimes\!\ketbra{u_1}{u_1}_{\sfE_1}\big) V_1 \ketbra{\bar{0}}{\bar{0}}_{\ttM_0\ttW_0} V_1^{\dagger};\\
            \forall j \in \{2,\cdots,l\},~\rho_{\ttM_j\ttW_j} \!\otimes\! \ketbra{u_j}{u_j}_{\sfE_j} &= \big( I_{\ttM_j\ttW_j} \!\otimes\! \ketbra{u_j}{u_j}_{\sfE_j} \big) V_j \rho_{\ttM'_{j-1}\ttW_{j-1}} V_j^{\dagger};\\
            \rho_{\ttZ\ttR\sfE_{l+1}} &= V_{l+1} \rho_{\ttM'_l\ttW_l} V_{l+1}^{\dagger}.
        \end{aligned}
    \end{equation}

    Since the prover's actions are described by unitary quantum circuits, and the verifier's $j$-th action is an isometric quantum circuit, followed by measuring the (newly introduced) environment register $\sfE_j$ at the end of this turn, all intermediate states in $(\sfQ, \sfM, \sfW, \sfE_j)$ after the verifier's $j$-th action within the fixed branch $u$ in $\protocol{P}{V}$ are pure states.
    These unnormalized states, denoted by $\ket{\psi}_{\ttQ_{j-1}\ttM_j\ttW_j} \!\otimes\! \ket{u_j}_{\sfE_j}$ and $\ket{\phi}_{\ttQ_{j}\ttM'_j\ttW_j} \!\otimes\! \ket{u_j}_{\sfE_j}$ for $j \in [l]$, satisfy the relation
    \begin{equation}
        \label{eq:prover-action}
        \ket{\phi}_{\ttQ_{j}\ttM'_j\ttW_j} \!\otimes\! \ket{u_j}_{\sfE_j} = P_j \ket{\psi}_{\ttQ_{j-1}\ttM_j\ttW_j} \!\otimes\! \ket{u_j}_{\sfE_j}.
    \end{equation}
    
    Let $p(u_1,\cdots,u_j)$ be the probability that the measurement outcomes in the first $j$ rounds are $(u_1,\cdots,u_j)$. For any $j \in [l]$, the following identities hold: 
    \begin{equation}
    \label{eq:branch-normalization-factor}
    \begin{aligned}
        \ket{\psi}_{\ttQ_{j-1}\ttM_j\ttW_j} \!\otimes\! \ket{u_j}_{\sfE_j} &= p(u_1,\cdots,u_j) \cdot \ket{\widetilde{\psi}}_{\ttQ_{j-1}\ttM_j\ttW_j} \!\otimes\! \ket{u_j}_{\sfE_j}\\
        \ket{\phi}_{\ttQ_{j}\ttM'_j\ttW_j} \!\otimes\! \ket{u_j}_{\sfE_j} &= p(u_1,\cdots,u_j) \cdot \ket{\widetilde{\phi}}_{\ttQ_{j}\ttM'_j\ttW_j} \!\otimes\! \ket{u_j}_{\sfE_j}. 
    \end{aligned}
    \end{equation}
    Here, $\ket{\widetilde{\psi}}_{\ttQ_{j-1}\ttM_j\ttW_j}$ and $\ket{\widetilde{\phi}}_{\ttQ_{j}\ttM'_j\ttW_j}$ denote the corresponding \textit{normalized} states.     
    Combining \Cref{eq:prover-action,eq:branch-normalization-factor} and
    the unitary freedom in purifications (\Cref{lemma:unitary-equivalent-in-purifications}), it follows that:
    \begin{equation*}
        \forall j \in [l],~\Tr_{\ttQ_{j-1}\ttM_j}\rbra*{\ketbra{\psi}{\psi}_{\ttQ_{j-1}\ttM_j\ttW_j} \!\otimes\! \ketbra{u_j}{u_j}_{\sfE_j}} = \rho_{\ttW_j} \!\otimes\! \ketbra{u_j}{u_j}_{\sfE_j}  = \Tr_{\ttQ_{j}\ttM'_j}\rbra*{\ketbra{\phi}{\phi}_{\ttQ_{j}\ttM'_j\ttW_j} \!\otimes\! \ketbra{u_j}{u_j}_{\sfE_j}}.
    \end{equation*}
    In this expression, the underlying unitary transformation corresponds to the prover's action $P_j$ in the $j$-th round, and $\rho_{\ttW_j}$ is an unnormalized state that satisfies 
    \begin{equation}
        \label{eq:branch-prob}
        \Tr\rbra[\big]{\rho_{\ttW_j}} = p(u_1,\cdots,u_j).
    \end{equation}
    
    Consequently, the prover's actions do not interfere with the verifier's private register $\sfW$ or the environment register $\sfE_j$ (introduced by the verifier's $j$-th action) during the execution of $\protocol{P}{V}$. This property gives rise to the second type of constraints: 
    \begin{equation}
        \label{eq:QIPL-prover-actions-cond}
        \forall j \in [l],~\Tr_{\ttM_j}\rbra*{\rho_{\ttM_j\ttW_j} \!\otimes\! \ketbra{u_j}{u_j}_{\sfE_j}} = \Tr_{\ttM'_j}\rbra*{\rho_{\ttM'_j\ttW_j} \!\otimes\! \ketbra{u_j}{u_j}_{\sfE_j}}.
    \end{equation}

    Furthermore, as shown in \Cref{eq:branch-prob}, the probability that the branch $(u_1,\cdots,u_j)$ occurs is determined solely by the verifier and remains hidden from the prover. Since the prover's action cannot alter this probability, we obtain the third type of constraints:
    \begin{equation}
        \label{eq:QIPL-prover-actions-normalization}
        \forall j \in [l],~\Tr\rbra*{\rho_{\ttM_j\ttW_j} \!\otimes\! \ketbra{u_j}{u_j}_{\sfE_j}} = \Tr\rbra*{\rho_{\ttM'_j\ttW_j} \!\otimes\! \ketbra{u_j}{u_j}_{\sfE_j}}.
    \end{equation}

    Replacing the constraints in \Cref{eq:QIPL-first-SDP} by \Cref{eq:QIPL-honest-verifier-cond,eq:QIPL-prover-actions-cond,eq:QIPL-prover-actions-normalization}, we conclude the second SDP formulation for the  given \QIPL{} protocol, as specified in \Cref{eq:QIPL-second-SDP}. 
\end{proof}

\subsubsection{\QIPLHC{} is in \NP{}}
\label{subsubsec:QIPL-upper-bounds}

To establish that $\QIPLHC \subseteq \NP$ (\Cref{thm:QIPLHC-in-NP}), we choose the classical witness for the \NP{} containment as consisting of the variables $\rho_{\ttM'_1\ttW_1}, \cdots, \rho_{\ttM'_l\ttW_l}$ from the SDP program specified in  \Cref{lemma:QIPL-second-SDP-formulation}, along with the measurement outcomes $u_1,\cdots,u_l$, which determine the SDP program. Then, the verification procedure is simply performing the basic matrix operations on polynomial-dimension matrices. We now proceed with the proof. 

\begin{proof}[Proof of \Cref{thm:QIPLHC-in-NP}]
    Without loss of generality, we assume that the number of turns $m(n)$ is even, and particularly $m(n) = 2l(n)$. 
    For any $l$-round proof system $\protocol{P}{V}$ with completeness $c$ and soundness $s$, corresponding to a promise problem $\calI$ in $\QIPL_{2l}[c,s]$, we can leverage \Cref{lemma:QIPL-second-SDP-formulation} to obtain a family of SDP program, which depends on a measurement outcome branch $u=(u_1,\cdots,u_l)$ obtained by the verifier, as detailed in \Cref{eq:QIPL-second-SDP}. Specifically, given a measurement outcome branch $u$, this SDP program computes an approximation $\widehat{\omega}(V)|^u$ of the contribution of $u$ to the verifier's maximum acceptance probability $\omega(V)|^u$ over all choices of unnormalized states $\rho_{\ttM'_1\ttW_1}, \cdots, \rho_{\ttM'_l\ttW_l}$. 
    Then, it suffices to consider the approximation $\widehat{\omega}(V)|^u$ for the reminder of the proof, based on the following reasoning:
    \begin{itemize}
        \item For \textit{yes} instances, there exists a measurement outcome branch $u^*$ such that $\omega(V)|^{u^*} \geq c(n)$. By \Cref{eq:pacc-approx-bounds}, this bound implies $\widehat{\omega}(V)|^{u^*} \geq c(n)$. 
        \item For \textit{no} instances, the soundness condition combined with \Cref{eq:pacc-approx-bounds} implies that for any measurement outcome branch $u$, $\widehat{\omega}(V)|^u \leq s(n)$. This bound contradicts $\widehat{\omega}(V)|^u \geq c(n)$, since $c(n)-s(n) \geq 1/\poly(n)$. 
    \end{itemize}

    To establish an \NP{} containment, we thus choose the classical witness, denoted by $w$, as the classical description of the unnormalized states $\rho_{\ttM'_1\ttW_1}, \cdots, \rho_{\ttM'_l\ttW_l}$, alongside the binary strings representing the measurement outcomes $u_1,\cdots,u_l$. The size of $w$ remains polynomial in $n$ for the following reasons: (i) the dimension of each unnormalized state $\rho_{\sfM'_j\sfW_j}$, for $1 \leq j \leq l$, is at most $2^{O(\log n)}$, which is polynomial in $n$; (ii) the length of each binary string $u_j$ is bounded by $O(\log n)$; and (iii) the number of rounds $l(n)$ is at most $\poly(n)$. 

    We now describe the \NP{} verification procedure to complete the proof. Given the classical witness $w$, the procedure $\hatV$ executes the following steps:
    \begin{enumerate}[label={\upshape(\arabic*)}, topsep=0.33em, itemsep=0.33em, parsep=0.33em] 
        \item Check whether $w$ represents a feasible solution of the SDP program for computing $\widehat{\omega}(V)|^u$, where $u$ is the measurement outcome branch given in $w$, as specified in \Cref{eq:QIPL-second-SDP}. 
        \item Compute the value of $\widehat{\omega}(V)|^u$ by performing a polynomial number of matrix multiplications and partial traces of polynomial-dimensional matrices. 
    \end{enumerate}
    
    It is evident that these steps can be accomplished in deterministic polynomial time. The verification procedure $\hatV$ accepts if the witness $w$ is a feasible solution to \Cref{eq:QIPL-second-SDP} concerning the given $u$ and the value of $\widehat{\omega}(V)|^u$ is at least $c(n)$; otherwise, $\hatV$ rejects. 
\end{proof}

\subsection{Basic properties: Perfect completeness and error reduction}
\label{subsec:QIPL-basic-properties}

The verifier's space constraint for the class \QIPL{} presents several challenges when adapting techniques from standard quantum interactive proofs. For instance, techniques such as error reduction through parallel repetition~\cite[Section 5]{KW00} and the parallelization approach described in~\cite[Section 4]{KW00} are applicable to \QIPL{} only under certain conditions. Nevertheless, two basic properties can still be established without additional assumptions: 
\begin{enumerate}[label={\upshape(\arabic*)}, topsep=0.33em, itemsep=0.33em, parsep=0.33em] 
    \item Achieving perfect completeness in \QIPL{} or \QIPUL{} proof systems by adapting the technique in~\cite[Section 4.2.1]{VW16}, as detailed in \Cref{lemma:QIPL-perfect-completeness}; 
    \item Error reduction for \QIPL{} and \QIPUL{} via sequential repetition, as stated in \Cref{lemma:QIPL-error-reduction}. 
\end{enumerate}

\begin{remark}[Perfect completeness and error reduction for \QIPLHC{}]
    \label{remark:QIPLHC-properties}
    Any \QIPLHC{} proof system $\protocol{P}{V}$ can be transformed to achieve perfect completeness, as implied by combining $\QIPLHC \subseteq \NP$ (\Cref{thm:QIPLHC-in-NP}) and $\NP \subseteq \QIPLHC[1,1/3]$ (\Cref{thm:NP-in-QIPL}), resulting in the proof system $\protocol{P'}{V'}$. However, the approach in \Cref{lemma:QIPL-perfect-completeness} ensures that only the concentrated branch accepts with certainty, while other branches may still reject with non-zero probability. 
    Additionally, error reduction for \QIPLHC{} proof systems can be obtained by applying \Cref{lemma:QIPL-error-reduction} to $\protocol{P'}{V'}$. This is because $\protocol{P'}{V'}$ has a single branch defined by measurement outcomes for \textit{yes} instances, whereas general \QIPLHC{} proof systems may exhibit acceptance concentrated on a particular branch but still allow non-zero acceptance across multiple branches. 
\end{remark}

We will provide detailed proofs of these properties in the remainder of this subsection.

\subsubsection{Achieving perfect completeness for \QIPL{} and \QIPUL{}}

Our construction and analysis in \Cref{lemma:QIPL-perfect-completeness} are inspired by~\cite[Section 4.2.1]{VW16}.  

\begin{lemma}[\QIPL{} and \QIPUL{} are closed under perfect completeness]
    \label{lemma:QIPL-perfect-completeness}
    Let $c(n)$, $s(n)$, and $m(n)$ be logspace-computable functions such that $0 \leq s(n) < c(n) \leq 1$, $c(n) - s(n) \geq 1/\poly(n)$, and $1 \leq m(n) \leq \poly(n)$. Then, it follows that 
    \[
    \QIPL_m[c,s] \subseteq \QIPL_{m+2}\sbra*{1, 1-(c-s)^2/2} \text{ and } \QIPUL_m[c,s] \subseteq \QIPUL_{m+2}\sbra*{1, 1-(c-s)^2/2}. 
    \]
\end{lemma}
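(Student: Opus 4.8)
The plan is to mirror the perfect-completeness transformation for ordinary quantum interactive proofs given in~\cite[Section 4.2.1]{VW16} (see also~\cite[Section 3]{KW00}), which converts an $m$-message \QIP{} system of completeness $c$ and soundness $s$ into an $(m+2)$-message system of completeness $1$ and soundness $1-(c-s)^2/2$ by appending a fixed ``perfect-completeness gadget'' that spans two extra messages and acts only on the designated output qubit together with a constant number of fresh ancilla qubits. Concretely, let $V=(V_1,\dots,V_{l+1})$ be an almost-unitary (resp.\ unitary) logspace verifier witnessing $\calI\in\QIPL_m[c,s]$ (resp.\ $\calI\in\QIPUL_m[c,s]$); we treat the case $m=2l$, the odd case being analogous. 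The new verifier $V'=(V_1,\dots,V_l,V'_{l+1},V'_{l+2})$ keeps the first $l$ actions unchanged, lets $V'_{l+1}$ apply the old $V_{l+1}$ \emph{without} performing the final measurement and then prepare the $O(1)$ gadget ancillas before passing the message register to the prover, and lets $V'_{l+2}$ apply the gadget's final (constant-size) circuit on the output qubit $\sfZ$ together with those ancillas and measure the designated output qubit. The intervening message $m+1$ gives the prover one extra action $P_{l+1}$. The completeness/soundness analysis is copied verbatim from the \QIP{} case: it only concerns the $O(\log n)$-qubit reduced states on $(\sfM,\sfW)$ just before the final measurement and an information-theoretically unbounded prover, neither of which is affected by the space bound --- in a \emph{yes} instance the prover plays the honest original strategy and then steers the gadget to accept with certainty, while in a \emph{no} instance no prover exceeds acceptance probability $1-(c-s)^2/2$.

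What remains is to check that $V'$ is again a legitimate object in our model. For space: the gadget spliced into $V_{l+1}$ and the new action $V'_{l+2}$ use only $O(1)$ extra ancilla qubits, $O(1)$ elementary gates, and --- in the \QIPL{} case --- at most $O(1)$ extra pinching measurements, so $V'$ still acts on $O(\log n)$ qubits with $\poly(n)$ gates and at most $O(\log n)$ pinching measurements per turn, and each message stays of size $O(\log n)$; hence $V'$ is an $O(\log n)$-space-bounded almost-unitary (resp.\ unitary) circuit family. For uniformity: a description of $(V_1,\dots,V_l,V'_{l+1},V'_{l+2})$ is obtained from a description of $(V_1,\dots,V_{l+1})$ by a fixed logspace post-processing (copy the first $l$ actions, splice the constant-size gadget into the last one, append $V'_{l+2}$), so the strong uniformity requirement --- a single deterministic logspace Turing machine emitting the whole tuple --- is preserved. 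The prover is unbounded, so the extra action $P_{l+1}$ imposes no additional constraint, and the same construction and verification carry over with unitary verifier actions, yielding the \QIPUL{} inclusion.

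The one delicate point --- and where I expect the (modest) real work to lie --- is that the textbook \QIP{} argument silently uses that the verifier is unitary: it ``does not measure at the end'' and may reuse its workspace freely, whereas a \QIPL{} verifier is only almost-unitary, cannot reset qubits to $\ket{0}$, carries environment registers $\sfE_1,\dots,\sfE_l$ that become inaccessible once their round ends, and has already fixed the intermediate pinching outcomes $u_1,\dots,u_l$ of the original run. One must therefore verify that the appended gadget never needs to clean or uncompute any part of the old workspace: it touches only the designated output qubit $\sfZ$ (which sits inside $(\sfM,\sfW)$ just before the final measurement) and brand-new ancilla qubits, and any fresh environment register it introduces is merely another $O(1)$-qubit register $\sfE_{l+1}$ handled exactly as in \Cref{fig:QIPL}. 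With this bookkeeping the \QIP{} bounds transfer unchanged, establishing $\QIPL_m[c,s]\subseteq\QIPL_{m+2}[1,1-(c-s)^2/2]$ and, identically, $\QIPUL_m[c,s]\subseteq\QIPUL_{m+2}[1,1-(c-s)^2/2]$.
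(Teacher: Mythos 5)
Your proposal follows essentially the same route as the paper: both adapt the perfect-completeness gadget of~\cite[Section 4.2.1]{VW16} by appending two turns, arguing that the fidelity-based soundness analysis is unaffected by the space bound, and checking that the gadget preserves the space bound, the strong uniformity condition, and the bookkeeping for the (already-measured, hence inert) environment registers $\sfE_1,\dots,\sfE_l$. One imprecision is worth flagging, because it is exactly the step a space-bounded adaptation must re-verify: the gadget is \emph{not} local to the output qubit plus $O(1)$ ancillas. In the actual construction (and in the paper's \Cref{protocol:QIPL-perfect-completeness}), after the $\CNOT$ pseudo-copy the verifier must hand over its \emph{entire} private register $\sfW=(\ttZ,\ttR)$ to the prover in the $(m+1)$-st message --- otherwise the residual state on $\ttR$ stays entangled with the pseudo-copy $\sfZ'$, the prover cannot disentangle, and perfect completeness fails; your phrasing ``passing the message register to the prover'' and ``acts only on the designated output qubit together with a constant number of fresh ancilla qubits'' elides this. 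The handover is legitimate here only because $q_{\sfW}=O(\log n)$ fits inside an $O(\log n)$-qubit message, which is the one genuinely model-specific check; with that stated (together with fixing a dyadic $\alpha\in[(3c+s)/4,c]$ so the verifier can implement the final $\{\ketbra{\gamma}{\gamma}, I-\ketbra{\gamma}{\gamma}\}$ measurement), your argument matches the paper's.
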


\begin{proof}
    Since \QIPUL{} proof systems are a special subclass of \QIPL{} proof systems, it suffices to establish the inclusion for the latter. 
    For any $m$-turn \QIPL{} proof system $\protocol{P}{V}$ with completeness $c$ and soundness $s$, which corresponds to a promise problem $\calI$ in $\QIPL_m[c,s]$, $\protocol{P}{V}$ acts on registers $\sfQ$, $\sfM$, and $\sfW = (\ttZ, \ttR)$, where $\ttZ$ represents the output qubit just before the final measurement. To achieve perfect completeness, we propose a new proof system $\protocol{P'}{V'}$ based on $\protocol{P}{V}$, as detailed in \Cref{protocol:QIPL-perfect-completeness}. In this proof system, we also introduce a new single-qubit register $\sfZ'$, which is initialized to be $\ket{0}$ and is accessible only to $V'$. 
    
    It is noteworthy that the environment register introduced by the verifier's $j$-th action for $j \in \floor*{(m+1)/2}+1$ in $\protocol{P}{V}$ has no effect on the new proof system $\protocol{P'}{V'}$. This is because $\sfE_j$ is measured at the end of the turn in which it is introduced, it collapses to a quantum state that contains only a binary string and becomes inaccessible afterward. 

\begin{algorithm}[ht!]
    \SetAlgorithmName{Protocol}
    \SetEndCharOfAlgoLine{.}
    \setlength{\parskip}{5pt}
    \SetKwInOut{Parameter}{Parameters}

    \Parameter{$\alpha$ is a dyadic rational number such that $\frac{3c+s}{4} \leq \alpha \leq c$.}
    \textbf{1.} The verifier $V'$ executes the original proof system $\protocol{P}{V}$ (with the prover $P'$), except for the verifier's final measurement on the register $\ttZ$\;
    \textbf{2.} The verifier $V'$ creates a pseudo-copy of the output qubit of $V$ by applying a $\CNOT_{\ttZ \rightarrow \sfZ'}$ gate, and then sends the register $\sfW = (\ttZ, \ttR)$ to the prover\;
    \textbf{3.} The verifier $V'$ receives a single-qubit state from the prover, places it in the register $\ttZ$, and measures registers $\ttZ$ and $\sfZ'$ in the binary-valued measurement $\cbra*{\ketbra{\gamma}{\gamma}, I-\ketbra{\gamma}{\gamma}}$, where $\ket{\gamma} \coloneqq \sqrt{1-\alpha}\ket{00} + \sqrt{\alpha}\ket{11}$. The verifier $V'$ accepts if the measurement outcome is consistent with $\ket{\gamma}$; otherwise, it rejects. 
    \BlankLine
	\caption{Achieving perfect completeness for \QIPL{} (or \QIPUL{}).}
	\label[algorithm]{protocol:QIPL-perfect-completeness}
\end{algorithm}

    Next, we establish the correctness of \Cref{protocol:QIPL-perfect-completeness}: 
    \begin{itemize}[leftmargin=2em]
        \item For \textit{yes} instances, the state of the total system after step 2 can be adjusted with the help of an honest prover to
        \[\ket{\Phi} = \sqrt{1-\alpha} \ket{00}_{\sfZ'\ttZ}\ket{\phi_0} + \sqrt{\alpha} \ket{11}_{\sfZ'\ttZ}\ket{\phi_1},\]
        where $\ket{\phi_0}$ and $\ket{\phi_1}$ are normalized states that may not be orthogonal. 
        This adjustment can be done because $c\geq\alpha$. The prover then applies a unitary $U$ on all the qubits except $\sfZ'$ (which are owned by the prover) to ``disentangle'' $\ket{\Phi}$: 
        \[ (I_{\sfZ'}\otimes U) \ket{\Phi} = \rbra*{\sqrt{1-\alpha} \ket{00}_{\sfZ'\ttZ} + \sqrt{\alpha} \ket{11}_{\sfZ'\ttZ}} \otimes \ket{\phi}, \text{ where } \ket{\phi} \text{ is a normalized state.} \]
        Consequently in Step 3, the verifier $V'$ holds the state 
        $\rbra*{\sqrt{1-\alpha} \ket{00} + \sqrt{\alpha} \ket{11}} = \ket{\gamma}$ in registers $\ttZ$ and $\sfZ'$, ensuring that $V'$ accepts with certainty. 

        \item For \textit{no} instances, the original proof system $\protocol{P}{V}$ accepts with probability at most $s = \alpha - \varepsilon$ where $\varepsilon \geq  \frac{3c+s}{4} -s = \frac{3}{4}(c-s)$. The reduced density matrix in the register $\sfZ'$ after Step 2 is 
        \[  \rho_{\sfZ'} = \begin{pmatrix} 
            1-\alpha+\varepsilon & 0\\ 0 & \alpha-\varepsilon
        \end{pmatrix}.\]
        
        Let $\sigma_{\ttZ\sfZ'}$ denote a two-qubit quantum state in registers $(\ttZ,\sfZ')$ after the verifier $V'$ receives the single-qubit state from the prover $P'$ in Step 3. 
        Regardless of the prover's actions, $V'$ accepts with probability 
        \begin{align*}
            \Tr\rbra*{\ketbra{\gamma}{\gamma}\sigma_{\ttZ\sfZ'}} 
            &= \F\rbra*{\ketbra{\gamma}{\gamma}, \sigma_{\ttZ\sfZ'}}^2 \\
            &\leq \F\rbra*{\Tr_{\ttZ}\rbra*{\ketbra{\gamma}{\gamma}}, \rho_{\sfZ'}}^2 \\
            &= \F\rbra*{\begin{pmatrix} 
                    1-\alpha & 0\\ 0 & \alpha
                \end{pmatrix},
                \begin{pmatrix} 
                    1-\alpha+\varepsilon & 0\\ 0 & \alpha-\varepsilon
                \end{pmatrix}}^2\\
            &\leq 1-\varepsilon^2\\
            &\leq 1 - \frac{1}{2}(c-s)^2.
        \end{align*}

        Here, the second line owes to the data-processing inequality for the fidelity (\Cref{lemma:fidelity-data-processing}), the fourth line follows from~\cite[Equation (4.25)]{VW16}, and the last line is because of 
        \[1-\varepsilon^2 \leq 1 - \rbra*{\frac{3(c-s)}{4}}^2 \leq 1 - \frac{1}{2}(c-s)^2. \qedhere\]
    \end{itemize}

\end{proof}

\subsubsection{Error reduction for \QIPL{} and \QIPUL{}}

The main challenge in performing sequential repetition of a given \QIPL{} or \QIPUL{} proof system $\protocol{P}{V}$ lies in resetting the qubits in the verifier's private register $\sfW$ to the all-zero state after each execution of $\protocol{P}{V}$. 
In non-interactive proof systems with a unitary logspace verifier, the reset-to-zero operation can be achieved by running the inverse of the verification circuit, as shown in~\cite{MW05,FKLMN16}. However, in interactive proof systems with a restricted logspace verifier -- whether unitary or almost-unitary -- the reset-to-zero operation requires assistance from the prover,\footnote{In the case of $\QIPL{}^\diamond$ proof systems, where the verifier's actions are isometric quantum circuits, implementing the reset-to-zero operation becomes straightforward. However, for $\QIPL{}$ proof systems, the $O(\log{n})$ pinching intermediate measurements conducted during each verifier action are insufficient for re-using the workspace qubits. This limitation arises because the measurement outcome yields merely a binary string encoded in a state.} who may not be honest.
We now proceed to the formal statement:

\begin{lemma}[Error reduction for \QIPL{} and \QIPUL{}]
    \label{lemma:QIPL-error-reduction}
    Let $c(n)$, $s(n)$, and $m(n)$ be logspace-computable functions such that $0 \leq s(n) < c(n) \leq 1$, $c(n) - s(n) \geq 1/\poly(n)$, and $1 \leq m(n) \leq \poly(n)$. For any polynomial $k(n)$, it holds that
    \[ \QIPL_m[c,s] \subseteq \QIPL_{m'}\big[1, 2^{-k}\big] \text{ and } \QIPUL_m[c,s] \subseteq \QIPUL_{m'}\big[1, 2^{-k}\big]. \]
    Here, the number of turns $m' \coloneqq O\big(km/\log\frac{1}{1-(c-s)^2/2}\big)$.
\end{lemma}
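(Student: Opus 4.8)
The plan is to proceed in two stages — first boosting the proof system to perfect completeness, then amplifying soundness by \emph{sequential} repetition — with essentially all the work going into how an (almost-)unitary logspace verifier resets its workspace between runs. For the first stage, apply \Cref{lemma:QIPL-perfect-completeness} to $\QIPL_m[c,s]$ (resp.\ to $\QIPUL_m[c,s]$) to obtain an $(m+2)$-turn proof system with completeness $1$ and soundness $s_0 \coloneqq 1-(c-s)^2/2 < 1$. Setting $t \coloneqq \ceil{k/\log(1/s_0)}$ gives $s_0^t \le 2^{-k}$ and $t(m+2) = O\rbra{km/\log\frac{1}{1-(c-s)^2/2}}$, which will be the final turn count up to a constant factor and a constant-per-run overhead introduced below.

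For the second stage, the new verifier runs the perfect-completeness protocol $t$ times in sequence and accepts iff all $t$ runs accept. Completeness $1$ is immediate for \emph{yes} instances. The obstacle is that after a run the registers $\sfM$ and $\sfW$ hold an arbitrary state, and the verifier — being unitary or almost-unitary in logspace — cannot erase them on its own; an honest prover could, but a cheating one need not. The fix is to insert a short \emph{cleaning round} between consecutive runs in which the verifier temporarily exposes the contents of $\sfW$ to the prover through the $O(\log n)$-qubit message register and receives both $\sfM$ and $\sfW$ back, an honest prover returning $\ket{\bar 0}$. To prevent a cheating prover from profiting from a dirty start, the verifier maintains two $O(\log n)$-bit counter registers updated by \emph{multiple-controlled adders}: one counter is incremented just after each run if the designated output qubit reads $1$, and the other just before each run if the control register $\sfM\sfW$ is all-zero. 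At the end the verifier measures both counters (together with the last output qubit) and accepts iff both counters equal $t$ — equivalently, every run accepted and every run was begun from a clean workspace. In $\QIPL$ these checks may be implemented with pinching measurements inside a verifier action, while in $\QIPUL$ all counter updates are unitary and the measurements are deferred to the end, so the construction remains in $\QIPUL$. Each cleaning round costs only $O(1)$ additional turns, so the overall turn count is still $m' = O\rbra{km/\log\frac{1}{1-(c-s)^2/2}}$, and the new message and private registers are still $O(\log n)$ qubits.

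For correctness, the \emph{yes} case is straightforward: the honest prover wins every run with certainty and cleans perfectly, so both counters reach $t$ and the verifier accepts with probability $1$. For the \emph{no} case, conditioning on the clean-counter reading $t$ forces $\sfM\sfW = \ket{\bar 0}$ at the start of every run, so each run is a genuine fresh execution of the perfect-completeness protocol on the same input; a hybrid argument that peels off one run at a time — using that each fresh run accepts with probability at most $s_0$ no matter what state the prover has stashed in its private register $\sfQ$ — then bounds the total acceptance probability by $s_0^t \le 2^{-k}$.

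The hard part is making this \emph{no}-case argument airtight. A cheating prover is adaptive and may correlate its behaviour across runs through $\sfQ$, and it may ``half-clean'' the workspace; the adder/counter bookkeeping must therefore be arranged so that any deviation which helps some run accept is necessarily caught by the clean-counter, and the hybrid must correctly combine the conditioning on ``all runs clean'' with the prover's carried-over state. This is the space-bounded analog of the (standard but delicate) sequential-repetition soundness proof for $\QIP$, now complicated precisely by the verifier's inability to erase — which is exactly what forces the prover-assisted cleaning with verified bookkeeping.
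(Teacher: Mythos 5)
Your proposal is correct and follows essentially the same route as the paper: first apply the perfect-completeness lemma, then perform AND-type sequential repetition where the prover supplies fresh all-zero states to reset the workspace and the verifier verifies both acceptance and cleanliness via multiple-controlled adders into two $O(\log n)$-qubit counter registers measured only at the end. The paper's soundness analysis is the same conditioning-plus-run-by-run bound you sketch (it conditions on the prover sending all-zeros, notes each run then accepts with probability at most $1-(c-s)^2/2$ regardless of the carried-over prover state, and dominates by independent coins), so there is no substantive difference.
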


\begin{proof}
    Since \QIPUL{} proof systems are a special subclass of \QIPL{} proof systems, it suffices to establish the inclusion for the latter. 
    For any $m$-turn proof system $\protocol{P'}{V'}$ with completeness $c$ and soundness $s$, corresponding to a promise problem $\calI$ in $\QIPL_m[c,s]$, applying \Cref{lemma:QIPL-perfect-completeness} yields a new $(m+2)$-turn proof system $\protocol{P}{V}$ with completeness $1$ and soundness $1-(c-s)^2/2$.
    This proof system $\protocol{P}{V}$, viewed as an isometric quantum circuit, acts on registers $\sfQ$, $\sfM$, and $\sfW = (\ttZ, \ttR)$, where $\ttZ$ denotes the output qubit just before the final measurement. Without loss of generality, we can assume that registers $\sfM$ and $\sfV$ are of equal size. 
        
    Error reduction for the given proof system $\protocol{P}{V}$ is achieved using $r$-fold AND-type sequential repetition of $\protocol{P}{V}$, as detailed in \Cref{protocol:QIPL-error-reduction}. In this resulting proof system $\protocol{\hatP}{\hatV}$, we introduce two new $\ceil*{\log{r}}$-qubit registers, $\widehat{\sfS}$ and $\widehat{\sfT}$, which are initialized to be the all-zero state and are private to $\hatV$. 
    The procedure from Step 3.a to Step 3.c aims to reset the verifier's original private register $\sfW$ with the help of the prover. Moreover, the multiple-controlled adder in Step 3.b can be implemented by $O(q_{\sfW}^2)$ uses of elementary quantum gates and the adder $U_{\rm add}$, following from~\cite[Lemma 7.5 and Corollary 7.6]{BBC+95}.

    A subtle but important point concerns the effect of the environment register $\sfE_j$ introduced by the verifier's $j$-th action for $j\in [\floor*{(m\!+\!1)/2}\!+\!1]$ in $\protocol{P}{V}$. Since $\sfE_j$ is measured at the end of the turn in which it is introduced, it collapses to a state that contains only a binary string and becomes inaccessible afterward. Therefore, this newly introduced register $\sfE_j$ has no impact on our sequential repetition protocol.
    
\begin{algorithm}[ht!]
    \SetAlgorithmName{Protocol}
    \SetEndCharOfAlgoLine{.}
    \SetAlgoVlined
    \setlength{\parskip}{5pt}
    \SetKwFor{For}{For}{:}{}
    \SetKwIF{If}{}{}{If}{:}{}{}{}
    \SetKwInOut{Parameter}{Parameters}

    \Parameter{$r \coloneqq O\rbra*{k/(c-s)^2}$.}
    \For{$i \leftarrow 1$ \KwTo $r$}{
        \textbf{1.} The verifier $\hatV$ executes the original proof system $\protocol{P}{V}$ (with the prover $\hatP$), except for the verifier's final measurement on the register $\ttZ$\;
        \textbf{2.} The verifier $\hatV$ performs a controlled adder, where the single-qubit control register is $\widehat{\ttZ}$, and the $\ceil*{\log{r}}$-qubit target register is $\widehat{\sfS}$\;
        \If{i < r}{
            \textbf{3.a} The prover $\hatP$ sends a $q_{\sfW}$-qubit state $\ket{O}$ to the verifier $\hatV$, where $\ket{O}=\ket{0}^{\otimes q_{\sfW}}$ for an honest prover\;
            \textbf{3.b} The verifier $\hatV$ performs a multiple-controlled adder, where the $q_{\sfW}$-qubit control register is $\widehat{\sfM}$ (containing $\ket{O}$), the $\ceil*{\log{r}}$-qubit target register is $\widehat{\sfT}$, and the adder is activated if $\ket{O} = \ket{0}^{\otimes q_{\sfW}}$\;
            \textbf{3.c} The verifier $\hatV$ performs a \SWAP{} gate between registers $\widehat{\sfM}$ and $\widehat{\sfW}$;
        }
    }
    \textbf{4.} The verifier $\hatV$ measures the registers $\widehat{\sfS}$ and $\widehat{\sfT}$ in the computational basis, with the outcomes denoted as $\Cnt_{\acc}$ and $\Cnt_{\clean}$, respectively\;
    \textbf{5.} The verifier $\hatV$ accepts if both $\Cnt_{\clean} = r-1$ and $\Cnt_{\acc} = r$ are satisfied. 
    \BlankLine
	\caption{Error reduction for \QIPL{} (or \QIPUL{}) via sequential repetition.}
	\label[algorithm]{protocol:QIPL-error-reduction}
\end{algorithm}

    \vspace{1em}
    It remains to establish the correctness of $\protocol{\hatP}{\hatV}$. Let $X_i$ be a random variable indicating whether the $i$-th execution of $\protocol{P}{V}$ is accepted, with $\Pr{X_i = 1}$ denoting the verifier $V$'s maximum acceptance probability of the $i$-th execution.  
    By a direct calculation, it holds that
    \begin{equation}
        \label{eq:QIPL-error-reduction-AND}
        \Pr{\Cnt_{\acc} = r} = \Pr{ X_1=1 \wedge \cdots \wedge X_r = 1 } = 
        \Pr{\protocol{P}{V}\text{ accepts}}^r.
    \end{equation}
    As a consequence, we conclude the following: 
    \begin{itemize}[leftmargin=2em]
        \item For \textit{yes} instances, an honest prover always sends $\ket{O} = \ket{0}^{\otimes q_{\sfW}}$, 
        and runs each of the executions independently, ensuring that the condition $\Cnt_{\clean} = r-1$ is satisfied. By combining \Cref{eq:QIPL-error-reduction-AND} with the completeness condition of $\protocol{P}{V}$, it follows that \Cref{protocol:QIPL-error-reduction} accepts with certainty. 
        \item For \textit{no} instances, it is enough to analyze the acceptance probability when the prover always sends $\ket{0}^{\otimes q_{\sfW}}$ at Step 3.a, as the verifier will otherwise reject. 
        Although the random variables $X_i$ may not be independent, 
        the probability of $X_i=1$ is at most $1 - (c-s)^2/2$ for any prover. Therefore, we can be upper bound the acceptance probability using independent binary random variables $X'_i$, where the probability of $X'_i=1$ is $1 - (c-s)^2/2$. 
        Consequently, applying \Cref{eq:QIPL-error-reduction-AND} (by substituting $X_i$ with $X'_i$), we can obtain that \Cref{protocol:QIPL-error-reduction} accepts with probability at most $2^{-k}$ by choosing $r = O\rbra*{k/\log\frac{1}{1-(c-s)^2/2}}$. This choice results in $m' = r(m+2) = O\rbra*{km/\log\frac{1}{1-(c-s)^2/2}}$. \qedhere
    \end{itemize}
\end{proof}

\subsection{Lower bounds for \QIPL{} and \QIPUL{}}
\label{subsec:lower-bounds-QIPL}

Our lower bounds for \QIPL{} and \QIPUL{} are motivated by the prior works on space-bounded classical interactive proofs, particularly those involving either private coins~\cite{CL95} or public coins~\cite{Fortnow89,FL93,Condon92survey, GKR15, CR23}:\footnote{Space-bounded classical interactive proofs with $\poly(n)$ public coins are in \Ptime{}, see~\cite[Theorem 6]{Condon92survey}.}
\begin{itemize}[itemsep=0.33em,topsep=0.33em,parsep=0.33em]
    \item \textbf{Private-coin proof systems vs.~\QIPL{}:} Soundness against classical messages may not extend to quantum messages for private-coin proof systems. This is because the prover could generate shared entanglement with the verifier, potentially leaking information about the private coins. Consequently, space-bounded private-coin classical interactive proofs are simulatable only by \QIPL{} proof systems (\Cref{thm:NP-in-QIPL}). 
    \item \textbf{Public-coin proof systems vs.~\QIPUL{}:} Public coins can be simulated by halves of EPR pairs sent from the verifier in \QIPUL{} proof systems, thus avoiding the soundness issue. However, the verifier is limited to sending only $O(\log{n})$ halves of EPR pairs, leading to a limitation on completeness. Therefore, only the work of~\cite{Fortnow89} can be simulated by \QIPUL{} proof systems (\Cref{thm:LOGCFP-in-QIPUL-weakErrorBound} and thus \Cref{thm:LOGCFL-in-QIPUL}). 
\end{itemize}

In the remainder of this subsection, we first establish that $\NP \subseteq \QIPLHC \subseteq \QIPL$ in \Cref{subsubsec:NP-in-QIPL}, and then demonstrate that $\SAC^1 \cup \BQL \subseteq \QIPUL$ in \Cref{subsubsec:LOGCFL-in-QIPUL}. 

\subsubsection{\NP{} is in \QIPLHC{}}
\label{subsubsec:NP-in-QIPL}

Our approach in \Cref{thm:NP-in-QIPL} draws inspiration from~\cite[Lemma 2]{CL95}.
Soundness against classical messages is guaranteed by \textit{the fingerprinting lemma}~\cite{Lipton90}, which is used for comparing multisets through short fingerprints (see \Cref{subsec:classical-concepts-and-tools}). 
To ensure soundness against quantum messages in private-coin proof systems, the verifier must measure the received quantum message in the computational basis at the beginning of each action. 

\begin{theorem}
    \label{thm:NP-in-QIPL}
    $\NP \subseteq \QIPL^\HC_m[1,1/3] \subseteq \QIPL_m[1,1/3]$, where $m(n)$ is some polynomial in $n$.
\end{theorem}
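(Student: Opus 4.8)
The plan is to give an explicit $\QIPLHC$ proof system for $\threeSAT$, since $\threeSAT$ is $\NP$-complete under logspace reductions (\Cref{lemma:3SAT-NP-hard}) and $\QIPLHC$ is closed under logspace reductions by the strong-uniformity condition on the verifier's mapping. The inclusion $\QIPLHC_m[1,1/3] \subseteq \QIPL_m[1,1/3]$ is immediate from the definitions (the high-concentration completeness condition is more restrictive than ordinary completeness). So the real work is constructing an $8k$-turn logspace verifier for a $\threeSAT$ instance $\phi = C_1 \wedge \cdots \wedge C_k$ on $n$ variables that accepts with certainty on a single concentrated measurement branch when $\phi$ is satisfiable, and rejects with probability at least $2/3$ otherwise. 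Following the idea behind~\cite[Lemma 2]{CL95}, the prover holds a satisfying assignment $\alpha$ and streams the encoded evaluation $\Enc(\phi(\alpha))$, a multiset of $3k$ triples $(l,i,v)$ (literal, clause index, truth value), first \emph{sorted by variable} (the \textsc{Consistency Check} phase) and then \emph{grouped by clause} (the \textsc{Satisfiability Check} phase). The logspace verifier's first move in every turn is to measure the incoming message register in the computational basis — this is one of the $O(\log n)$ pinching measurements allowed per action — so that it suffices to prove soundness against classical prover messages; quantum messages give the prover no additional power since the prover can never entangle with the verifier's private coins.

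I would structure the verifier's checks as follows. In the \textsc{Consistency Check} phase, the verifier retains only the current triple and the immediately preceding triple; when a new triple arrives it checks that if the two triples refer to the same variable then they assign it the same value (and that the sorted order on $\var(l)$ is respected), then returns the old triple to the prover and keeps the new one — this is how the verifier "discards" a triple despite being unable to erase qubits on its own, exactly as in the sketch. In the \textsc{Satisfiability Check} phase, for each clause $i$ the verifier receives the three triples $(l_1,i,v_1),(l_2,i,v_2),(l_3,i,v_3)$, checks that the clause index matches, that the literals are precisely those appearing in $C_i$, and that at least one $v_j$ makes its literal true under the stated value, then returns the triples immediately. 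Finally, to force the two phases to process the \emph{same} multiset of $3k$ triples, the verifier maintains a running fingerprint $\prod (x_t + r) \bmod p$ of the triples seen in each phase, where $(l,i,v)$ is encoded as an $O(\log n)$-bit integer $x_t$, and $p$ is a random prime in $[(b\ell)^2, 2(b\ell)^2]$ with $b,\ell = O(k \log n)$, and $r$ is uniform in $[1,p-1]$; the $O(\log n)$ random bits needed are simulated in the $\QIPL$ model as in \Cref{footref:simulating-random-bits}. After all $8k$ turns the verifier compares the two fingerprints and accepts iff all local checks passed and the fingerprints agree. Completeness: an honest prover sending $\Enc(\phi(\alpha))$ for a satisfying $\alpha$ passes every check with probability $1$, and — crucially — all pinching measurement outcomes are \emph{determined} by the transcript (the measured messages plus the fingerprint computations), hence unique, so the acceptance probability concentrates on a single branch $u^*$ with $\omega(V)|^{u^*} = 1$, giving the high-concentration (indeed, perfect) condition. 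Soundness: if $\phi$ is unsatisfiable, then for any classical prover strategy, either the \textsc{Consistency Check} multiset fails to encode a well-defined assignment, or the \textsc{Satisfiability Check} multiset fails to witness satisfaction of some clause, or the two multisets genuinely differ; in the first two cases a local check fails deterministically, and in the last case the fingerprinting lemma (\Cref{lemma:fingerprint}) bounds the collision probability by $O\big(\frac{\log b + \log \ell}{b\ell} + \frac{1}{b^2\ell}\big) \le 1/3$ for our parameters (and can be driven below $1/3$ by a constant-factor boost of the fingerprint modulus, or by the error-reduction lemma).

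The main obstacle I anticipate is the careful bookkeeping needed to fit the verifier entirely in $O(\log n)$ qubits while simultaneously (a) holding two $O(\log n)$-bit triples during the consistency phase, (b) carrying the two partial fingerprints together with the prime $p$ and shift $r$, and (c) returning "spent" triples to the prover so the workspace can be reused without any erasure operation — all while the verifier's actions are only \emph{almost-unitary} (pinching measurements, no resets). One has to verify that the protocol is genuinely non-erasing: every qubit the verifier stops needing is either measured (pinched to a classical value that is then part of a deterministic outcome branch) or physically swapped back into the message register and handed to the prover. A second, milder subtlety is arguing that letting the verifier measure incoming messages does not break completeness on quantum-message-capable provers — but since the honest prover sends computational-basis states this is immediate — and that it reduces soundness to the classical case, which follows because after measurement the prover's register is in a classical mixture and the standard classical argument applies branchwise. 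I also need to confirm the strong uniformity condition: a single deterministic logspace Turing machine, given $\phi$, must output the whole sequence $(V_1,\dots,V_{4k+1})$ of verifier circuits; this holds because each $V_j$ is a fixed simple gadget (measure, compare two triples or a clause, update a fingerprint, swap) parameterized only by $j$ and by the relevant clause/variable indices, all of which are logspace-computable from $\phi$. Putting the parameter count together yields $\threeSAT \in \QIPLHC_{8k}[1,1/3]$, and composing with the logspace reduction from an arbitrary $\NP$ language finishes the proof.
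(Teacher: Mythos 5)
Your proposal matches the paper's proof essentially step for step: the same reduction to \threeSAT{}, the same two-phase protocol (\textsc{Consistency Check} sorted by variable with the ``keep current and previous triple, return the previous'' trick, then \textsc{Satisfiability Check} grouped by clause), the same fingerprint comparison via \Cref{lemma:fingerprint}, the same measure-first reduction of quantum soundness to classical soundness via a branchwise convexity argument, the same unique-branch justification of the high-concentration condition, and the same $8k$-turn count. The only bookkeeping detail to add is that the soundness budget must also absorb the probability (at most $1/4$, per the argument of~\cite{CL95}) that the logspace verifier fails to sample a valid prime $p$, so the total is $1/4 + O(1/k) \leq 1/3$ rather than the fingerprint-collision term alone.
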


\begin{proof}
    We begin by noting that \threeSAT{} is \NP{}-hard under logspace reductions (\Cref{lemma:3SAT-NP-hard}), and that both \QIPL{} and its subclass \QIPLHC{} are closed under logspace reductions.\footnote{\QIPL{} is closed under logspace reductions if, for any logspace reduction $R$ such that $\calI \leq_{\Lspace}^m \calI'$, it holds that $\calI \in \QIPL$ if $\calI' \in \QIPL$. Let $\calM_{\calI'}$ and $\calM_R$ be the deterministic logspace Turing machines that compute (the description of) the verifier's mapping associated with $\calI'$ and the reduction $R$, respectively. The closure under logspace reductions for \QIPL{} is then achieved by considering the concatenation $\calM_{\calI} \coloneqq \calM_R \circ \calM_{\calI'}$. A similar argument applies to its subclass \QIPLHC{}, even under a weaker notion of the uniformity of the verifier's mapping, as discussed in \Cref{footref:logspace-uniformity}.} It suffices to establish that \threeSAT{} is in \QIPLHC{}. 

    To verify whether a \threeSAT{} instance $\phi=C_1 \wedge \cdots \wedge C_k$ is satisfied by an assignment $\alpha$, we encode $\phi(\alpha)$ as a collection of $3k$ tuples $(l,i,v)$, denoted as $\Enc(\phi(\alpha))$. In this encoding, each $(l,i,v)$ represents the literal $l\in \{x_j\}_{j \in [n]} \cup \{\neg x_j\}_{j \in [n]}$ in the clause $C_i$ is assigned the value $v \in \{\top,\bot\}$. Hence, $\Enc(\phi(\alpha))$ forms a multiset of $\ell=3k$ elements, with each element representable by $b=2\ceil*{\log{n}}+\ceil*{\log{k}}+1 = O(\log(kn))$ bits. For convenience, let $a_{(l,i,v)}$ denote the non-negative integer that encodes the triple $(l,i,v)$, and let $\var(l)$ be the variable in the literal $l$. 
    Next, we proceed with the detailed \QIPL{} proof system, satisfying the high-concentration condition for \textit{yes} instances, as outlined in \Cref{protocol:QIPL-3SAT-protocol}. 

\begin{algorithm}[ht!]
    \SetAlgorithmName{Protocol}
    \SetEndCharOfAlgoLine{.}
    \SetAlgoVlined
    \setlength{\parskip}{5pt}
    \SetKwFor{While}{}{:}{}
    \SetKwInOut{Parameter}{Parameters}
    \Parameter{$\phi(\alpha)$ is a \threeSAT formula $\phi$ with an assignment $\alpha$; $n$ and $k$ are the number of variables and clauses in $\phi$, respectively.}

    \textbf{1.} The verifier $V$ chooses a prime $p$ and an integer $r$ uniformly at random from the intervals $[(b\ell)^2, 2(b\ell)^2]$ and $[1,p-1]$, respectively. The verifier then initializes the (partial) fingerprints $F_\var =1$ and $F_{\cl}=1$.
    
    \textbf{2.} \While{\textsc{Consistency Check} \textnormal{(for each variable)}}{
        \textbf{$\bullet$} The prover $P$ sends the triples $(l,i,v)$ in $\Enc(\phi(\alpha))$, represented as quantum states $\ket{\psi_{\var(l),C_i}}$, to the verifier $V$, ordered by the variable $\var(l)$ in the literal $l$ and then by the clause index $i$\;
        \textbf{$\bullet$} \While{\textnormal{For each state $\ket{\psi_{\var(l),C_i}}$ received}}{
        \textbf{2.a} $V$ measures the state $\ket{\psi_{\var(l),C_i}}$ in the computational basis, with the measurement outcome denoted by the triple $(l,i,v)$\;
        \textbf{2.b} $V$ rejects if the following conditions hold: (i) $(l,i,v)$ is not the first triple in $\Enc(\phi(\alpha))$, (ii) $\var(l)=\var(l')$, and (iii) the value $v$ and $v'$ are inconsistent\;
        \textbf{2.c} $V$ updates the fingerprint $F_\var = F_\var \cdot \rbra*{a_{(l,i,v)}+r} \mod p$\;
        \textbf{2.d} $V$ sends the previous triple $(l',i',v')$ back to $P$, if applicable. $V$ then retains the current triple $(l,i,v)$ in its private memory, unless $(l,i,v)$ is the last triple in $\Enc(\phi(\alpha))$. 
        }
    }
    \textbf{3.} \While{\textsc{Satisfiability Check} \textnormal{(for each clause)}}{
        \textbf{$\bullet$} The prover $P$ sends the triples $(l_1,i,v_1)$, $(l_2,i,v_2)$, and $(l_3,i,v_3)$ in $\Enc(\phi(\alpha))$, represented as states $\ket{\psi_{C_i}}$, to the verifier $V$, ordered by the clause index $i$\;
        \textbf{$\bullet$} 
        \While{\textnormal{For each clause $C_i$, with the state $\ket{\psi_{C_i}}$ received}}{
        \textbf{3.a} $V$ measures the state $\ket{\psi_{C_i}}$ in the computational basis, where this state is expected to be $\ket{\psi_{\var(l_1),C_i}} \!\otimes\! \ket{\psi_{\var(l_2),C_i}} \!\otimes\! \ket{\psi_{\var(l_3),C_i}}$. The measurement outcomes are denoted by the triples $(l_1,i,v_1)$, $(l_2,i,v_2)$, and $(l_3,i,v_3)$\;
        \textbf{3.b} If $v_1 \vee v_2 \vee v_3 = \bot$, $V$ rejects\;
        \textbf{3.c} $V$ updates the fingerprints $F_{\cl} = F_{\cl} \cdot \prod_{j\in [3]}\big(a_{(l_j,i,v_j)}+r\big) \mod p$\; 
        \textbf{3.d} $V$ returns the triples $(l_1,i,v_1)$, $(l_2,i,v_2)$, and $(l_3,i,v_3)$ to $P$.
        }
    }
    \textbf{4.} The verifier $V$ accepts if the fingerprints $F_\var = F_{\cl}$; otherwise, it rejects. 
    \BlankLine
	\caption{A \QIPL{} proof system for \threeSAT{}.}
	\label[algorithm]{protocol:QIPL-3SAT-protocol}
\end{algorithm}

    \vspace{1em}
    To establish the correctness of \Cref{protocol:QIPL-3SAT-protocol}, 
    we first observe that since $k$ is a polynomial in $n$, the integer $2(bl)^2$ can be represented using $O(\log n)$ bits. 
    Following the argument in the proof~\cite[Lemma 2]{CL95},\footnote{See the first paragraph on Page 515 in~\cite{CL95} for the details.} a classical logspace verifier can find the prime $p$ and the integer $r$ in Step 1 of \Cref{protocol:QIPL-3SAT-protocol} with probability at least $3/4$, using $O(\log n)$ random bits. 
    Since a classical operation depending on $r$ random bits can be simulated by a corresponding unitary controlled by the state $\ket{+}^{\otimes r}$, Step 1 can be implemented using $O(\log{n})$ ancillary qubits, which remain untouched throughout the rest of the proof system. Additionally, because the state $\ket{\psi_{\var{l},C_i}}$, which encodes the triple $(l,i,v)$, requires at most $O(\log n)$ qubits, space-bounded almost-unitary quantum circuits suffice to implement the verifier's actions in \Cref{protocol:QIPL-3SAT-protocol}.
    
    Although there are $3k+k=4k$ rounds in \Cref{protocol:QIPL-3SAT-protocol}, the verifier's actions are of only constantly many kinds. Therefore, the verifier's mapping is logspace computable. We now turn to the analysis required to finish the proof:
    \begin{itemize}
        \item For \textit{yes} instances, where the \threeSAT{} formula $\phi$ is satisfiable, there is a prover strategy such that the verifier in \Cref{protocol:QIPL-3SAT-protocol} accepts with certainty. Additionally, the pinching measurement outcome string $u^*$ corresponds to this prover strategy, specifically the classical messages sent by the prover, implying that $\omega(V)|^{u^*} = 1$. 
        \item For \textit{no} instances, we first bound the probability that the verifier in \Cref{protocol:QIPL-3SAT-protocol} accepts an unsatisfiable $\phi$, assuming the prover sends only classical messages. This event may happen if: (1) the verifier fails to successfully choose the random prime $p$ and the random integer $r$; or (2) the two multisets sent by the prover -- specifically in Step 2 and Step 3 of \Cref{protocol:QIPL-3SAT-protocol} -- are unequal but still yield the same fingerprint. According to the proof of~\cite[Lemma 2]{CL95}, the former occurs with probability at most $1/4$. And the fingerprinting lemma (\Cref{lemma:fingerprint}) ensures that the latter occurs with probability at most 
        \[O\rbra*{ \frac{\log b + \log \ell}{b\ell}+ \frac{1}{b^2\ell} } = O\rbra*{ \frac{\log\log(kn) + \log(3k)}{\log(kn) \cdot 3k} + \frac{1}{\log(kn)^2 \cdot 3k}} = O\rbra*{\frac{1}{k}}.\]
        Therefore, the acceptance probability is at most $1/4 + O(1/k) \leq 1/3$. 

        Next, let $\omega(V)|_u$ be the verifier $V$'s maximum acceptance probability conditioned on the intermediate measurement outcome $u$. A direct calculation shows that $\omega(V)$ is a convex combination of $\omega(V)|_u$ over all obtainable measurement outcomes $u$. Since each verifier action begins with a measurement that forces the prover's message to be classical, we have: 
        \[ \omega(V) = \sum_{u \in \calJ_{\rm obt}} p_u \cdot \omega(V)|_u \leq \sum_{u \in \calJ_{\rm obt}} p_u \cdot \frac{1}{3} = \frac{1}{3}. \]
        Here, $p_u$ denotes the probability of obtaining the measurement outcome $u$, $\calJ_{\rm obt}$ represents the index set of all obtainable intermediate measurement outcomes, and the inequality follows from the established soundness against classical messages. 
    \end{itemize}
    In conclusion, we complete the proof by establishing that $\threeSAT \in \QIPL^\HC_{8k}[1,1/3]$. 
\end{proof}

\subsubsection{\texorpdfstring{$\SAC^1 \cup \BQL$ is in \QIPUL{}}{}}
\label{subsubsec:LOGCFL-in-QIPUL}

Our approach follows~\cite[Section 3.4]{Fortnow89}. To establish \Cref{thm:LOGCFL-in-QIPUL}, along with error reduction for \QIPUL{} (\Cref{lemma:QIPL-error-reduction}) and $\BQL \subseteq \QIPUL$, we need to prove the following: 

\begin{lemma}
    \label{thm:LOGCFP-in-QIPUL-weakErrorBound}
    $\SAC^1 \subseteq \QIPUL_{O(\log{n})}[1,1-1/p(n)]$, where $p(n)$ is some polynomial in $n$.
\end{lemma}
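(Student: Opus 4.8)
The plan is to give a space-bounded \emph{unitary} quantum interactive proof system that simulates the space-bounded public-coin protocol for \LOGCFL{} from~\cite[Section~3.4]{Fortnow89}. Since $\SAC^1=\LOGCFL$~\cite{Venkateswaran87} and \QIPUL{} is closed under logspace reductions (concatenate the logspace machine of the reduction with the verifier's logspace mapping, exactly as in the footnote to \Cref{thm:NP-in-QIPL}), it suffices to construct such a proof system for \textsc{Uniform $\SAC^1$ Circuit Evaluation} (\Cref{def:uniform-SAC1}). First I would set up the descent protocol: given a logspace-uniform $\SAC^1$ circuit $C$ of depth $h=O(\log n)$, the verifier keeps a ``current-gate'' register $\mathsf{G}$ holding the name of a gate of $C$ (this takes $O(\log n)$ qubits since $C$ has $\poly(n)$ gates), initialized to the output gate. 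In each round the verifier determines the type and child list of the current gate by running (reversibly) the logspace machine describing $C$, and then: if the current gate is an $\vee$-gate, the prover sends the name of a child, which the verifier checks is a genuine child (rejecting otherwise) and then writes into $\mathsf{G}$; if it is a fan-in-$2$ $\wedge$-gate, the verifier itself picks one of the two children uniformly at random and writes it into $\mathsf{G}$; if it is a literal $x_i$ or $\neg x_i$, the verifier accepts iff that literal is satisfied by $x$. Random choices are realized in the usual way by attaching a fresh $\ket{0}$ ancilla, applying \Had{}, and using it as a control (cf.\ \Cref{footref:simulating-random-bits}); over the $h$ rounds this costs only $O(\log n)$ ancillary qubits, so the verifier is a genuine unitary quantum logspace verifier, and since each round's action is a fixed procedure parameterized by the (logspace-computable) description of $C$, the strong uniformity condition on the verifier's mapping holds. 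Descending one level of $C$ per round makes this an $O(\log n)$-message protocol, giving membership in $\QIPUL_{O(\log n)}$ (adding, if needed, a trivial final round so that the prover sends the last message).

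Next I would verify correctness. For completeness, if $C(x)=1$ the prover follows an accepting subtree of $C$: at $\vee$-gates it moves to a child evaluating to $1$, and at $\wedge$-gates both children evaluate to $1$, so whichever child the verifier picks still evaluates to $1$; hence the descent always ends at a satisfied literal and the verifier accepts with certainty, i.e.\ perfect completeness. For soundness, assume $C(x)=0$ and, for a gate $g$, let $\mathrm{acc}(g)$ denote the supremum over (classical) prover strategies of the acceptance probability of the protocol started at $g$. I would prove by induction on the depth $d(g)$ of the subcircuit rooted at $g$ that every $g$ with value $0$ satisfies $\mathrm{acc}(g)\le 1-2^{-d(g)}$: a $0$-literal has $\mathrm{acc}(g)=0$; a $0$-valued $\vee$-gate has all children $0$-valued, so $\mathrm{acc}(g)=\max_{g'}\mathrm{acc}(g')\le 1-2^{-(d(g)-1)}\le 1-2^{-d(g)}$; and a $0$-valued $\wedge$-gate has a $0$-valued child $g_1$, so $\mathrm{acc}(g)=\tfrac12(\mathrm{acc}(g_1)+\mathrm{acc}(g_2))\le \tfrac12\big((1-2^{-(d(g)-1)})+1\big)=1-2^{-d(g)}$. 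Started at the output gate this gives acceptance probability at most $1-2^{-h}$, so the statement holds with $p(n):=2^{h}=\poly(n)$ (a slightly more careful induction over the number of $\wedge$-gates on a root-to-leaf path yields the same inverse-polynomial bound).

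The remaining and, I expect, the hardest part will be \emph{soundness against quantum messages}: unlike the private-coin \QIPL{} construction of \Cref{thm:NP-in-QIPL}, here the verifier is forced to be unitary and so cannot measure the incoming messages mid-protocol. My plan is to observe that every verifier action above is, up to the \Had{} gates applied to fresh ancillas, a permutation of computational-basis states that reads the message register only as a classical control, and that the accept bit is obtained by a reversible computation on $\mathsf{G}$ and the reject flag (both inside $\sfW$), so measuring $\sfM$ and $\sfW$ in the computational basis just before the final measurement changes nothing. Then I would defer all coin measurements to the end; for each fixed coin string the verifier's actions become honest permutation matrices, and by a convexity argument identical to the one used in \Cref{thm:NP-in-QIPL} the acceptance probability is a convex combination over classical message histories of the classical acceptance indicator, which is at most $1-2^{-h}$; averaging over the uniform coins preserves this bound against arbitrary quantum provers. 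Combined with the immediate inclusion $\BQL\subseteq\QIPUL$ and error reduction for \QIPUL{} (\Cref{lemma:QIPL-error-reduction}), this also yields \Cref{thm:LOGCFL-in-QIPUL}. The only place I anticipate having to be careful is in justifying that pushing the permutations past the diagonal accept projector really does reduce the quantum acceptance probability to the classical one; the safe route is exactly the deferred-coins plus permutation-plus-convexity argument just outlined, which uses only that the verifier is classical-reversible away from the coin Hadamards and that the final accept predicate is diagonal in the computational basis.
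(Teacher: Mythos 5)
Your proposal follows essentially the same route as the paper: the Fortnow-style descent protocol in which the prover selects children at OR gates, the verifier selects uniformly at AND gates (coins realized unitarily via Hadamarded ancillas / halves of EPR pairs), acceptance is decided at the literal level, completeness is perfect via an accepting subtree, and soundness $1-2^{-O(\log n)}$ follows from the verifier's binary choices at AND gates together with the observation that, since the verifier's actions are classical-reversible and the accept predicate is diagonal, non-interfering computational paths reduce quantum soundness to classical soundness. Your explicit depth induction for the soundness bound and your deferred-coins/permutation argument for quantum messages are just more detailed versions of what the paper states tersely, so there is nothing substantively different to flag.
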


\begin{proof}
For any promise problem $\calI = (\calI_{\yes},\calI_{\no})$ in $\SAC^1$ where $\calI_{\yes} \cup \calI_{\no} = \binset^*$, it suffices to consider the corresponding (uniform) $\SAC^1$ circuit evaluation problem, as defined in \Cref{def:uniform-SAC1}. 
Let $C$ be the (uniform) $\SAC^1$ circuit associated with $\calI$,  taking $x \in \calI$ as input, such that $C(x)=1$ if and only if $x \in \calI_{\yes}$. 

We now present the \QIPUL{} proof system for $\SAC^1$, as detailed in \Cref{protocol:LOGCFL-in-QIPUL}.

\begin{algorithm}[ht!]
    \SetAlgorithmName{Protocol}
    \SetEndCharOfAlgoLine{.}
    \SetAlgoVlined
    \setlength{\parskip}{5pt}
    \SetKwFor{While}{}{:}{}
    \SetKwIF{If}{}{}{If}{:}{}{}{}
    \SetKwInOut{Parameter}{Parameters}

    \textbf{1.} \While{\textnormal{The prover $P$ and verifier $V$ start at the output level of the circuit $C$}}{
        \textbf{$\bullet$} \If{\textnormal{the top gate $G$ is an OR gate}}{
            \textbf{2.1} $P$ selects one of $G$'s child gates\; 
            \textbf{2.2} $P$ sends the selection to $V$ by sending a quantum state $\ket{\psi}$\;
        }
        \textbf{$\bullet$} \If{\textnormal{the top gate $G$ is an AND gate}}{
            \textbf{2.a} $V$ selects one of the two child gate of $G$ uniformly at random\;
            \textbf{2.b} $V$ send the selection to $P$ using half of an EPR pair\;
        }
    }

    \textbf{2.} At each intermediate level of the circuit $C$, the prover $P$ and verifier $V$ repeat this process on the selected child gate, as outlined in Step 1. 

    \textbf{3.} \While{\textnormal{At the input level of the circuit $C$}}{
        \textbf{3.1} The verifier $V$ measures the register containing the currently selected child gate in the computational basis. Here, the measurement outcomes corresponds to either $x_j$ or $\neg x_j$ for some $j \in [n]$\; 
        \textbf{3.2} \While{\textnormal{The verifier $V$ checks the following}}{
            \textbf{$\bullet$} For \textbf{an input $x_i$}: $V$ accepts if $x_i=1$; otherwise, $V$ rejects\; 
            \textbf{$\bullet$} For \textbf{a negation of an input $x_i$}: $V$ accepts if $x_i=0$; otherwise, $V$ rejects. 
        }
    }
    \BlankLine
	\caption{A \QIPUL{} proof system for (uniform) $\SAC^1$.}
	\label[algorithm]{protocol:LOGCFL-in-QIPUL}
\end{algorithm}

    To establish the correctness of \Cref{protocol:LOGCFL-in-QIPUL}, we first observe that  the depth of the circuit $C$ is $O(\log n)$, implying that the size is polynomial in $n$. Consequently, each prover's selection can be represented by an $O(\log n)$-bit string. 
    This observation also implies that \Cref{protocol:LOGCFL-in-QIPUL} has $O(\log n)$ rounds, but the verifier's actions are of only three kinds. Therefore, the verifier's mapping is logspace computable. We now turn to the analysis required to complete the proof: 
    \begin{itemize}
        \item For \textit{yes} instances, for any choices made by the verifier $V$ at the AND gates, there exist corresponding choices at the OR gates that lead to the circuit $C$ to accept. Therefore, the prover $P$ has a winning strategy for all verifier choices, ensuring that the verifier $V$ accepts with certainty. 
        \item For \textit{no} instances, since the computational paths in the circuit $C$ do not interfere with each other, it suffices to establish soundness against classical messages. Note that certain verifier choices will cause $V$ to reject. Given the $O(\log n)$ depth of the circuit $C$, $V$ makes one choice out of two at each AND gate, resulting in a rejection probability of at least $2^{-O(\log {n})} = 1/p(n)$ for some polynomial $p(n)$. Therefore, the verifier accepts with probability at most $1-1/p(n)$. \qedhere
    \end{itemize}
\end{proof}

\section{Constant-message space-bounded quantum interactive proofs}
\label{sec:cosnt-message-QIPL}

In this section, we investigate space-bounded quantum interactive proof systems with a \textit{constant} number of messages. Building on the definitions in \Cref{subsec:QIPL-definition}, we define the classes \QIPLconst{} and $\QMAML$ with \textit{constant} promise gap as follows: 
\begin{equation}
    \label{eq:QIPLconst-defs}
    \QIPLconst \coloneq \cup_{m \leq O(1)} \QIPUL_m[2/3,1/3] \text{ and } \QMAML \coloneq \QMAML[1,1/3].
\end{equation}
Here, the class \QMAML{} possesses \textit{public-coin} three-message space-bounded \textit{unitary} quantum interactive proofs, which is the space-bounded variant of the class \QMAM{} introduced in~\cite{MW05}. 

Importantly, the definitions in \Cref{eq:QIPLconst-defs} align with those in \Cref{subsec:QIPL-definition} without loss of generality. Specifically, the two notions of space-bounded quantum interactive proof systems, \QIPL{} and \QIPUL{}, coincide when the number of messages is constant:
\begin{remark}[The equivalence of $\QIPL_{O(1)}$ and $\QIPUL_{O(1)}$]
    \label{remark:QIPLconst-welldefined}
    In a $\QIPL_{O(1)}$ proof system (or its reversible generalization, $\QIPL^{\diamond}_{O(1)}$, see \Cref{remark:reversible-QIPL}), the number of turns (i.e., messages) is a constant. Consequently, the verifier's actions during the execution of the proof system introduce only a constant number of additional environment registers, each containing $O(\log{n})$ qubits. Therefore, a $\QIPL_{O(1)}$ proof system (even its reversible generalization) can be straightforwardly simulated by a $\QIPUL_{O(1)}$ proof system with the same parameters $m(n)$, $c(n)$, and $s(n)$. 
\end{remark}

\vspace{1em}
We establish the following upper bounds for \QIPUL{} and \QIPLconst{}. 
The first theorem, as detailed in \Cref{thm:QIPUL-in-P}, is obtained by combining \Cref{corr:QIPUL-in-QIPUL3SmallGap} and \Cref{lemma:QIPLconst-in-P}, where the former is a direct corollary of the parallelization.\footnote{Importantly, \Cref{thm:QIPUL-in-P} holds only when we define the class \QIPUL{} with the strong notion of uniformity for the verifier's mapping, as detailed in \Cref{subsec:QIPL-definition}. If the verifier's mapping satisfies only a weaker notion of uniformity, as specified in \Cref{footref:logspace-uniformity}, then the verifier's action in the resulting proof system may not be space-bounded. This is because its description might not be produced by a logspace deterministic Turing machine.\label{footref:QIPL-uniformity-in-parallelization}} It is noteworthy that the second inclusion in \Cref{thm:QIPUL-in-P} applies to the case where $m(n) \leq O(1)$ and $c(n) - s(n) \geq 1/\poly(n)$. 

\begin{theorem}[$\QIPUL \subseteq \Ptime$]
    \label{thm:QIPUL-in-P}
    Let $c(n)$, $s(n)$, and $m(n)$ be logspace-computable functions such that $0 \leq s(n) < c(n) \leq 1$, $c(n) \!-\! s(n) \geq 1/\poly(n)$, and $1 \leq m(n) \leq \poly(n)$. Then, it holds that
    \[ \QIPUL_m[c,s] \subseteq \QIPL_3\sbra*{1,1-\frac{1}{q(n)}} \subseteq \Ptime, \text{ where } q(n) \coloneqq \frac{2(m(n)+1)^2}{(c(n) - s(n))^2}. \]
\end{theorem}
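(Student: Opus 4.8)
The statement has two nested inclusions. First, \(\QIPUL_m[c,s] \subseteq \QIPL_3[1,1-1/q(n)]\). The plan is to invoke the parallelization machinery already built in the excerpt. Concretely, I would first reduce to perfect completeness via \Cref{thm:QIPL-basic-properties}\ref{thmitem:QIPL-perfect-completeness}, giving \(\QIPUL_m[c,s]\subseteq\QIPUL_{m+2}[1,1-(c-s)^2/2]\). Then I would apply the turn-halving lemma (\Cref{thm:QIPL-properties-informal}\ref{thmitem:QIPL-parallelization-informal}, i.e.\ \Cref{lemma:QIPL-halving-parallelization}) repeatedly: each application roughly halves the number of turns while transforming soundness \(s \mapsto (1+\sqrt{s})/2\). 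After \(O(\log m)\) applications the turn count drops to \(3\). The key bookkeeping step is to track how the soundness degrades: starting from \(s_0 = 1-(c-s)^2/2\) and iterating \(s_{i+1} = (1+\sqrt{s_i})/2\), the gap \(1-s_i\) shrinks by at most a constant factor per step (since \(1-(1+\sqrt{s})/2 = (1-\sqrt{s})/2 \geq (1-s)/4\)), so after \(\log_2 m\) iterations the final soundness is \(1 - \Omega\!\big((c-s)^2 / m^2\big)\). Choosing the constants to match, this yields soundness \(\leq 1 - 1/q(n)\) with \(q(n) = 2(m+1)^2/(c-s)^2\), which I should verify lines up exactly (possibly after absorbing lower-order factors into the choice of how the turn-halving lemma is stated/iterated; if the constant is slightly off, I would note that \Cref{thm:QIPL-basic-properties}\ref{thmitem:QIPL-error-reduction} lets us rescale). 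Crucially, the resulting verifier must be able to compute all \(m\) original verifier actions — this is exactly where the strong uniformity condition on \QIPUL{} (see \Cref{footref:QIPL-uniformity-in-parallelization}) is needed, so I would remark on that. Note the output lands in \QIPL_3 rather than \QIPUL_3 only because \Cref{lemma:QIPLconst-in-P} is phrased for \QIPL; by \Cref{remark:QIPLconst-welldefined} these coincide for constant messages anyway.

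For the second inclusion, \(\QIPL_3[1,1-1/q(n)] \subseteq \Ptime\), the plan is to cite the SDP formulation for \QIPL{} proof systems with a constant number of turns (\Cref{lemma:QIPL-first-SDP-formulation}, with the variables for a \(3\)-turn system collectively holding \(O(\log n)\) qubits since \(l = O(1)\)), so the SDP has polynomially-many variables of polynomial dimension. Because a constant-gap (here \(1/\poly(n)\)-gap) separation between \emph{yes} and \emph{no} instances translates into a separation in the optimal SDP value, an \(\varepsilon\)-approximate SDP solver (e.g.\ the ellipsoid-method–based solver cited as \cite{GM12}, running in deterministic polynomial time) with \(\varepsilon\) chosen smaller than \(1/(2q(n))\) decides membership. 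This is precisely \Cref{lemma:QIPLconst-in-P}, so the second inclusion is essentially a citation.

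The main obstacle is the soundness-tracking in the first inclusion: getting the constant in \(q(n) = 2(m+1)^2/(c-s)^2\) to come out \emph{exactly} right, rather than up to an unspecified constant, requires being careful about (i) the \(+2\) turns added for perfect completeness, (ii) the exact recursion \(s \mapsto (1+\sqrt{s})/2\) and how many times it is applied to go from \(m+2\) turns to \(3\), and (iii) the interplay with the factor \(1-(c-s)^2/2\) from \Cref{lemma:QIPL-perfect-completeness}. I expect that the cleanest route is to first establish the inclusion with soundness \(1 - c'/\poly(n)\) for \emph{some} constant \(c'\) and degree, then observe that one extra round of error reduction (\Cref{thm:QIPL-basic-properties}\ref{thmitem:QIPL-error-reduction}) or a direct re-derivation pins down the stated bound; alternatively, since the precise value of \(q(n)\) is immaterial for the \Ptime{} conclusion (any \(1/\poly(n)\) gap suffices), I would present the stated \(q(n)\) as the bound obtained from the natural bookkeeping and relegate the arithmetic verification to a short computation. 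A secondary subtlety worth flagging is the uniformity requirement: the parallelized verifier hardcodes the entire original verifier mapping, which is legitimate only under the strong (single-Turing-machine) uniformity of \Cref{subsec:QIPL-definition}, not the weaker per-circuit notion.
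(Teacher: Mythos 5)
Your proposal is correct and follows essentially the same route as the paper: the paper proves the first inclusion as \Cref{corr:QIPUL-in-QIPUL3SmallGap} by applying \Cref{lemma:QIPL-perfect-completeness} and then iterating the turn-halving lemma (deferring the exact soundness bookkeeping, which you carry out explicitly via $1-(1+\sqrt{s})/2 \geq (1-s)/4$, to the analysis in~\cite[Lemma 4.2]{KKMV09}), and the second inclusion is exactly \Cref{lemma:QIPLconst-in-P} via the SDP solver. Your remarks on the strong uniformity requirement and on \QIPL{} versus \QIPUL{} for constant messages match the paper's own caveats.
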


The second theorem, as stated in \Cref{thm:QIPLconst-in-NC}, is derived by combining \Cref{corr:QIPLconst-in-QIPL3} and \Cref{lemma:QIPL3-in-NC}. The class \NC{} captures the power of (logspace-uniform) classical poly-logarithmic depth computation using bounded fan-in gates. Specifically, \Cref{corr:QIPLconst-in-QIPL3} is a direct consequence of the parallelization, while the \NC{} containment (\Cref{lemma:QIPL3-in-NC}) follows directly from the celebrated $\QIP=\PSPACE$ result~\cite{JJUW11}.

\begin{theorem}[$\QIPLconst \subseteq \NC$]
    \label{thm:QIPLconst-in-NC}
    Let $c(n)$, $s(n)$, and $m(n)$ be logspace-computable functions such that $0 \leq s(n) < c(n) \leq 1$, $c(n) - s(n) \geq \Omega(1)$, and $1 \leq m(n) \leq O(1)$. Then, it holds that
    \[ \QIPLconst[c,s] = \QMAML \subseteq \NC. \]
\end{theorem}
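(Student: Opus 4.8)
The plan is to prove the two non-trivial inclusions $\QIPLconst[c,s] \subseteq \QMAML$ (this is \Cref{corr:QIPLconst-in-QIPL3}) and $\QMAML \subseteq \NC$ (this is \Cref{lemma:QIPL3-in-NC}); the reverse inclusion $\QMAML \subseteq \QIPLconst$ is immediate, since a three-message public-coin space-bounded \emph{unitary} interactive proof with perfect completeness and soundness $1/3$ is in particular an $O(1)$-message \QIPUL{} proof system, and $\QIPLconst = \QIPUL_{O(1)}$ by \Cref{remark:QIPLconst-welldefined}.

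For $\QIPLconst[c,s] \subseteq \QMAML$, I would first normalize the given $\QIPUL_{m_0}[c,s]$ proof system (with $m_0 = O(1)$ and $c-s \geq \Omega(1)$): apply \Cref{lemma:QIPL-perfect-completeness} and then error reduction via \Cref{lemma:QIPL-error-reduction} to obtain a $\QIPUL_{m_1}[1,2^{-k}]$ proof system with $m_1 = O(1)$ and $k$ an arbitrarily large constant — all parameters stay $O(1)$ because $m_0$ and $k$ are. Next I would invoke the turn-halving lemma, \Cref{thm:QIPL-properties-informal}\ref{thmitem:QIPL-parallelization-informal}, $O(\log m_1) = O(1)$ times; each application halves the message count, keeps completeness $1$, and maps soundness $s \mapsto (1+\sqrt{s})/2$, so after constantly many applications we reach a three-message proof system with perfect completeness and soundness a fixed constant strictly below $1$ (slightly above $1/2$). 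Since the verifier's only message in this protocol is a string of uniformly random bits, the protocol is public-coin, hence a $\QMAML$ proof system with a weak but constant gap (the space-bounded counterpart of $\QIP[3]=\QMAM$~\cite{MW05}); a final round of standard soundness amplification — $O(1)$-fold parallel repetition, which keeps the protocol three-message, public-coin, and perfectly complete while inflating the verifier's workspace by only a constant factor, so it remains on $O(\log n)$ qubits — brings the soundness down to $1/3$. A point to carry through carefully is uniformity: the verifier produced by turn-halving must reconstruct \emph{all} actions of the original verifier, which is exactly why the strong (single-Turing-machine) uniformity from \Cref{subsec:QIPL-definition} is required (cf.\ \Cref{footref:QIPL-uniformity-in-parallelization}).

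For $\QMAML \subseteq \NC$, I would run an exponentially down-scaled version of the $\QIP = \PSPACE$ algorithm of~\cite{JJUW11}. In a $\QMAML$ proof system the verifier acts on $O(\log n)$ qubits, so all Hilbert spaces involved have dimension $d = 2^{O(\log n)} = \poly(n)$, and by the strong uniformity condition explicit $\poly(n)\times\poly(n)$ matrix descriptions of the verifier's circuits and of the accepting projector are computable in deterministic logspace, hence in $\NC$. The optimal acceptance probability of the proof system is the value of the same semidefinite min-max program that~\cite{JJUW11} solve by the matrix multiplicative-weights method; I would run that method here. The crucial quantitative observation of the down-scaling is that the number of iterations is $O(\log(d)/\delta^2)$, where $\delta = \Omega(1)$ is half the promise gap, which for $d = \poly(n)$ is only $O(\log n)$ iterations — in contrast with the $\poly(n)$ iterations needed in the exponential-dimensional (time-bounded) setting, or in the $1/\poly$-gap regime governed by \Cref{thm:QIPUL-in-P}. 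Each iteration consists of a constant number of matrix products and partial traces together with one matrix exponential of a bounded-norm $\poly(n)\times\poly(n)$ matrix; approximating the exponential by a $\poly(n)$-term truncated Taylor series turns it into iterated products of $\poly(n)$-dimensional matrices, which is in $\NC^2$. Composing $O(\log n)$ such $\NC^2$ stages sequentially stays inside $\NC$ (concretely, around $\NC^3$), after which comparing the $\delta$-approximate value to the threshold $1/2$ decides the instance.

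The main obstacle is the down-scaled analysis of~\cite{JJUW11} in Step 2: one must verify that, for a constant gap, the multiplicative-weights iteration count is genuinely $O(\log n)$ rather than $\poly(n)$ — this is precisely the line separating \Cref{thm:QIPLconst-in-NC} from \Cref{thm:QIPUL-in-P} — that the per-iteration ``oracle'' step and the matrix-exponential approximation are both computable in $\NC$, and that the sequential composition of $O(\log n)$ logarithmic/poly-logarithmic-depth stages does not leave $\NC$. A secondary point requiring care is ensuring, in Step 1, that soundness amplification of the three-message public-coin proof system preserves the format and the $O(\log n)$ space bound, and that the strong-uniformity hypothesis survives all of the message-reducing transformations.
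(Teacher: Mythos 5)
Your overall route coincides with the paper's: normalize to perfect completeness, turn-halve a constant number of times down to three messages, convert to a public-coin ($\QMAML$) protocol, amplify soundness by $O(1)$-fold parallel repetition, and then decide the resulting $\poly(n)$-dimensional SDP with the exponentially down-scaled parallel solver of~\cite{JJUW11} (constant gap $\Rightarrow$ polylogarithmic parallel time). The reverse inclusion and the $\NC$ step are handled the same way as in \Cref{lemma:QIPL3-in-NC}, and your extra sequential error-reduction pass before turn-halving is harmless (the paper instead does parallel repetition after reaching three messages, as in \Cref{corr:QIPLconst-in-QIPL3}).

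There is, however, one genuine gap: the claim that the three-message protocol produced by iterated turn-halving is already public-coin. In the turn-halving protocol (\Cref{protocol:parallelization}), the verifier's middle message consists of the coin $b$ \emph{together with the contents of the message register} $\rho_{\widehat{\sfM}}$ (the state after applying $V_{m+1}$ when $b=0$, or the untouched $\rho_{\ttM_{m+1}}$ when $b=1$), so the protocol you reach is a $\QIPL_3$ proof system but not a public-coin one. The paper closes this with one further transformation (\Cref{protocol:single-coin-QMAML}): the prover's first message carries only the qubits destined for the verifier's private register $\sfW$, the verifier replies with the single coin $b$, and the message-register contents arrive only in the prover's last message; the verifier then applies $V_2$ or $V_1^{\dagger}$ accordingly. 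This is a ``single-coin'' variant of turn-halving and costs one more soundness degradation of the form $s \mapsto (1+\sqrt{s})/2$ (whence $1/16 \mapsto 5/8$ in \Cref{lemma:QIPL3-in-NC}), which you must budget for before the final parallel repetition. The omission also matters downstream: the SDP of~\cite{JJUW11} that you propose to solve in $\NC$ is formulated for \emph{single-coin} quantum Arthur--Merlin games, so the reduction to the $\QMAML^{\odot}$ form is what licenses that step, not merely membership in $\QMAML$. (A minor quantitative slip: after $t\geq 2$ turn-halvings the soundness is $1-\Theta((c-s)^2/m^2)$, not ``slightly above $1/2$''; this does not affect the argument since it remains a constant bounded away from $1$.)
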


Unlike standard quantum interactive proofs, constant-message space-bounded quantum interactive proofs with constant promise gap (\QIPLconst{}) are unlikely to be as computationally powerful as their polynomial-message counterparts (\QIPL{}). 
Furthermore, space-bounded quantum interactive proofs (\QIPL{}) appears to be more powerful than their unitary counterparts (\QIPUL{}). 
These distinctions align with the widely believed conjectured separations $\NC \subsetneq \Ptime \subsetneq \NP$.

\vspace{1em}
In the remainder of this section, we first demonstrate error reduction for \QIPLconst{} via parallel repetition in \Cref{subsec:parallel-error-reduction-QIPLconst}. Then, \Cref{subsec:parallelization} provides the parallelization technique for \QIPUL{} and \QIPLconst{}, drawing inspiration from~\cite{KKMV09}, with a focus on the turn-halving lemma (\Cref{lemma:QIPL-halving-parallelization}) and its corollaries. Next, we proceed to establish an upper bound for \QIPLconst{} with weak error bounds in \Cref{subsec:QIPLconst-weakErrorBound-in-P}. Finally, \Cref{subsec:NC-cointainment} presents the equivalence of \QIPLconst{} and \QMAML{} (\Cref{corr:QIPL3-eq-QMAML}) and the \NC{} containment (\Cref{lemma:QIPL3-in-NC}), using a simplified version of the turn-halving lemma. 

\subsection{Error reduction for \QIPLconst{} via parallel repetition}
\label{subsec:parallel-error-reduction-QIPLconst}

Beyond the sequential repetition approach presented in the proof of \Cref{lemma:QIPL-error-reduction}, another common method for error reduction is the parallel repetition of the original proof system $\protocol{P}{V}$. In this approach, $k$ pairs of provers and verifiers execute $\protocol{P}{V}$ in parallel, where all $k$ provers are independent only when they are honest. However, when adapting this approach for \QIPL{} (or \QIPUL{}), the message size in the parallelized protocol $\protocol{P'}{V'}$ becomes $O(k \log n)$, meaning $\protocol{P'}{V'}$ remains a \QIPL{} (or \QIPUL{}) proof system only when $k$ is constant.

Next, we provide the formal statement and its proof inspired by~\cite[Section 4.3]{VW16}: 

\begin{lemma}[Error reduction for \QIPLconst{} via parallel repetition]
    \label{lemma:QIPL-error-reduction-parallel}
    Let $c(n)$, $s(n)$, and $m(n)$ be logspace-computable functions such that $0 \leq s(n) < c(n) \leq 1$, $c(n) - s(n) \geq 1/\poly(n)$, and $1 \leq m(n) \leq O(1)$. For any constant $k(n)$, it holds that
    \[ 
    \QIPL_m[c,s] \subseteq \QIPL_{m}\big[c^k, s^k\big]. 
    \]
\end{lemma}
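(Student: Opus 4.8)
The plan is to run $k$ independent copies of the original $m$-turn proof system $\protocol{P}{V}$ in parallel, obtaining a new $m$-turn proof system $\protocol{P'}{V'}$ that accepts iff \emph{all} $k$ copies accept (AND-type parallel repetition). Since $m \le O(1)$ and each copy uses a message register of size $O(\log n)$ and a verifier private register of size $O(\log n)$, the parallelized verifier $V'$ uses registers $\sfM' = \sfM^{(1)} \otimes \cdots \otimes \sfM^{(k)}$ and $\sfW' = \sfW^{(1)} \otimes \cdots \otimes \sfW^{(k)}$, each of size $O(k \log n)$; because $k$ is constant this is still $O(\log n)$, so $V'$ is a legitimate logspace-computable (almost-)unitary quantum verifier. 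At turn $j$, $V'$ applies $V_j$ on each block $(\sfM^{(t)}, \sfW^{(t)})$ in parallel; the designated output qubit of $V'$ is a fresh qubit computed (via the AND of the $k$ original output qubits, uncomputing ancillae) just before the final measurement by $V'_{l+1}$. The uniformity requirement is met since the description of $V'$ is obtained from that of $V$ by a fixed logspace transformation.

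First I would verify completeness: for $x \in \calI_{\yes}$, let $P$ be an honest prover achieving acceptance probability $\ge c$ in $\protocol{P}{V}$, and let $P'$ run $k$ independent copies of $P$ on registers $\sfQ^{(1)}, \dots, \sfQ^{(k)}$. The $k$ copies are then in a product state at every turn, so the $k$ output qubits are independent and each is $1$ with probability $\ge c$; hence $V'$ accepts with probability $\ge c^k$. Second I would handle soundness: for $x \in \calI_{\no}$, fix an arbitrary prover $P'$ (not necessarily a product of single-copy provers — it may entangle the copies). The standard argument for parallel repetition of quantum interactive proofs with a single designated output qubit applies here: one writes the acceptance probability of $\protocol{P'}{V'}$ as $\Var$-style bound, or more directly, one observes that for each $t \in [k]$ the marginal strategy induced on copy $t$ (tracing out the other copies and their verifier/prover registers) is a valid prover strategy against $V$, so the probability that copy $t$'s output qubit reads $1$ is at most $s$; then applying the argument from~\cite[Section 4.3]{VW16} (the same reasoning used for \QIP{} parallel repetition, which only uses that verifier actions are unitary and that the copies do not interact through the verifier) yields that the probability all $k$ output qubits read $1$ is at most $s^k$.

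The main obstacle is the soundness step, because with an entangling prover the $k$ output events are not independent, so one cannot simply multiply marginal probabilities. The resolution is the observation that the verifier's actions act \emph{block-diagonally} across the $k$ copies and never mix them: the projector onto "all copies accept" factors as $\ketbra{1}{1}^{(1)} \otimes \cdots \otimes \ketbra{1}{1}^{(k)}$, and one can bound its expectation by a telescoping / hybrid argument over the copies, at each step replacing the true (entangled) state on copy $t$ by an optimal single-copy prover state, invoking $\omega(V) \le s$ on that copy. This is exactly the content of the \QIP{} parallel-repetition lemma in~\cite[Section 4.3]{VW16}, and since $V'$ here is unitary (by \Cref{remark:QIPLconst-welldefined} we work with $\QIPUL_{O(1)}$, so verifier actions are genuine unitaries) that argument transfers verbatim. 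The only genuinely new point to check is the space bookkeeping — that $k = O(1)$ keeps all registers within $O(\log n)$ and that computing the AND of $k$ output bits plus uncomputation fits in the verifier's workspace — which is immediate. Hence $\calI \in \QIPL_m[c^k, s^k]$, completing the proof.
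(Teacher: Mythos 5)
Your proposal is correct and follows essentially the same route as the paper: $k$-fold AND-type parallel repetition, the observation that constant $k$ keeps all registers within $O(\log n)$ so the combined verifier is still a legitimate logspace verifier, and a reduction of soundness to the perfect parallel repetition property of the SDP formulation from~\cite[Section 4.3]{VW16}. One small caveat: the mechanism behind that cited result is strong SDP duality (dual feasible solutions tensor together, yielding $\omega\rbra{V^{(1)}\otimes\cdots\otimes V^{(k)}} \le \omega\rbra{V^{(1)}}\cdots\omega\rbra{V^{(k)}}$), not a copy-by-copy hybrid argument replacing entangled states by optimal single-copy states -- such a hybrid on its own does not obviously control prover entanglement -- but since you invoke the lemma as a black box applied to the same SDP the paper uses, the proof stands.
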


\begin{proof} 
    For convenience, we prove the inclusion for $\QIPL^{\diamond}_{O(1)}$ proof systems, which serves as a reversible generalization of $\QIPL_{O(1)}$  (see \Cref{remark:reversible-QIPL}) and is equivalent to it (see \Cref{remark:QIPLconst-welldefined}). 
        
    For any $l$-round proof system $\protocol{P}{V}$ with completeness $c$ and soundness $s$, corresponding to a promise problem $\calI$ in $\QIPL^{\diamond}_m[c,s]$, the maximum acceptance probability of the verifier, denoted by $\omega(V)$, serves as the objective function in the SDP formulation provided in \Cref{lemma:QIPL-first-SDP-formulation}. See also \Cref{remark:QIPL-first-SDP-applicability} for the applicability of the SDP to $\QIPL^{\diamond}$ proof systems. 
    
    Now, consider a $k$-fold parallel repetition of $\protocol{P}{V}$, involving $k$ verifiers $V^{(i)} \coloneqq \big( V_1^{(i)}, \cdots, V_{l+1}^{(i)} \big)$ for $i \in [k]$. Let $V^{(1)} \!\otimes\! \cdots \!\otimes\! V^{(k)}$ be the combined verifier, obtained by executing $V^{(1)}, \cdots, V^{(k)}$ in parallel, with the output bit being the AND of the output bits of $V^{(1)},\cdots,V^{(k)}$. It follows that
    \[ \omega\rbra*{V^{(1)} \otimes \cdots \otimes V^{(k)}} \geq \omega\rbra*{V^{(1)}}  \cdots \omega\rbra*{V^{(k)}},\]
    since dishonest provers $P^{(1)}, \cdots, P^{(k)}$ may apply entangled actions. However, due to the strong duality of the SDP program for computing $\omega\rbra*{V^{(i)}}$ for each $i\in[k]$, the equality also holds: 
    \begin{equation}
        \label{eq:QIPL-parallel-repetation-pacc}
        \omega\rbra*{V^{(1)} \otimes \cdots \otimes V^{(k)}} = \omega\rbra*{V^{(1)}}  \cdots \omega\rbra*{V^{(k)}}.
    \end{equation}

    In particular, the SDP formulation specified in \Cref{lemma:QIPL-first-SDP-formulation} serves as the primal form of an SDP program for computing $\omega\rbra*{V^{(i)}}$, with the corresponding dual form obtainable similarly to~\cite[Figure 4.7]{VW16}. Following an argument analogous to Equation (4.49) through (4.53) in~\cite[Section 4.3]{VW16}, we arrive at \Cref{eq:QIPL-parallel-repetation-pacc}, with details omitted here. 
\end{proof}

\subsection{Parallelization via the turn-halving lemma}
\label{subsec:parallelization}

Our approach to do parallelization for $\QIPL_{O(1)}$ is inspired by~\cite[Section 4]{KKMV09}. We begin with the key lemma that halves the number of turns (i.e., messages) in the proof system:

\begin{lemma}[Turn-halving lemma]
    \label{lemma:QIPL-halving-parallelization}
    Let $s(n)$ and $m(n)$ be logspace-computable functions such that $0 \leq s(n) \leq 1$ and $1 \leq m(n) \leq \poly(n)$. Then, it holds that
    \[ \QIPUL_{4m+1}[1,s] \subseteq \QIPUL_{2m+1}\sbra*{1,(1+\sqrt{s})/2}. \]
\end{lemma}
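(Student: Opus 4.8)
The plan is to dequantize the parallelization of Kempe, Kobayashi, Matsumoto, and Vidick~\cite{KKMV09}: since every verifier action in a \QIPUL{} proof system is a \emph{reversible} unitary circuit on $O(\log n)$ qubits, the verifier can be asked to run the protocol either forward or backward, and this lets us trade a message count of $4m+1$ for $2m+1$ at the cost of degrading the soundness from $s$ to $(1+\sqrt s)/2$.

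I would start from a $(4m{+}1)$-turn \QIPUL{} proof system $\protocol{P}{V}$ with completeness $1$ and soundness $s$. As $4m{+}1$ is odd, the prover sends the first and last messages, so the protocol applies $P_1,V_1,P_2,V_2,\dots,P_{2m+1},V_{2m+1}$ in sequence, with each $P_j$ a unitary on $(\sfQ,\sfM)$ and each $V_j$ a unitary on $(\sfM,\sfW)$; the strong uniformity condition guarantees a single logspace machine that outputs all the $V_j$, so a new verifier can also apply any $V_j^{\dagger}$ (a unitary circuit run in reverse). The new $(2m{+}1)$-turn protocol $\protocol{P'}{V'}$ is: (1) the prover $P'$ sends a ``midpoint snapshot'' on registers $(\sfM,\sfW')$, where $\sfW'$ is a fresh copy of the verifier's workspace, claiming the state of $(\sfM,\sfW)$ after the $(2m{+}1)$-st turn of $\protocol{P}{V}$ (keeping a purifying register $\sfQ$ on its side); (2) the verifier $V'$ installs $\sfW'$ as its private register and flips a fair coin $b$; (3) if $b=0$, $V'$ continues the interaction exactly as $V$ would from turn $2m{+}2$ onward --- applying $V_{m+1},\dots,V_{2m+1}$ interleaved with the claimed $P_{m+2},\dots,P_{2m+1}$ --- and accepts iff the original final measurement outputs $1$; (4) if $b=1$, $V'$ first returns $\sfM$ to the prover and then runs $\protocol{P}{V}$ in reverse over the first half --- applying $V_m^{\dagger},\dots,V_1^{\dagger}$ interleaved with the claimed $P_{m+1}^{\dagger},\dots,P_2^{\dagger}$ --- and accepts iff its private register $\sfW$ has been returned to $\ket{\bar 0}$. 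A turn count confirms $2m{+}1$ messages in each branch: the forward branch replays the $2m$ second-half turns after the snapshot, and in the reverse branch the one extra ``return $\sfM$'' message is compensated because the verifier checks $\sfW$ right after its own last action $V_1^{\dagger}$, so no final prover message is needed. All registers remain of size $O(\log n)$, the coin is a $\ket{+}$-controlled operation, and the verifier's mapping stays logspace-computable.

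Completeness is immediate: on a yes-instance $P'$ sends the honest snapshot, after which the $b=0$ branch reproduces the perfectly complete original protocol, while the $b=1$ branch runs $V_j^{\dagger}$ and $P_j^{\dagger}$ against the honest $P'$, i.e., a unitary circuit composed with its inverse, returning $\sfW$ to $\ket{\bar 0}$ exactly; hence $\protocol{P'}{V'}$ accepts with certainty. For soundness, fix a no-instance and a cheating prover, and let $\sigma$ be the reduced state of its first message on $(\sfM,\sfW')$. Since $\sigma$ is committed before $b$ is revealed, the acceptance probability equals $\tfrac12\bigl(p_0(\sigma)+p_1(\sigma)\bigr)$, where $p_0(\sigma)$ is the optimal acceptance of the forward continuation started from $\sigma$ and $p_1(\sigma)$ the optimal probability that the reverse branch returns $\sfW$ to $\ket{\bar 0}$. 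The plan is: (i) by taking adjoints and applying Uhlmann's theorem together with \Cref{lemma:unitary-equivalent-in-purifications}, identify $p_1(\sigma)$ with $\max_{\tau}\F(\sigma,\tau)^2$ over the states $\tau$ on $(\sfM,\sfW)$ that arise as the honest-format reduced midpoint of some first-half prover strategy; (ii) given a reachable $\tau^{*}$ attaining this maximum, assemble a prover for the original $(4m{+}1)$-turn protocol that first realizes a purification of $\tau^{*}$ and then runs the forward continuation that is optimal for $\sigma$; since $\td(\sigma,\tau^{*})\le\sqrt{1-p_1(\sigma)}$, the acceptance of this composite prover is at least a computable function of $p_0(\sigma)$ and $p_1(\sigma)$, which the original soundness bounds by $s$, yielding a trade-off inequality between $p_0(\sigma)$ and $p_1(\sigma)$; (iii) maximize $\tfrac12(p_0+p_1)$ subject to that inequality (and to $p_0,p_1\le 1$) to obtain $(1+\sqrt s)/2$, exactly as in~\cite[Section 4]{KKMV09}.

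I expect step (iii), or more precisely the sharp form of the trade-off in step (ii), to be the main obstacle: a naive triangle-inequality estimate only gives a bound strictly worse than $(1+\sqrt s)/2$, so one must run the fidelity/trace-distance bookkeeping carefully (choosing the Uhlmann-optimal purification of $\tau^{*}$, and exploiting that both $p_0$ and $p_1$ are at most $1$) to land the stated constant. A second, purely structural, point to verify is that the reverse branch is a genuine \QIPUL{} interaction of the claimed length and that the set of reachable midpoints is characterized as claimed --- this is precisely where reversibility and dimension-preservation of the verifier's actions are used, and also why the argument does not transfer to \QIPL{}, whose almost-unitary verifier actions are neither reversible nor dimension-preserving. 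The space bounds, the coin simulation, and the uniformity of $V'$ are routine.
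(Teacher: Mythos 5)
Your protocol and completeness argument coincide with the paper's (\Cref{protocol:parallelization}): send the midpoint snapshot, install the $\sfW$-part as the verifier's private register, flip a coin, and run the second half forward or the first half in reverse, accepting on the original output qubit or on $\sfW=\ket{\bar 0}$ respectively. The turn count, the use of reversibility and dimension preservation, the coin simulation, and the uniformity discussion are all as in the paper.

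The gap is in the soundness trade-off, and you have correctly located it but not closed it. Your step (ii) — realize the Uhlmann-optimal reachable midpoint $\tau^{*}$ with $\F(\sigma,\tau^{*})^2=p_1$, then bound the composite prover's acceptance via $\td(\sigma,\tau^{*})\le\sqrt{1-p_1}$ — yields only $p_0\le s+\sqrt{1-p_1}$, and maximizing $\tfrac12(p_0+p_1)$ under that constraint gives $\tfrac{s}{2}+\tfrac58$, which for small $s$ is strictly worse than $\tfrac{1+\sqrt s}{2}$; no amount of care with that particular inequality recovers the stated constant. The ingredient you are missing is \Cref{lemma:sum-of-squared-fidelity}: $\F(\rho_0,\xi)^2+\F(\xi,\rho_1)^2\le 1+\F(\rho_0,\rho_1)$. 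The paper applies it with $\xi=\ketbra{\psi}{\psi}$ the (purified) committed snapshot, $\rho_0={U^{(0)}}^{\dagger}\ketbra{\Psi^{(0)}}{\Psi^{(0)}}U^{(0)}$ the pulled-back forward-accepting state, and $\rho_1={U^{(1)}}^{\dagger}\ketbra{\Psi^{(1)}}{\Psi^{(1)}}U^{(1)}$ the pulled-back backward-all-zero state, so that $\pacc=\tfrac12(p_0+p_1)\le\tfrac12\rbra*{1+\F(\rho_0,\rho_1)}$. Your composite-prover idea then enters only in bounding this cross term: $\F(\rho_0,\rho_1)=\abs*{\braket{\Psi^{(0)}}{U^{(0)}{U^{(1)}}^{\dagger}}{\Psi^{(1)}}}\le\norm*{\Pi_{\acc}U^{(0)}{U^{(1)}}^{\dagger}\ket{\Psi^{(1)}}}_2\le\sqrt{s}$, since ${U^{(1)}}^{\dagger}\ket{\Psi^{(1)}}$ starts from an all-zero private register and $U^{(0)}{U^{(1)}}^{\dagger}$ is a legitimate run of the original protocol. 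In short, the comparison must be made at the level of the two reference states against the common committed state via the sum-of-squared-fidelities lemma, not by perturbing the forward acceptance probability by a trace distance; as written, your step (iii) cannot land $(1+\sqrt s)/2$.
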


\begin{remark}[Limitations on the turn-halving lemma]
    \label{remark:turn-havling-limitations}
    The parallelization technique described in~\cite[Section 4]{KKMV09} requires the verifier's actions to be \textit{reversible}. This requirement implies that these actions must be implemented using either unitary or isometric quantum circuits. Therefore, it is straightforward to extend \Cref{lemma:QIPL-halving-parallelization} to $\QIPL^{\diamond}$, a reversible generalization of $\QIPL{}$. However, this extended version can be applied recursively at most a \textit{constant} number of times; otherwise, the new verifier in the resulting proof system would no longer be space-bounded. Specifically, if the extended version is applied $\omega(1)$ times, it will introduce $2^{\omega(1)}$ environment registers, each containing $O(\log n)$ qubits, during a single verifier action. 
\end{remark}

\vspace{1em}
Before presenting the proof, we provide two corollaries of the turn-halving lemma (\Cref{lemma:QIPL-halving-parallelization}) with their proofs to illuminate its applications, as stated in \Cref{corr:QIPLconst-in-QIPL3,corr:QIPUL-in-QIPUL3SmallGap}. 

\begin{corollary}[$\QIPLconst \subseteq \QIPL_3$]
    \label{corr:QIPLconst-in-QIPL3}
    Let $c(n)$, $s(n)$, and $m(n)$ be logspace-computable functions with $0 \leq s(n) < c(n) \leq 1$, $c(n) - s(n) \geq \Omega(1)$, and $3 \leq m(n) \leq O(1)$. Then, it holds that 
    \[ \QIPL_m[c,s] \subseteq \QIPL_3[1,1/16]. \]
\end{corollary}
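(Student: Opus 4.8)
The plan is to chain together three facts already established: closure under perfect completeness (\Cref{lemma:QIPL-perfect-completeness}), the turn-halving lemma (\Cref{lemma:QIPL-halving-parallelization}), and error reduction via parallel repetition (\Cref{lemma:QIPL-error-reduction-parallel}); the last of these is usable here precisely because both $m$ and $1/(c-s)$ are constants, so the repetition factor stays $O(1)$ and the message length stays $O(\log n)$. Throughout I work inside the constant-turn classes and freely use the equivalences $\QIPL_{O(1)} = \QIPUL_{O(1)} = \QIPL^{\diamond}_{O(1)}$ from \Cref{remark:QIPLconst-welldefined}, so that the (reversible) turn-halving lemma applies and the final answer still lands in $\QIPL_3$.

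First I would move to the unitary, perfect-completeness setting. Given a promise problem in $\QIPL_m[c,s]$ with $3 \le m \le O(1)$ and $c-s \ge \Omega(1)$, \Cref{remark:QIPLconst-welldefined} lets us regard it as $\QIPUL_m[c,s]$, and the $\QIPUL$ half of \Cref{lemma:QIPL-perfect-completeness} gives a $\QIPUL_{m+2}[1,s_1]$ proof system with $s_1 \coloneqq 1-(c-s)^2/2$, a constant strictly below $1$. Next, pad with a constant number of dummy rounds (as in \Cref{remark:restrict-measurement-times}: the honest prover does nothing extra and the verifier's acceptance condition is untouched, so completeness and soundness are preserved exactly) so that the turn count equals $2^{t+2}+1 = 4\cdot 2^{t}+1$ for the smallest constant $t$ with $2^{t+2}+1 \ge m+2$. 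This is exactly the shape from which iterated halving terminates at $3$: one application of \Cref{lemma:QIPL-halving-parallelization} sends a $(4m'+1)$-turn system to a $(2m'+1)$-turn one, and $2^{t+2}+1 \mapsto 2^{t+1}+1 \mapsto \cdots \mapsto 2^{2}+1 = 5 \mapsto 3$, so exactly $t+1 = O(1)$ halvings suffice.

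Then I would apply the turn-halving lemma these $t+1$ times (also recalling \Cref{remark:turn-havling-limitations}, which is why we need the verifier unitary and the number of iterations constant). The turn count collapses to $3$, while the soundness evolves under $s \mapsto (1+\sqrt{s})/2$; since that map sends $[0,1)$ into $[1/2,1)$, after $O(1)$ applications the soundness is still a constant $s_2 < 1$. One bookkeeping point to record: each halving composes the logspace machines describing the (constantly many) verifier actions, and a constant composition of logspace machines is logspace, so the strong uniformity of the verifier's mapping — needed for the resulting verifier to be space-bounded — is preserved, as is the case for the dummy-round padding. We now have a $\QIPUL_3[1,s_2] \subseteq \QIPL_3[1,s_2]$ system with $s_2<1$ constant. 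Finally, a single round of parallel repetition finishes the job: applying \Cref{lemma:QIPL-error-reduction-parallel} with a constant $k$ chosen so that $s_2^{k}\le 1/16$ gives $\QIPL_3[1,s_2]\subseteq \QIPL_3[1,s_2^{k}]\subseteq \QIPL_3[1,1/16]$, with perfect completeness preserved since $1^{k}=1$. Composing all inclusions yields $\QIPL_m[c,s]\subseteq \QIPL_3[1,1/16]$.

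The only genuinely delicate part is the turn-count bookkeeping in the third step — ensuring the exact $(4m'+1)$-turn precondition of \Cref{lemma:QIPL-halving-parallelization} is met at every iteration while keeping the number of halvings, the amount of padding, and the repetition factor $k$ all $O(1)$. Each of these is forced to be constant by the hypotheses $m=O(1)$ and $c-s=\Omega(1)$ (the number of halvings is $O(\log m)$, the padding is $O(m)$, and $k$ depends only on the fixed constant $s_2$), so no blow-up in space or turn count occurs and there is no circularity.
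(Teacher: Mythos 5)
Your proposal is correct and follows essentially the same route as the paper: perfect completeness first, then padding the turn count to the form $2^{l}+1$ and iterating the turn-halving lemma a constant number of times down to three messages, and finally a constant-fold parallel repetition to push the soundness below $1/16$. The only cosmetic difference is that the paper records the explicit post-halving soundness $1-(c-s)^2/(2(m+1)^2)$ by citing the parameter analysis of KKMV09, whereas you track the map $s\mapsto(1+\sqrt{s})/2$ directly and only note that a constant number of iterations keeps the soundness a constant below $1$ — which suffices here.
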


\begin{proof}
    It suffices to prove the following inclusions: 
    \begin{equation}
        \label{eq:QIPLconst-in-QIPL3}
        \QIPL_m[c,s] \subseteq \QIPL_{m+2}\sbra*{1,1-\frac{(c-s)^2}{2}} \subseteq \QIPL_{3}\sbra*{1,1-\frac{(c-s)^2}{2(m+1)^2}} \subseteq \QIPL_{3}\sbra*{1,\frac{1}{16}}.
    \end{equation}

    The first inclusion in \Cref{eq:QIPLconst-in-QIPL3} follows directly from \Cref{lemma:QIPL-perfect-completeness}. 
    To show the last inclusion in \Cref{eq:QIPLconst-in-QIPL3}, we consider an $r$-fold (AND-type) parallel repetition of this three-message \QIPL{} proof system. 
    By applying \Cref{lemma:QIPL-error-reduction-parallel} with $r = O\rbra*{k/\log\frac{1}{1-(c-s)^2/(2(m+1)^2)}}$, we derive the following inclusions:
    \[\QIPL_{3}\sbra*{1,1-\frac{(c-s)^2}{2(m+1)^2}} \subseteq \QIPL_{3}\sbra*{1,\rbra*{1-\frac{(c-s)^2}{2(m+1)^2}}^r} \subseteq \QIPL_{3}\sbra*{1,\frac{1}{16}}.\]

    It remains to show the second inclusion in \Cref{eq:QIPLconst-in-QIPL3}. 
    By repeatedly applying the turn-halving lemma (\Cref{lemma:QIPL-halving-parallelization}) $l$ times, where $l$ satisfies $2^l + 1 \leq m+2 \leq 2^{l+1}+1$, we obtain:
    \begin{equation}
        \label{eq:QIPLm-in-QIPL3}
        \QIPL_{m+2} \sbra*{1,1-\frac{(c-s)^2}{2}} \subseteq \QIPL_{2^{l+1}+1} \sbra*{1,1-\frac{(c-s)^2}{2}} \subseteq \QIPL_{3}\sbra*{1,1-\frac{(c-s)^2}{2(m+1)^2}}.
    \end{equation}
    Here, in the last inclusion, the reasoning behind the parameters -- particularly $m(n)$, $c(n)$, and $s(n)$ -- follows directly from the proof of~\cite[Lemma 4.2]{KKMV09}, so we omit the details. 

    Let $\protocol{P}{V}$ be the original proof system. For the resulting proof system $\protocol{P'}{V'}$, we now need to establish that the verifier's actions are space-bounded unitary circuits, and that the verifier's mapping is logspace computable. 
    This follows because (1) an operation depending on $r$ random coins can be simulated by applying a corresponding unitary controlled by $\ket{+}^{\otimes r}$, meaning that simulating $l$ random coins in all of the verifier's actions requires $l$ ancillary qubits; and (2) since a logspace deterministic Turing machine (DTM) can produce the description of all verifier's actions in $\protocol{P}{V}$, there is another logspace DTM which can produce the description of each verifier's action in $\protocol{P'}{V'}$. 
\end{proof}

\begin{corollary}[\QIPUL{} is parallelized to three messages with weaker error bounds]
    \label{corr:QIPUL-in-QIPUL3SmallGap}
    Let $c(n)$, $s(n)$, and $m(n)$ be logspace-computable functions with $0 \leq s(n) < c(n) \leq 1$, $c(n) - s(n) \geq 1/\poly(n)$, and $3 \leq m(n) \leq \poly(n)$. Then, it holds that 
    \[\QIPUL_m[c,s] \subseteq \QIPUL_3\sbra*{1,1-\frac{1}{q(n)}} \text{ where } q(n) \coloneqq \frac{2(m(n)+1)^2}{(c(n)-s(n))^2}.\] 
\end{corollary}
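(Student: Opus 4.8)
The plan is to run exactly the chain of inclusions used in the proof of \Cref{corr:QIPLconst-in-QIPL3}, the only new point being that in the unitary model the turn-halving lemma survives being iterated a super-constant — indeed $O(\log n)$ — number of times. First I would apply \Cref{lemma:QIPL-perfect-completeness} to get
\[ \QIPUL_m[c,s] \subseteq \QIPUL_{m+2}\sbra*{1, 1-(c-s)^2/2}, \]
and then pad this system with dummy rounds (the prover sends the all-zero message and the verifier does nothing) to raise the message count to $2^{l+1}+1$, where $l \coloneqq \floor{\log(m+1)}$ is chosen so that $2^l+1 \le m+2 \le 2^{l+1}+1$; padding changes neither completeness nor soundness, and $l = O(\log m) = O(\log n)$ since $m \le \poly(n)$.

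Next I would apply the turn-halving lemma (\Cref{lemma:QIPL-halving-parallelization}) $l$ times, driving the message count through $2^{l+1}+1 \to 2^l+1 \to \cdots \to 3$. Under one application a soundness of $1-\delta$ becomes $(1+\sqrt{1-\delta})/2 = 1 - \tfrac{1}{2}(1-\sqrt{1-\delta})$; since $1-\sqrt{1-\delta} = \delta/(1+\sqrt{1-\delta}) \ge \delta/2$, the residual gap shrinks by a factor of at most $4$ per application. Hence, starting from the gap $\delta_0 \coloneqq (c-s)^2/2$, after $l$ applications the gap is at least $\delta_0/4^l \ge \delta_0/(m+1)^2$, using $2^l \le m+1$. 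This yields a three-message system with perfect completeness and soundness at most $1 - (c-s)^2/(2(m+1)^2) = 1 - 1/q(n)$, which is exactly the claimed bound; the parameter bookkeeping is identical to that in \Cref{eq:QIPLm-in-QIPL3} and \cite[Lemma 4.2]{KKMV09}.

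The step requiring the most care — and the one that distinguishes \QIPUL{} from the reversible generalization $\QIPL^{\diamond}$ discussed in \Cref{remark:turn-havling-limitations} — is verifying that the final verifier is still a space-bounded \emph{unitary} circuit family whose mapping is logspace-computable under the strong uniformity convention of \Cref{subsec:QIPL-definition}. Here I would argue as in the closing paragraph of the proof of \Cref{corr:QIPLconst-in-QIPL3}: because unitary verifier actions are dimension-preserving and introduce no environment registers, each application of the turn-halving lemma contributes only a single fresh random coin, which the new verifier simulates by a unitary controlled on one ancilla $\ket{+}$; hence $l = O(\log n)$ iterations cost only $O(\log n)$ additional ancilla qubits in total, and the verifier still acts on $O(\log n)$ qubits overall. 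For uniformity, a single deterministic logspace Turing machine that outputs the descriptions of all of the original verifier's actions can be composed recursively through the $l$ levels of the construction to output the description of every action of the new verifier — precisely the point at which the weaker notion of uniformity would break, cf. \Cref{footref:QIPL-uniformity-in-parallelization}. I expect this space-and-uniformity verification to be the only genuine obstacle; everything else is the routine chain of inclusions above.
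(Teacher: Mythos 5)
Your proposal is correct and follows essentially the same route as the paper: perfect completeness via \Cref{lemma:QIPL-perfect-completeness}, padding to $2^{l+1}+1$ messages, $l = O(\log n)$ applications of the turn-halving lemma with the gap shrinking by at most a factor of $4$ per application, and the same space/uniformity argument for the final unitary verifier. Your explicit computation $1-\sqrt{1-\delta} \ge \delta/2$ correctly fills in the parameter bookkeeping that the paper delegates to \cite[Lemma 4.2]{KKMV09}.
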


\begin{proof}
    It suffices to prove the following inclusions: 
    \begin{equation}
        \label{eq:QIPUL-in-QIPUL3}
        \QIPUL_m[c,s] \subseteq \QIPUL_{m+2}\sbra*{1,1-\frac{(c-s)^2}{2}} \subseteq \QIPUL_{3}\sbra*{1,1-\frac{(c-s)^2}{2(m+1)^2}}.
    \end{equation}

    The first inclusion in \Cref{eq:QIPUL-in-QIPUL3} follows directly from \Cref{lemma:QIPL-perfect-completeness}. The second inclusion in \Cref{eq:QIPUL-in-QIPUL3} is achieved using a similar approach as that used to prove \Cref{eq:QIPLm-in-QIPL3} in \Cref{corr:QIPLconst-in-QIPL3}. 
    Finally, we need to show that the verifier's actions in the resulting proof system are space-bounded unitary circuits, and that the verifier's mapping is logspace computable. By noticing that $l = O(\log n)$, we can achieve this using reasoning analogous to the proof of \Cref{corr:QIPLconst-in-QIPL3}, and we omit the details for brevity. 
\end{proof}

\subsubsection{Proof of the turn-halving lemma}
Next, we proceed with the detailed proof of the turn-halving lemma (\Cref{lemma:QIPL-halving-parallelization}):

\begin{proof}[Proof of \Cref{lemma:QIPL-halving-parallelization}]
    Our proof strategy is inspired by the proof of~\cite[Lemma 4.1]{KKMV09}. 
    For any $(4m+1)$-message \QIPUL{} proof system $\protocol{P}{V}$ with completeness $c$ and soundness $s$, corresponding to a promise problem $\calI \in \QIPUL{}$, we can construct a new $(2m+1)$-message proof system $\protocol{\hatP}{\hatV}$.
    We use the notations specified in \Cref{fig:QIPL-odd} to denote the  snapshot states on registers $\sfQ$, $\sfM$, and $\sfW$ (resp., $\widehat{\sfQ}$, $\widehat{\sfM}$, and $\widehat{\sfW}$) during the execution of $\protocol{P}{V}$ (resp., $\protocol{\hatP}{\hatV}$), with slight adjustments for convenience compared to \Cref{fig:QIPL-even} in \Cref{subsec:SDP-formulations-upper-bounds}. 

\begin{figure}[ht!]
    \centering
    \includegraphics[width=\textwidth]{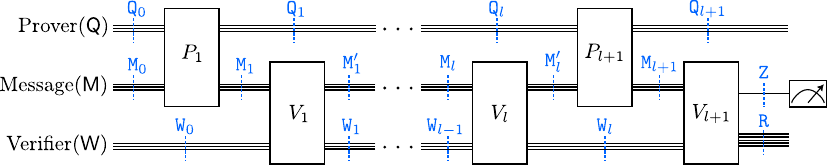}
    \caption{A $(2l+1)$-turn space-bounded unitary quantum interactive proof system.}
    \label{fig:QIPL-odd}
\end{figure}

    We now describe the new proof system $\protocol{\hatP}{\hatV}$ in an intuitive manner: the verifier $\hatV$ first receives the snapshot state $\rho_{\ttM_{m+1}\ttW_{m}}$, which corresponds to the state after the prover $P$ sent the $(2m+1)$-st message in $\protocol{P}{V}$. The verifier $\hatV$ then executes $\protocol{P}{V}$ either forward or backward from the given snapshot, with equal probability. In the forward execution, $\hatV$ accepts if $V$ accepts; while in the backward execution, $\hatV$ accepts if $\widehat{\ttW}_0$ contains the all-zero state.\footnote{Since the proof system $\calA$ begins with the prover $P$, the verifier $\hatV$ does not need to measure $\widehat{\ttM}_0$.} The detailed proof system $\protocol{\hatP}{\hatV}$ is presented in \Cref{protocol:parallelization}. 
    
\begin{algorithm}[ht!]
    \SetAlgorithmName{Protocol}
    \SetEndCharOfAlgoLine{.}
    \SetAlgoVlined
    \setlength{\parskip}{5pt}
    \SetKwFor{While}{}{:}{}
    \SetKwFor{For}{For}{:}{}
    \SetKwInOut{Parameter}{Parameters}

    \textbf{1.} The verifier $\hatV$ receives the snapshot state $\rho_{m+1} =\rho_{\ttM_{m+1}\ttW_{m}}$ from the prover $\hatP$, and then transfers the qubits corresponding to $\ttW_{m}$ in $\rho_{m+1}$ to its private register $\widehat{\sfW}$\;

    \textbf{2.} \While{\textnormal{The verifier $\hatV$ chooses $b \in \binset$ uniformly at random, and executes the original proof system $\protocol{P}{V}$ either forward (if $b=0$) or backward (if $b=1$)}}{
        \textbf{$\bullet$} \While{\textsc{Forward execution} \textnormal{of} $\protocol{P}{V}$ $(b=0)$}{
            \textbf{2.1} $\hatV$ applies $V_{m+1}$ to $(\widehat{\sfM}, \widehat{\sfW})$, and then sends $b$ and $\rho_{\widehat{\sfM}}$ to $\hatP$\;
            \textbf{2.2} \For{$j \leftarrow m+2$ \KwTo $2m$}{
                $\hatV$ receives $\rho_j = \rho_{\ttM_{j}}$ from $\hatP$, applies $V_{j}$ on $(\widehat{\sfM}, \widehat{\sfW})$, and sends $\rho_{\widehat{\sfM}}$ to $\hatV$\;
            }
            \textbf{2.3} $\hatV$ receives $\rho_{2m+1} = \rho_{\ttM_{2m+1}}$ from $\hatP$, applies $V_{2m+1}$ on $(\widehat{\sfM}, \widehat{\sfW})$. $\hatV$ accepts if $(\widehat{\sfM}, \widehat{\sfW})$ contains an accepting state of $\protocol{P}{V}$; otherwise, it rejects\;
        }
        \textbf{$\bullet$} \While{\textsc{Backward execution} \textnormal{of} $\protocol{P}{V}$ $(b=1)$}{
            \textbf{2.a} $\hatV$ sends $b$ and $\rho_{\widehat{\sfM}}=\rho_{\ttM_{m+1}}$ to $\hatP$\;
            \textbf{2.b} \For{$j \leftarrow m$ \KwTo $2$}{
                $\hatV$ receives $\rho_j = \rho_{\ttM_{j}}$ from $\hatP$, applies $V^{\dagger}_{j}$ on $(\widehat{\sfM}, \widehat{\sfW})$, and sends $\rho_{\widehat{\sfM}}$ to $\hatV$\;
            }
            \textbf{2.c} $\hatV$ receives $\rho_{1} = \rho_{\ttM_{1}}$ from $\hatP$, applies $V^{\dagger}_{1}$ on $(\widehat{\sfM}, \widehat{\sfW})$. $\hatV$ accepts if $\widehat{\sfW}$ contains the all-zero state; otherwise, it rejects\;
        }
    }
    \BlankLine
	\caption{A \QIPL{} proof system for halving the number of messages in $\protocol{P}{V}$.}
	\label[algorithm]{protocol:parallelization}
\end{algorithm}

    \vspace{1em}
    It remains to establish the correctness of the proof system $\protocol{\hatP}{\hatV}$. Since the verifier $V$'s mapping (of $\protocol{P}{V}$) is logspace computable, and in each turn, the verifier $\hatV$'s action corresponds to one of two possible actions of $V$ (depending on $b$),\footnote{The random bit $b$ can be simulated by a fresh qubit and a pinching intermediate measurement, following an argument similar to the proof of \Cref{thm:NP-in-QIPL} concerning Step 1 in \Cref{protocol:QIPL-3SAT-protocol}}, the verifier $\hatV$'s mapping is likewise logspace computable. Moreover, we need the following analyses: 
    \begin{itemize}[leftmargin=1.5em]
        \item For \textit{yes} instances, an honest prover $\hatP$ can prepare the pure state $\ket{\psi_{\ttQ_{m+1}\ttM_{m+1}\ttW_m}}$, which corresponds to the state in $(\sfQ,\sfM,\sfW)$ after the $(2m+1)$-st turn in $\protocol{P}{V}$. Depending on the value of $b$, the prover $\hatP$ then applies the corresponding prover's actions $P_j$ (if $b=0$) or $P^{\dagger}_j$ (if $b=1$) in $\protocol{P}{V}$ during the execution of Step 2 in \Cref{protocol:parallelization}. For any $x\in\calI_{\yes}$, as $(\protocol{P}{V})(x)$ accepts with certainty, it follows that $(\protocol{\hatP}{\hatV})(x)$ also accepts with certainty. 
        
        \item For \textit{no} instances, let $\ket{\psi}$ be the state in $(\widehat{\sfQ},\widehat{\sfM},\widehat{\sfW})$ just after the first turn in $\protocol{\widetilde{P}}{\hatV}$. Let $\widetilde{P}^{(b)}_j$ be the prover $\widetilde{P}$'s action, which is an arbitrary unitary transformation on $(\widehat{\sfQ},\widehat{\sfM})$, at the $(2j-1)$-st turn for $2 \leq j \leq m+1$. We can then define unitary transformations $U^{(0)}$ and $U^{(1)}$ corresponding to the forward and backward execution of $\protocol{P}{V}$, respectively: 
        \begin{equation}
            \label{eq:parallelization-unitaries-def}
            U^{(0)} \coloneqq V_{2m+1} \widetilde{P}^{(0)}_{m+1} V_{2m} \cdots \widetilde{P}^{(0)}_2 \text{ and } U^{(1)} \coloneqq V_1^{\dagger} \widetilde{P}_{m+1} \cdots V_m^{\dagger} \widetilde{P}_2. 
        \end{equation}

        Based on \Cref{eq:parallelization-unitaries-def}, we define the snapshot state $\ket{\Psi^{(b)}}$ at Step 2.3 for $b=0$ and at Step 2.c for $b=1$, respectively, before the corresponding final measurement:
        \begin{align*}
            &\ket{\Psi^{(0)}} \coloneqq \tfrac{1}{\sqrt{\pacc^{(0)}}} \Pi_\acc^{(0)} U^{(0)} \ket{\psi}, \text{ where } \pacc^{(0)} \coloneqq \norm*{\Pi_\acc^{(0)} U^{(0)} \ket{\psi}}^2_2 \text{ and } \Pi_\acc^{(0)} \coloneqq \ketbra{1}{1}_{\Out} = \Pi_\acc,\\
            &\ket{\Psi^{(1)}} \coloneqq \tfrac{1}{\sqrt{\pacc^{(1)}}} \Pi_\acc^{(1)} U^{(1)} \ket{\psi}, \text{ where } \pacc^{(1)} \coloneqq \norm*{\Pi_\acc^{(1)} U^{(1)} \ket{\psi}}^2_2 \text{ and } \Pi_\acc^{(1)} \coloneqq \ketbra{\bar{0}}{\bar{0}}_{\widehat{\sfW}}.
        \end{align*}

        As a result, the acceptance probability $\pacc^{(b)}$ for $b\in\binset$ can be expressed as: 
        \begin{equation}
            \label{eq:parallelization-pacc}
            \forall b\in\binset,~\pacc^{(b)} = \tfrac{1}{\norm*{\Pi_\acc^{(b)} U^{(b)} \ket{\psi}}^2_2} \abs*{ \braket{\psi}{{U^{(b)}}^{\dagger} \Pi_\acc^{(b)} U^{(b)}}{\psi} }^2
            = \abs*{\braket{\psi}{{U^{(b)}}^{\dagger}}{\Psi^{(b)}}}^2.
        \end{equation}

        Note that the acceptance probability of the proof system $(\protocol{\widetilde{P}}{\widehat{V}})(x)$ for $x \in \calI_{\no}$ can be written as $\pacc = \frac{1}{2} \big( \pacc^{(0)} + \pacc^{(1)} \big)$. Substituting \Cref{eq:parallelization-pacc} into this equality, we obtain:
        \begin{align*}
            \pacc &= \frac{1}{2}\rbra*{\F\rbra*{{U^{(0)}}^{\dagger} \ketbra{\Psi^{(0)}}{\Psi^{(0)}} U^{(0)}, \ketbra{\psi}{\psi}}^2 + \F\rbra*{{U^{(1)}}^{\dagger} \ketbra{\Psi^{(1)}}{\Psi^{(1)}} U^{(1)}, \ketbra{\psi}{\psi}}^2}\\
            &\leq \frac{1}{2} \rbra*{1 + \F\rbra*{{U^{(0)}}^{\dagger} \ketbra{\Psi^{(0)}}{\Psi^{(0)}} U^{(0)}, {U^{(1)}}^{\dagger} \ketbra{\Psi^{(1)}}{\Psi^{(1)}} U^{(1)}}}\\
            &\leq \frac{1}{2} \rbra*{1 + \norm*{\Pi_\acc U^{(0)}{U^{(1)}}^{\dagger}\ket{\Psi^{(1)}}}_2}\\
            &\leq \frac{1}{2} (1+\sqrt{s}).
        \end{align*}
        Here, the second line follows from \Cref{lemma:sum-of-squared-fidelity}, and the last line is due to the fact that $\pacc = \norm*{\Pi_\acc U^{(0)}{U^{(1)}}^{\dagger}\ket{\Psi^{(1)}}}^2_2 \leq s$, as guaranteed by the soundness condition of $\protocol{P}{V}$. \qedhere
    \end{itemize}    
\end{proof}

\subsection{Weakness of \QIPLconst{} with weak error bounds}
\label{subsec:QIPLconst-weakErrorBound-in-P}

We demonstrate an upper bound for \QIPLconst{} with weak error bounds: 
\begin{lemma}[\QIPLconst{} is in \Ptime{}]
    \label{lemma:QIPLconst-in-P}
    Let $c(n)$, $s(n)$, and $m(n)$ be logspace-computable functions such that $0 \leq s(n) < c(n) \leq 1$, $c(n) - s(n) \geq 1/\poly(n)$, and $1 \leq m(n) \leq O(1)$, it holds that 
    \[\QIPL_m[c,s] \subseteq \Ptime.\]
\end{lemma}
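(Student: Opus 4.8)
The plan is to express the maximum acceptance probability $\omega(V)$ of a given $\QIPL_m[c,s]$ proof system as a semidefinite program (SDP) of polynomial size and to solve it approximately in deterministic polynomial time, reading off the decision from the promise gap. This is an exponentially down-scaled version of the SDP-based $\QIP$ upper bound~\cite{VW16,Watrous16tutorial}.

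First I would use $m(n) \le O(1)$ to reduce to a unitary verifier: by \Cref{remark:QIPLconst-welldefined}, a $\QIPL_m$ proof system can be simulated by a $\QIPUL_m$ proof system with the same parameters $m(n)$, $c(n)$, $s(n)$, since the constantly many environment registers introduced by the verifier's actions collectively hold only $O(\log n)$ qubits and can be deferred. Hence I may assume the verifier's actions $V_1,\dots,V_{l+1}$ are space-bounded unitary circuits, where the number of rounds $l(n) = O(1)$ (the odd-$m$ case being analogous).

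Next I would invoke \Cref{lemma:QIPL-first-SDP-formulation} together with \Cref{remark:QIPL-first-SDP-applicability}: discarding the environment registers yields an SDP whose optimal value is exactly $\omega(V)$ and whose variables are the snapshot density operators $\rho_{\ttM'_1\ttW_1},\dots,\rho_{\ttM'_l\ttW_l}$, each acting on $O(\log n)$ qubits. Since $l = O(1)$, this is a constant number of Hermitian matrices of dimension $\poly(n)$, and the constraints (the honest-verifier equalities $\rho_{\ttM_j\ttW_j} = V_j \rho_{\ttM'_{j-1}\ttW_{j-1}} V_j^{\dagger}$, the ``prover cannot touch $\sfW$'' partial-trace equalities, unit trace, and positive semidefiniteness) are linear matrix (in)equalities involving $\poly(n)$ scalar entries, with defining data (the entries of the $V_j$) computable to any $1/\poly(n)$ precision in deterministic polynomial time from the verifier's logspace-uniform description. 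So the SDP has polynomial size and polynomial-time-computable data. Its feasible region is nonempty, as the honest prover strategy gives a feasible point, and bounded, since each variable is a density operator. Therefore a standard SDP solver (e.g.~\cite{GM12}, via the ellipsoid method) computes a value $\widetilde{\omega}$ with $|\widetilde{\omega} - \omega(V)| \le \varepsilon$ in time $\poly(n,\log(1/\varepsilon))$. Taking $\varepsilon \coloneqq (c(n)-s(n))/3 \ge 1/\poly(n)$, for yes instances $\widetilde{\omega} \ge c - \varepsilon > (c+s)/2$ and for no instances $\widetilde{\omega} \le s + \varepsilon < (c+s)/2$, so the algorithm accepts iff $\widetilde{\omega} \ge (c(n)+s(n))/2$; this witnesses $\QIPL_m[c,s] \subseteq \Ptime$.

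The hard part will be the routine-but-delicate bit-complexity bookkeeping for the SDP solver: ensuring the separation oracle, the inner and outer ball radii, and the required precision all stay polynomial even though gates in $\{\Had,\CNOT,\T\}$ have algebraic (not rational) entries. I would handle this by first replacing each $V_j$ with a sufficiently fine rational approximation and noting that $\omega(V)$ depends Lipschitz-continuously on this data — a $1/\poly(n)$ perturbation of the circuit perturbs the optimum by $1/\poly(n)$, which is absorbed into $\varepsilon$. Everything else is a direct transcription of the $\QIP$ SDP argument to the polynomial-dimensional setting.
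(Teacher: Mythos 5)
Your proposal is correct and follows essentially the same route as the paper: formulate the maximum acceptance probability via the first SDP formulation (\Cref{lemma:QIPL-first-SDP-formulation}), observe that for $m = O(1)$ the variables collectively occupy only $O(\log n)$ qubits so the SDP has dimension $\poly(n)$, and solve it with a standard ellipsoid-method SDP solver to within $1/\poly(n)$ accuracy, deciding by the promise gap. The only cosmetic difference is that you first pass to a unitary verifier via \Cref{remark:QIPLconst-welldefined} and drop the environment registers, whereas the paper keeps them in the SDP (they still contribute only $O(\log n)$ qubits when $l = O(1)$); your extra remarks on bit-complexity and rational approximation of the gates are sensible bookkeeping the paper leaves implicit.
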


Our approach parallels the proof of $\QIP \subseteq \EXP$, as outlined in~\cite[Page 31]{Watrous16tutorial} and originally established in~\cite[Section 6]{KW00}. Specifically, the SDP program for characterizing constant-turn \QIPL{} proof systems, as detailed in \Cref{lemma:QIPL-first-SDP-formulation}, has a dimension of $\poly(n)$. Therefore, we conclude a deterministic polynomial-time algorithm using a standard SDP solver. 

\begin{proof}[Proof of \Cref{lemma:QIPLconst-in-P}]
    For any $m$-turn proof system $\protocol{P}{V}$, with completeness $c$, soundness $s$, and $m$ being even, which corresponds to a promise problem $\calI$ in $\QIPL_m[c,s]$, we can utilize \Cref{lemma:QIPL-first-SDP-formulation} to obtain an SDP program, as specified in \Cref{eq:QIPL-first-SDP}.
    This SDP program maximizes the verifier's maximum acceptance probability $\omega(V)$ over all choices of quantum states (variables) $\rho_{\ttM'_1\ttW_1\sfE_1}, \cdots, \rho_{\ttM'_l\ttW_l\sfE_1\cdots\sfE_l}$, where the number of rounds $l \coloneqq m/2$. 
    Since the number of turns $m(n) \leq O(1)$, the variables in this SDP collectively hold $O(\log{n})$ qubits. Thus, we can compute a description of this SDP program in deterministic polynomial time. 
    
    Next, consider the Frobenius norm defined as $\norm{X}_{\rm F} \coloneqq \sqrt{\Tr\rbra*{X^\dagger X}}$, and let $\{\sigma_i(X)\}$ be the singular values of a square matrix $X$. We then have the following: 
    \begin{align*}
        \norm{ \rho_{\ttM'_1\ttW_1\sfE_1} \otimes \cdots \otimes \rho_{\ttM'_l\ttW_l\sfE_1\cdots\sfE_l} }_{\rm F} 
        &=\sqrt{\Tr\rbra*{\rho_{\ttM'_1\ttW_1\sfE_1}^2 \otimes \cdots \otimes \rho_{\ttM'_l\ttW_l\sfE_1\cdots\sfE_l}^2}}\\
        &=\sqrt{ \sum_{i=1}^D \sigma^2_i\rbra*{ \rho_{\ttM'_1\ttW_1\sfE_1} \otimes \cdots \otimes \rho_{\ttM'_l\ttW_l\sfE_1\cdots\sfE_l} } }\\
        &\leq \sqrt{D}. 
    \end{align*}
    Here, $D$ represents the dimension of $\rho_{\ttM'_1\ttW_1\sfE_1} \otimes \cdots \otimes \rho_{\ttM'_l\ttW_l\sfE_1\cdots\sfE_l}$, which is bounded by $2^{O(\log n)}$, indicating that it is in $\poly(n)$. The last line follows from the fact that all singular values of the density matrix $\rho_{\ttM'_1\ttW_1\sfE_1} \otimes \cdots \otimes \rho_{\ttM'_l\ttW_l\sfE_1\cdots\sfE_l}$ are at most $1$. 
    
    As a consequence, by employing the standard SDP solver based on the ellipsoid method  (e.g.,~\cite[Theorem 2.6.1]{GM12}; see also~\cite[Chapter 3]{GLS93}), we obtain an  algorithm for approximately solving the SDP program in \Cref{eq:QIPL-first-SDP}, ensuring that the condition $\omega(V) \geq c(n)$ is satisfied. This algorithm runs in deterministic time $\poly(D) \cdot \polylog\big(\sqrt{D}/\varepsilon\big)$, or expressed as $\poly(D,\log(1/\varepsilon))$, outputting either an $\varepsilon$-approximate feasible solution $\hat{X}$ or a certificate indicating that no such solution exists. Particularly, the error parameter $\varepsilon(n)$ ensures that $\norm{\hat{X} - X}_{\rm F} \leq \varepsilon(n)$ for some feasible solution $X$, with $\omega(V)|_{\hat{X}} \geq c(n) - \varepsilon(n)$, where $\omega(V)|_{\hat{X}}$ represents the objective function evaluated at $\hat{X}$. 
    We conclude the proof by observing that $\varepsilon(n) \leq 1/\poly(n)$ holds as long as $c(n)-s(n) \geq 1/\poly(n)$. 
\end{proof}

\subsection{Weakness of \QIPLconst{}: \QMAML{} and \NC{} containment}
\label{subsec:NC-cointainment}

We present an upper bound of \QIPLconst{}: 

\begin{lemma}
    \label{lemma:QIPL3-in-NC}
    $\QIPL_3[1,1/16] \subseteq \QMAML^{\odot}[1,5/8] \subseteq \NC$. 
\end{lemma}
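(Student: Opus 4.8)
The plan is to prove the two inclusions separately. The first, $\QIPL_3[1,1/16] \subseteq \QMAML^{\odot}[1,5/8]$, is a space-bounded down-scaling of the Marriott--Watrous transformation $\QIP[3] \subseteq \QMAM$~\cite{MW05}; the second, $\QMAML^{\odot}[1,5/8] \subseteq \NC$, is a down-scaling of $\QIP = \PSPACE$~\cite{JJUW11}, in which the polynomial dimensionality of the space-bounded setting turns the parallel $\PSPACE$ algorithm into an $\NC$ algorithm.

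For the first inclusion, let $\protocol{P}{V}$ be a three-turn proof system with perfect completeness and soundness $1/16$; since the prover sends the last message, its pattern is prover--verifier--prover. Following a simplified form of the turn-halving lemma (\Cref{lemma:QIPL-halving-parallelization}), I would replace the verifier's single middle action by a public-coin step: the verifier sends $O(\log n)$ uniformly random classical bits (equivalently, $O(\log n)$ halves of EPR pairs) in place of an arbitrary quantum message, and the honest prover is required to apply the original verifier's unitary itself and return the message register. Perfect completeness is clearly preserved, since the honest prover can reproduce $\protocol{P}{V}$ exactly. For soundness, the argument mirrors the \emph{no}-instance analysis in the proof of \Cref{lemma:QIPL-halving-parallelization}: expressing the acceptance probabilities of the two coin-determined branches via fidelities and applying \Cref{lemma:sum-of-squared-fidelity} together with the data-processing inequality for the fidelity (\Cref{lemma:fidelity-data-processing}) yields an upper bound of $(1+\sqrt{s})/2$, which for $s = 1/16$ is exactly $5/8$. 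Since all registers still hold $O(\log n)$ qubits and the number of messages remains constant, the new verifier's actions are space-bounded unitary circuits, and because the original verifier's mapping obeys the strong notion of uniformity, a single deterministic logspace machine still outputs the new verifier's description; thus the resulting proof system belongs to the public-coin variant $\QMAML^{\odot}$.

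For the second inclusion, I would write the maximum acceptance probability $\omega(V)$ of a three-turn public-coin space-bounded proof system as a semidefinite program --- equivalently, a min-max problem over density operators on the message and work registers --- obtained by down-scaling the $\QIP[3]$ semidefinite program, exactly as in~\cite{JJUW11}. The essential difference from the time-bounded case is that every operator here is a $\poly(n) \times \poly(n)$ matrix. Running the matrix multiplicative weights update method of~\cite{JJUW11}, and using that the dimension is $\poly(n)$ while we need only separate $\omega(V) = 1$ from $\omega(V) \le 5/8$ (a constant gap, supplied by the perfect completeness together with the soundness from the first step), one obtains that $O(\log n)$ iterations suffice. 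Each iteration computes a matrix exponential of a Hermitian $\poly(n)$-dimensional matrix of operator norm $O(\log n)$ to inverse-polynomial precision --- achievable with $O(\log n)$ Taylor terms and iterated $\poly(n)$-dimensional matrix multiplication, hence in $\NC^2$ --- followed by a trace normalization and the evaluation of a constant number of inner products or extremal eigenvalues, again in $\NC^2$. Composing the $O(\log n)$ iterations gives an $O(\log^3 n)$-depth, $\poly(n)$-size circuit; its logspace-uniformity follows from the strong uniformity of the verifier's mapping, so the decision procedure lies in (logspace-uniform) $\NC$.

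The main obstacle is the quantitative down-scaling of~\cite{JJUW11} in the second step: one must verify that the width of the semidefinite program arising from a space-bounded $\QMAML^{\odot}$ proof system stays polynomially bounded, so that the exponentiated matrices have polynomially bounded norm and, crucially, the iteration count is genuinely $O(\log n)$ rather than $\poly(n)$ --- a $\poly(n)$ sequential iteration count would only place the problem in $\Ptime$, not $\NC$. One must also check that perfect completeness, required for the feasibility/infeasibility dichotomy driving the multiplicative-weights algorithm, survives the first step with a constant gap, and that the saddle-point reformulation remains faithful for the restricted public-coin variant $\QMAML^{\odot}$ rather than only for general $\QIP[3]$. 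The first step is comparatively routine, being a fidelity computation essentially identical to part of the turn-halving lemma's proof.
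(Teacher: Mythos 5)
Your overall two-step strategy --- a space-bounded Marriott--Watrous transformation followed by a down-scaled application of the parallel SDP solver of~\cite{JJUW11} to a $\poly(n)$-dimensional program with a constant gap --- is exactly the route the paper takes, and your treatment of the second inclusion (constant gap $\Rightarrow$ polylogarithmic iteration count $\Rightarrow$ $\NC$, with the caveat that a $1/\poly(n)$ gap would only give $\Ptime$) matches the paper's argument.

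The first step, however, is not correct as written. The class $\QMAML^{\odot}$ is by definition the \emph{single-coin} variant: the verifier's only message is one uniformly random bit $b$. Your protocol has the verifier send $O(\log n)$ random bits, which does not yield a $\QMAML^{\odot}$ proof system, and it is internally inconsistent with your own soundness analysis, which speaks of ``two coin-determined branches'' and invokes \Cref{lemma:sum-of-squared-fidelity} --- an inequality that compares exactly two fidelities against a common state and therefore handles exactly two branches; with $2^{O(\log n)}$ branches it does not give the bound $(1+\sqrt{s})/2$. The single-coin structure is also what your second step relies on, since the SDP formulation of~\cite[Section 2.5]{JJUW11} is stated for single-coin quantum Arthur--Merlin games. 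Separately, your description of the protocol is off: in the paper's \Cref{protocol:single-coin-QMAML}, the first Merlin message is the snapshot of the verifier's private register $\ttW_1$ (the state after $V_1$ in the original prover--verifier--prover protocol), the single coin $b$ then selects forward or backward execution, and the \emph{verifier itself} applies $V_2$ (acceptance check) or $V_1^{\dagger}$ (all-zero check) to $(\widehat{\sfM},\widehat{\sfW})$; the prover never applies the verifier's unitary. Without the snapshot state as the first message, the forward/backward dichotomy that drives the fidelity argument has nothing to act on.
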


The proof of \Cref{lemma:QIPL3-in-NC} relies crucially on the \textit{single-coin} variant of public-coin three-message quantum interactive proofs $\QMAML^{\odot}$, where the second message is a single random coin. 
Specifically, the first inclusion corresponds to a space-bounded variant of $\QIP(3) \subseteq \QMAM$~\cite[Section 5]{MW05}, and the turn-halving lemma (\Cref{lemma:QIPL-halving-parallelization}) naturally extends this result. 
The second inclusion is exactly a down-scaling version of $\QMAM^{\odot} \subseteq \NC(\poly)$~\cite{JJUW11}.\footnote{This inclusion does not extend to the variant with $1/\poly(n)$ promise gap. Specifically, applying the parallel SDP solver in~\cite{JJUW11}, the resulting algorithm runs in parallel time (i.e., circuit depth) $\poly\log(n) \cdot \poly(1/\epsilon)$, using $\poly(n)$ processors (i.e., circuit width), as noted in the first paragraph of~\cite{JY11}. Since the parameter $\varepsilon$ for $\QMAML^{\odot}[1,1-1/p(n)]$ is polynomially small, this algorithm does not run in \NC{} as the parallel running time becomes $\poly(n)$. This issue also arises when applying width-independent parallel SDP solvers~\cite{JY11,AZLO16}.\label{footref:precision-in-parallel-SDP-solvers}}

Furthermore, the first inclusion in \Cref{lemma:QIPL3-in-NC} implies the following:\footnote{The inclusion $\QMAML^{\odot}[1,5/8] \subseteq \QMAML[1,125/512] \subseteq \QMAML[1,1/3]$ is obtained by applying three-fold parallel repetition (\Cref{lemma:QIPL-error-reduction-parallel}), where the second message in the resulting proof system is three random coins. } 

\begin{corollary}
    \label{corr:QIPL3-eq-QMAML}
    $\QIPL_3 = \QMAML$. 
\end{corollary}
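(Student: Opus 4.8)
The plan is to establish the two inclusions $\QIPL_3 \subseteq \QMAML$ and $\QMAML \subseteq \QIPL_3$ separately, both with completeness $1$ and soundness $1/3$ as fixed by the definitions in \Cref{eq:QIPLconst-defs}. The second inclusion is essentially trivial: a \QMAML{} proof system is by definition a public-coin three-message space-bounded \textit{unitary} quantum interactive proof system, hence a $\QIPUL_3$ proof system, and by \Cref{remark:QIPLconst-welldefined} every $\QIPUL_3$ proof system is a $\QIPL_3$ proof system with the same parameters. So $\QMAML = \QMAML[1,1/3] \subseteq \QIPL_3[1,1/3] = \QIPL_3$, and this direction needs only a sentence.

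For the forward inclusion $\QIPL_3 \subseteq \QMAML$, the idea is to chain the first inclusion of \Cref{lemma:QIPL3-in-NC} with error reduction. Starting from an arbitrary $\QIPL_3 = \QIPL_3[2/3,1/3]$ proof system, I would first invoke \Cref{corr:QIPLconst-in-QIPL3} (with $m=3$, constant gap) to obtain a $\QIPL_3[1,1/16]$ proof system. Then \Cref{lemma:QIPL3-in-NC} gives containment in $\QMAML^{\odot}[1,5/8]$, where $\QMAML^{\odot}$ is the single-coin variant. It remains to amplify $\QMAML^{\odot}[1,5/8]$ down to soundness $1/3$ while keeping it a legitimate \QMAML{} proof system. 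This is exactly the content of the footnote attached to \Cref{corr:QIPL3-eq-QMAML}: apply three-fold parallel repetition via \Cref{lemma:QIPL-error-reduction-parallel}, which is available because the number of messages is the constant $3$ and the repetition count $k=3$ is constant, so the message size grows only by a constant factor and the resulting system is still space-bounded. After three-fold repetition the soundness becomes $(5/8)^3 = 125/512 < 1/3$, completeness stays $1$, and the second message becomes three random coins — which is still a public-coin message, so the resulting proof system is a \QMAML{} proof system (with $c-s \geq \Omega(1)$, matching the hypotheses). Hence $\QIPL_3 \subseteq \QMAML^{\odot}[1,5/8] \subseteq \QMAML[1,125/512] \subseteq \QMAML[1,1/3] = \QMAML$.

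Putting the two directions together yields $\QIPL_3 = \QMAML$. I expect no genuine obstacle here, since every ingredient has already been established in the excerpt; the only point requiring a little care is verifying that all the intermediate transformations preserve the strong-uniformity condition on the verifier's mapping and the $O(\log n)$ space bound — but this is immediate because \Cref{corr:QIPLconst-in-QIPL3} and \Cref{lemma:QIPL-error-reduction-parallel} are each stated and proved with those properties in mind (constant message number, constant repetition count), and composing a constant number of logspace Turing machines that produce verifier descriptions still yields a single logspace machine. A secondary bookkeeping point is to make sure the single-coin structure is respected: \Cref{lemma:QIPL3-in-NC} delivers a $\QMAML^{\odot}$ system, and parallel repetition of a single-coin three-message system produces a three-coin three-message system, which still falls under the (public-coin) class \QMAML{} as defined here; no further argument is needed because \QMAML{} already allows an arbitrary logarithmic-length public-coin message.

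\begin{proof}[Proof of \Cref{corr:QIPL3-eq-QMAML}]
    We show both inclusions.

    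The inclusion $\QMAML \subseteq \QIPL_3$ is immediate: by definition, a \QMAML{} proof system is a public-coin three-message space-bounded \emph{unitary} quantum interactive proof system, and hence in particular a $\QIPUL_3$ proof system. By \Cref{remark:QIPLconst-welldefined}, every $\QIPUL_3$ proof system can be simulated by a $\QIPL_3$ proof system with the same parameters $m$, $c$, and $s$. Therefore $\QMAML = \QMAML[1,1/3] \subseteq \QIPL_3[1,1/3] = \QIPL_3$.

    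For the inclusion $\QIPL_3 \subseteq \QMAML$, let $\calI \in \QIPL_3 = \QIPL_3[2/3,1/3]$. Applying \Cref{corr:QIPLconst-in-QIPL3} with $m = 3$ yields $\calI \in \QIPL_3[1,1/16]$. By the first inclusion of \Cref{lemma:QIPL3-in-NC}, we obtain $\calI \in \QMAML^{\odot}[1,5/8]$, where the second message is a single random coin. Now apply three-fold parallel repetition: since the number of messages is the constant $3$ and the repetition parameter $k = 3$ is constant, \Cref{lemma:QIPL-error-reduction-parallel} (together with the fact that \QMAML{} is a subclass of $\QIPUL_3$, hence of $\QIPL_3$) gives a three-message proof system with completeness $1^3 = 1$ and soundness $(5/8)^3 = 125/512 < 1/3$; moreover the second message of the repeated system consists of three random coins, so it is still public-coin, and the resulting system is a \QMAML{} proof system. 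Consequently,
    \[
        \QIPL_3 \subseteq \QMAML^{\odot}[1,5/8] \subseteq \QMAML[1,125/512] \subseteq \QMAML[1,1/3] = \QMAML.
    \]
    Combining the two inclusions yields $\QIPL_3 = \QMAML$.
\end{proof}
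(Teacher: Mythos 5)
Your proposal is correct and follows essentially the same route as the paper: the forward inclusion chains \Cref{corr:QIPLconst-in-QIPL3}, the first inclusion of \Cref{lemma:QIPL3-in-NC}, and the three-fold parallel repetition from \Cref{lemma:QIPL-error-reduction-parallel} exactly as the paper's footnote indicates, and the reverse inclusion is the same definitional observation via \Cref{remark:QIPLconst-welldefined}. Your write-up merely makes explicit the bookkeeping (uniformity, public-coin preservation under repetition) that the paper leaves implicit.
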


We now move to the proof of \Cref{lemma:QIPL3-in-NC}. 

\begin{proof}[Proof of \Cref{lemma:QIPL3-in-NC}]
    We begin by proving the first inclusion using a variant of the turn-halving lemma (\Cref{lemma:QIPL-halving-parallelization}). Let $\protocol{P}{V}$ denote the original $\QIPL_3$ proof system, which acts on registers $\sfQ$, $\sfM$, and $\sfW$, following the notations in \Cref{fig:QIPL-odd} with $l=1$. We propose a $\QMAML^{\odot}$ proof system $\protocol{\hatP}{\hatV}$, which acts on registers $\widehat{\sfQ}$, $\widehat{\sfM}$, and $\widehat{\sfW}$, as described in \Cref{protocol:single-coin-QMAML}. It is noteworthy that this proof system is a simplified version of \Cref{protocol:parallelization}. 
    
\begin{algorithm}[ht!]
    \SetAlgorithmName{Protocol}
    \SetEndCharOfAlgoLine{.}
    \SetAlgoVlined
    \setlength{\parskip}{5pt}
    \SetKwFor{While}{}{:}{}
    \SetKwIF{If}{}{}{If}{:}{}{}{}
    \SetKwInOut{Parameter}{Parameters}

    \textbf{1.} The verifier $\hatV$ receives the qubits contained in $\ttW_1$ 
    from the prover $\hatP$, and then transfers 
    them to $\widehat{\sfW}$.

    \textbf{2.} The verifier $\hatV$ chooses $b \in \binset$ uniformly at random and sends $b$ to the prover $\hatP$. 

    \textbf{3.} \While{\textnormal{The verifier $\hatV$ receives the qubits written in $\widehat{\sfM}$ from the prover $\hatP$}}{
        \textbf{$\bullet$} If $b=0$, the verifier $\hatV$ applies $V_2$ on $(\widehat{\sfM},\widehat{\sfW})$. $\hatV$ accepts if  $(\widehat{\sfM},\widehat{\sfW})$ contains an accepting state of $\protocol{P}{V}$, and rejects otherwise.
        
        \textbf{$\bullet$} If $b=1$, the verifier $\hatV$ applies $V_1^{\dagger}$ on $(\widehat{\sfM},\widehat{\sfW})$. $\hatV$ accepts if $\widehat{\sfW}$ contains the all-zero state, and rejects otherwise. 
    }
    \BlankLine
	\caption{A $\QMAML^{\odot}$ proof system for verifying a $\QIPL_3$ proof system $\protocol{P}{V}$.}
	\label[algorithm]{protocol:single-coin-QMAML}
\end{algorithm}

    It remains to establish the correctness of \Cref{protocol:single-coin-QMAML}. This is straightforward for \textit{yes} instances. For \textit{no} instances, the desired bound essentially follows from the inequality in \Cref{lemma:sum-of-squared-fidelity}, using reasoning similar to that in the proof of \Cref{lemma:QIPL-halving-parallelization}. We omit the details. 

    Next, we address the second inclusion. We start by observing that \Cref{protocol:single-coin-QMAML} aligns with the definition of single-coin quantum Arthur-Merlin games as described in~\cite[Section 2.4]{JJUW11}, with two key differences: the message length $m$ is $\log(n)$ rather than $\poly(n)$, and the verifier is space-bounded instead of polynomial-time bounded. This proof system achieves completeness $1$ and soundness $5/8$ due to the first inclusion. Consequently, we can obtain the corresponding primal-dual SDP programs of dimension $\poly(n)$, as opposed to $\exp(\poly(n))$, following~\cite[Section 2.5]{JJUW11}. Therefore, we conclude an \NC{} containment by applying the parallel SDP solver from~\cite{JJUW11} to the resulting SDP programs of dimension $\poly(n)$. 
\end{proof}

\section{Space-bounded unitary quantum statistical zero-knowledge}

We now introduce (honest-verifier) space-bounded unitary quantum statistical zero-knowledge, denoted as \QSZKUL{} and \QSZKULHV{}, as specific types of space-bounded unitary quantum interactive proofs (\QIPUL{}) that possess an additional statistical zero-knowledge property.

Before presenting our results, we start by defining the promise problem \IndivProdQSD{}, which is analogous to \QSD{}~\cite{Wat02QSZK} and \GapQSDlog{}~\cite{LGLW23}: 
\begin{definition}[Individual Product State Distinguishability Problem, {$\IndivProdQSD[k,\alpha,\delta]$}]
    \label{def:IndivProdQSD}
    Let $k(n)$, $\alpha(n)$, $\delta(n)$, and $r(n)$ be logspace computable functions such that $1 \leq k(n) \leq \poly(n)$, $0 \leq \alpha(n),\delta(n) \leq 1$, $\alpha(n) - \delta(n) \cdot k(n) \geq 1/\!\poly(n)$, and $1 \leq r(n)\leq O(\log n)$. Let $Q_1, \cdots, Q_k$ and $Q'_1, \cdots, Q'_k$ be polynomial-size unitary quantum circuits acting on $O(\log n)$ qubits, each with $r(n)$ specified output qubits. For $j\in[k]$, let $\sigma_j$ and $\sigma'_j$ denote the states obtained by running $Q_j$ and $Q'_j$ on the all-zero state $\ket{\bar{0}}$, respectively, and tracing out the non-output qubits, then the promise is that one of the following holds:    
    \begin{itemize}[itemsep=0.33em,topsep=0.33em,parsep=0.33em]
        \item \emph{Yes} instances: Two $k$-tuples of quantum circuits $(Q_1,\cdots,Q_k)$ and $(Q'_1,\cdots,Q'_k)$ such that 
        \[\td\rbra*{\sigma_1\otimes \cdots \otimes \sigma_k, \sigma'_1\otimes \cdots \otimes \sigma'_k} \geq \alpha(n);\]
        \item \emph{No} instances: Two $k$-tuples of quantum circuits $(Q_1,\cdots,Q_k)$ and $(Q'_1,\cdots,Q'_k)$ such that 
        \[ \forall j \in [k], \quad \td\rbra*{\sigma_j, \sigma'_j} \leq \delta(n). \]
    \end{itemize}
\end{definition}
Additionally, we denote the \textit{complement} of $\IndivProdQSD[k(n),\alpha(n),\delta(n)]$, with respect to the chosen parameters $\alpha(n)$, $\delta(n)$, and $k(n)$, as \coIndivProdQSD{}. 

\vspace{1em}
With these definitions in hand, we now provide our first theorem in this section:
\begin{theorem}[The equivalence of \QSZKUL{} and \BQL{}]
    \label{thm:QSZKL-eq-BQL}
    The following holds: 
    \begin{enumerate}[label={\upshape(\arabic*)}, itemsep=0.33em, topsep=0.33em, parsep=0.33em]
    \item For any logspace-computable function $m(n)$ such that $1 \leq m(n) \leq \poly(n)$, 
    \[ \cup_{c(n)-s(n) \geq 1/\!\poly(n)} \QSZKULHV[m,c,s] \subseteq \BQL. \]
    \item $\BQL \subseteq \QSZKUL \subseteq \QSZKULHV$. \label{thmitem:BQL-in-QSZKUL}
    \end{enumerate}
\end{theorem}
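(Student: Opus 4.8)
The plan is to prove the two directions of Theorem \ref{thm:QSZKL-eq-BQL} by going through the complete problem \IndivProdQSD{} of Theorem \ref{thm:IndivProdQSD-QSZKLcomplete-informal}, which decomposes the argument into a hardness half and a containment half. For part (2), the easy inclusion $\BQL \subseteq \QSZKUL$ is almost by definition: a \BQL{} computation can be packaged as a $1$-message \QIPUL{} proof system (indeed a \QMAL{}-style proof system) in which the honest prover sends nothing relevant, so the verifier's view at every turn is a fixed state that a trivial logspace simulator reproduces exactly; statistical zero-knowledge against \emph{any} verifier then holds vacuously. The inclusion $\QSZKUL \subseteq \QSZKULHV$ is immediate since the honest verifier is one particular verifier. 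So part (2) reduces to book-keeping once the definitions are in place. For part (1), I would first invoke closure under error reduction and perfect completeness for \QIPUL{} (Theorem \ref{thm:QIPL-basic-properties}), which also applies to the zero-knowledge subclass since the sequential-repetition construction in \Cref{lemma:QIPL-error-reduction} preserves honest-verifier simulability; this lets us assume the simulator's approximation error $\delta_{\calB}(n)$ is negligible and the completeness is $1$.

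The core of part (1) is the chain $\QSZKULHV \subseteq \coIndivProdQSD$ (hardness of \coIndivProdQSD{}, equivalently that every \QSZKULHV{} problem reduces to it) followed by $\IndivProdQSD \in \BQL$ (hence $\coIndivProdQSD \in \BQL$ since \BQL{} is closed under complement). For the reduction, given a $\QSZKULHV[2k,c,s]$ proof system $\calB$ with logspace simulator $S_{\calB}$, I would extract the approximations $\xi_j \approx_\delta \rho_{\ttM_j\ttW_j}$ and $\xi'_j \approx_\delta \rho_{\ttM'_j\ttW_j}$ of the snapshot states after each turn, set $\sigma_j \coloneqq \Tr_{\ttM_j}(\xi_j)$, $\sigma'_j \coloneqq \Tr_{\ttM'_j}(\xi'_j)$, and verify that the purifications of these reduced states are prepared by logspace unitary circuits obtained by composing $S_{\calB}$'s circuit with partial traces (themselves realizable by ancilla introduction). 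The \emph{no}-instance case of the reduction is the delicate one, following \cite[Section 5]{Wat02QSZK}: because the verifier is unitary and honest, the only obstruction to the simulator being genuine is that the prover's actions must not touch the verifier's private register, i.e. the type-(ii) constraint \Cref{eq:QSZKL-constraint}. If $\td(\sigma_j,\sigma'_j) \leq \delta$ for every $j$ (the \emph{no}-instance-of-\coIndivProdQSD{} side), one can stitch the $\xi'_j$ together along the small-distance seams — using unitary freedom of purification (\Cref{lemma:unitary-equivalent-in-purifications}) at each join and the data-processing/triangle inequalities — to build an honest prover strategy whose acceptance probability is within $O(k\delta)$ of the value $c$ that the simulator's final snapshot exhibits; since the real proof system accepts \emph{no} instances of $\calI_{\no}^{\calB}$ with probability at most $s < c$, this is a contradiction for the appropriate parameters. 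Thus $x \in \calI_{\no}^{\calB}$ forces some $j$ with $\td(\sigma_j,\sigma'_j)$ bounded below, giving \Cref{eq:intro-far} with $\alpha(n) \coloneqq (\sqrt{c}-\sqrt{s})^2 / 4(l-1)$; and $x \in \calI_{\yes}^{\calB}$ gives \Cref{eq:intro-indiv-close} with $\delta(n) \coloneqq 2\delta_{\calB}(n)$ directly from message-wise closeness of the simulator. One must check the parameter arithmetic $\alpha - k\delta \geq 1/\poly(n)$, which holds because $\delta_{\calB}$ is negligible while $\alpha$ is only inverse-polynomially small.

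For $\IndivProdQSD \in \BQL$, I would use $\BQL = \QMAL$ \cite{FKLMN16,FR21} and give a \QMAL{} protocol: by the averaging argument \Cref{eq:intro-indiv-far} (a consequence of the subadditivity bound in \Cref{lemma:trace-distance-product-states}(2) together with \Cref{eq:intro-far}), a \emph{yes} instance has some index $j$ with $\td(\sigma_j,\sigma'_j) \geq \alpha/k$, whereas a \emph{no} instance has $\td(\sigma_j,\sigma'_j) \leq \delta$ for all $j$; the Merlin message is the index $i \in [k]$, written with $O(\log n)$ bits so it fits in the logspace verifier's workspace, and Arthur then runs the \BQL{}-complete \GapQSDlog{} tester of Theorem \ref{thm:GapQSDlog-in-BQL} on $(\sigma_i,\sigma'_i)$ with thresholds $\alpha/k$ and $\delta$, which form a valid \GapQSDlog{} gap since $\alpha/k - \delta \geq 1/\poly(n)$ by the promise on \IndivProdQSD{}'s parameters. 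Completeness and soundness of the \QMAL{} protocol are immediate from the two cases. Combining: $\QSZKULHV \subseteq \coIndivProdQSD \leq^m_{\Lspace} \BQL$, and with part (2) we get $\BQL \subseteq \QSZKUL \subseteq \QSZKULHV \subseteq \BQL$, hence equality throughout.

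The main obstacle I anticipate is the \emph{no}-instance direction of the \QSZKULHV{}-hardness reduction: one must turn the \emph{failure} of the simulator (in the sense that it produces a consistent-looking final snapshot even though the real proof system rejects) into a quantitative lower bound on $\td(\sigma_j,\sigma'_j)$ for some $j$, and getting the constant $1/4(l-1)$ right requires carefully telescoping the per-turn fidelity losses across the $l$ rounds — this is exactly where the space-bounded setting differs from \cite{Wat02QSZK}, since we must ensure every intermediate circuit we construct (the stitched prover, the purification circuits for $\sigma_j,\sigma'_j$) stays within $O(\log n)$ qubits and logspace-uniform, rather than just polynomial-size. The strong uniformity condition on the verifier's mapping (needed already for parallelization) is what makes this go through.
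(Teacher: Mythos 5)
Your proposal is correct and follows essentially the same route as the paper: part (2) by definition, and part (1) via the $\QSZKULHV$-hardness of \coIndivProdQSD{} (the Watrous-style prover-reconstruction argument for \emph{no} instances, which the paper packages as \Cref{prop:QSZKLhard-soundness}) together with $\IndivProdQSD \in \QMAL = \BQL$ using the averaging argument and the \GapQSDlog{} tester, plus closure of \BQL{} under complement. The only cosmetic differences are that your initial appeal to error reduction is unnecessary (the simulator error is negligible by definition of \QSZKULHV{}, and the paper's hardness reduction works for general $c,s$), and that the paper states the \emph{no}-instance conclusion directly as a lower bound on the product-state trace distance rather than on an individual pair; either form certifies a \emph{yes} instance of \IndivProdQSD{}.
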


The class \QSZKUL{} consists of space-bounded unitary quantum interactive proof systems that possess statistical zero-knowledge against \textit{any} verifier, whereas \QSZKL{} proof systems possess statistical zero-knowledge against only an \textit{honest} verifier. 
Consequently, the inclusion in \Cref{thm:QSZKL-eq-BQL}\ref{thmitem:BQL-in-QSZKUL} is straightforward, following directly from these definitions. 
To establish the direction $\QSZKULHV{} \subseteq \BQL{}$, we proceed by proving the following: 

\begin{theorem}[\IndivProdQSD{} is \QSZKULHV{}-complete]
    \label{thm:IndivProdQSD-QSZKLcomplete}
    The following holds: 
    \begin{enumerate}[label={\upshape(\arabic*)}, itemsep=0.33em, topsep=0.33em, parsep=0.33em]
        \item Let $c(n)$ and $s(n)$ be logspace computable functions such that $0 \leq s(n) < c(n) \leq 1$. For any logspace-computable function $m(n)$ such that $3 \leq m(n) \leq \poly(n)$, \label{thmitem:IndivProdQSD-QSZKLhard}
        \[ \coIndivProdQSD[m/2,\alpha,2\delta] \text{ is } \QSZKULHV[m,c,s]\text{-hard}.\]
        Here, $\alpha \coloneqq (\sqrt{c}-\sqrt{s})^2/(2m-4)$ and $\delta$ is some negligible function. 
        \item  Let $k(n)$, $\alpha(n)$ and $\delta(n)$ be logspace computable functions such that $1 \leq k(n) \leq \poly(n)$, $0 \leq \alpha(n),\delta(n) \leq 1$, and $\alpha(n) - \delta(n) \cdot k(n) \geq 1/\poly(n)$. Then, it holds that
        \label{thmitem:IndivProdQSD-in-QSZKL}
        \[ \IndivProdQSD[k,\alpha,\delta] \in \BQL \subseteq \QSZKULHV. \]
    \end{enumerate}
\end{theorem}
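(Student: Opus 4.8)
The plan is to prove the two items separately, starting from the easier item~\ref{thmitem:IndivProdQSD-in-QSZKL}. The containment $\BQL \subseteq \QSZKULHV$ is immediate from the definitions of \QSZKULHV{} and \BQL{} (a \BQL{} computation is a one-turn \QIPUL{} proof system in which the verifier ignores the prover's message, and its honest-verifier view is reproduced by the verifier itself), so it remains to show $\IndivProdQSD[k,\alpha,\delta] \in \BQL$. Since $\QMAL = \BQL$~\cite{FKLMN16,FR21}, I would instead place the problem in \QMAL{}. For a \emph{yes} instance, the subadditivity of the trace distance over tensor products (\Cref{lemma:trace-distance-product-states}) gives $\sum_{j\in[k]} \td(\sigma_j,\sigma'_j) \geq \td(\sigma_1\otimes\cdots\otimes\sigma_k,\sigma'_1\otimes\cdots\otimes\sigma'_k) \geq \alpha$, so some index $i$ satisfies $\td(\sigma_i,\sigma'_i) \geq \alpha/k$; for a \emph{no} instance every index satisfies $\td(\sigma_j,\sigma'_j) \leq \delta$. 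The \QMAL{} protocol is then: the prover sends an index $i \in [k]$ (an $O(\log n)$-bit message), and the verifier decides the resulting instance of $\GapQSDlog[\alpha/k,\delta]$, which is a legitimate promise problem because $\alpha/k - \delta = (\alpha - \delta k)/k \geq 1/\poly(n)$. By \Cref{thm:GapQSDlog-in-BQL} this decision lies in \BQL{} (after the standard \BQL{} amplification needed for \QMAL{} soundness), so $\IndivProdQSD[k,\alpha,\delta] \in \QMAL = \BQL \subseteq \QSZKULHV$.

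For item~\ref{thmitem:IndivProdQSD-QSZKLhard} I would adapt the \QSZK-hardness argument of~\cite[Section~5]{Wat02QSZK}. Fix a $\QSZKULHV[m,c,s]$ proof system $\calB$ with honest-verifier simulator $S_\calB$ of negligible accuracy $\delta$, and assume WLOG that $m = 2k$ is even (padding with a dummy turn if needed) and, after the usual normalizing preprocessing of $\calB$ and $S_\calB$, that the simulator's approximation of the snapshot state just before the final measurement accepts with probability exactly $c$ on every input. On input $x$, the reduction outputs the $k$ pairs of logspace unitary circuits extracted from $S_\calB(x)$: $Q_j$ prepares the simulator's approximation $\xi_j$ of the snapshot state $\rho_{\ttM_j\ttW_j}$ after the $(2j-1)$-st turn and then discards all qubits except the $\sfW$-register, so that its output is $\sigma_j \coloneqq \Tr_{\ttM_j}(\xi_j)$, and similarly $Q'_j$ outputs $\sigma'_j \coloneqq \Tr_{\ttM'_j}(\xi'_j)$ from the approximation $\xi'_j$ of $\rho_{\ttM'_j\ttW_j}$. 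Composing the logspace descriptions of $S_\calB$ and of the reduction keeps this map logspace-computable, and the parameters $\alpha = (\sqrt c - \sqrt s)^2/(2m-4)$ and $2\delta$ satisfy the promise $\alpha - 2\delta\cdot(m/2) \geq 1/\poly(n)$ required by \Cref{def:IndivProdQSD}.

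It then remains to check the two cases. For a \emph{yes} instance of $\calB$, the zero-knowledge guarantee gives $\td(\xi_j,\rho_{\ttM_j\ttW_j}) \leq \delta$ and $\td(\xi'_j,\rho_{\ttM'_j\ttW_j}) \leq \delta$ for every $j$; since the honest verifier is unitary, the prover's $j$-th action acts only on $\sfQ$ and $\sfM$ and hence leaves the verifier's private register fixed, i.e.\ $\Tr_{\ttM_j}(\rho_{\ttM_j\ttW_j}) = \Tr_{\ttM'_j}(\rho_{\ttM'_j\ttW_j})$ (exactly the type-(ii) constraint \Cref{eq:QSZKL-constraint}), so combining these equalities with the data-processing inequality (\Cref{lemma:trace-distance-data-processing}) and the triangle inequality yields $\td(\sigma_j,\sigma'_j) \leq 2\delta$ for all $j$, a \emph{yes} instance of $\coIndivProdQSD[m/2,\alpha,2\delta]$. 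For a \emph{no} instance, the point is that if the chain of simulator states described a genuine honest execution — every transition satisfying $\Tr_{\ttM_j}(\xi_j) = \Tr_{\ttM'_j}(\xi'_j)$ together with the verifier-consistency conditions (including $\xi_1 = V_1\ketbra{\bar 0}{\bar 0}V_1^\dagger$) — then its terminal state would accept with probability at most $s$ by the soundness of $\calB$, whereas it accepts with probability $c > s$ by construction. A hybrid (telescoping) argument over the $O(l)$ intermediate snapshots, converting distances through the fidelity inequalities of \Cref{lemma:sum-of-squared-fidelity} and \Cref{lemma:fidelity-data-processing}, then charges the gap to a single round and produces an index $j$ with $\td(\sigma_j,\sigma'_j) \geq (\sqrt c - \sqrt s)^2/(2m-4) = \alpha$; the monotonicity bound of \Cref{lemma:trace-distance-product-states} lifts this to $\td(\sigma_1\otimes\cdots\otimes\sigma_k,\sigma'_1\otimes\cdots\otimes\sigma'_k) \geq \alpha$, a \emph{no} instance of \coIndivProdQSD{}. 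As \QSZKULHV{} is closed under logspace reductions (by composing the verifier's and the reduction's logspace machines, exactly as for \QIPL{}), this yields the claimed hardness.

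The hard part will be the \emph{no}-instance half of item~\ref{thmitem:IndivProdQSD-QSZKLhard}: turning the informal ``inconsistency $\Rightarrow$ some pair is far apart'' step into a quantitatively tight bound. This needs (i) the correct WLOG normalization of $\calB$ and $S_\calB$ so that the simulator's terminal snapshot provably accepts with probability $c$ and the initial comparison $\xi_1$ versus $V_1\ketbra{\bar 0}{\bar 0}V_1^\dagger$ can be folded into the same family of comparison pairs; (ii) a hybrid argument distributing the loss $c - s$, measured on the $\sqrt{\cdot}$-scale via \Cref{lemma:sum-of-squared-fidelity}, over the $2(l-1)$ consecutive transitions so as to recover exactly the stated $\alpha$; and (iii) moving between fidelity and trace distance throughout without eroding the $1/\poly(n)$ gap. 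The \emph{yes}-instance direction and all of item~\ref{thmitem:IndivProdQSD-in-QSZKL} are comparatively routine.
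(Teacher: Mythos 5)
Your Item (2) and the \emph{yes}-instance half of Item (1) match the paper's argument essentially verbatim: the averaging/subadditivity reduction to $\GapQSDlog[\alpha/k,\delta]$ with a prover-supplied index, and the bound $\td(\sigma_j,\sigma'_j)\leq 2\delta$ obtained from the simulator guarantee, the triangle inequality, and the fact that the prover never touches $\sfW$ (so $\Tr_{\ttM_j}(\rho_{\ttM_j\ttW_j})=\Tr_{\ttM'_j}(\rho_{\ttM'_j\ttW_j})$). Those parts are fine.

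The gap is in your \emph{no}-instance plan for Item (1). You propose to charge the discrepancy $c-s$ to a \emph{single} round, obtaining one index $j$ with $\td(\sigma_j,\sigma'_j)\geq\alpha=(\sqrt c-\sqrt s)^2/(4(l-1))$, and then lift to the tensor product by monotonicity. A telescoping/hybrid argument does not deliver this: it controls the \emph{aggregate} deviation over all $\sim 2(l-1)$ transitions (in the paper's framework, the product of fidelities $\F(\sigma_1\otimes\cdots\otimes\sigma_l,\sigma'_1\otimes\cdots\otimes\sigma'_l)=\prod_j\F(\sigma_j,\sigma'_j)$ via iterated use of \Cref{lemma:sum-of-squared-fidelity}), and extracting a single index by averaging costs you a factor of roughly $l-1$ (more after converting between fidelity and trace distance), leaving you with $\max_j\td(\sigma_j,\sigma'_j)\gtrsim\alpha/(l-1)$ rather than $\alpha$. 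That weaker per-index bound, lifted by monotonicity, only certifies a tensor-product distance of $\alpha/(l-1)$, which does not meet the promise of $\coIndivProdQSD[m/2,\alpha,2\delta]$. The paper avoids this entirely: its \Cref{prop:QSZKLhard-soundness} (an $O(\log n)$-qubit adaptation of Watrous's Lemma 15) lower-bounds the \emph{global} tensor-product trace distance by $(\sqrt c-\sqrt s)^2/(4(l-1))$ directly, never localizing to one round --- and this asymmetry (global farness for \emph{yes} instances of \IndivProdQSD{}, per-index closeness for \emph{no} instances) is precisely why the hard problem is defined the way it is. To repair your argument, drop the single-index step and run the hybrid on the product of fidelities, concluding with the tensor-product bound itself.
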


\vspace{1em}
In the remainder of this section, we first provide the definition of honest-verifier space-bounded quantum statistical zero-knowledge proofs (the class \QSZKULHV{}) in \Cref{subsec:QSZKL-definition}. Next, we establish that \IndivProdQSD{} is \QSZKULHV{}-hard (\Cref{thm:IndivProdQSD-QSZKLcomplete}\ref{thmitem:IndivProdQSD-QSZKLhard}) in \Cref{subsec:IndivProdQSD-QSZKLhard}. Subsequently, we present the \BQL{} upper bound for \QSZKULHV{} (\Cref{thm:IndivProdQSD-QSZKLcomplete}\ref{thmitem:IndivProdQSD-in-QSZKL}) in \Cref{subsec:QSZKL-in-BQL}. 

\subsection{Definition of space-bounded unitary quantum statistical zero-knowledge}
\label{subsec:QSZKL-definition}

Our definition of (honest-verifier) space-bounded quantum statistical zero-knowledge follows that of~\cite[Section 3.1]{Wat02QSZK}. In this framework, an honest-verifier space-bounded unitary quantum statistical zero-knowledge proof system is a space-bounded unitary quantum interactive proof system, as defined in~\Cref{subsec:QIPL-definition}, that satisfies an additional \textit{zero-knowledge} property. Intuitively, the zero-knowledge property in \QIPUL{} proof systems requires that, after each message is sent, the quantum states representing the verifier's view -- including snapshot states in the message register $\sfM$ and the verifier's private register $\sfW$ -- should be approximately indistinguishable by a space-bounded unitary quantum circuit on accepted inputs. 

We then formalize this notion. Consider a set $\{\rho_{x,i}\}$ of mixed states, we say that this state set is \textit{logspace-preparable} if there exists a family of $m$-tuples $S_x \coloneqq \rbra*{S_{x,1},\cdots,S_{x,m}}$, where each $S_{x,i}$ for $i \in [m]$ is a space-bounded unitary quantum circuit (see \Cref{def:space-bounded-quantum-circuits}) with a specified collection of output qubits, such that for each input $x$ and index $i$, the state $\rho_{x,i}$ is the mixed state obtained by running $S_{x,i}$ on the input state $\ket{\bar{0}}$, and then tracing out all non-output qubits. We refer to such $\{S_x\}_{x\in\calI}$ as the \textit{space-bounded simulator} for the promise problem $\calI$. 

Next, for any space-bounded quantum interactive proof system $\protocol{P}{V}$, we define the verifier's view after the $i$-th turn, denoted by $\view{P}{V}(x,i)$, as the reduced state in registers $(\sfM,\sfW)$ immediately after $i$ messages have been exchanged, with the prover's private qubits traced out. 

We are now ready for the formal definition:
\begin{definition}[Honest-verifier space-bounded unitary quantum statistical zero-knowledge, \QSZKULHV{}]
    \label{def:QSZKUL}
    Let $c(n)$, $s(n)$, and $m(n)$ be logspace-computable functions of the input length $n \coloneqq |x|$ such that $0 \leq s(n) < c(n) \leq 1$ and $1 \leq m(n) \leq \poly(n)$. A promise problem $\calI = (\calI_{\yes}, \calI_{\no})$ is in $\QSZKULHV{}[m,c,s]$, if there exists an $m(n)$-message space-bounded unitary quantum interactive proof system $(\protocol{P}{V})(x)$ such that\emph{:}    
    \begin{itemize}[topsep=0.33em, itemsep=0.33em, parsep=0.33em]
        \item \textbf{\emph{Completeness}}. For any $x \in \calI_{\yes}$, there exists an $m(n)$-message prover $P$ such that  
        \[\Pr{(\protocol{P}{V})(x) \text{ accepts}} \geq c(n).\] 
        \item \textbf{\emph{Soundness}}. For any $x \in \calI_{\no}$ and any $m(n)$-message prover $P$, 
        \[\Pr{(\protocol{P}{V})(x) \text{ accepts}} \leq s(n).\] 
        \item \textbf{\emph{Zero-knowledge}}. There exists a space-bounded simulator $\{S_x\}_{x\in\calI}$ and a negligible function $\delta(n)$ such that for any $x\in \calI_{\yes}$ and each message $i \in [m]$, the circuit $S_x(i)$ produces the corresponding state $\sigma_{x,i}$ satisfying         
        \[ \td\rbra*{\sigma_{x,i},\view{P}{V}(x,i)} \leq \delta(n).\]
    \end{itemize}
    We define $\QSZKULHV[m] \coloneqq \QSZKULHV\big[m,\frac{2}{3},\frac{1}{3}\big]$ and $\QSZKULHV \coloneqq \cup_{m \leq \poly(n)} \QSZKULHV[m]$.
\end{definition}

Since the inequality condition in the zero-knowledge property holds independently for each message in \Cref{def:QSZKUL}, error reduction via sequential repetition (\Cref{lemma:QIPL-error-reduction}) directly applies to an honest-verifier space-bounded quantum statistical zero-knowledge proof system, with the zero-knowledge property automatically preserved. 

\begin{remark}[Robustness of the zero-knowledge property in \QSZKULHV{}]
    \label{QSZKL-ZKproperty-robustness}
    Let $\QSZKULstar$ denote a weaker version of \QSZKULHV{}, where the threshold function $\delta(n) \coloneqq \rbra*{\sqrt{c}-\sqrt{s}}^2/\rbra*{2m^2}$,\footnote{This bound results from the reduction to the \QSZKULHV{}-hard problem \IndivProdQSD{}, see \Cref{thm:IndivProdQSD-QSZKLhard}.} rather than being negligible. While it is clear that $\QSZKULHV \subseteq \QSZKULstar$, the standard approach to establish the reverse direction does not apply to $\QSZKULHV$.\footnote{In particular, the polarization lemma for the trace distance~\cite[Section 4.1]{Wat02QSZK} is not applicable in the space-bounded scenario due to message size constraints.} Instead, the inclusion $\QSZKULstar \subseteq \QSZKULHV$ only follows from $\QSZKULstar = \BQL$ (\Cref{thm:QSZKL-eq-BQL}).
\end{remark}

\subsection{\coIndivProdQSD{} is \QSZKULHV{}-hard}
\label{subsec:IndivProdQSD-QSZKLhard}

Instead of directly proving that \coIndivProdQSD{} is \QSZKULHV{}-hard, we establish a slightly stronger result: the promise problem \coIndivProdQSD{} is hard for the class $\QSZKULstar$ that contains \QSZKULHV{} (\Cref{QSZKL-ZKproperty-robustness}), as detailed in \Cref{thm:IndivProdQSD-QSZKLhard}. This result mirrors the relationship between \coQSD{} and the class \QSZK{}. 
\begin{theorem}[\coIndivProdQSD{} is $\QSZKULstar$-hard]
    \label{thm:IndivProdQSD-QSZKLhard}
    Let $c(n)$, $s(n)$, and $m(n)$ be logspace computable functions such that $0 \leq s(n) < c(n) \leq 1$, $c(n) - s(n) \geq 1/\poly(n)$, and $3 \leq m(n) \leq \poly(n)$.\footnote{Without loss of generality, we can assume that $m \geq 3$ by adding one or two dummy messages when $m < 3$, as discussed in \Cref{footref:dummy-message}.} Then, it holds that 
    \[ \coIndivProdQSD\sbra*{\ceil*{m(n)/2}, \alpha(n), 2\delta(n)} \text{ is } \QSZKULstar[m(n),c(n),s(n)]\text{-hard}.\]
    Here, $\delta \coloneqq (\sqrt{c}-\sqrt{s})^2/(2m^2)$ and $\alpha \coloneqq (\sqrt{c}-\sqrt{s})^2/(2m-4)$. 
\end{theorem}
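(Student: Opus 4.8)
The plan is to transplant Watrous's proof that quantum state distinguishability is \QSZK{}-hard~\cite[Section 5]{Wat02QSZK} down to logarithmic space, using the ``the prover cannot disturb the verifier's private register'' constraint \Cref{eq:QSZKL-constraint} (type (ii) of the SDP in \Cref{lemma:QIPL-first-SDP-formulation}) as the source of the product structure. First I would reduce to the case $m = 2l$ even (appending a dummy message if $m$ is odd) and apply a zero-knowledge--preserving variant of the perfect-completeness transformation (\Cref{lemma:QIPL-perfect-completeness}) that, at the cost of $O(1)$ extra messages, makes the verifier's last action the measurement of a designated decision qubit $\ttZ$ placed inside its private register $\sfW$, so that on \textit{yes} instances the honest final view has $\ttZ = \ket{1}$ with certainty. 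Fix such a $\QSZKULstar[m,c,s]$ system $\protocol{P}{V}$ with honest prover $P$, unitary verifier circuits $(V_1,\dots,V_{l+1})$, and logspace simulator $\{S_x\}_{x\in\calI}$ producing states $\sigma_{x,i}$. The reduction sends $x$ to $\ceil{m/2}$ pairs of $O(\log n)$-qubit polynomial-size unitary circuits: for each round $j$, the circuit $Q_j$ prepares the $\sfW$-marginal of the snapshot state \emph{entering} the prover's $j$-th action (run the relevant $S_{x,\cdot}$, apply $V_j$, trace out $\sfM$), and $Q'_j$ prepares the $\sfW$-marginal of the snapshot state \emph{leaving} it (run $S_{x,\cdot}$, trace out $\sfM$); the last pair is additionally arranged so that its ``expected'' side forces $\ttZ = \ket{1}$. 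These circuits are logspace-uniform because the strong uniformity of the verifier's mapping lets a single deterministic logspace machine emit $(V_1,\dots,V_{l+1})$, which is then composed with the (uniform) simulator circuits. Read as a reduction \emph{to the complement}, $\calI \leq^m_{\Lspace} \coIndivProdQSD[\ceil{m/2},\alpha,2\delta]$, so I must check that \textit{yes} instances give ``all $\ceil{m/2}$ pairs within $2\delta$'' and \textit{no} instances give ``global product distance at least $\alpha$''.

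\paragraph{The \textit{yes} direction.} For $x\in\calI_{\yes}$ the zero-knowledge condition gives $\td(\sigma_{x,i},\view{P}{V}(x,i))\leq\delta$ for every turn $i$. Applying the data-processing inequality for the trace distance (\Cref{lemma:trace-distance-data-processing}) to the partial trace over $\sfM$, and unitary invariance (\Cref{lemma:trace-distance-unitary-invariance}) to each $V_j$, both $Q_j$ and $Q'_j$ output states within $\delta$ of the genuine $\sfW$-marginal at the corresponding point of the honest run. Because the honest prover's $j$-th action acts only on $(\sfQ,\sfM)$ --- precisely the type-(ii) constraint \Cref{eq:QSZKL-constraint} --- the ``entering'' and ``leaving'' $\sfW$-marginals are equal, so the triangle inequality yields $\td(\sigma_j,\sigma'_j)\leq 2\delta$; for the decision-qubit pair, perfect completeness makes the genuine $\ttZ$-marginal exactly $\ketbra{1}{1}$, so again $\td\leq\delta\leq 2\delta$.

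\paragraph{The \textit{no} direction.} I argue by contraposition: suppose $x\in\calI_{\no}$ yet $\td(\sigma_1\otimes\cdots\otimes\sigma_l,\sigma'_1\otimes\cdots\otimes\sigma'_l)<\alpha$; then \Cref{lemma:trace-distance-product-states}(1) forces $\td(\sigma_j,\sigma'_j)<\alpha$ for every $j$, and in particular the simulator's final state is nearly accepting. I would then build a cheating prover $P^{*}$ that, round by round, invokes the unitary freedom in purifications (\Cref{lemma:unitary-equivalent-in-purifications}) in robust form --- the error introduced at step $j$ controlled by $\F(\sigma_j,\sigma'_j)\geq 1-\td(\sigma_j,\sigma'_j)>1-\alpha$ --- so as to keep the state on $(\sfM,\sfW)$ within fidelity $1-O(m\alpha)$ of the simulator's transcript throughout the interaction. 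A telescoping estimate built on \Cref{lemma:sum-of-squared-fidelity} and \Cref{lemma:fidelity-data-processing} then shows that $V$ accepts $P^{*}$ with probability at least the simulator's decision-qubit acceptance probability minus $O(\sqrt{m\alpha})$, which by the near-accepting property and perfect completeness exceeds the post-transformation soundness bound $1-(c-s)^2/2$ as soon as $\alpha\leq(\sqrt c-\sqrt s)^2/(2m-4)$ --- contradicting soundness. Hence the product distance is at least $\alpha$. Finally, a short computation confirms the promise gap $\alpha-2\delta\ceil{m/2}\geq 1/\poly(n)$ required by \Cref{def:IndivProdQSD}, using $\sqrt c-\sqrt s\geq (c-s)/2\geq 1/\poly(n)$ and $m\leq\poly(n)$.

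\paragraph{Main obstacle.} The delicate point is the \textit{no} direction, and more specifically the need to ``pin'' the accepting end of the consistency chain: on a \textit{no} instance the simulator is unconstrained and may emit a non-accepting garbage final state, so a prover that merely imitates the simulator gains nothing against soundness. The whole argument therefore hinges on the preprocessing genuinely delivering perfect completeness \emph{with the decision qubit inside the verifier's private register} while preserving the zero-knowledge property; since the paper already warns (\Cref{QSZKL-ZKproperty-robustness}) that the usual polarization-based robustness is unavailable in the space-bounded setting, this step demands a direct, space-efficient construction rather than a black-box reduction. The secondary technical burdens are the exact fidelity accounting in the construction of $P^{*}$ (pinning down the dependence of $\alpha$ on $m$ and $\sqrt c-\sqrt s$) and verifying, at every stage, that the intermediate circuits still act on only $O(\log n)$ qubits and are produced by a single deterministic logspace machine.
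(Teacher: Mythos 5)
Your skeleton matches the paper's at a high level --- simulator snapshot states, the type-(ii) ``prover cannot touch $\sfW$'' constraint driving the \textit{yes} direction, and a Watrous-style cheating-prover/fidelity argument for the \textit{no} direction --- but the detour through a perfect-completeness transformation is a wrong turn that the paper deliberately avoids, and it breaks the stated parameters. The paper never makes the protocol perfectly complete: it applies an adaptation of Watrous's soundness lemma (\Cref{prop:QSZKLhard-soundness}) directly to the original system with completeness $c(n)$ and soundness $s(n)$, pinning the accepting end of the consistency chain not by completeness amplification but by normalizing the simulator so that $\Tr\rbra*{\ketbra{1}{1}_{\ttZ}\,\xi_{l+1}} = c(n)$ for all instances (together with $\xi'_0 = \ketbra{\bar{0}}{\bar{0}}$ and $\xi_j = V_j \xi'_{j-1} V_j^{\dagger}$). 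That lemma yields the product-distance lower bound $(\sqrt{c}-\sqrt{s})^2/(4(l-1))$ in terms of the \emph{original} $c$ and $s$, which is exactly the stated $\alpha$.

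Your route has two genuine gaps. First, as you yourself flag, you need the perfect-completeness transformation (\Cref{lemma:QIPL-perfect-completeness}) to preserve the zero-knowledge property; that protocol has the verifier ship its entire private register to the prover and receive a fresh qubit back, so the simulator must be extended to the new views, and this is asserted rather than proved. Second, and fatally for the parameters: after the transformation the soundness is only $s' = 1-(c-s)^2/2$ with $c'=1$, so $\sqrt{c'}-\sqrt{s'} = 1-\sqrt{1-(c-s)^2/2} = \Theta\rbra*{(c-s)^2}$, and a fidelity argument run against $(c',s')$ can only deliver $\alpha' = \Theta\rbra*{(c-s)^4/m}$, not the claimed $(\sqrt{c}-\sqrt{s})^2/(2m-4) = \Theta\rbra*{(c-s)^2/m}$. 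Your assertion that an acceptance probability of $1-O(\sqrt{m\alpha})$ exceeds $1-(c-s)^2/2$ ``as soon as $\alpha \leq (\sqrt{c}-\sqrt{s})^2/(2m-4)$'' is false: it requires $\alpha = O\rbra*{(c-s)^4/m}$. Since $\delta = (\sqrt{c}-\sqrt{s})^2/(2m^2)$ is fixed by the definition of $\QSZKULstar$, the degraded $\alpha'$ satisfies $\alpha' < 2\delta\cdot\ceil*{m/2}$ whenever $c-s = o(1)$, so your reduction would not even produce a legal \coIndivProdQSD{} instance (the required promise gap $\alpha - 2\delta\cdot\ceil*{m/2} \geq 1/\poly(n)$ fails), let alone the theorem with the stated $\alpha$. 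The repair is to drop the perfect-completeness preprocessing entirely and run the telescoping fidelity argument against the original soundness $s$, exactly as \Cref{prop:QSZKLhard-soundness} does.
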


Before presenting the proof, we will first illustrate the properties of the simulator and explain the underlying intuition behind the proof. Our proof strategy follows some ideas from~\cite[Section 5]{Wat02QSZK}. Consider a space-bounded quantum interactive proof system $\protocol{P}{V}$ for a promise problem $\calI \in \QSZKULstar[m(n),c(n),s(n)]$ that is statistical zero-knowledge against an honest verifier. Without loss of generality, assume that the number of turns in $\protocol{P}{V}$ is even. We use the notations introduced in \Cref{fig:QIPL-even} and \Cref{subsec:SDP-formulations-upper-bounds}. 

Let us now focus on the space-bounded simulator $\{S_x\}_{x\in\calI}$. Let $\xi'_0,\cdots,\xi'_l$ and $\xi_1,\cdots,\xi_{l+1}$ denote the simulator's approximation to the reduced snapshot states in registers $(\sfM,\sfW)$ after the $(2j-1)$-st and the $(2j)$-th turn, respectively, during the execution of $\protocol{P}{V}$, as specified in \Cref{fig:QSZKL-simulators}. For \textit{yes} instances, these states closely approximate the actual view of the verifier (the corresponding snapshot states) during the execution of $\protocol{P}{V}$. However, there is no \textit{direct} closeness guarantee for \textit{no} instances. Consequently, we can assume that the state $\xi_{l+1}$ satisfies $\Tr\rbra*{\ketbra{1}{1}_{\ttZ} \xi_{l+1})} = c(n)$ for \textit{all} instances. 

\begin{figure}[ht!]
    \centering
    \includegraphics[width=\textwidth]{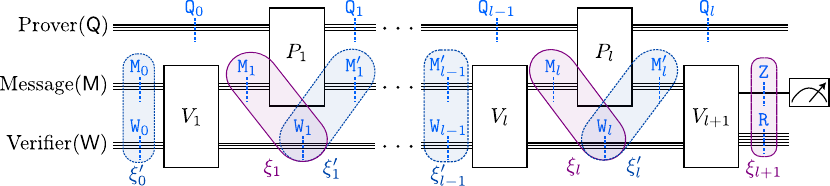}
    \caption{Quantum states $\xi'_0,\cdots,\xi'_l$ and $\xi,\cdots,\xi_{l+1}$ prepared by the simulator.}
    \label{fig:QSZKL-simulators}
\end{figure}

In addition, given that the verifier is always assumed to act honestly, we can take\footnote{Consequently, the simulator only needs to prepare $\xi'_{j-1}$, since $\xi_j$ is obtained by applying $V_j$ to this state.}  
\begin{equation}
    \label{eq:simulator-states}
    \xi'_0 = (\ket{0}\bra{0})^{\otimes (q_{\sfM}+q_{\sfW})} \text{ and } \xi_{j} = V_j \xi'_{j-1} V^\dagger_j \text{ for } j \in [l+1]. 
\end{equation}

\paragraph{Proof intuition.} Notably, the space-bounded simulator $\{S_x\}_{x\in\calI}$ essentially produces an approximation solution, in the form of snapshot states, to the SDP program \Cref{eq:QIPL-first-SDP} for computing the maximum acceptance probability $\omega(V)$ of the space-bounded unitary quantum interactive proof systems $\protocol{P}{V}$ for $\calI \in \QIPUL{}$. As we stated in the proof of \Cref{lemma:QIPL-first-SDP-formulation}, there are only two types of constraints: (1) Verifier's actions are honest; and (2) Prover's actions do not affect the verifier's private qubits. 

As mentioned in \Cref{eq:simulator-states}, these states produced by the simulator exactly satisfy the first type of constraints for all instances, but satisfy the second type of constraints \textit{only} for \textit{yes} instances. This observation leads to our proof and the hard problem \IndivProdQSD{}. Specifically, we consider two tensor product states, each consisting of a polynomial number of $O(\log n)$-qubit states, where all components are defined in \Cref{fig:QSZKL-simulators}: 
\begin{equation}
    \label{eq:two-states}
    \Tr_{\sfM}(\xi_1)\otimes \cdots \otimes \Tr_{\sfM}(\xi_l) \text{ and } \Tr_{\sfM}\rbra*{\xi'_1}\otimes \cdots \otimes \Tr_{\sfM}\rbra*{\xi'_l}.
\end{equation}

For \textit{yes} instances, the zero-knowledge property ensures a component-wise closeness bound $\Tr_{\sfM}\rbra*{\xi_j} \approx \Tr_{\sfM}(\xi'_j)$ for $j\in[l]$. 
For \textit{no} instances, we need to show that the two states in \Cref{eq:two-states} are far from each other, given that $\omega(V) \leq s(n)$. This follows directly from~\cite[Lemma 15]{Wat02QSZK}. We state the counterpart result below and omit the detailed proof: 
\begin{proposition}[Adapted from~{\cite[Lemma 15]{Wat02QSZK}}]
    \label{prop:QSZKLhard-soundness}
    Let $\protocol{P}{V}$ be an $m(n)$-turn space-bounded quantum interactive proof system, with even $m\coloneq 2l$, such that $\omega(V) \leq s(n)$. Let $\xi'_0,\cdots,\xi'_l$ and $\xi_k,\cdots,\xi_{l+1}$ be the states produced by the simulators as defined in \Cref{fig:QSZKL-simulators}. Assume that $\Tr\rbra*{\ketbra{\bar{0}}{\bar{0}}_{\ttM_0\ttW_0} \xi'_0} = 1$ and $\Tr\rbra*{\ketbra{1}{1}_{\ttZ} \xi_{l+1}} = c$. Then, it holds that
    \[ \td\rbra*{\Tr_{\sfM}(\xi_1)\otimes \cdots \otimes \Tr_{\sfM}(\xi_l), \Tr_{\sfM}\rbra*{\xi'_1}\otimes \cdots \otimes \Tr_{\sfM}\rbra*{\xi'_l}} \geq \frac{(\sqrt{c}-\sqrt{s})^2}{4(l-1)}.\]
\end{proposition}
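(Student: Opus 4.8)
To prove \Cref{prop:QSZKLhard-soundness}, the plan is to adapt the cheating‑prover argument behind \cite[Lemma 15]{Wat02QSZK}. The point is that the simulator states $\xi'_0,\dots,\xi'_l,\xi_1,\dots,\xi_{l+1}$ \emph{almost} describe an honest execution of $\protocol{P}{V}$: they satisfy the ``verifier is honest'' relations of \eqref{eq:simulator-states} exactly, and the last state $\xi_{l+1}$ accepts with probability $c$; the only defects are the mismatches $\delta_j\coloneqq\td\rbra*{\Tr_\sfM(\xi_j),\Tr_\sfM(\xi'_j)}$ at the prover's turns ($j\in[l]$), which in a genuine interaction would all be zero because the prover does not touch $\sfW$. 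So if the $\delta_j$ were all tiny we could build an actual prover for $\protocol{P}{V}$ accepting with probability close to $c$, contradicting $\omega(V)\le s$. First I would note that, by \Cref{lemma:trace-distance-product-states}(1), the tensor‑product distance in the statement is at least $\max_j\delta_j$, so it suffices to show $\sum_j\delta_j\ge(\sqrt c-\sqrt s)^2/4$ (with the sum over the $l-1$ seams that survive the bookkeeping), which then gives $\max_j\delta_j\ge(\sqrt c-\sqrt s)^2/(4(l-1))$.

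To produce such a prover I would introduce hybrid strategies $H^{(0)},\dots,H^{(l)}$. The hybrid $H^{(k)}$ non‑physically prepares a fixed purification $\ket{\nu_k}$ of $\xi'_k$ on $(\sfQ,\sfM,\sfW)$ just after the $(2k)$‑th turn, and thereafter, in each round $j>k$, lets the verifier apply $V_j$ and then applies the unitary on $(\sfQ,\sfM)$ — available by \Cref{lemma:unitary-equivalent-in-purifications} — that steers the state as close as possible to $\ket{\nu_j}$. Because the prover may not act on $\sfW$ and because $\xi_j=V_j\xi'_{j-1}V_j^\dagger$ by \eqref{eq:simulator-states}, the overlap achievable at round $j$ equals the fidelity of the $\sfW$‑reductions, which is $\ge 1-\delta_j$ by the Fuchs--van de Graaf inequality $1-\F(\rho,\sigma)\le\td(\rho,\sigma)$ together with \Cref{lemma:fidelity-data-processing}. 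Since $\xi'_0=\ketbra{\bar 0}{\bar 0}$, the hybrid $H^{(0)}$ is a genuine prover and hence has acceptance amplitude $\le\sqrt s$, whereas $H^{(l)}$ feeds a purification of $\xi'_l$ into $V_{l+1}$, obtaining a purification of $\xi_{l+1}$, so its acceptance amplitude is exactly $\sqrt c$. Consecutive hybrids $H^{(k-1)}$ and $H^{(k)}$ differ only in the state held after the $(2k)$‑th turn, those two states have overlap $\ge1-\delta_k$, and they are acted on by the identical unitary continuation and final measurement; hence their acceptance amplitudes differ by at most $\sqrt{2\delta_k}$, and telescoping already yields $\sqrt c-\sqrt s\le\sum_k\sqrt{2\delta_k}$. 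To upgrade this to a bound on $\sum_k\delta_k$ with the correct linear dependence on $l$, I would redo the comparison of $H^{(k-1)}$ and $H^{(k)}$ at the level of squared fidelities, using \Cref{lemma:sum-of-squared-fidelity} (the inequality $\F(\rho_0,\xi)^2+\F(\xi,\rho_1)^2\le 1+\F(\rho_0,\rho_1)$) to prevent the $k$‑th seam from degrading the acceptance \emph{probability} by more than $O(\delta_k)$; combining the resulting chain with $\omega(V)\le s$ and the acceptance probability $c$ of $\xi_{l+1}$ gives $\sum_k\delta_k\ge(\sqrt c-\sqrt s)^2/4$, and the proposition follows from the reduction of the first paragraph.

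The main obstacle is exactly this last point: a naive telescoping of acceptance amplitudes spends a square root at each of the $\sim l$ seams and only delivers a $1/l^2$‑type gap, so obtaining the claimed $1/(l-1)$ factor forces one to carry the hybrid comparison through \Cref{lemma:sum-of-squared-fidelity} and to account the defects $\delta_j$ additively rather than after passing to trace distances. A secondary technical point is the verifier‑privacy bookkeeping of \Cref{fig:QIPL-even}: the Uhlmann steering is legitimate only for a prover whose private register $\sfQ$ is large enough to hold the purifications and which never acts on $\sfW$, and one must check that each $\delta_j$ is precisely the trace‑distance defect that the honest‑verifier zero‑knowledge property controls on \emph{yes} instances (as used later in \Cref{thm:IndivProdQSD-QSZKLhard}), even though here it is the \emph{no} instances that are being analysed.
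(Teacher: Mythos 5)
Your reconstruction matches the intended argument: the paper omits the proof and defers to Watrous's Lemma 15, whose content is exactly your hybrid/Uhlmann cheating-prover chain, the reduction of the tensor-product distance to $\max_j\delta_j\ge\frac{1}{l-1}\sum_j\delta_j$, and the use of \Cref{lemma:sum-of-squared-fidelity} so that the fidelity-to-probability square root is paid only once rather than at every seam. Two spots are loose but not fatal: the quantity that accumulates additively across seams is $1-\F^2$ between the pre-measurement states of consecutive hybrids (a single seam can still shift the acceptance probability by $\sqrt{2\delta_k}$, so your phrase about each seam degrading the acceptance probability by $O(\delta_k)$ should be read as a statement about the telescoped squared fidelity, which is what your final bound $\sum_k\delta_k\ge(\sqrt c-\sqrt s)^2/4$ actually reflects), and the claim that only $l-1$ of the $l$ prover seams contribute is asserted rather than argued, even though it is precisely what the stated $4(l-1)$ denominator (as opposed to $4l$) requires, so that bookkeeping deserves to be made explicit.
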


Then, we proceed with the formal proof of \Cref{thm:IndivProdQSD-QSZKLhard}:
\begin{proof}[Proof of \Cref{thm:IndivProdQSD-QSZKLhard}]
    Let $\protocol{P}{V}$ be an $m(n)$-turn honest-verifier unitary quantum statistical zero-knowledge proof system for a promise problem $\calI \in\QSZKLstar[m,c,s]$, with completeness $c(n)$ and soundness $s(n)$. Without loss of generality, we assume that $m$ is even for all $x\in \calI$.\footnote{If $m$ is odd, we can add an initial turn to $\protocol{P}{V}$ in which the verifier sends the all-zero state to the prover.\label{footref:dummy-message}} Hence, we can denote the verifier's actions by $V_1,\cdots,V_{l+1}$ for $l=m/2$, and the verifier initiates the protocol. Let $\{\sigma_{x,i}\}_{x\in\calI,i\in[m+2]}$ represent the mixed states produced by the simulator $\{S_x\}_{x\in\calI}$, with the threshold function $\delta(n) \coloneqq 1/m(n)^2$. For any $x\in \calI$, we can define states $\xi'_0,\cdots,\xi'_l$ and $\xi_1,\cdots,\xi_{l+1}$ as illustrated in \Cref{fig:QSZKL-simulators}: 
    \begin{itemize}[itemsep=0.33em, topsep=0.33em, parsep=0.33em]
        \item Initial state before executing $\protocol{P}{V}$: $\xi'_0 \coloneqq \ketbra{\bar{0}}{\bar{0}}_{\sfM_0\sfW_0}$. 
        \item $(2j)$-th message for $j\in[l]$ in $\protocol{P}{V}$: $\xi'_j \coloneqq \sigma_{x,2j}$, where $\sigma_{x,2j}$ satisfies:
        \begin{equation}
            \label{eq:simulator-distance-bounds}
            \forall x\in \calI_\yes, \quad \td\rbra*{\sigma_{x,2j-1},\view{P}{V}(x,2j)} = \td\rbra*{\sigma_{x,2j-1},\rho_{\sfM_j\sfW_j}} \leq \delta(n).
        \end{equation}
        \item $(2j+1)$-st message for $j\in[l]$ in $\protocol{P}{V}$: $\xi_j \coloneqq V_j \xi'_{j-1} \smash{V_j^{\dagger}}$.
        \item State before the final measurement in $\protocol{P}{V}$: $\xi_{l+1} \coloneqq V_{l+1} \xi'_l \smash{V_{l+1}^{\dagger}}$ satisfies 
        \[\Tr\rbra*{\ketbra{1}{1}_{\ttZ} \xi_{l+1}}=c(n).\] 
    \end{itemize}

    Let $Q_1,\cdots,Q_k$ and $Q'_1,\cdots,Q'_k$ be polynomial-size unitary quantum circuits acting on $O(\log n)$ qubits which satisfy that $Q_j = S_{x,2j-1}$ and $Q'_j = S_{x,2j}$ for $j\in[l]$, and the output qubits are qubits in the verifier's private register $\sfW$. It is evident that $Q_j$ and $Q'_j$ prepare the states $\Tr_{\sfM}\rbra*{\xi_j}$ and $\Tr_{\sfM}\big(\xi'_j\big)$, respectively. We claim that the $l$-tuples $(Q_1,\cdots,Q_l)$ and $(Q'_1, \cdots Q'_l)$ form an instance of $\coIndivProdQSD[l(n),\alpha(n),\delta'(n)]$, satisfying the following conditions: 
    \begin{align}
        &\forall x \in \calI_\yes,& ~~ &\td\rbra*{\Tr_{\sfM}\rbra*{\xi_j}, \Tr_{\sfM}\rbra*{\xi'_j}} \leq 2\delta = \frac{(\sqrt{c}-\sqrt{s})^2}{4l^2} \coloneqq \delta'\text{ for } j \in [l]; \label{eq:QSZKLhard-yes-prob}\\
        &\forall x \in \calI_\no,& ~~ &\td\rbra*{\Tr_{\sfM}\rbra*{\xi_1} \otimes \cdots \otimes \Tr_{\sfM}\rbra*{\xi_l}, \Tr_{\sfM}\rbra*{\xi'_1} \otimes \cdots \otimes \Tr_{\sfM}\rbra*{\xi'_l}} \geq \frac{(\sqrt{c}-\sqrt{s})^2}{4(l-1)} \coloneqq \alpha. \label{eq:QSZKLhard-no-prob}
    \end{align}

    By substituting \Cref{eq:QSZKLhard-yes-prob} into \Cref{lemma:trace-distance-product-states}, it follows that: 
    \begin{equation}
        \label{eq:QSZKLhad-yes-prob-global}
        \begin{aligned}
             \td\rbra*{\Tr_{\sfM}\rbra*{\xi_1} \otimes \cdots \otimes \Tr_{\sfM}\rbra*{\xi_l}, \Tr_{\sfM}\rbra*{\xi'_1} \otimes \cdots \otimes \Tr_{\sfM}\rbra*{\xi'_l}} &\leq \sum_{j \in [l]} \td\rbra*{\Tr_{\sfM}\rbra*{\xi_j}, \Tr_{\sfM}\rbra*{\xi'_j}} \\
             &\leq \frac{(\sqrt{c}-\sqrt{s})^2}{4l}. 
        \end{aligned}
    \end{equation}
    Consequently, by comparing \Cref{eq:QSZKLhard-yes-prob,eq:QSZKLhard-no-prob,eq:QSZKLhad-yes-prob-global}, we can conclude the parameter requirement of $\coIndivProdQSD[l(n),\alpha(n),\delta'(n)]$, specifically that $\alpha(n) - \delta'(n) \cdot l(n) \geq 1/\poly(n)$.

    It remains to establish \Cref{eq:QSZKLhard-yes-prob,eq:QSZKLhard-no-prob}. The latter follows directly from \Cref{prop:QSZKLhard-soundness}. To prove the former, note that the prover's actions do not affect the verifier's private register for \textit{yes} instances, we thus derive the following for $j \in \{2,\cdots,l\}$: 
    \begin{align*}
        \td\rbra*{\Tr_{\sfM}\rbra*{\xi_j}, \Tr_{\sfM}\rbra*{\xi'_j}} 
        &\leq \td\rbra*{\xi_j, \xi'_j} \\
        &\leq \td\rbra*{\xi_j, \rho_{\ttM_j\ttW_j}} + \td\rbra*{\rho_{\ttM_j\ttW_j},\rho_{\ttM'_j\ttW_j}} + \td\rbra*{\rho_{\ttM'_j\ttW_j}, \xi'_j} \\
        &= \td\rbra*{\xi'_{j-1}, \rho_{\ttM'_{j-1}\ttW_{j-1}}} + \td\rbra*{\rho_{\ttM_j\ttW_j},\rho_{\ttM'_j\ttW_j}} + \td\rbra*{\rho_{\ttM'_j\ttW_j}, \xi'_j}\\
        &\leq \delta(n) + 0 + \delta(n)\\
        &= 2\delta(n).
    \end{align*}
    Here, the first line follows from the data-process inequality (\Cref{lemma:trace-distance-data-processing}), the second line is due to the triangle inequality, the third line owes to the unitary invariance (\Cref{lemma:trace-distance-unitary-invariance}) and the fact that $\rho_{\ttM_j\ttW_j} = V_j \rho_{\ttM'_{j-1}\ttW_{j-1}} V_j^{\dagger}$, and the fourth line is because of \Cref{eq:simulator-distance-bounds}. 
    We complete the proof by noting that similar reasoning applies to the case of $j=1$, using $\td\rbra*{\xi_1, \rho_{\ttM_1\ttW_1}} = 0$ instead of at most $\delta(n)$.
\end{proof}

\subsection{\QSZKULHV{} is in \BQL{}}
\label{subsec:QSZKL-in-BQL}

We will establish the hard direction in the equivalence of \QSZKULHV{} and \BQL{}. 
The key lemma underlying the proof involves a logspace (many-to-one) reduction  \IndivProdQSD{} to an ``existential'' version of \GapQSDlog{}, where \GapQSDlog{} is a \BQL{}-complete problem (see \Cref{subsec:space-bounded-state-testing}). This reduction leads to a \BQL{} containment for \IndivProdQSD{}: 
\begin{lemma}[\IndivProdQSD{} is in \BQL{}]
    \label{lemma:IndivProdQSD-in-BQL}
    Let $k(n)$, $\alpha(n)$ and $\delta(n)$ be logspace computable functions such that $1 \leq k(n) \leq \poly(n)$, $0 \leq \alpha(n),\delta(n) \leq 1$, and $\alpha(n) - \delta(n) \cdot k(n) \geq 1/\poly(n)$. Then, it holds that
    \[ \IndivProdQSD[k(n),\alpha(n),\delta(n)] \in \BQL. \]
\end{lemma}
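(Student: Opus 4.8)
The plan is to reduce $\IndivProdQSD[k,\alpha,\delta]$ in deterministic logspace to $\GapQSDlog$, which is $\BQL$-complete by \Cref{thm:GapQSDlog-in-BQL}; combined with the fact that $\BQL$ is closed under logspace reductions, this yields the claim. Since it is known that $\BQL = \QMAL$~\cite{FKLMN16,FR21}, it is equivalent (and slightly more transparent) to exhibit a $\QMAL$ protocol, and that is the route I would take. The crucial structural fact is the averaging argument recorded as \Cref{eq:intro-indiv-far} in the introduction: since $\td(\sigma_1\otimes\cdots\otimes\sigma_k,\sigma'_1\otimes\cdots\otimes\sigma'_k)\le\sum_{j\in[k]}\td(\sigma_j,\sigma'_j)$ by \Cref{lemma:trace-distance-product-states}(2), a \emph{yes} instance (which satisfies $\td(\bigotimes_j\sigma_j,\bigotimes_j\sigma'_j)\ge\alpha$) must have some index $j^*\in[k]$ with $\td(\sigma_{j^*},\sigma'_{j^*})\ge\alpha/k$, whereas a \emph{no} instance satisfies $\td(\sigma_j,\sigma'_j)\le\delta$ for \emph{every} $j$. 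Because $\alpha/k - \delta \ge (\alpha - \delta k)/k \ge 1/\poly(n)$, the two cases are separated by an inverse-polynomial gap.

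The steps in order: (i) describe the $\QMAL$ verifier's action on input $(Q_1,\dots,Q_k,Q'_1,\dots,Q'_k)$ — the prover (Merlin) sends an index $i\in[k]$, encoded in $\lceil\log k\rceil = O(\log n)$ classical bits, which fits in the logspace workspace; (ii) the verifier then needs to decide whether $\td(\sigma_i,\sigma'_i)\ge\alpha/k$ (accept) or $\td(\sigma_i,\sigma'_i)\le\delta$ (reject), and this is exactly an instance of $\GapQSDlog[\alpha/k,\delta]$ with the state-preparation circuits $Q_i$ and $Q'_i$ — which act on $O(\log n)$ qubits with $r(n)\le O(\log n)$ output qubits, and whose descriptions are logspace-computable given $i$ because $i$ itself is logspace-readable and the original circuit family is logspace-uniform; (iii) invoke \Cref{thm:GapQSDlog-in-BQL} so that this sub-decision is in $\BQL$, hence solvable by a polynomial-size unitary logspace circuit with bounded error, which the $\QMAL$ verifier runs; (iv) check completeness (on a yes instance Merlin sends $i=j^*$ and the $\GapQSDlog$ subroutine accepts with probability $\ge 2/3$) and soundness (on a no instance every $i$ is a no instance of $\GapQSDlog$, so the verifier rejects with probability $\ge 2/3$ regardless of Merlin's message); (v) conclude $\IndivProdQSD\in\QMAL = \BQL$. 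Alternatively, and perhaps cleaner to write up entirely inside $\BQL$, one can skip Merlin altogether: a $\BQL$ machine can, for each $i=1,\dots,k$ in turn (a single $O(\log n)$-bit counter), run the $\GapQSDlog$ estimator on $(Q_i,Q'_i)$, amplify its success probability to $1 - 1/\poly(n)$ cheaply by $O(\log n)$-fold sequential repetition within logspace, and accept iff at least one $i$ is flagged; a union bound over the $k\le\poly(n)$ iterations controls the total error. I would present this second formulation to avoid even invoking $\BQL=\QMAL$, noting that sequential iteration over a polynomially long loop is the standard idiom for space-bounded classes.

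The main obstacle is bookkeeping the parameters and the amplification rather than anything conceptually deep: I must verify that the promise gap $\alpha/k - \delta$, which is the gap of the induced $\GapQSDlog$ instance, is still $\ge 1/\poly(n)$ — this is immediate from $\alpha - \delta k \ge 1/\poly(n)$ and $k \le \poly(n)$ — and that, in the no case, driving the per-$i$ error below roughly $1/(3k)$ by $O(\log k)$ repetitions keeps the cumulative false-accept probability over all $k$ loop iterations below $1/3$; symmetrically, completeness needs one amplified run to accept with probability $\ge 2/3$. One should also double-check that all the functions $\alpha(n)/k(n)$ and $\delta(n)$ remain logspace-computable (they are, since $k,\alpha,\delta$ are logspace-computable and integer division/comparison is in logspace), as $\GapQSDlog$ is defined only for logspace-computable threshold functions. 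None of this requires new ideas — the reduction is essentially the one sketched after \Cref{eq:intro-indiv-far} — so the proof is short; the care is entirely in confirming the parameter inequalities and the space budget of the loop-plus-amplification.
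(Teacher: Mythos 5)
Your proposal is correct and follows essentially the same route as the paper: the averaging argument via \Cref{lemma:trace-distance-product-states} to extract an index $j^*$ with $\td(\sigma_{j^*},\sigma'_{j^*})\geq\alpha/k$, followed by a $\QMAL$ protocol in which Merlin sends the index and the verifier runs the $\BQL$ algorithm for $\GapQSDlog[\alpha/k,\delta]$ from \Cref{thm:GapQSDlog-in-BQL}, concluding via $\QMAL=\BQL$. Your alternative Merlin-free formulation (sequentially looping over all $k$ indices with amplification and a union bound) is also sound — in the \emph{yes} case indices violating the $\GapQSDlog$ promise can only help an OR-type acceptance rule, and in the \emph{no} case every index satisfies the promise — but it is not needed, and the paper stops at the $\QMAL$ protocol.
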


As \coIndivProdQSD{} is \QSZKULHV{}-hard (\Cref{thm:IndivProdQSD-QSZKLhard}), and given that \BQUL{} is closed under complement~\cite[Corollary 4.8]{Wat99} and the equivalence $\BQL=\BQUL$~\cite{FR21}, we can directly conclude the following corollary:

\begin{corollary}
    $\QSZKULHV{} \subseteq \BQL{}.$
\end{corollary}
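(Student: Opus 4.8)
The plan is to obtain $\QSZKULHV \subseteq \BQL$ by bolting together three ingredients that are already in place: the $\QSZKULHV$-hardness of $\coIndivProdQSD$ (\Cref{thm:IndivProdQSD-QSZKLhard}), the $\BQL$ upper bound for $\IndivProdQSD$ (\Cref{lemma:IndivProdQSD-in-BQL}), and the closure of $\BQL$ under complement and under logspace many-to-one reductions. Concretely, fix a promise problem $\calI \in \QSZKULHV$; by definition $\calI \in \QSZKULHV[m]$ for some $m(n)$ with $1 \le m(n) \le \poly(n)$, and after adding one or two dummy messages we may assume $3 \le m(n) \le \poly(n)$, with completeness/soundness $2/3,1/3$ so that the gap $2/3-1/3 = 1/3 \ge 1/\poly(n)$ holds (if desired one can instead first apply error reduction via sequential repetition, \Cref{lemma:QIPL-error-reduction}, which preserves the message-wise zero-knowledge condition of \Cref{def:QSZKUL}). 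Since $\QSZKULHV \subseteq \QSZKULstar$ trivially, \Cref{thm:IndivProdQSD-QSZKLhard} supplies a logspace many-to-one reduction from $\calI$ to $\coIndivProdQSD[\lceil m/2\rceil, \alpha, 2\delta]$ with $\alpha = (\sqrt{c}-\sqrt{s})^2/(2m-4)$ and $\delta = (\sqrt{c}-\sqrt{s})^2/(2m^2)$, and these parameters satisfy $\alpha - 2\delta\lceil m/2\rceil \ge 1/\poly(n)$, exactly the promise-gap hypothesis needed next.

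Next I would invoke \Cref{lemma:IndivProdQSD-in-BQL} for these admissible parameters to get $\IndivProdQSD[\lceil m/2\rceil, \alpha, 2\delta] \in \BQL$. Using $\BQL = \BQUL$ together with the fact that $\BQUL$ is closed under complement, this gives $\coIndivProdQSD[\lceil m/2\rceil, \alpha, 2\delta] \in \BQL$. Finally, $\BQL$ is closed under logspace many-to-one reductions — compose the logspace Turing machine computing the reduction above with the $\BQL$ machine deciding $\coIndivProdQSD$ — so $\calI \in \BQL$. As $\calI \in \QSZKULHV$ was arbitrary, $\QSZKULHV \subseteq \BQL$; combined with $\BQL \subseteq \QSZKUL \subseteq \QSZKULHV$ from \Cref{thm:QSZKL-eq-BQL}, this also upgrades the inclusion to the equality $\QSZKUL = \QSZKULHV = \BQL$.

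For context I would recall where the substance actually lies, namely in \Cref{lemma:IndivProdQSD-in-BQL}: using $\BQL = \QMAL$, it suffices to place $\IndivProdQSD$ in $\QMAL$. An averaging/union bound over the $k$ tensor factors — \Cref{lemma:trace-distance-product-states}(2), i.e. $\td(\sigma_1 \otimes \cdots \otimes \sigma_k, \sigma'_1 \otimes \cdots \otimes \sigma'_k) \le \sum_{j} \td(\sigma_j,\sigma'_j)$ — shows that a \emph{yes} instance has some index $j$ with $\td(\sigma_j,\sigma'_j) \ge \alpha/k$, whereas every \emph{no} instance has $\td(\sigma_j,\sigma'_j) \le \delta$ for all $j$; the $\QMAL$ prover sends that index $i$, and the verifier runs the $\BQL$ procedure for $\GapQSDlog[\alpha/k,\delta]$ (\Cref{thm:GapQSDlog-in-BQL}) on the pair $(Q_i,Q'_i)$, with promise gap $\alpha/k - \delta \ge 1/\poly(n)$ guaranteed by $\alpha - \delta k \ge 1/\poly(n)$. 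The "main obstacle" here is not a conceptual one: it is just parameter bookkeeping — checking that every $1/\poly(n)$ promise-gap hypothesis needed by \Cref{lemma:IndivProdQSD-in-BQL} and \Cref{thm:GapQSDlog-in-BQL} survives the reduction of \Cref{thm:IndivProdQSD-QSZKLhard}, and that the logspace-uniformity of the circuits $Q_j,Q'_j$ (which are the simulator circuits $S_{x,2j-1},S_{x,2j}$) is preserved when composing the reductions. Given the results already established in the excerpt, there is no further complexity-theoretic difficulty.
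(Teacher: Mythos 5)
Your proposal is correct and follows essentially the same route as the paper: reduce any $\QSZKULHV$ problem to $\coIndivProdQSD$ via \Cref{thm:IndivProdQSD-QSZKLhard}, place $\IndivProdQSD$ in $\BQL$ via \Cref{lemma:IndivProdQSD-in-BQL}, and conclude using $\BQL=\BQUL$, closure under complement, and closure under logspace reductions. The extra parameter bookkeeping you carry out (checking $\alpha - 2\delta\lceil m/2\rceil \geq 1/\poly(n)$) is accurate and is exactly what the paper leaves implicit.
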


We now proceed with the formal proof of the key lemma: 

\begin{proof}[Proof of \Cref{lemma:IndivProdQSD-in-BQL}]
    We first establish a logspace (many-to-one) reduction from \IndivProdQSD{} to an ``existential'' version of \GapQSDlog{}. 
    Let $(Q_1,\cdots,Q_k)$ and $(Q'_1,\cdots,Q'_k)$ be an instance of $\IndivProdQSD[k,\alpha,\delta]$. For each $j\in[k]$, let $\sigma_j$ and $\sigma'_j$ denote the states obtained by running $Q_j$ and $Q'_j$ on the all-zero state $\ket{\bar{0}}$, respectively, and tracing out the non-output qubits. We now need to decide which of the following cases in \Cref{eq:IndivProdQSD-far,eq:IndivProdQSD-close} holds: 
    \begin{align}
        \td\rbra*{\sigma_1\otimes \cdots \otimes \sigma_k, \sigma'_1\otimes \cdots \otimes \sigma'_k} \geq \alpha(n). \label{eq:IndivProdQSD-far}\\
        \forall j \in [k], \quad \td\big(\sigma_j, \sigma'_j\big) \leq \delta(n) \label{eq:IndivProdQSD-close}.
    \end{align}

    By combining \Cref{lemma:trace-distance-product-states} with \Cref{eq:IndivProdQSD-far}, we obtain: 
    \begin{equation}
        \label{eq:IndivProdQSD-farE}
        \sum_{j\in[k]} \td\big(\sigma_j, \sigma'_j\big) \geq 
        \td\rbra*{\sigma_1\otimes \cdots \otimes \sigma_k, \sigma'_1\otimes \cdots \otimes \sigma'_k} \geq \alpha(n).
    \end{equation}

    Applying an averaging argument to \Cref{eq:IndivProdQSD-farE}, we can conclude that
    \begin{equation}
        \label{eq:IndivProdQSD-far-invidual}
        \exists j\in[k], \quad \td\big(\sigma_j, \sigma'_j\big) \geq \alpha/k. 
    \end{equation}
    
    Clearly, a violation of \Cref{eq:IndivProdQSD-far-invidual} implies a violation of \Cref{eq:IndivProdQSD-far}, without contradicting \Cref {eq:IndivProdQSD-close}. 
    For each $j\in[k]$, the pair of circuits $Q_j$ and $Q'_j$ forms an instance of \GapQSDlog{}. The resulting promise problem is thus an ``existential'' version of \GapQSDlog{}, where \textit{yes} instances satisfy \Cref{eq:IndivProdQSD-far-invidual} and \textit{no} instances satisfy \Cref{eq:IndivProdQSD-close}. 

    \vspace{1em}
    Next, we proceed by demonstrating the \BQL{} containment. Given the equivalence of $\BQL$ and  $\QMAL$~\cite{FKLMN16,FR21}, it remains to establish a \QMAL{} containment for this ``existential'' version of \GapQSDlog{}. The verification protocol is outlined in \Cref{protocol:IndivProdQSD-in-QMAL}. 

\begin{algorithm}[ht!]
    \SetAlgorithmName{Protocol}
    \SetEndCharOfAlgoLine{.}
    \SetAlgoVlined
    \setlength{\parskip}{5pt}
    \SetKwInOut{Parameter}{Parameters}

    \textbf{1.} The verifier receives an index $j \in [k]$ from the prover.

    \textbf{2.} The verifier executes the quantum logspace algorithm $\calA$ for $\GapQSDlog[\alpha/k,\delta]$ underlying in~\Cref{thm:GapQSDlog-in-BQL}, using the pair of circuits $Q_j$ and $Q'_j$ as the \GapQSDlog{} instance. The verifier accepts (or rejects) if $\calA$ accepts (or rejects).
    \BlankLine
	\caption{A \QMAL{} proof system for \IndivProdQSD{}.}
	\label[algorithm]{protocol:IndivProdQSD-in-QMAL}
\end{algorithm}

    To complete the proof, we establish the correctness of \Cref{protocol:IndivProdQSD-in-QMAL}. Since the algorithm $\calA$ is a \BQL{} containment for $\GapQSDlog[\alpha/k,\delta]$ (\Cref{thm:GapQSDlog-in-BQL}), we conclude the following: 
    \begin{itemize}[topsep=0.33em, itemsep=0.33em, parsep=0.33em]
        \item For \textit{yes} instances, \Cref{eq:IndivProdQSD-far-invidual} ensures that there exists an  $j \in [k]$ (the witness) such that $\td\big(\sigma_j, \sigma'_j\big) \geq \alpha/k$. Consequently, $\calA$ accepts with probability at least $2/3$. 
        \item For \textit{no} instances, \Cref{eq:IndivProdQSD-close} yields that for all $j \in [k]$, $\td\big(\sigma_j, \sigma'_j\big) \leq \delta$. This statement implies that $\calA$ accepts with probability at most $1/3$. \qedhere
    \end{itemize}
\end{proof}


\section*{Acknowledgments}
\noindent
The authors appreciate Dorian Rudolph for pointing out a gap in the earlier argument that \QIPL{} is contained in \AM{}, specifically noting that \Cref{lemma:QIPL-second-SDP-formulation} does not directly apply to \QIPL{}, as it does not verify the consistency of the prover strategies across different measurement outcome branches. 
YL is grateful to Uma Girish for a helpful discussion that inspired Question \ref{probitem:constTurnQIPL}. 
The authors also thank the anonymous reviewers for their helpful comments, which made the introduction more accessible to non-expert readers. 

This work was partially supported by MEXT Q-LEAP grant No.~\mbox{JPMXS0120319794}. 
FLG was also supported by JSPS KAKENHI grants Nos.~\mbox{JP20H05966}, \mbox{20H00579}, \mbox{24H00071}, and by MEXT JST CREST grant No.~\mbox{JPMJCR24I4}. 
YL was further supported by JST SPRING grant No.~\mbox{JPMJSP2125} and acknowledges the ``THERS Make New Standards Program for the Next Generation Researchers.''
HN was additionally supported by JSPS KAKENHI grants Nos.~\mbox{JP19H04066}, \mbox{JP20H05966}, \mbox{JP21H04879}, and \mbox{JP22H00522}.
QW was supported in part by the Engineering and Physical Sciences Research Council under Grant No.~\mbox{EP/X026167/1}. 


\bibliographystyle{alphaurlQ}
\bibliography{space-bounded-quantum-interactive-proofs}

\end{document}